  \DeclareFontShape{T1}{cmr}{m}{scit}{<->ssub*cmr/m/sc}{}%
\crefname{ineq}{inequality}{inequalities}
\crefname{fact}{fact}{facts}
\crefname{equation}{equation}{equations}
\crefname{algorithm}{algorithm}{algorithms}
\Crefname{algorithm}{Algorithm}{Algorithms}
\crefname{algocf}{algorithm}{algorithms}
\Crefname{algocf}{Algorithm}{Algorithms}
\crefname{remark}{remark}{remarks}
\crefname{conjecture}{conjecture}{conjectures}
\crefname{problem}{problem}{problems}
\numberwithin{equation}{section}
\declaretheorem[style=plain,numberwithin=section]{theorem}
\declaretheorem[style=plain,numberlike=theorem]{lemma,corollary}
\declaretheorem[style=remark,numberlike=theorem]{remark}
\declaretheorem[style=definition,numberlike=theorem]{definition}
\declaretheorem[style=plain,numberlike=theorem]{proposition}
\declaretheorem[style=plain,numberlike=theorem]{fact}
\declaretheorem[style=definition,numberlike=theorem]{problem,conjecture}
\newcommand{\parheading}[1]{%
  \par\addvspace{1em}%
  \noindent\emph{#1}\enspace\ignorespaces%
}
\newcommand{\BPP}{\textnormal{\textsf{BPP}}\xspace}
\newcommand{\BQP}{\textnormal{\textsf{BQP}}\xspace}
\newcommand{\QMA}{\textnormal{\textsf{QMA}}\xspace}
\newcommand{\QIPtwo}{\textnormal{\textsf{QIP(2)}}\xspace}
\newcommand{\AM}{\textnormal{\textsf{AM}}\xspace}
\newcommand{\coAM}{\textnormal{\textsf{coAM}}\xspace}
\newcommand{\SZK}{\textnormal{\textsf{SZK}}\xspace}
\newcommand{\QSZK}{\textnormal{\textsf{QSZK}}\xspace}
\newcommand{\BQUSPACE}{\texorpdfstring{\textnormal{\textsf{BQ}\textsubscript{U}\textsf{SPACE}}}\xspace}
\newcommand{\BQSPACE}{\textnormal{\textsf{BQSPACE}}\xspace}
\newcommand{\RQUSPACE}{\texorpdfstring{\textnormal{\textsf{RQ}\textsubscript{U}\textsf{SPACE}}}\xspace}
\newcommand{\coRQUSPACE}{\texorpdfstring{\textnormal{\textsf{coRQ}\textsubscript{U}\textsf{SPACE}}}\xspace}
\newcommand{\Lspace}{\textnormal{\textsf{L}}\xspace}
\newcommand{\BPL}{\textnormal{\textsf{BPL}}\xspace}
\newcommand{\BQL}{\textnormal{\textsf{BQL}}\xspace}
\newcommand{\BQUL}{\texorpdfstring{\textnormal{\textsf{BQ}\textsubscript{U}\textsf{L}}}\xspace}
\newcommand{\RQL}{\textnormal{\textsf{RQL}}\xspace}
\newcommand{\RQUL}{\texorpdfstring{\textnormal{\textsf{RQ}\textsubscript{U}\textsf{L}}}\xspace}
\newcommand{\coRQL}{\textnormal{\textsf{coRQL}}\xspace}
\newcommand{\coRQUL}{\texorpdfstring{\textnormal{\textsf{coRQ}\textsubscript{U}\textsf{L}}}\xspace}
\newcommand{\NC}{\textnormal{\textsf{NC}}\xspace}
\newcommand{\DETs}{\mathsf{DET}^*}
\newcommand{\textoverline}[1]{$\overline{\mbox{#1}}$}
\newcommand{\CertQSDlog}{\textnormal{\textsc{CertQSD}\textsubscript{log}}\xspace}
\newcommand{\coCertQSD}{\texorpdfstring{\textnormal{\textoverline{\textsc{CertQSD}}}}\xspace}
\newcommand{\coCertQHS}{\texorpdfstring{\textnormal{\textoverline{\textsc{CertQHS}}}}\xspace}
\newcommand{\coCertQSDlog}{\texorpdfstring{\textnormal{\textoverline{\textsc{CertQSD}}\textsubscript{log}}}\xspace}
\newcommand{\coCertQHSlog}{\texorpdfstring{\textnormal{\textoverline{\textsc{CertQHS}}\textsubscript{log}}}\xspace}
\newcommand{\GapQSD}{\textnormal{\textsc{GapQSD}}\xspace}
\newcommand{\GapQED}{\textnormal{\textsc{GapQED}}\xspace}
\newcommand{\GapQJS}{\textnormal{\textsc{GapQJS}}\xspace}
\newcommand{\GapQHS}{\textnormal{\textsc{GapQHS}}\xspace}
\newcommand{\GapQSDlog}{\texorpdfstring{\textnormal{\textsc{GapQSD}\textsubscript{log}}}\xspace}
\newcommand{\GapQEDlog}{\texorpdfstring{\textnormal{\textsc{GapQED}\textsubscript{log}}}\xspace}
\newcommand{\GapQJSlog}{\texorpdfstring{\textnormal{\textsc{GapQJS}\textsubscript{log}}}\xspace}
\newcommand{\GapQHSlog}{\texorpdfstring{\textnormal{\textsc{GapQHS}\textsubscript{log}}}\xspace}
\newcommand{\QSD}{\textnormal{\textsc{QSD}}\xspace}
\newcommand{\QED}{\textnormal{\textsc{QED}}\xspace}
\newcommand{\QJSP}{\textnormal{\textsc{QJSP}}\xspace}
\newcommand{\SD}{\textnormal{\textsc{SD}}\xspace}
\newcommand{\ED}{\textnormal{\textsc{ED}}\xspace}
\newcommand{\JS}{\textnormal{\textsc{JS}}\xspace}
\newcommand{\JSP}{\textnormal{\textsc{JSP}}\xspace}
\renewcommand{\bra}[1]{\langle{#1}\rvert}
\renewcommand{\ket}[1]{\lvert{#1}\rangle}
\newcommand{\ketbra}[2]{\ensuremath{\ket{#1}\!\bra{#2}}}
\newcommand{\innerprod}[2]{\langle #1 | #2 \rangle}
\newcommand{\innerprodF}[2]{\langle #1 , #2 \rangle}
\newcommand{\Tr}{\mathrm{Tr}}
\newcommand{\rank}{\mathrm{rank}}
\newcommand{\Out}{\mathrm{Out}}
\newcommand{\dx}{\mathrm{d}x}
\newcommand{\dd}{\mathrm{d}}
\newcommand{\dtheta}{\mathrm{d}\theta}
\DeclareMathOperator\erf{erf}
\newcommand{\td}{\mathrm{T}}
\newcommand{\F}{\mathrm{F}}
\newcommand{\QJS}{\mathrm{QJS}}
\newcommand{\binH}{\mathrm{H_2}}
\newcommand{\HS}{\mathrm{HS}^2}
\renewcommand{\S}{\mathrm{S}}
\renewcommand{\H}{\mathrm{H}}
\newcommand{\Img}{\mathrm{Img}}
\newcommand{\spanset}{\mathrm{span}}
\newcommand{\sign}{\mathrm{sgn}}
\newcommand{\even}{\mathrm{even}}
\newcommand{\odd}{\mathrm{odd}}
\newcommand{\inter}{\mathrm{int}}
\newcommand{\exter}{\mathrm{ext}}
\newcommand{\Eval}{\mathrm{Eval}}
\newcommand{\yes}{{\rm yes}}
\newcommand{\no}{{\rm no}}
\newcommand{\Sqrt}{{\rm Sqrt}}
\newcommand{\Had}{\textnormal{\textsc{H}}\xspace}
\newcommand{\CNOT}{\textnormal{\textsc{CNOT}}\xspace}
\newcommand{\T}{\textnormal{\textsc{T}}\xspace}
\newcommand{\SWAP}{\textnormal{\textsc{SWAP}}\xspace}
\renewcommand{\Pr}[1]{\mathrm{Pr}\!\left[#1\right]}
\newcommand{\binset}{\{0,1\}}
\newcommand{\In}{\mathrm{in}}
\renewcommand{\Out}{\mathrm{out}}
\newcommand{\SV}{\mathrm{(SV)}}
\newcommand{\tpsi}{\tilde{\psi}}
\DeclareMathOperator\poly{poly}
\DeclareMathOperator\polylog{polylog}
\DeclarePairedDelimiter\rbra{\lparen}{\rparen}
\DeclarePairedDelimiter\sbra{\lbrack}{\rbrack}
\DeclarePairedDelimiter\cbra{\{}{\}}
\DeclarePairedDelimiter\abs{\lvert}{\rvert}
\DeclarePairedDelimiter\norm{\lVert}{\rVert}
\DeclarePairedDelimiter\ceil{\lceil}{\rceil}
\newcommand{\rmi}{\mathrm{i}}
\newcommand{\bbC}{\mathbb{C}}
\newcommand{\bbE}{\mathbb{E}}
\newcommand{\bbR}{\mathbb{R}}
\newcommand{\bbN}{\mathbb{N}}
\newcommand{\bfa}{\mathbf{a}}
\newcommand{\bfb}{\mathbf{b}}
\newcommand{\bfc}{\mathbf{c}}
\newcommand{\bfu}{\mathbf{u}}
\newcommand{\bfv}{\mathbf{v}}
\newcommand{\bfy}{\mathbf{y}}
\newcommand{\calA}{\mathcal{A}}
\newcommand{\calH}{\mathcal{H}}
\newcommand{\calI}{\mathcal{I}}
\newcommand{\calL}{\mathcal{L}}
\newcommand{\calT}{\mathcal{T}}
\newcommand{\sfB}{\mathsf{B}}
\newcommand{\sfF}{\mathsf{F}}
\newcommand{\sfO}{\mathsf{O}}
\newcommand{\sfR}{\mathsf{R}}
\newcommand{\sfS}{\mathsf{S}}
\newcommand{\sfY}{\mathsf{Y}}
\newcommand{\sfZ}{\mathsf{Z}}
\begin{document}
\setlength{\abovedisplayskip}{6pt}
\setlength{\belowdisplayskip}{6pt}

\title{Space-bounded quantum state testing\\
\Large{via space-efficient quantum singular value transformation}}
\author[1]{Fran\c{c}ois Le Gall\thanks{Email: legall@math.nagoya-u.ac.jp}}
\author[2,1]{Yupan Liu\thanks{Email: yupan.liu@epfl.ch}} 
\author[3,1]{Qisheng Wang\thanks{Email: QishengWang1994@gmail.com}}
\affil[1]{Graduate School of Mathematics, Nagoya University}
\affil[2]{School of Computer and Communication Sciences, \'Ecole Polytechnique F\'ed\'erale de Lausanne}
\affil[3]{School of Computer Science, Shanghai Jiao Tong University}

\date{}
\maketitle
\pagenumbering{roman}
\thispagestyle{empty}

\begin{abstract}
Driven by exploring the power of quantum computation with a limited number of qubits, we present a novel complete characterization for space-bounded quantum computation, which encompasses settings with one-sided error (unitary \coRQL{}) and two-sided error (\BQL{}), approached from a quantum (mixed) state testing perspective: 

\begin{itemize}
    \item The \textit{first} family of natural complete problems for unitary $\coRQL$, namely \textit{space-bounded quantum state certification} for trace distance and Hilbert--Schmidt distance; 
    \item A new family of (arguably simpler) natural complete problems for \BQL{}, namely \textit{space-bounded quantum state testing} for trace distance, Hilbert--Schmidt distance, and (von Neumann) entropy difference. 
\end{itemize}

In the space-bounded quantum state testing problem, we consider two logarithmic-qubit quantum circuits (devices) denoted as $Q_0$ and $Q_1$, which prepare quantum states $\rho_0$ and $\rho_1$, respectively, with access to their ``source code''. Our goal is to decide whether $\rho_0$ is $\epsilon_1$-close to or $\epsilon_2$-far from $\rho_1$ with respect to a specified distance-like measure. Interestingly, unlike time-bounded state testing problems, which exhibit computational hardness depending on the chosen distance-like measure (either \QSZK{}-complete or \BQP{}-complete),  our results reveal that the space-bounded state testing problems, considering all three measures, are computationally as easy as preparing quantum states. 

Our results primarily build upon \textit{a space-efficient variant} of the quantum singular value transformation (QSVT) introduced by \hyperlink{cite.GSLW19}{Gilyén, Su, Low, and Wiebe (STOC 2019)}, which is of independent interest. Our technique provides a unified approach for designing space-bounded quantum algorithms. Specifically, we show that implementing QSVT for any bounded polynomial that approximates a piecewise-smooth function incurs only a constant overhead in terms of the space required for (special forms of) the projected unitary encoding. 
\end{abstract}

\newpage
\tableofcontents
\thispagestyle{empty}
\newpage
\pagenumbering{arabic}

\section{Introduction}

In recent years, exciting experimental advancements in quantum computing have been achieved, but concerns about their scalability persist. It thus becomes essential to characterize the computational power of feasible models of quantum computation that operate under restricted resources, such as \textit{time} (i.e., the number of gates in the circuit) and \textit{space} (i.e., the number of qubits on which the circuit acts).
This paper specifically focuses on the latter aspect: What is the computational power of quantum computation with a limited number of qubits? 

Previous studies on complete problems of space-bounded quantum computation~\cite{Wat99,Wat03,vMW12} have primarily focused on well-conditioned versions of standard linear algebraic problems~\cite{TS13,FL18,FR21} and have been limited to the two-sided error scenario. In contrast, we propose a novel family of complete problems that not only characterize the \textit{one-sided error scenario (and extend to the two-sided scenario)} but also arise from a quantum property testing perspective. Our new complete problems are arguably more natural and simpler, driven by recent intriguing challenges of verifying the intended functionality of quantum devices. 

Consider the situation where a quantum device is designed to prepare a quantum (mixed) state $\rho_0$, but a possibly malicious party could provide another quantum device that outputs a different $n$-qubit (mixed) state $\rho_1$, claiming that $\rho_0 \!\approx_{\epsilon}\! \rho_1$. The problem of testing whether $\rho_0$ is $\epsilon_1$-close to or $\epsilon_2$-far from $\rho_1$ with respect to a specified distance-like measure, given the ability to produce copies of $\rho_0$ and $\rho_1$, is known as \textit{quantum state testing}~\cite[Section 4]{MdW16}. 
Quantum state testing (resp., distribution testing) typically involves utilizing sample accesses to quantum states $\rho_0$ and $\rho_1$ (resp., distributions $D_0$ and $D_1$) and determining the number of samples required to test the closeness between quantum states (resp., distributions). 
This problem is a quantum (non-commutative) generalization of classical property testing, which is a fundamental problem in theoretical computer science (see~\cite{Goldreich17}), specifically (tolerant) distribution testing (see~\cite{Canonne20}). Moreover, this problem is an instance of the emerging field of quantum property testing (see~\cite{MdW16}), which aims at designing quantum testers for the properties of quantum objects. 

In this paper, we investigate quantum state testing problems where quantum states $\rho_0$ and $\rho_1$ are preparable by \textit{computationally constrained resources}, specifically state-preparation circuits (viewed as the ``source code'' of devices) that are (\textit{log})\textit{space-bounded}. 
Our main result conveys a conceptual message that testing quantum states prepared in bounded space is (computationally) as \textit{easy} as preparing these states in a space-bounded manner. Consequently, we can introduce the first family of natural $\coRQUL$-complete promise problems since Watrous~\cite{Wat01} introduced unitary \RQL{} and \coRQL{} (known as $\RQUL$ and $\coRQUL$, respectively) in 2001, as well as a new family of natural \BQL{}-complete promise problems. 

Our main technique is a \textit{space-efficient variant} of the quantum singular value transformation (QSVT)~\cite{GSLW19},  distinguishing itself from prior works primarily focused on time-efficient QSVT. As time-efficient QSVT provides a unified framework for designing time-efficient quantum algorithms~\cite{GSLW19,MRTC21}, we believe our work indicates a unified approach to designing space-bounded quantum algorithms, potentially facilitating the discovery of new complete problems for \BQL{} and its one-sided error variants.
Subsequently, we will first state our main results and then provide justifications for the significance of our results from various perspectives.

\subsection{Main results}
We will commence by providing definitions for time- and space-bounded quantum circuits. We say that a quantum circuit $Q$ is \textit{\emph{(}poly\emph{)}time-bounded} if $Q$ is polynomial-size and acts on $\poly(n)$ qubits. Likewise, we say that a quantum circuit $Q$ is \textit{\emph{(}log\emph{)}space-bounded} if $Q$ is polynomial-size and acts on $O(\log{n})$ qubits. It is worthwhile to note that primary complexity classes, e.g., \BQL{}, \coRQUL{}, and \BPL{}, mentioned in this paper correspond to \textit{promise problems}. 

\paragraph{Complete characterizations of quantum logspace from state testing.}
While prior works~\cite{TS13,FL18,FR21} on \BQL{}-complete problems have mainly focused on well-conditioned versions of standard linear algebraic problems (in $\DETs$), our work takes a different perspective by exploring quantum property testing. Specifically, we investigate the problem of \textit{space-bounded quantum state testing}, which aims to test the closeness between two quantum states that are preparable by (log)space-bounded quantum circuits (devices), with access to the corresponding ``source code'' of these devices.

We begin by considering a computational problem that serves as a ``white-box'' space-bounded counterpart of \textit{quantum state certification}~\cite{BOW19}, equivalent to quantum state testing with one-sided error.
Our first main theorem (\Cref{thm:space-bounded-quantum-state-certification-RQL-complete-informal}) demonstrates the \textit{first} family of natural $\coRQUL$-complete problems in the context of space-bounded quantum state certification with respect to the trace distance ($\td$) and the squared Hilbert--Schmidt distance ($\HS$). 
\begin{theorem}[Informal version of \Cref{thm:space-bounded-quantum-state-certification-RQL-complete}]
    \label{thm:space-bounded-quantum-state-certification-RQL-complete-informal}
   The following space-bounded quantum state certification problems are $\coRQUL$-complete: for any $\alpha(n) \geq 1/\poly(n)$, decide whether 
    \begin{enumerate}[label={\upshape(\roman*)}, itemsep=0.33em, topsep=0.33em, parsep=0.33em]
        \item $\coCertQSD_{\log}$\emph{:} $\rho_0=\rho_1$ or $\td(\rho_0,\rho_1) \geq \alpha(n)$;
        \item $\coCertQHS_{\log}$\emph{:} $\rho_0=\rho_1$ or $\HS(\rho_0,\rho_1) \geq \alpha(n)$.
    \end{enumerate}
\end{theorem}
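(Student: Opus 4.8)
The plan is to prove both $\coRQUL$-membership and $\coRQUL$-hardness, and for membership to reduce the whole task to a single primitive: from the source codes $Q_0,Q_1$, decide whether $\rho_0-\rho_1=0$ or $\norm{\rho_0-\rho_1}$ is noticeably large, while accepting yes-instances with certainty. Exact equality in the yes-case is the lever that makes one-sided error possible — any test whose rejection probability is a faithful function of the operator $\rho_0-\rho_1$, vanishing when that operator vanishes, automatically accepts yes-instances with probability $1$ — and converting ``noticeably large'' into soundness $\tfrac12$ inside logarithmic space is exactly what the space-efficient QSVT delivers.

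\textbf{Membership.} I would treat $\coCertQHS_{\log}$ first. From $Q_b$ and $Q_b^\dagger$, using $O(\log n)$ ancilla qubits, build the standard block-encoding of each $\rho_b$, and combine the two by a one-qubit linear-combination-of-unitaries step into a Hermitian block-encoding $U$ of $M:=\tfrac12(\rho_0-\rho_1)$, so $\norm{M}\le\tfrac12$. Prepare the maximally entangled state $\ket{\Omega}$ across the $O(\log n)$-qubit output register and a fresh copy, apply $U$ to the first half, and inspect the block-encoding ancilla: outcome $0$ occurs with probability $\bra{\Omega}M^2\ket{\Omega}=\tfrac1D\Tr[M^2]=\HS(\rho_0,\rho_1)/(4D)$, where $D=2^{O(\log n)}=\poly(n)$ is the dimension. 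This is exactly $0$ when $\rho_0=\rho_1$ and at least $\alpha/(4D)=1/\poly(n)$ when $\HS(\rho_0,\rho_1)\ge\alpha$, so the amplitude $a$ of that outcome lies in $\{0\}\cup[1/\poly(n),1]$. I would then apply the space-efficient QSVT — in the guise of fixed-point amplitude amplification, i.e.\ QSVT with an \emph{odd} bounded polynomial approximating $\sign$ — to the rank-one projected unitary encoding of $a$ furnished by this circuit: since $\sign$ is piecewise-smooth this costs only $O(1)$ extra qubits and degree $\poly(n)$, it sends $0\mapsto 0$ exactly by oddness, and it boosts every $a\ge 1/\poly(n)$ to flagged-amplitude at least $1-\epsilon$ for a constant $\epsilon$ of our choice. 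The $\coRQUL$ machine rejects iff it observes the amplified flag: yes-instances are accepted with probability $1$, no-instances with probability at most $2\epsilon\le\tfrac12$, so $\coCertQHS_{\log}\in\coRQUL$. For $\coCertQSD_{\log}$, Cauchy--Schwarz gives $\HS(\rho_0,\rho_1)\ge(2\td(\rho_0,\rho_1))^2/D$, so $\td(\rho_0,\rho_1)\ge\alpha$ forces $\HS(\rho_0,\rho_1)\ge 4\alpha^2/D=1/\poly(n)$ while $\rho_0=\rho_1$ still forces $\HS(\rho_0,\rho_1)=0$; hence $\coCertQSD_{\log}$ logspace-reduces to $\coCertQHS_{\log}$, placing it in $\coRQUL$ as well. (Alternatively one may work with the trace distance throughout, using a $\sign$-approximating polynomial to realize the Holevo--Helstrom projector $\Pi_{\ge 0}(\rho_0-\rho_1)$ directly — the ``algorithmic Holevo--Helstrom measurement'' — but the reduction is the shortest route here.)

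\textbf{Hardness.} Given any $\coRQUL$ verifier $V$ for a promise problem — a unitary $O(\log n)$-qubit circuit with a designated output qubit, accepting yes-instances with certainty and no-instances with probability at most $s$ — and an input $x$, let $Q_1$ be $V$ with $x$ hardwired, viewed as preparing the reduced state $\rho_1$ of its output qubit, and let $Q_0$ be the trivial one-qubit circuit preparing $\ketbra{1}{1}$. If $x$ is a yes-instance, perfect completeness forces $V$'s final state into the form $\ket{1}_{\mathrm{out}}\otimes\ket{\mathrm{junk}}$, so $\rho_1=\ketbra{1}{1}=\rho_0$ exactly; if $x$ is a no-instance, $\bra{1}\rho_1\ket{1}\le s$, whence $\td(\rho_0,\rho_1)\ge 1-s$ and $\HS(\rho_0,\rho_1)\ge 2(1-s)^2$. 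Pre-amplifying $V$'s soundness to any desired $1/\poly(n)$ — by applying fixed-point amplification to the \emph{rejecting} branch of $V$, itself an instance of the space-efficient QSVT, which preserves perfect completeness because a yes-instance has rejecting amplitude exactly $0$ and the amplification fixes $0\mapsto 0$ — makes $\td(\rho_0,\rho_1)$ and $\HS(\rho_0,\rho_1)$ at least $\alpha$ for every $\alpha\ge 1/\poly(n)$. Since $x\mapsto(Q_0,Q_1,\alpha)$ is computable in logarithmic space, $\coCertQSD_{\log}$ and $\coCertQHS_{\log}$ are $\coRQUL$-hard, and with membership this gives $\coRQUL$-completeness.

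\textbf{The main obstacle.} The delicate point is obtaining perfect completeness \emph{together with} genuine soundness amplification inside the \emph{unitary}, measurement-free logspace model. Naive sequential repetition with a running OR of flag qubits fails — once a flag has been copied out, running the test in reverse no longer disentangles the workspace — and any fixed-iteration amplitude amplification cannot cover the polynomially wide range of amplitudes that arises. Both difficulties dissolve at once through the space-efficient QSVT: fixed-point amplitude amplification is exactly a QSVT for an odd bounded polynomial approximating the piecewise-smooth $\sign$ function, so it runs in $O(\log n)$ qubits, amplifies every amplitude in $[\delta,1]$ towards $1$, and maps $0$ to $0$ exactly. The residual care is bookkeeping: checking that the purification-based block-encodings of $\rho_0,\rho_1$ and their one-qubit combination have the ``special form'' required by the space-efficient QSVT theorem, and tracking the $O(1)$ ancilla overheads so that the construction genuinely uses $O(\log n)$ qubits.
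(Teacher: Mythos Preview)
Your proposal is correct; the hardness direction matches the paper's (same one-qubit-output construction plus $\coRQUL$ error reduction, though for $\HS$ the paper uses a separate pure-state trick that your simpler argument does not need). The membership direction, however, takes a genuinely different route. For $\coCertQHSlog$ the paper avoids QSVT entirely: it assembles three SWAP tests into a block-encoding of the one-qubit state $\varrho\big(\tfrac12+\tfrac{\HS(\rho_0,\rho_1)}{4}\big)$, so that the good amplitude on $\ket{\bar 0}$ is exactly $\tfrac12$ when $\rho_0=\rho_1$, and then applies a \emph{single} Grover iteration (exact amplitude amplification, $\sin\theta=\tfrac12\Rightarrow\sin3\theta=1$); a short calculation bounds the no-case acceptance by $1-\alpha^2/2$. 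Your approach---detecting $\HS/\Theta(D)$ via the block-encoding of $(\rho_0-\rho_1)/2$ on $\ket{\Omega}$ and amplifying with a degree-$\poly(n)$ odd sign polynomial---is heavier but equally valid, resting on the parity-preservation fact (the implemented polynomial is exactly odd, hence $0\mapsto 0$) that the paper records in its QSVT section. For $\coCertQSDlog$ the paper does \emph{not} reduce to the $\HS$ case: it builds the direct tester $\calT(Q_i,U_{(\rho_0-\rho_1)/2},P_{d'}^{\sign})$, whose acceptance probability is exactly $\tfrac12$ on yes-instances by oddness of $P_{d'}^{\sign}$, and again applies one exact-AA step. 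Your Cauchy--Schwarz reduction (the constant should be $\HS\ge 2\td^2/D$ rather than $4\alpha^2/D$, but this is harmless) is more elementary; what the paper's direct route buys is the algorithmic Holevo--Helstrom measurement of Section~5 as a byproduct, which in turn yields the $\QSZK\subseteq\QIPtwo$-with-linear-space-prover corollary.
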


By relaxing the error requirement from one-sided to two-sided, we extend space-bounded quantum state testing to include two further distance-like measures: the quantum entropy difference, denoted by $\S(\rho_0)-\S(\rho_1)$, and the quantum Jensen-Shannon divergence ($\QJS_2$). As a result, we identify a new family of natural \BQL{}-complete problems, presented in our second main theorem:
\begin{theorem}[Informal version of \Cref{thm:space-bounded-quantum-state-testing-BQL-complete}]
    \label{thm:space-bounded-quantum-state-testing-BQL-complete-informal}
    The following space-bounded quantum state testing problems are $\BQL$-complete: for any $\alpha(n)$ and $\beta(n)$ such that $\alpha(n)-\beta(n) \geq 1/\poly(n)$, or for any $g(n) \geq 1/\poly(n)$, decide whether
    \begin{enumerate}[label={\upshape(\roman*)}, itemsep=0.33em, topsep=0.33em, parsep=0.33em]
        \item $\GapQSD_{\log}$\emph{:} $\td(\rho_0,\rho_1) \geq \alpha(n)$ or $\td(\rho_0,\rho_1) \leq \beta(n)$; \label{thmitem:GapQSDlog-BQL-complete}
        \item $\GapQED_{\log}$\emph{:} $\S(\rho_0)-\S(\rho_1) \geq g(n)$ or $\S(\rho_1)-\S(\rho_0) \geq g(n)$; \label{thmitem:GapQEDlog-BQL-complete}
        \item $\GapQJS_{\log}$\emph{:} $\QJS_2(\rho_0,\rho_1) \geq \alpha(n)$ or $\QJS_2(\rho_0,\rho_1) \leq \beta(n)$; \label{thmitem:GapQJSlog-BQL-complete}
        \item $\GapQHS_{\log}$\emph{:} $\HS(\rho_0,\rho_1) \geq \alpha(n)$ or $\HS(\rho_0,\rho_1) \leq \beta(n)$. \label{thmitem:GapQHSlog-BQL-complete}
    \end{enumerate}
\end{theorem}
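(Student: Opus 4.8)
The plan is to prove, for each of the four problems, the two inclusions that together give \BQL{}-completeness: \BQL{}-hardness and membership in \BQL{}. Hardness will come from a uniform family of reductions from the canonical \BQL{}-complete problem of deciding, given a logspace-bounded quantum circuit $C$ on $O(\log n)$ qubits, whether $p := \Pr{C \text{ accepts}} \ge 2/3$ or $p \le 1/3$ (any inverse-polynomial gap can be boosted to a constant one by the standard in-place logspace amplification). From $C$ I construct a pair of state-preparation circuits $(Q_0, Q_1)$ --- manifestly a logspace-computable map producing logspace circuits --- so that the distance-like quantity in question becomes a monotone function of $p$, transporting the promise gap of $C$ to the required gap of the target problem.

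Concretely, for \GapQSDlog{} and \GapQHSlog{} I would let $Q_1$ run $C$ and output the single diagonal qubit $\rho_1 = (1-p)\ketbra{0}{0} + p\ketbra{1}{1}$ while $Q_0$ outputs $\rho_0 = \ketbra{0}{0}$, so that $\td(\rho_0,\rho_1) = p$ and $\HS(\rho_0,\rho_1) = p^2$. For \GapQJSlog{} the same pair gives $\QJS_2(\rho_0,\rho_1) = \binH(p/2) - \tfrac12\binH(p)$, and a one-line derivative computation (the derivative equals $\tfrac12\log\frac{2-p}{1-p} > 0$) shows this is strictly increasing on $[0,1]$. For \GapQEDlog{} the symmetry $\binH(p) = \binH(1-p)$ has to be broken, so I would put the two branches of $C$ on orthogonal supports via a tag qubit and dilute the rejecting branch with a maximally mixed ancilla: let $\rho_0$ be ``the tagged state $\ketbra{0}{0}\otimes\ketbra{00}{00}$ with probability $p$, otherwise $\ketbra{1}{1}\otimes(I/4)$'' and $\rho_1$ the analogous state with the roles of $p$ and $1-p$ exchanged; then $\S(\rho_0) = \binH(p) + 2(1-p)$ and hence $\S(\rho_0) - \S(\rho_1) = 2(1-2p)$, positive when $p \le 1/3$ and negative when $p \ge 2/3$. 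In every case the gap is constant, giving \BQL{}-hardness.

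Membership is where the space-efficient QSVT does the work. From the purification-preparing circuits $Q_0, Q_1$ I would first build $O(1)$-qubit-overhead (projected unitary) encodings of $\rho_0$ and $\rho_1$, and --- by a one-ancilla linear-combination-of-unitaries step --- of $(\rho_0-\rho_1)/2$ and $(\rho_0+\rho_1)/2$. Then: for \GapQSDlog{}, apply QSVT with a bounded polynomial approximating the piecewise-smooth function $|x|$ to obtain an encoding of $\approx |\rho_0-\rho_1|$ and estimate its normalized trace via a standard trace-estimation gadget (a Hadamard/swap-test on a maximally mixed or maximally entangled input register), recovering $\td(\rho_0,\rho_1) = \tfrac12\Tr|\rho_0-\rho_1|$ to additive $1/\poly(n)$; for \GapQEDlog{}, apply QSVT with a polynomial approximating $x\log x$ to the encodings of $\rho_0$ and $\rho_1$, estimate the traces to get $\S(\rho_b) = -\Tr[\rho_b\log\rho_b]$ to additive $1/\poly(n)$, and subtract; \GapQJSlog{} is just three invocations of the entropy routine, on $\rho_0$, $\rho_1$, and $(\rho_0+\rho_1)/2$; and \GapQHSlog{} needs no QSVT at all, since $\HS(\rho_0,\rho_1) = \Tr\rho_0^2 + \Tr\rho_1^2 - 2\Tr[\rho_0\rho_1]$ with each term estimable by a swap test on two independent copies, costing $2\cdot O(\log n)$ qubits. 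In all cases the estimates are sharpened to the needed $1/\poly(n)$ precision by $\poly(n)$ sequential repetitions with an $O(\log n)$-bit counter, which keeps everything inside \BQL{}.

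The step I expect to be the main obstacle is the QSVT polynomial-implementation together with its error analysis for the entropy-type problems. The target $x\log x$ --- and, to a milder extent, $|x|$ --- is only piecewise smooth, with an unbounded derivative at the origin, so one must split $[0,1]$ into a short interval $[0,\varepsilon]$, on which the crude bound $\varepsilon\log(1/\varepsilon) \le 1/\poly(n)$ already renders the contribution negligible, and the remainder $[\varepsilon,1]$, on which the function is smooth and cheaply approximable, and then stitch the pieces with a smooth switching function; this is precisely the regime the paper's ``QSVT for bounded polynomials approximating piecewise-smooth functions'' is built to handle, incurring only a constant blow-up in the space of the encoding. The accompanying quantitative caveat is that a uniform polynomial error $\delta$ inflates to an error of order $d\cdot\delta$ after tracing over a $d = 2^{O(\log n)} = \poly(n)$-dimensional space, so $\delta$ must be pushed to $1/\poly(n)$ --- affordable, since this costs only polynomial \emph{degree} (hence time), not additional qubits. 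Once this is in place, the trace-estimation gadgets and the logspace amplification are routine, and the two directions together give \BQL{}-completeness of all four problems.
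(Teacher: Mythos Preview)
Your hardness reductions differ from the paper's in detail---the paper proves $\GapQSDlog$ and $\GapQHSlog$ hard by direct fidelity/purity arguments and then chains $\GapQSDlog\!\to\!\GapQJSlog\!\to\!\GapQEDlog$ via \Cref{lemma:QJS-vs-traceDist} and the joint-entropy identity, rather than building each from scratch---but yours are correct and arguably more transparent. The containment strategy is also right in spirit and matches the paper: block-encode $\rho_i$ and $(\rho_0-\rho_1)/2$, apply the space-efficient QSVT, estimate the relevant trace, and repeat sequentially. For $\GapQHSlog$ your SWAP-test argument is exactly the paper's.

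Where your proposal runs into a real obstacle is the trace-estimation step for $\GapQSDlog$ and $\GapQEDlog$. You feed the block-encoding of $P(A)$ (with $P\approx|x|$) or $P(\rho)$ (with $P\approx -x\ln x$) into a Hadamard test against the \emph{maximally mixed} state and then assert that ``a uniform polynomial error $\delta$ inflates to $d\cdot\delta$''. But the piecewise-smooth QSVT you invoke does \emph{not} produce a uniform approximation on $[0,1]$: it guarantees $|P-f|\le\epsilon$ only on $[\beta,1]$ and $|P|\le O(\epsilon)$ only well below $\beta$, with nothing better than the crude bound $|P|\lesssim\max|f|$ on the transition band near $\beta$. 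Adversarially one can park $\Theta(1/\beta)$ eigenvalues in that band (their sum is still $\le 1$), and each contributes $|P(\lambda)-f(\lambda)|$ of order $\beta\log(1/\beta)$ (entropy) or $\beta$ (trace distance); the resulting error in $\Tr[P(\cdot)]$ is then $\Theta(\log(1/\beta))$ or $\Theta(1)$, far exceeding the required $1/\poly(n)$. The paper sidesteps this by running the Hadamard test with input state $\rho_i$ itself rather than $I/2^r$: it block-encodes $P_{d'}^{\sign}\!\big(\tfrac{\rho_0-\rho_1}{2}\big)$ (resp.\ the normalized $P_{d'}^{\ln}(\rho_i)$) and estimates $\Tr\!\big[P(\,\cdot\,)\,\rho_i\big]$, so every eigenvalue's contribution picks up an extra factor $\lambda_j$. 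The small-eigenvalue error then scales like $2^{r}\beta\log(1/\beta)$ (resp.\ $2^{r}\delta$), which a choice $\beta,\delta\ll 2^{-r}$ drives to $1/\poly(n)$. With this one change---supply the factor of $\rho$ as the \emph{state} in the Hadamard test rather than folding it into the block-encoded function---the rest of your plan goes through.
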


It is noteworthy that our algorithm for \GapQSDlog{} in \Cref{thm:space-bounded-quantum-state-testing-BQL-complete-informal}\ref{thmitem:GapQSDlog-BQL-complete} exhibits a \textit{polynomial advantage} in space over the best known classical algorithms~\cite{Wat02}. Watrous implicitly showed in~\cite[Proposition 21]{Wat02} that \GapQSDlog{} is contained in the class \NC{}, which corresponds to (classical) poly-logarithmic space.

\paragraph{Space-efficient quantum singular value transformation.} Proving our main theorems mentioned above poses a significant challenge: establishing the containment in the relevant class (\BQL{} or \coRQUL{}), which is also the difficult direction for showing the known family of \BQL{}-complete problems~\cite{TS13,FL18,FR21}.

Proving the containment for the one-sided error scenario is not an effortless task: such a task is not only already relatively complicated for \coCertQHSlog{}, but also additionally requires novel techniques for \coCertQSDlog{}. On the other hand, for two-sided error scenarios, while showing the containment is straightforward for \GapQHSlog{}, it still demands sophisticated techniques for all other problems, such as \GapQSDlog{}, \GapQEDlog{}, and \GapQJSlog{}.

As explained in \Cref{subsec:proof-technique-space-efficient-QSVT}, our primary technical contribution and proof technique involve developing a space-efficient variant of the quantum singular value transformation (QSVT), which constitutes our last main theorem (\Cref{thm:space-efficient-QSVT-informal}). 

\subsection{Background on space-bounded quantum computation}

Watrous~\cite{Wat99,Wat03} initiated research on space-bounded quantum computation and showed that fundamental properties, including closure under complement, hold for $\BQSPACE[s(n)]$ with $s(n)\geq \Omega(\log{n})$. 
Watrous also investigated classical simulations of space-bounded quantum computation (with unbounded error), presenting deterministic simulations in $O(s^2(n))$ space and unbounded-error randomized simulations in $O(s(n))$ space.   
A decade later, van Melkebeek and Watson~\cite{vMW12} provided a simultaneous $\tilde{O}(t(n))$ time and $O(s(n)+\log{t(n)})$ space unbounded-error randomized simulation for a bounded-error quantum algorithm in $t(n)$ time and $s(n)$ space. 
The complexity class corresponding to space-bounded quantum computation with $s(n)=\Theta(\log(n))$ is known as \BQL{}, or $\BQUL{}$ if only \textit{unitary} gates are permitted. 

Significantly, several developments over the past two decades have shown that $\BQL{}$ is well-defined, independent of the following factors in chronological order:
\begin{itemize}
    \item \textbf{The choice of gateset}. The Solovay--Kitaev theorem~\cite{Kitaev97} establishes that most quantum classes are gateset-independent, given that the gateset is closed under adjoint and all entries in gates have reasonable precision. The work of~\cite{vMW12} presented a space-efficient counterpart of the Solovay--Kitaev theorem, implying that \BQL{} is also \textit{gateset-independent}. 
    \item \textbf{Error reduction}. Repeating $\BQUL$ sequentially necessitates reusing the workspace, making it unclear how to reduce errors for $\BQUL{}$ as intermediate measurements are not allowed. To address this issue, the work of~\cite{FKLMN16} adapted the witness-preserving error reduction for \QMA{}~\cite{MW05} with several other ideas to the space-efficient setting.
    \item \textbf{Intermediate measurements}. In the space-bounded scenario, the principle of deferred measurement is not applicable since this approach leads to an exponential increase in space complexity. Initially, \BQL{} appeared to be seemingly more powerful than $\BQUL{}$ since we cannot directly demonstrate that $\BPL \subseteq \BQUL$. Recently, Fefferman and Remscrim~\cite{FR21} (as well as~\cite{GRZ21,GR22}) proved the equivalence between \BQL{} and $\BQUL{}$, indicating a space-efficient approach to eliminating intermediate measurements. 
\end{itemize}

\paragraph{\BQL{}-complete problems.}
Identifying natural complete problems for the class \BQL{} (or $\BQUL$) is a crucial and intriguing question. 
Ta-Shma~\cite{TS13} proposed the first candidate \BQL{}-complete problem, building upon the work of Harrow, Hassidim, and Lloyd~\cite{HHL09} which established a \BQP{}-complete problem for inverting a (polynomial-size) well-conditioned matrix. Specifically, Ta-Shma showed that inverting a well-conditioned matrix with polynomial precision is in \BQL{}. Similarly, computing eigenvalues of an Hermitian matrix is also in \BQL{}. These algorithms offer a quadratic space advantage over the best-known classical algorithms that saturate the classical simulation bound~\cite{Wat99,Wat03,vMW12}.
Fefferman and Lin~\cite{FL18} later improved upon this result to obtain the first natural $\BQUL$-complete problem by ingeniously utilizing amplitude estimation to avoid intermediate measurements.

More recently, Fefferman and Remscrim~\cite{FR21} further extended this natural $\BQUL$-complete problem (or \BQL{}-complete, equivalently) to a \textit{family} of natural \BQL{}-complete problems. They showed that a well-conditioned version of standard $\DETs$-complete problems is \BQL{}-complete, where $\DETs$ denotes the class of problems that are $\NC^1$ (Turing) reducible to \textsc{intDET}, including well-conditioned integer determinant (\textsc{DET}), well-conditioned matrix powering (\textsc{MATPOW}), and well-conditioned iterative matrix product (\textsc{ITMATPROD}), among others.

\paragraph{$\RQUL{}$- and $\coRQUL{}$-complete problems.}
Watrous~\cite{Wat01} introduced the one-sided error counterpart of $\BQUL$, namely $\RQUL$ and $\coRQUL$, and developed error reduction techniques. Moreover, Watrous proved that the undirected graph connectivity problem (\textsc{USTCON}) is in $\RQUL \cap \coRQUL$ whereas Reingold~\cite{Reingold08} demonstrated that \textsc{USTCON} is in \Lspace{} several years later. 
It is noteworthy that the question of whether intermediate measurements offer computational advantages in one-sided error scenarios, specifically \RQUL{} vs.~\RQL{} and \coRQUL{} vs.~\coRQL{}, remains open.
Recently, Fefferman and Remscrim~\cite{FR21} proposed a ``verification'' version of the well-conditioned iterative matrix product problem (\textsc{vITMATPROD}) as a \textit{candidate} \coRQL{}-complete problem. However, although this problem is known to be \coRQL{}-hard, its containment remains \textit{unresolved}. 
Specifically, \textsc{vITMATPROD} requires to decide whether a single entry in the product of polynomially many well-conditioned matrices is equal to zero. 

\subsection{Time-bounded and space-bounded distribution and state testing}

We summarize prior works and our main results for time-bounded\footnote{The problem of \textit{time-bounded distribution (resp., state) testing} aims to test the closeness between two distributions (resp., states) that are preparable by (poly)time-bounded circuits (devices), with access to the corresponding ``source code'' of these devices.} and space-bounded distribution and state testing with respect to $\ell_1$ norm, entropy difference, and $\ell_2$ norm in \Cref{table:space-bounded-quantum-state-testing}. 

Interestingly, the sample complexity of testing the closeness of quantum states (resp., distributions) depends on the choice of distance-like measures,\footnote{It is noteworthy that the quantum entropy difference is not a distance.} including the one-sided error counterpart known as \textit{quantum state certification}~\cite{BOW19}. In particular, for distance-like measures such as the $\ell_1$ norm, called total variation distance in the case of distributions~\cite{CDVV14} and trace distance in the case of states~\cite{BOW19}, as well as classical entropy difference~\cite{JVHW15,WY16} and its quantum analog~\cite{AISW20,OW21}, the sample complexity of distribution and state testing is polynomial in the dimension $N$.  However, for distance-like measures such as the $\ell_2$ norm, called Euclidean distance in the case of distributions~\cite{CDVV14} and Hilbert--Schmidt distance in the case of states~\cite{BOW19}, the sample complexity is \textit{independent} of dimension $N$.

\begin{table}[!htp]
\centering
\begin{tabular}{cccc}
    \toprule
    & $\ell_1$ norm & $\ell_2$ norm & Entropy\\
    \midrule
    \makecell{Classical\\Time-bounded} 
    & \makecell{\SZK{}-complete\footref{footnote:GapSD-upper-bound}\\ \footnotesize{\cite{SV97,GSV98}}} 
    & \makecell{\BPP{}-complete\\ \footnotesize{Folklore}} 
    & \makecell{\SZK{}-complete\\ \footnotesize{\cite{GV99,GSV98}}}\\
    \midrule
    \makecell{Quantum\\Time-bounded} & 
    \makecell{\QSZK{}-complete\footref{footnote:GapQSD-upper-bound} \\ \footnotesize{\cite{Wat02,Wat09}}} &
    \makecell{\BQP{}-complete \\ \footnotesize{\cite{BCWdW01,ARS+21}}} &
    \makecell{\QSZK{}-complete \\ \footnotesize{\cite{BASTS10,Wat09}}}\\
    \midrule
    \makecell{Quantum\\Space-bounded} &
    \makecell{\BQL{}-complete\\ \footnotesize{\Cref{thm:space-bounded-quantum-state-testing-BQL-complete-informal}\ref{thmitem:GapQSDlog-BQL-complete}}} & 
    \makecell{\BQL{}-complete\\ \footnotesize{\cite{BCWdW01} and \Cref{thm:space-bounded-quantum-state-testing-BQL-complete-informal}\ref{thmitem:GapQHSlog-BQL-complete}}} & 
    \makecell{\BQL{}-complete\\ \footnotesize{\Cref{thm:space-bounded-quantum-state-testing-BQL-complete-informal}\ref{thmitem:GapQEDlog-BQL-complete}}} \\
    \bottomrule
\end{tabular}
\caption{Time- and space-bounded distribution or state testing.}
\label{table:space-bounded-quantum-state-testing}
\end{table}

As depicted in \Cref{table:space-bounded-quantum-state-testing}, this phenomenon that the required sample complexity for distribution and state testing, with polynomial precision and exponential dimension, depends on the choice of distance-like measure has reflections on time-bounded quantum state testing:
\begin{itemize}
    \item For $\ell_1$ norm and entropy difference, the time-bounded scenario is \textit{seemingly much harder than} preparing states or distributions since $\QSZK \subseteq \BQP$ and $\SZK \subseteq \BPP$ are unlikely.     
    \item For $\ell_2$ norm, the time-bounded scenario is \textit{as easy as} preparing states or distributions. 
\end{itemize}

However, interestingly, a similar phenomenon \textit{does not appear} for space-bounded quantum state testing. Although no direct classical counterpart has been investigated before in a complexity-theoretic fashion, namely space-bounded distribution testing, there is another closely related model (a version of streaming distribution testing) that does not demonstrate an analogous phenomenon either, as we will discuss in \Cref{subsec:space-bounded-testing}. 

\subsubsection{Time-bounded distribution and state testing}
\label{subsec:time-bounded-testing}

We review prior works on time-bounded state (resp., distribution) testing, with a particular focus on testing the closeness between states (resp., distributions) that are preparable by (poly)time-bounded quantum (resp., classical) circuits (device), with access to the ``source code'' of corresponding devices. For time-bounded distribution testing, we also recommend a brief survey~\cite{GV11} by Goldreich and Vadhan. 

\paragraph{$\ell_1$ norm scenarios.} Sahai and Vadhan~\cite{SV97} initiated the study of the time-bounded distribution testing problem, where distributions $D_0$ and $D_1$ are \textit{efficiently samplable}, and the distance-like measure is the total variation distance. Their work named this problem \textsc{Statistical Difference} (\SD{}). In particular, the promise problem $(\alpha,\beta)$-\SD{} asks whether $D_0$ is $\alpha$-far from or $\beta$-close to $D_1$ with respect to $\|D_0-D_1\|_{\rm TV}$. 
Although sampling from the distribution is in \BPP{},\footnote{Rigorously speaking, as an instance in \SD{}, sample-generating circuits are not necessarily (poly)time-uniform.} testing the closeness between these distributions is \SZK{}-complete~\cite{SV97,GSV98}, where \SZK{} is the class of promise problems possessing statistical zero-knowledge proofs. 
It is noteworthy that the \SZK{} containment of $(\alpha,\beta)$-\SD{} for any $\alpha(n)-\beta(n) \geq 1/\poly(n)$ is currently unknown.\footnote{The works of~\cite{SV97,GSV98} demonstrated that $(\alpha,\beta)$-\SD{} is in \SZK{} for any constant $\alpha^2-\beta > 0$. The same technique works for the parameter regime $\alpha^2(n)-\beta(n) \geq 1/O(\log{n})$. However, further improvement of the parameter regime requires new ideas, as clarified in~\cite{Goldreich19}.
Recently, the work of~\cite{BDRV19} improved the parameter regime to $\alpha^2(n)-\beta(n) \geq 1/\poly(n)$ by utilizing a series of tailor-made reductions. Currently, we only know that $(\alpha,\beta)$-\SD{} for $\alpha(n)-\beta(n) \geq 1/\poly(n)$ is also in $\AM\cap\coAM$~\cite{BL13}. \label{footnote:GapSD-upper-bound}}
In addition, we note that \SZK{} is contained in $\AM \cap \coAM$~\cite{Fortnow87,AH91}.

Following the pioneering work~\cite{SV97}, Watrous~\cite{Wat02} introduced the time-bounded quantum state testing problem, where two quantum states $\rho_0$ and $\rho_1$ that are preparable by time-bounded quantum circuits $Q_0$ and $Q_1$, respectively, as well as the distance-like measure is the trace distance. This problem is known as the \textsc{Quantum State Distinguishability} (\QSD{}), specifically, $(\alpha,\beta)$-\QSD{} asks whether $\rho_0$ is $\alpha$-far from or $\beta$-close to $\rho_1$ with respect to  $\td(\rho_0,\rho_1)$. Analogous to its classical counterpart, \QSD{} is \QSZK{}-complete~\cite{Wat02,Wat09}, whereas the \QSZK{} containment for any $\alpha(n) - \beta(n) \geq 1/\poly(n)$ remains an open question.\footnote{Like \SD{} and \SZK{}, the techniques in~\cite{Wat02,Wat09} show that $(\alpha,\beta)$-\QSD{} is in \QSZK{} for $\alpha^2(n)-\beta(n)\geq 1/O(\log{n})$, and the same limitation also applies to the quantum settings. A recent result~\cite{Liu23} following the line of work of~\cite{BDRV19} improved the parameter regime to $\alpha^2(n)-\sqrt{2\ln{2}}\beta(n) \geq 1/\poly(n)$, but the differences between classical and quantum distances make it challenging to push the bound further.  \label{footnote:GapQSD-upper-bound}} 

\begin{sloppypar}
\paragraph{Entropy difference scenarios.} Beyond the $\ell_1$ norm, another distance-like measure commonly considered in time-bounded quantum state (or distribution) testing is the (quantum) entropy difference, which also corresponds to the (quantum) Jensen-Shannon divergence. The promise problem \textsc{Entropy Difference} (\ED{}), first introduced by Goldreich and Vadhan~\cite{GV99} following the work of~\cite{SV97}, asks whether efficiently samplable distributions $D_0$ and $D_1$ satisfy $\H(D_0)-\H(D_1) \geq g$ or $\H(D_1)-\H(D_0) \geq g$ for $g=1$. They proved that \ED{} is \SZK{}-complete. 
Ben-Aroya, Schwartz, and Ta-Shma~\cite{BASTS10} further investigated the promise problem \textsc{Quantum Entropy Difference} (\QED{}), which asks whether $\S(\rho_0)-\S(\rho_1) \geq g$ or $\S(\rho_1)-\S(\rho_0) \geq g$, for efficiently preparable quantum states $\rho_0$ and $\rho_1$ and $g=1/2$. They showed that \QED{} is \QSZK{}-complete. 
Moreover, the \SZK{} (resp., \QSZK{}) containment for \ED{} (resp., \QED{}) automatically holds for any $g(n) \geq 1/\poly(n)$. 
\end{sloppypar}

Furthermore, Berman, Degwekar, Rothblum, and Vasudevan~\cite{BDRV19} demonstrated that the Jensen-Shannon divergence problem (\JSP{}), asking whether $\JS(D_0,D_1) \geq \alpha$ or $\JS(D_0,D_1) \leq \beta$ for efficiently samplable distributions $D_0$ and $D_1$, is \SZK{}-complete. Their work accomplished this result by reducing the problem to \ED{}, and this containment applies to  $\alpha(n)-\beta(n) \geq 1/\poly(n)$. Recently,  Liu~\cite{Liu23} showed a quantum counterpart, referred to as the \textsc{Quantum Jensen-Shannon Divergence Problem} (\QJSP{}), is \QSZK{}-complete. 
Notably, the quantum Jensen-Shannon divergence is a special instance of the Holevo $\chi$ quantity~\cite{Holevo73JS}.\footnote{The quantum Jensen-Shannon divergence coincides with the Holevo $\chi$ quantity on size-$2$ ensembles with a uniform distribution, which arises in the Holevo bound~\cite{Holevo73JS}. See \cite[Theorem 12.1]{NC10}.} 

\paragraph{$\ell_2$ norm scenarios.} 
For the quantum setting, it is straightforward that applying the SWAP test~\cite{BCWdW01} to efficiently preparable quantum states $\rho_0$ and $\rho_1$ can lead to a \BQP{} containment, in particular, additive-error estimations of $\Tr(\rho_0^2)$, $\Tr(\rho^2_1)$, and $\Tr(\rho_0\rho_1)$ with polynomial precision. 
Recently, the work of~\cite{ARS+21} observed that time-bounded quantum state testing with respect to the squared Hilbert--Schmidt distance is \BQP{}-complete. 
For the classical setting, namely the squared Euclidean distance, the \BPP{}-completeness is relatively effortless.\footnote{Specifically, we achieve \BPP{} containment by following the approach in \cite[Theorem 7.1]{BCHTV19}. On the other hand, the \BPP{} hardness owes to the fact that the squared Euclidean distance between the distribution $(p_{\rm acc}, 1-p_{\rm acc})$ from the output bit of any \BPP{} algorithm and the distribution $(1,0)$ is $(1-p_{\rm acc})^2$.}

\subsubsection{Space-bounded distribution and state testing} 
\label{subsec:space-bounded-testing}

To the best of our knowledge, no prior work has specifically focused on space-bounded distribution testing from a complexity-theoretic perspective. Instead, we will review prior works that are (closely) related to this computational problem. Afterward, we will delve into space-bounded quantum state testing, which constitutes the main contribution of our work. 

\paragraph{Space-bounded distribution testing and related works.}
We focus on a computational problem involving two $\poly(n)$-size classical circuits $C_0$ and $C_1$, which generate samples from the distributions $D_0$ and $D_1$ respectively. Each circuit contains a read-once polynomial-length random-coins tape.\footnote{It is noteworthy that random coins are provided as \textit{input} to classical circuits $C_0$ and $C_1$ for generating samples from the corresponding distributions in the time-bounded scenario, such as \SD{} and \ED{}.} The input length and output length of the circuits are $O(\log{n})$. The task is to decide whether $D_0$ is $\alpha$-far from or $\beta$-close to $D_1$ with respect to some distance-like measure. Additionally, we can easily observe that space-bounded distribution testing with respect to the squared Euclidean distance ($\ell_2$ norm) is \BPL{}-complete, much like its time-bounded counterpart.

Several models related to space-bounded distribution testing have been investigated previously. Earlier streaming-algorithmic works~\cite{FKSV02,GMV06} utilize \textit{entries} of the distribution as the data stream, with entries given in different orders for different models. On the other hand, a later work~\cite{CLM10} considered a data stream consisting of a sequence of i.i.d.~samples drawn from distributions and studied low-space streaming algorithms for distribution testing. 

Regarding (Shannon) entropy estimation, previous streaming algorithms considered worst-case ordered samples drawn from $N$-dimensional distributions and required $\polylog(N/\epsilon)$ space, where $\epsilon$ is the additive error. Recently, Acharya, Bhadane, Indyk, and Sun~\cite{ABIS19} addressed the entropy estimation problem with i.i.d.~samples drawn from distributions as the data stream and demonstrated the first $O(\log(N/\epsilon))$ space streaming algorithm. The sample complexity, viewed as the time complexity, was subsequently improved in~\cite{AMNW22}. 

However, for the total variation distance ($\ell_1$ norm), previous works focused on the trade-off between the sample complexity and the space complexity (memory constraints), achieving only a nearly-log-squared space streaming algorithm~\cite{DGKR19}. 

Notably, the main differences between the computational and streaming settings lie in how we access the sampling devices.\footnote{Of course, not all distributions can be described as a polynomial-size circuit (i.e., a succinct description).} In the computational problem, we have access to the ``source code'' of the devices and can potentially use them for purposes like ``reverse engineering''. Conversely, the streaming setting utilizes the sampling devices in a ``black-box'' manner, obtaining i.i.d.~samples. As a result, a logspace streaming algorithm will result in a \BPL{} containment.\footnote{In particular, the sample-generating circuits $C_0$ and $C_1$ in space-bounded distribution testing can produce the i.i.d.~samples in the data stream.}

\paragraph{Space-bounded quantum state testing.}
Among the prior works on streaming distribution testing, particularly entropy estimation, the key takeaway is that the space complexity of the corresponding computational problem is $O(\log(N/\epsilon))$. This observation leads to a conjecture that the computational hardness of space-bounded distribution and state testing is \textit{independent} of the choice of commonplace distance-like measures. Our work, in turn, provides a positive answer for space-bounded quantum state testing. 

Space-bounded state testing with respect to the squared Hilbert--Schmidt distance ($\ell_2$ norm) is \BQL{}-complete, as established in \Cref{thm:space-bounded-quantum-state-testing-BQL-complete-informal}\ref{thmitem:GapQHSlog-BQL-complete}. In particular, the \BQL{} containment follows from the SWAP test~\cite{BCWdW01}, analogous to the time-bounded scenario. Furthermore, establishing \BQL{} hardness, as well as \coRQUL{}-hardness for state certification, is not challenging (see \Cref{lemma:GapQHSlog-BQLhard}).

Regarding space-bounded state testing with respect to the trace distance ($\ell_1$ norm), we note that \cite[Proposition 21]{Wat02} implicitly established an \NC{} containment. The \BQL{}-hardness, as well as \coRQUL{}-hardness for state certification, is adapted from~\cite{ARS+21}. Similarly, we derive the \BQL{}-hardness for space-bounded state testing with respect to the quantum Jensen-Shannon divergence and the quantum entropy difference, building on the previous work~\cite{Liu23}. 

Finally, we devote the remainder of this section to our main technique (\Cref{thm:space-efficient-QSVT-informal}), and consequently, we present \BQL{} (resp., \coRQUL{}) containment for state testing (resp., certification) problems for other distance-like measures beyond the squared Hilbert--Schmidt distance.

\subsection{Proof technique: Space-efficient quantum singular value transformation}
\label{subsec:proof-technique-space-efficient-QSVT}

The quantum singular value transformation (QSVT)~\cite{GSLW19} is a powerful and efficient framework for manipulating the singular values $\{\sigma_i\}_i$ of a linear operator $A$, using a corresponding projected unitary encoding $U$ of $A=\tilde{\Pi} U \Pi$ for projections $\tilde{\Pi}$ and $\Pi$.\footnote{Regardless of QSVT, it is noteworthy that the concept of block-encoding, specifically a unitary dilation $U$ of a contraction $A$ (see \Cref{footnote:unitary-dilation}), is already used in quantum logspace for powering contraction matrices~\cite{GRZ21}.} The singular value decomposition is $A=\sum_i \sigma_i \ket{\tpsi_i}\bra{\psi_i}$ where $\ket{\tpsi_i}$ and $\ket{\psi_i}$ are left and right singular vectors, respectively. QSVT has numerous applications in  quantum algorithm design, and is even considered a grand unification of quantum algorithms~\cite{MRTC21}. 
To implement the transformation $f^{\SV}(A)=f^{\SV}(\tilde{\Pi} U \Pi)$, we require a degree-$d$ polynomial $P_d$ that satisfies two conditions. Firstly, $P_d$ well-approximates $f$ on the interval of interest $\calI$, with $\max_{x \in \calI\setminus \calI_{\delta}}|P_d(x)-f(x)| \leq \epsilon$, where $\calI_{\delta} \subseteq \calI\subseteq [-1,1]$ and typically $\calI_{\delta}\coloneqq(-\delta,\delta)$. Secondly, $P_d$ is bounded with $\max_{x \in [-1,1]}|P_d(x)| \leq 1$. The degree of $P_d$ depends on the precision parameters $\delta$ and $\epsilon$, with $d=O(\delta^{-1} \log{\epsilon^{-1}})$, and all coefficients of $P_d$ can be computed efficiently.

According to~\cite{GSLW19}, we can use alternating phase modulation to implement $P_d^{\SV}(\tilde{\Pi} U \Pi)$,\footnote{This procedure is a generalization of quantum signal processing, as explained in \cite[Section II.A]{MRTC21}.} which requires a sequence of rotation angles $\Phi \in \bbR^d$. 
For instance, consider $P_d(x)=T_d(x)$ where $T_d(x)$ is the $d$-th Chebyshev polynomial (of the first kind), then we know that $\phi_1 = (1-d)\pi/2$ and $\phi_j=\pi/2$ for all $j\in\{2,3,\cdots,d\}$. 
QSVT techniques, including the pre-processing and quantum circuit implementation, are generally \textit{time-efficient}. Additionally, the quantum circuit implementation of QSVT is already \textit{space-efficient} because implementing QSVT with a degree-$d$ bounded polynomial for any $s(n)$-qubit projected unitary encoding requires $O(s(n))$ qubits, where $s(n)\geq \Omega(\log{n})$.
However,  the pre-processing in the QSVT techniques is typically not space-efficient. 
Indeed, prior works on the pre-processing for QSVT, specifically angle-finding algorithms in~\cite{Haah19,CDG+20,DMWL21}, which have time complexity polynomially dependent on the degree $d$, do not consider the space-efficiency. 
Therefore, the use of previous angle-finding algorithms may lead to an \textit{exponential} increase in space complexity.
This raises a fundamental question on making the pre-processing space-efficient as well: 
\begin{problem}[Space-efficient QSVT]
    \label{prob:space-efficient-QSVT}
    Can we implement a degree-$d$ QSVT for any $s(n)$-qubit projected unitary encoding with $d \leq 2^{O(s(n))}$, using only $O(s(n))$ space in both the pre-processing and quantum circuit implementation?
\end{problem}

\paragraph{QSVT via averaged Chebyshev truncation.} 
A space-efficient QSVT associated with Chebyshev polynomials is implicitly shown in~\cite{GSLW19}, as the angles for any Chebyshev polynomial $T_k(x)$ are explicitly known. This insight sheds light on \Cref{prob:space-efficient-QSVT} and suggests an alternative pre-processing approach for QSVT: Instead of finding rotation angles, it seems suffice to find projection coefficients of Chebyshev polynomials. 

Recently, Metger and Yuen~\cite{MY23} realized this approach and constructed bounded polynomial approximations of the sign and \textit{shifted} square-root functions with exponential precision in polynomial space by utilizing Chebyshev truncation, which offers a partial solution to \Cref{prob:space-efficient-QSVT}.\footnote{To clarify, we can see from~\cite{MY23} that directly adapting their construction shows that implementing QSVT for any $s(n)$-qubit block-encoding with $O(s(n))$-bit precision requires $\poly(s(n))$ classical and quantum space for any $s(n) \geq \Omega(\log{n})$. 
However, \Cref{prob:space-efficient-QSVT} (space-efficient QSVT) seeks to reduce the dependence of $s(n)$ in the space complexity from \textit{polynomial} to \textit{linear}. \label{footnote:QSVT-log-vs-polylog}} 
The key ingredient behind their approach is the degree-$d$ Chebyshev truncation $\tilde{P}_d(x)=\frac{c_0}{2} + \sum_{k=1}^d c_k T_k$ where $T_k$ is the $k$-th Chebyshev polynomial (of the first kind) and $c_k\coloneqq\frac{2}{\pi}\int_{-1}^1 \frac{f(x)T_k(x)}{\sqrt{1-x^2}} \dx$. This provides a \textit{nearly best} uniform approximation compared to the best degree-$d$ polynomial approximation with error $\varepsilon_d(f)$ for the function $f \colon [-1,1] \rightarrow \bbR$. In particular, $\tilde{P}_d$ satisfies $\max_{x\in[-1,1]} |\tilde{P}_d(x) - f(x)| \leq O(\log{d}) \varepsilon_d(f)$. 

Our construction achieves an error bound \textit{independent} of $d$ via a carefully chosen \textit{average} of the Chebyshev truncation, known as the \textit{de La Vall\'ee Poussin partial sum}, $\hat{P}_{d'}(x) = \frac{1}{d} \sum_{l=d}^{d'} \tilde{P}_l(x) = \frac{\hat{c}_0}{2} + \sum_{k=1}^{d'} \hat{c}_k T_k(x)$, with a slightly larger degree $d'=2d-1$. The degree-$d$ averaged Chebyshev truncation $\hat{P}_{d'}$ satisfies $\max_{x\in[-1,1]} |\hat{P}_{d'}(x)-f(x)| \leq 4 \varepsilon_d(f)$. 

Once we have a space-efficient polynomial approximation for the function $f$ (pre-processing), we can establish a space-efficient QSVT associated with $f$ for \textit{bitstring indexed encodings} that additionally require projections $\tilde{\Pi}$ and $\Pi$ spanning the corresponding subset of $\{\ket{0},\ket{1}\}^{\otimes s}$,\footnote{To ensure that $\tilde{\Pi} U \Pi$ admits a matrix representation, we require the basis of projections $\tilde{\Pi}$ and $\Pi$ to have a well-defined order, leading us to focus exclusively on bitstring indexed encoding. Additionally, for simplicity, we assume no ancillary qubits are used here, and refer to \Cref{def:bitstring-indexed-encoding} for a formal definition.} as stated in \Cref{thm:space-efficient-QSVT-informal}: With the space-efficient QSVT associated with Chebyshev polynomials $T_k(x)$, it suffices to implement the averaged Chebyshev truncation polynomial by LCU techniques~\cite{BCC+15} and to renormalize the bitstring indexed encoding by robust oblivious amplitude amplification (if necessary and applicable). 

A refined analysis indicates that applying an averaged Chebyshev truncation to a bitstring indexed encoding for any $d' \leq 2^{O(s(n))}$ and $\epsilon \geq 2^{-O(s(n))}$ requires $O(s(n))$ qubits and deterministic $O(s(n))$ space, provided that an evaluation oracle $\Eval_{P_d}$ estimates coefficients $\{\hat{c}_k\}_{k=0}^{d'}$ of the averaged Chebyshev truncation with $O(\log(\epsilon^2/d))$ precision. However, our approach causes a \textit{quadratic} dependence of the degree $d$ in the query complexity to $U$. 

\begin{theorem}[Space-efficient QSVT, informal version of \Cref{thm:space-efficient-QSVT}]
    \label{thm:space-efficient-QSVT-informal}
    Let $f\colon\bbR \rightarrow \bbR$ be a continuous function bounded on $\calI \subseteq [-1,1]$. If there exists a degree-$d$ polynomial $P^*_d$ that approximates $h\colon[-1,1] \rightarrow \bbR$, where $h$ approximates $f$ only on $\calI$ with additive error at most $\epsilon$, such that $\max_{x \in [-1,1]} |h(x)-P^*_d(x)| \leq \epsilon$, then the degree-$d$ averaged Chebyshev truncation yields another degree-$d'$ polynomial $P_{d'}$, with $d'=2d-1$, satisfying the following conditions:
    \[\max_{x \in \calI} |f(x)-P_{d'}(x)| \leq O(\epsilon) \quad \text{and} \quad \max_{x \in [-1,1]} |P_{d'}(x)| \leq 1.\]
    Furthermore, we have an algorithm $\calA_f$ that computes any coefficient $\{\hat{c}_k\}_{k=0}^{d'}$ of the averaged Chebyshev truncation polynomial $P_{d'}$ space-efficiently. The algorithm is deterministic for continuously bounded $f$, and bounded-error randomized for piecewise-smooth $f$. Additionally, for any $s(n)$-qubit bitstring indexed encoding $U$ of $A=\tilde{\Pi} U \Pi$ with $d'\leq 2^{O(s(n))}$, we can implement the quantum singular value transformation $P_{d'}^{\SV}(A)$ using $O(d^2\|\hat{\bfc}\|_1)$ queries\footnote{The dependence of $\|\hat{\bfc}\|_1$ arises from renormalizing the bitstring indexed encoding via amplitude amplification.} to $U$ with $O(s(n))$ qubits. 
    It is noteworthy that $\|\hat{\bfc}\|_1$ is bounded by $O(\log{d})$ in general, and we can further improve to a constant norm bound for twice continuously differentiable functions.    
\end{theorem}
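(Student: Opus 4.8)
The plan is to prove this in four stages: (i) the approximation-theoretic claim about the averaged Chebyshev truncation; (ii) the space-efficient computability of the coefficients $\{\hat c_k\}$; (iii) the quantum circuit implementation via LCU plus the explicitly-known Chebyshev angles; and (iv) the norm bounds on $\|\hat{\bfc}\|_1$. For stage (i) I would start from the hypothesised degree-$d$ polynomial $P^*_d$ with $\max_{x\in[-1,1]}|h(x)-P^*_d(x)|\le\epsilon$, which by Chebyshev's equioscillation theory gives $\varepsilon_d(h)\le\epsilon$ (best degree-$d$ uniform error on $[-1,1]$). The de La Vallée Poussin sum $\hat P_{d'}(x)=\frac1d\sum_{l=d}^{2d-1}\tilde P_l(x)$ is a known near-minimax operator: the classical fact is $\max_{x\in[-1,1]}|\hat P_{d'}(x)-h(x)|\le 4\,\varepsilon_d(h)$, with the constant $4$ coming from $\|V_{d}\|_\infty\le 3$ (the de La Vallée Poussin kernel has $L^1$ norm bounded by $3$, versus the $O(\log d)$ Lebesgue constant of raw truncation) combined with the elementary bound $\|\hat P_{d'}-h\|\le(1+\|V_d\|)\varepsilon_d(h)$. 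Restricting to $\calI$ and using that $h$ itself is $\epsilon$-close to $f$ on $\calI$ gives $\max_{x\in\calI}|f(x)-\hat P_{d'}(x)|\le 4\epsilon+\epsilon=O(\epsilon)$. For the boundedness $\max_{x\in[-1,1]}|P_{d'}(x)|\le 1$: $\hat P_{d'}$ is already within $O(\epsilon)$ of $f$ everywhere $f$ is defined and $f$ is bounded by $1$ on $\calI$; off $\calI$ one applies the standard QSVT trick of rescaling by a factor $1-O(\epsilon)$ (and, if needed, post-composing with a low-degree bounded polynomial capping the value), which only changes the degree by a constant factor and the error by $O(\epsilon)$.

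For stage (ii), the coefficients are $\hat c_k=\frac1d\sum_{l=\max(d,k)}^{2d-1}c_k^{(l)}$ where $c_k^{(l)}=c_k$ for $l\ge k$ is just the single integral $c_k=\frac2\pi\int_{-1}^1 f(x)T_k(x)(1-x^2)^{-1/2}\,dx$ (the weighted-truncation coefficient does not depend on the truncation degree), so in fact $\hat c_k=\frac{d-\max(d,k)+d}{d}\,c_k$ reduces to $c_k$ scaled by an explicit, easily-computed factor for $k<d$ and to a partial tail for $d\le k\le 2d-1$. The substantive point is computing $c_k$ to precision $O(\log(\epsilon^2/d))$ bits in $O(s(n))$ deterministic space: substitute $x=\cos\theta$ to get $c_k=\frac2\pi\int_0^\pi f(\cos\theta)\cos(k\theta)\,d\theta$ and discretise with a quadrature rule on $2^{O(s(n))}$ nodes, each evaluation of $f(\cos\theta)$ being either a direct computation (continuously bounded case) or a sample-mean estimate (piecewise-smooth case, e.g. $f$ built from $\sqrt{\cdot}$ or $\sign$) — this is where "deterministic for continuously bounded $f$, bounded-error randomized for piecewise-smooth $f$" comes from. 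Iterating a running sum over exponentially many quadrature nodes uses only $O(s(n))$ space because the index and the accumulator each have $O(s(n))$ bits; this is the same bookkeeping as in Metger–Yuen, and the degree-$d$ vs.~degree-$d'$ distinction is cosmetic.

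Stage (iii) is essentially assembly: since the QSVT angles for each Chebyshev polynomial $T_k^{\SV}$ are the explicit sequence $\phi_1=(1-k)\pi/2$, $\phi_j=\pi/2$ otherwise (quoted in the excerpt), each $T_k^{\SV}(A)$ is implementable on a bitstring indexed encoding $U$ of $A=\tilde\Pi U\Pi$ with $k$ queries to $U$ and $O(s(n))$ qubits, with no classical pre-processing beyond knowing $k$. Writing $P_{d'}=\sum_{k=0}^{d'}\hat c_k T_k$ (absorbing the $\hat c_0/2$), one applies the LCU lemma of \cite{BCC+15}: prepare a $(\log d')$-qubit "selection" register in the state $\propto\sum_k\sqrt{|\hat c_k|}\ket k$, controlled-apply $T_k^{\SV}(A)$, uncompute, and postselect — this block-encodes $P_{d'}^{\SV}(A)/\|\hat{\bfc}\|_1$ using $\sum_k O(k)=O(d'^2)$ queries to $U$; since $d'=2d-1$ this is $O(d^2)$. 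Renormalising back to a bitstring indexed encoding of $P_{d'}^{\SV}(A)$ (when the target is itself a near-isometry or otherwise admits amplitude amplification) costs a further $O(\|\hat{\bfc}\|_1)$ multiplicative factor via robust oblivious amplitude amplification, giving the stated $O(d^2\|\hat{\bfc}\|_1)$ total; the state preparation for the selection register is space-efficient precisely because the $\hat c_k$ are computable in $O(s(n))$ space by stage (ii). Stage (iv): $\|\hat{\bfc}\|_1\le\frac1d\sum_{l=d}^{2d-1}\|\tilde{\bfc}^{(l)}\|_1$, and $\|\tilde{\bfc}^{(l)}\|_1=O(\log l)$ is the classical bound on the $\ell_1$ norm of Chebyshev-truncation coefficients of a bounded function (Lebesgue-constant estimate again); for $f\in C^2$ one gets $|c_k|=O(k^{-2})$ by integrating by parts twice in the $\theta$-integral, so $\sum_k|c_k|=O(1)$ and hence $\|\hat{\bfc}\|_1=O(1)$.

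I expect the main obstacle to be stage (ii) carried out with full rigour in the randomized, piecewise-smooth case: one must control the \emph{joint} error from (a) quadrature truncation of the $\theta$-integral, (b) Monte-Carlo estimation error of $f(\cos\theta)$ at each node for functions like shifted-$\sqrt{\cdot}$ that are only Hölder near a point, and (c) the reuse of a single read-once random tape across exponentially many nodes — all while keeping the workspace at $O(s(n))$ rather than $\poly(s(n))$ bits and propagating the coefficient error $O(\log(\epsilon^2/d))$ through the LCU/amplification step to the claimed $O(\epsilon)$ final QSVT error. Everything else (the de La Vallée Poussin bound, the explicit Chebyshev angles, the LCU lemma, robust oblivious amplitude amplification) is either classical or quoted, so the real work is the careful space-bounded numerical-analysis accounting, which I would isolate into a self-contained lemma on space-efficient estimation of weighted Chebyshev coefficients.
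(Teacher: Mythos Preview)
Your stages (i), (iii), and (iv) track the paper closely: the de La Vall\'ee Poussin bound (Lemma~2.6 in the paper), the explicit Chebyshev angles plus LCU plus robust oblivious amplitude amplification (Theorem~3.8/Lemma~3.10--3.12), and the $\ell_1$ norm bounds via smoothness (Lemma~2.7) are exactly what the paper does.

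The genuine gap is stage (ii) in the piecewise-smooth case. Your proposal is to compute $c_k=\frac{2}{\pi}\int_0^\pi f(\cos\theta)\cos(k\theta)\,d\theta$ by quadrature, with ``sample-mean estimate'' of $f(\cos\theta)$ supplying the randomness. This does not work and misidentifies where the randomness enters. First, $f$ is a fixed deterministic function; there is nothing to sample. Second, and more seriously, for the piecewise-smooth functions of interest (e.g.\ $\ln(1/x)$, $1/x$, $\sqrt{x}$) the function $f$ is \emph{unbounded} on $[-1,1]\setminus\calI$, so the Chebyshev integral $\int_0^\pi f(\cos\theta)\cos(k\theta)\,d\theta$ need not even exist, and in any case the second derivative of the integrand blows up, destroying the quadrature error bound. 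You cannot apply averaged Chebyshev truncation directly to $f$; this is precisely why the theorem is stated with the intermediate function $h$ that approximates $f$ only on $\calI$ and is bounded on all of $[-1,1]$.

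The paper's route is substantially more elaborate: starting from the Taylor series of $f$ at $x_0$, it builds a low-weight \emph{Fourier} approximation $\hat f$ on $\calI$ via the arcsin substitution of van Apeldoorn--Gily\'en--Gribling--de Wolf (their Lemma~37), then sets $h=\hat f\cdot R$ where $R$ is a smooth rectangle (a difference of shifted error functions) that kills $\hat f$ outside $\calI$. Only then is averaged Chebyshev truncation applied to $h$, whose integrand now has second derivative $O(d^3)$ (Fact~3.5.1). The randomized step is the computation of the Fourier coefficients $c_m$: these are built from coefficients $b_l^{(k)}$ satisfying a convolution recurrence that the paper recasts as powering an $L\times L$ upper-triangular \emph{sub-stochastic} matrix, and then estimates each entry by running a bounded-space random walk on the associated Markov chain (Lemma~2.10). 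So the randomness comes from stochastic matrix powering, not from Monte Carlo function evaluation, and the ``piecewise-smooth $\Rightarrow$ randomized'' distinction is exactly the presence or absence of this Fourier preprocessing layer. Your anticipated obstacle (joint quadrature/Monte-Carlo/random-tape error control) is therefore aimed at the wrong target; the actual difficulty is engineering $h$ so that deterministic quadrature applies to it, at the cost of a randomized subroutine inside the construction of $h$ itself.
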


Our techniques in \Cref{thm:space-efficient-QSVT-informal} offer three advantages over the techniques proposed by~\cite{MY23}. Firstly, our techniques can handle any \textit{piecewise-smooth function}, such as the (normalized) logarithmic function $\ln(1/x)$, the multiplicative inverse function $1/x$, and the square-root function $\sqrt{x}$;\footnote{Our technique can imply a better norm bound $\|\hat{\bfc}\| \leq O(1)$. See \Cref{remark:square-function} for the details.} whereas the techniques from~\cite{MY23} are restricted to continuously bounded functions whose second derivative of the integrand in $\{\hat{c}_k\}_{k=1}^{d'}$ is at most $\poly(d)$ on the interval $\calI=[-1,1]$, such as the sign function and the \textit{shifted} square-root function $\sqrt{(x+1)/2}$.\footnote{Specifically, the second derivative $|f''(x)|$ of the shifted square-root function $f(x)\coloneqq\sqrt{(x+1)/2}$ is unbounded at $x=-1$. Nevertheless, we can circumvent this point by instead considering $g_{\delta}(x)=\sqrt{(1-\delta)(x+1)/2+\delta}$ with the second derivative $|g''_{\delta}(-1)|=O(\delta^{-3/2})$, as shown in \cite[Lemma 2.11]{MY23}.} Secondly, our techniques are \textit{constant overhead} in terms of the space complexity of the bitstring indexed encoding $U$, while the techniques from~\cite{MY23} are only \textit{poly-logarithmic overhead}. Thirdly, our techniques have an error bound independent of $d$, unlike the $\log{d}$ factor in~\cite{MY23}, simplifying parameter trade-offs for applying the space-efficient QSVT to concrete problems. 

In addition, it is noteworthy that applying the space-efficient QSVT with the sign function will imply a unified approach to error reduction for the classes \BQUL{}, \coRQUL{}, and \RQUL{}. 

\paragraph{Computing the coefficients.}
We will implement the evaluation oracle $\Eval_{P_d}$ to prove \Cref{thm:space-efficient-QSVT-informal}. To estimate the coefficients $\{\hat{c}_k\}_{k=0}^{d'}$ in the averaged Chebyshev truncation for any function $f$ that is bounded on the interval $\calI=[-1,1]$, we can use standard numerical integral techniques,\footnote{We remark that using a more efficient numerical integral technique, such as the exponentially convergent trapezoidal rule, may improve the required space complexity for computing coefficients by a constant factor.} given that the integrand's second derivative in $\{\hat{c}_k\}_{k=0}^{d'}$ is bounded by $\poly(d)$.

However, implementing the evaluation oracle for piecewise-smooth functions $f$ on an interval $\calI \subsetneq [-1,1]$ is relatively complicated. We cannot simply apply averaged Chebyshev truncation to $f$. Instead, we consider a low-degree Fourier approximation $g$ resulting from implementing smooth functions to Hamiltonians~\cite[Appendix B]{vAGGdW17}. We then make the error vanish outside $\calI$ by multiplying with a Gaussian error function, resulting in $h$ which approximates $f$ \textit{only} on $\calI$. Therefore, we can apply averaged Chebyshev truncation and our algorithm for bounded functions to $h$ through a somewhat complicated calculation. 

Finally, we need to compute the coefficients of the low-degree Fourier approximation $g$. Interestingly, this step involves the \textit{stochastic matrix powering problem}, which lies at the heart of space-bounded derandomization, e.g.,~\cite{SZ99,CDSTS23,PP23}. We utilize space-efficient random walks on a directed graph to estimate the power of a stochastic matrix. Consequently, we can only develop a bounded-error randomized algorithm $\calA_f$ for piecewise-smooth functions.\footnote{The (classical) pre-processing in space-efficient QSVT is \textit{not} part of the deterministic Turing machine producing the quantum circuit description in the \BQL{} model (\Cref{def:BQUSPACE}). Instead, we treat it as a component of quantum computation, allowing the use of randomized algorithms since $\BPL \subseteq \BQL$~\cite{FR21}.}

\subsection{Proof overview: A general framework for quantum state testing}

Our framework enables space-bounded quantum state testing, specifically for proving \Cref{thm:space-bounded-quantum-state-certification-RQL-complete-informal,thm:space-bounded-quantum-state-testing-BQL-complete-informal}, and is based on the one-bit precision phase estimation~\cite{Kitaev95}, also known as the \textit{Hadamard test}~\cite{AJL09}. Prior works~\cite{TS13, FL18} have employed (one-bit precision) phase estimation in space-bounded quantum computation.

To address quantum state testing problems, we reduce them to estimating $\Tr(P_{d'}(A)\rho)$, where $\rho$ is a (mixed) quantum state prepared by a quantum circuit $Q_{\rho}$, $A$ is an Hermitian operator block-encoded in a unitary operator $U_A$, and $P_{d'}$ is a space-efficiently computable degree-$d'$ polynomial obtained from some degree-$d$ averaged Chebyshev truncation with $d'=2d-1$. Similar approaches have been applied in \textit{time-bounded} quantum state testing, including fidelity estimation~\cite{GP22} and subsequently trace distance estimation~\cite{WZ23}.

\begin{figure}[!htp]
    \centering
    \begin{quantikz}[wire types={q,b,b,b}, classical gap=0.07cm, row sep=0.75em]
        \lstick{$\ket{0}$} & \gate{H} & \ctrl{1} & \gate{H} & \meter{} \\
        \lstick{$\ket{\bar 0}$} &  & \gate[2]{U_{P_{d'}(A)}}  &  &  \\
        \lstick{$\ket{\bar 0}$} & \gate[2]{\quad Q_{\rho} \quad}  &  &  &  \\
        \lstick{$\ket{\bar 0}$} & & & & 
    \end{quantikz}
    \caption{General framework for quantum state testing $\calT(Q_{\rho},U_A,P_{d'})$.}
    \label{fig:framework-state-testing-intro}
\end{figure}
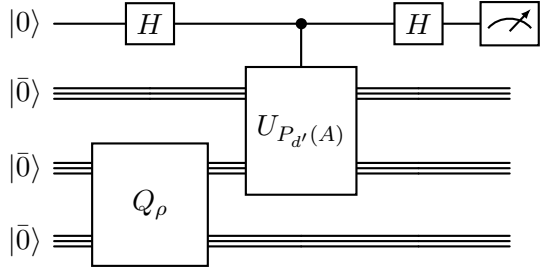

To implement a unitary operator $U_{P_{d'}(A)}$ that (approximately) block-encodes $P_{d'}(A)$ in a space-efficient manner, we require $P_{d'}$ to meet the conditions specified in \Cref{thm:space-efficient-QSVT-informal}. As illustrated in \Cref{fig:framework-state-testing-intro}, we denote the quantum circuit as $\calT(Q_{\rho}, U_A, P_{d'})$, where we exclude the precision for simplicity. The measurement outcome of $\calT(Q_{\rho}, U_A, P_{d'})$ will be $0$ with a probability close to $\frac{1}{2} \big(1+\Tr(P_{d'}(A)\rho)\big)$. This property allows us to estimate $\Tr(P_{d'}(A)\rho)$ within an additive error $\epsilon$ using $O(1/\epsilon^2)$ sequential repetitions, resulting in a \BQL{} containment. 

As an example of the application, $\mathcal{T}(Q_{i}, U_{\frac{\rho_0-\rho_1}{2}}, P_{d'}^{\sign})$ is utilized in \GapQSD{}, where $U_{\frac{\rho_0-\rho_1}{2}}$ is a block-encoding of $\frac{\rho_0-\rho_1}{2}$, and $P_{d'}^{\sign}$ is a space-efficient polynomial approximation of the sign function. 
Notably, this algorithm can be viewed as a two-outcome measurement $\{\hat{\Pi}_0,\hat{\Pi}_1\}$ where $\hat{\Pi}_0 = \frac{1}{2}I+\frac{1}{2} P_{d'}^{\sign}\big(\frac{\rho_0-\rho_1}{2}\big)$.
Similarly, $\mathcal{T}(Q_{i}, U_{\rho_{i}}, P_{d'}^{\ln})$ is utilized in \GapQED{}, where $U_{\rho_{i}}$ is a block-encoding of $\rho_{i}$ for $i\in\binset$, and $P_{d'}^{\ln}$ is a space-efficient polynomial approximation of the normalized logarithmic function. Both $P_{d'}^{\sign}$ and $P_{d'}^{\ln}$ can be obtained by employing \Cref{thm:space-efficient-QSVT-informal}.

\paragraph{Making the error one-sided.} The main challenge is constructing a unitary $U$ of interest, such as $\calT(Q_{\rho},U_A,P_{d'})$, that accepts with a \textit{certain fixed} probability $p$ for \textit{yes} instances ($\rho_0=\rho_1$), while having a probability that polynomially deviates from $p$ for \textit{no} instances. 
As an example, we consider \coCertQHSlog{} and express $\HS(\rho_0,\rho_1)$ as a linear combination of $\Tr(\rho_0^2)$, $\Tr(\rho_1^2)$, and $\Tr(\rho_0\rho_1)$.
We can then design a unitary quantum algorithm that satisfies the requirement for \textit{yes} instances based on the SWAP test~\cite{BCWdW01}, and consequently, we can achieve perfect completeness by applying the exact amplitude amplification~\cite{BBHT98,BHMT02}. The analysis demonstrates that the acceptance probability polynomially deviates from $1$ for \textit{no} instances. By applying error reduction for \coRQUL{}, the resulting algorithm is indeed in \coRQUL{}. 

Moving on to \coCertQSDlog{}, we consider the quantum circuit $U_{i} = \mathcal{T}(Q_{i}, U_{\frac{\rho_0-\rho_1}{2}}, P_{d'}^{\sign})$ for $i \in \binset$. Since our space-efficient QSVT \textit{preserves parity}, specifically, the approximation polynomial $P_{d'}^{\sign}$ satisfies $P_{d'}^{\sign}(0)=0$,\footnote{Let $f$ be any odd function such that space-efficient QSVT associated with $f$ can be implemented by \Cref{thm:space-efficient-QSVT-informal}. It follows that the corresponding approximation polynomial $P_{d'}^{(f)}$ is also odd. See \Cref{remark:QSVT-parity-preserving}.} the requirement for \textit{yes} instances is satisfied. Then, we can similarly achieve the \coRQUL{} containment for \coCertQSDlog{}. 

\subsection{Discussion and open problems}

Since space-efficient quantum singular value transformation (QSVT) offers a unified framework for designing quantum logspace algorithms, it suggests a new direction to find applications of space-bounded quantum computation. 
An intriguing candidate is solving positive semi-definite programming (SDP) programs with constant precision~\cite{JY11, AZLO16}. 
A major challenge in achieving a \BQL{} containment for this problem is that iteratively applying the space-efficient QSVT super-constantly many times may lead to a bitstring indexed encoding requiring $\omega(\log{n})$ ancillary qubits, raising the question:

\begin{enumerate}[label={\upshape(\alph*)}]
\item Is it possible to have an approximation scheme (possibly under certain conditions) that introduces merely $O(1)$ additional ancillary qubits in the bitstring indexed encoding per iteration, such that applying space-efficient QSVT $\log{n}$ times results in a bitstring indexed encoding with at most $O(\log{n})$ ancillary qubits? \label{open-problems:iterative-space-efficient-QSVT}
\end{enumerate}

Furthermore, as quantum distances investigated in this work are all instances of a quantum analog of symmetric $f$-divergence, there is a natural question on other instances:
\begin{enumerate}[label={\upshape(\alph*)}]
    \setcounter{enumi}{1}
    \item Can we demonstrate that space-bounded quantum state testing problems with respect to other quantum distance-like measures are also \BQL{}-complete? 
\end{enumerate}

In addition, there is a question on improving the efficiency of the space-efficient QSVT:\footnote{Another technical limitation of our approach is that the pre-processing time in our space-efficient QSVT is $\poly(d,\epsilon^{-1})$, whereas in the time-efficient QSVT it is $\poly(d)$. For polynomial approximations of degree $d=O(\delta^{-1} \log(\epsilon^{-1}))$, such as for the sign function, this difference yields an \textit{exponentially} worse dependence on $\epsilon^{-1}$, leading to weaker results in \Cref{footnote:space-efficient-error-reduction-subtly}. Nevertheless, this limitation seems \textit{fundamental} and does not appear improvable in general, as uniform polynomial approximations of the (signed) positive power function in~\cite{LW25entropy,LW25Lalpha} has degree $d=\poly(\epsilon^{-1})$, matching the $\epsilon^{-1}$ dependence of our pre-processing.} 
\begin{enumerate}[label={\upshape(\alph*)}]
    \setcounter{enumi}{2}
    \item Recently, a query complexity lower bound $\Omega(d)$ for matrix functions~\cite{MS23} implies that time-efficient QSVT~\cite{GSLW19} is time-optimal. Can we improve the query complexity of $U$ and $U^{\dagger}$ in space-efficient QSVT for smooth functions from $O(d^2)$ to $O(d)$? This improvement would make QSVT optimal for both (quantum) time and space. \label{open-problems:space-efficient-QSVT-query-complexity}
\end{enumerate}

Notably, the pre-processing in QSVT techniques, which is not necessarily classical in general, usually involves finding the sequence of $z$-axis rotation angles. Our approach, however, uses averaged Chebyshev truncation and the LCU technique. A general solution thus seems to involve developing a space-efficient (quantum) angle-finding algorithm. 

\subsection{Related works}

\paragraph{More on quantum state testing problems.}
Testing the spectrum of quantum states was studied in~\cite{OW21}: for example, whether a quantum state is maximally mixed or $\epsilon$-far away in trace distance from mixed states can be tested using $\Theta(N/\epsilon^2)$ samples.
Later, it was generalized in~\cite{BOW19} to quantum state certification with respect to fidelity and trace distance. 
Estimating distinguishability measures of quantum states~\cite{ARS+21} is another topic, including the estimation of fidelity~\cite{FL11,WZC+22,GP22} and trace distance~\cite{WGL+22,WZ23}.

Entropy estimation of quantum states has been widely studied in the literature. 
Given quantum purified access, it was shown in~\cite{GL20} that the von Neumann entropy $\S(\rho)$ can be estimated within additive error $\epsilon$ with query complexity $\tilde O(N/\epsilon^{1.5})$. 
If we know the reciprocal $\kappa$ of the minimum non-zero eigenvalue of $\rho$, then $\S(\rho)$ can be estimated with query complexity $\tilde O(\kappa^2/\epsilon)$~\cite{CLW20}.
$\S(\rho)$ can be estimated within multiplicative error $\epsilon$ with query complexity $\tilde O(n^{\frac{1}{2}+\frac{1+\eta}{2\epsilon^2}})$~\cite{GHS21}, provided that $\S(\rho) = \Omega(\epsilon + 1/\eta)$.
If $\rho$ is of rank $r$, then $\S(\rho)$ can be estimated with query complexity $\tilde O(r/\epsilon^2)$~\cite{WGL+22}.
Estimating the R\'enyi entropy $S_\alpha(\rho)$ given quantum purified access was first studied in~\cite{SH21}, and then was improved in~\cite{WGL+22,LWZ22}. 
In addition, the work of~\cite{GH20} investigates the (conditional) hardness of \GapQED{} with logarithmic depth or constant depth. 

\paragraph{Further implications of our work.} By analyzing the optimal prover strategies underlying the $\mathsf{QIP(2)}$ and $\mathsf{co\text{-}QIP(2)}$ proof systems for \QSZK{}, as presented in~\cite{Wat02}, one can establish that these prover strategies can be approximately implemented in quantum linear space, yielding a slight improvement of the (quantum) upper bound for \QSZK{}, namely $\mathsf{QIP(2)} \cap \mathsf{co\text{-}QIP(2)}$ with a quantum linear-space (and thus single-exponential-time) honest prover~\cite{LLW25}.\footnote{A weaker version of this result, specifically that \QSZK{} is in \QIPtwo{} with a quantum linear-space honest prover, appeared in the second arXiv version of this work (as well as in the second-named author's PhD thesis~\cite[Section 6.3]{Liu25}), but was removed from the current version. Additionally, the improvement in~\cite{LLW25} also applies to the non-interactive variant of \QSZK{}: in particular, \textsf{NIQSZK} is in $\mathrm{qq}\text{-}\mathsf{QAM}$ with a quantum linear-space honest prover.} 
In addition, our new \BQL{}-complete problem \GapQSDlog{} implies that a space-bounded variant of quantum statistical zero-knowledge~\cite{Wat02,Wat09}, where the verifier's actions are restricted to be \textit{unitary} (denoted by $\mathsf{QSZK_\mathrm{U}L}$), is contained in \BQL{} (and indeed satisfies $\mathsf{QSZK_\mathrm{U}L}=\BQL$)~\cite{LLNW24}. This stands in contrast to the time-bounded setting, in which \QSZK{} is unlikely to collapse to \BQP{}, and computational advantages are typically gained from interaction.

\section{Preliminaries}
\label{sec:preliminaries}

We assume that the reader is familiar with quantum computation and the theory of quantum information. 
For an introduction, the textbooks by~\cite{NC10} and~\cite{deWolf19} provide a good starting point, while for a more comprehensive survey on quantum complexity theory, refer to~\cite{Watrous08}. 

In addition, we adopt the convention that the logarithmic function $\log$ has a base of $2$, denoted by $\log(x)\coloneqq\log_2(x)$ for any $x \in \bbR^{+}$. 
Moreover, we define $\tilde{O}(f)\coloneqq O(f \polylog(f))$. 
Lastly, for the sake of simplicity, we utilize the notation $\ket{\bar{0}}$ to represent $\ket{0}^{\otimes a}$ with $a>1$.

\subsection{Singular value decomposition and transformation}
\label{subsec:singular-value-decomp-and-trans}
We recommend~\cite{Bhatia96,HJ12,Halmos87} for comprehensive textbooks on matrix analysis and linear algebra. 
For any $\tilde{d} \times d$ (complex) matrix $A$, there is a \textit{singular value decomposition} of $A$ such that $A= \sum_{i=1}^{\min\{d,\tilde{d}\}} \sigma_i \ket{\tilde{\psi}_i}\bra{\psi_i}$, where: 
\begin{itemize}[itemsep=0.33em,topsep=0.33em,parsep=0.33em]
    \item The \textit{singular values} $\sigma_1 \geq \sigma_2 \geq \cdots \geq \sigma_{\min\{d,\tilde{d}\}} \geq 0$, where non-zero singular values $\sigma_i$ are the square roots of non-zero eigenvalues of $A^{\dagger}A$ or $AA^{\dagger}$.
    \item $\ket{\tilde{\psi}_1}, \cdots, \ket{\tilde{\psi}_{\tilde{d}}}$ form an orthonormal basis and are eigenvectors of $AA^{\dagger}$. 
    \item $\ket{\psi_{1}}, \cdots, \ket{\psi_{d}}$ form an orthonormal basis and are eigenvectors of $A^{\dagger}A$. 
\end{itemize}

Notably, the largest singular value of $A$ coincides with the operator norm of $A$, specifically $\|A\|\coloneqq\|A\|_{2\rightarrow 2} = \sigma_1(A)$. 
Let $\| \ket{\psi} \|_2 \coloneqq \sqrt{\innerprod{\psi}{\psi}}$ be the Euclidean norm of a vector $\ket{\psi}$. 
Next, we list the families of matrices that are commonly used in this work. It is noteworthy that they all admit the singular value decomposition: 
\begin{itemize}
    \item \textbf{Hermitian matrices}. $H^{\dagger} = H$, if and only if $\bra{\psi} H \ket{\psi} \in \bbR$ for all $\ket{\psi}$ such that $\|\ket{\psi}\|_2=1$, if and only if the absolute values of the eigenvalues of $H$ coincide with its singular values. 
    \item \textbf{Unitary matrices}. $UU^{\dagger}= U^{\dagger}U=I$, if and only if $\|U\ket{\psi}\|_2 = \|U^{\dagger}\ket{\psi}\|_2 = 1$ for all $\ket{\psi}$ such that $\|\ket{\psi}\|_2=1$. Equivalently, all eigenvalues $\lambda_i$ of $U$ have modulus $|\lambda_i|=1$, implying that all singular values of $U$ are $1$. 
    \item \textbf{Positive semi-definite matrices}. $P=CC^{\dagger}$ for some matrix $C$, if and only if $\bra{\psi}P\ket{\psi} \geq 0$ for all $\ket{\psi}$ such that $\|\ket{\psi}\|_2=1$, if and only if all eigenvalues of $P$ are non-negative. 
    \item \textbf{Orthogonal projection matrices}. $\Pi^2=\Pi=\Pi^{\dagger}$, if and only if $\Pi^2=\Pi$ and $\|\Pi \ket{\psi}\|_2 \leq 1$ for all $\ket{\psi}$ such that $\|\ket{\psi}\|_2=1$. Equivalently, all eigenvalues of $\Pi$ are either $0$ or $1$. For the last characterization, see~\cite[III.75]{Halmos87} and~\cite[Corollary 3.4.3.3]{HJ12}.  
    \item \textbf{Partial isometries}. $G G^{\dagger} G = G$, if and only if $G^{\dagger} G G^{\dagger} = G^{\dagger}$, if and only if $\|G \ket{\psi}\|_2=1$ for all $\ket{\psi} \in \ker(G)^{\perp}$ such that $\|\ket{\psi}\|_2=1$, if and only if $G^{\dagger} G$ is an orthogonal projection onto $\ker(G)^{\perp}$ (see~\cite[Exercise III.76.5]{Halmos87}). Consequently, the non-zero singular values of a partial isometry $G$ are all $1$. Moreover, an injective partial isometry is an isometry, and an invertible partial isometry is unitary. 
\end{itemize}

For any matrix $A$ satisfying $\|A\| \leq 1$, there is a unitary $U$ with orthogonal projections $\tilde{\Pi}$ and $\Pi$ such that $A = \tilde{\Pi} U \Pi$.\footnote{As indicated in~\cite[2.7.P2]{HJ12}, such a matrix $U$ is called a \textit{unitary dilation} of $A$.  This unitary dilation $U$ exists if and only if $A$ is a contraction, namely $\|A\| \leq 1$. \label{footnote:unitary-dilation}} With these definitions in place, we can view the singular value decomposition as the \textit{projected unitary encoding} (see \Cref{def:bitstring-indexed-encoding}): 

\begin{definition}[Singular value decomposition of a projected unitary, adapted from Definition 7 in~\cite{GSLW19}]
    \label{def:SVD-projected-unitary}
    Given a projected unitary encoding of $A$, denoted by $U$, associated with orthogonal projections $\Pi$ and $\tilde{\Pi}$ on a finite-dimensional Hilbert space $\calH_U$: $A=\tilde{\Pi} U \Pi$.
    Then the singular value decomposition of $A$ ensures that there exist orthonormal bases of $\Img(\Pi)$ and $\Img\rbra[\big]{\tilde{\Pi}}$ such that:
    \begin{itemize}[itemsep=0.33em,topsep=0.33em,parsep=0.33em]
    \item $\Pi$: $\left\{ \ket{\psi_{i}}: i \in [d] \right\}$, where $d\coloneqq\rank(\Pi)$, of a subspace $\Img(\Pi)=\spanset\left\{\ket{\psi_i}\right\}$;
    \item $\tilde{\Pi}$: $\big\{ \ket{\tilde{\psi}_i}: i\in[\tilde{d}] \big\}$, where $\tilde{d}\coloneqq\rank(\tilde{\Pi})$, of a subspace $\Img(\tilde{\Pi})=\spanset\big\{\ket{\tilde{\psi}_i}\big\}$. 
    \end{itemize}
\end{definition}

We say that a function $f \colon \bbR \rightarrow \bbC$ is \textit{even} if $f(-x)=f(x)$ for all $x\in \bbR$, and that it is \textit{odd} if $f(-x)=-f(x)$ for all $x\in\bbR$. Next, we define the \textit{singular value transformation} of matrices:  

\begin{definition}[Singular value transformation by even or odd functions, adapted from Definition 9 in~\cite{GSLW19}]
    \label{def:matrix-SV-function}
    Let $f\colon \bbR \rightarrow \bbC$ be an even or odd function. We consider a linear operator $A\in \bbC^{\tilde{d}\times d}$ satisfying the singular value decomposition $A = \sum_{i=1}^{\min\{d,\tilde{d}\}} \sigma_i \ket{\tpsi_i}\bra{\psi_i}$. We define the \textit{singular value transformation} corresponding to $f$ as follows: 
    \[f^{\SV}(A) \coloneqq \begin{cases}
        \sum_{i=1}^{\min\{d,\tilde{d}\}} f(\sigma_i) \ket{\tpsi_i}\bra{\psi_i},&\text{for odd }f,\\
        \sum_{i=1}^d f(\sigma_i) \ket{\psi_i}\bra{\psi_i},&\text{for even }f.
    \end{cases}\]
    Here, $\sigma_i\coloneqq0$ for $i \in \{\min\{d,\tilde{d}\}\!+\!1,\cdots,d\!-\!1,d\}$. 
    For any Hermitian matrix $A$, $f^{\SV}(A)=f(A)$.     
\end{definition}

Finally, for any $d\times d$ Hermitian matrix $A$, there is a \textit{spectral decomposition} of $A$ such that $A = \sum_{i=1}^d \lambda_i \ket{\psi_i}\bra{\psi_i}$ where all eigenvalues $\{\lambda_i\}_{i=1}^d$ are real and $\{ \ket{\psi_i} \}_{i=1}^{d}$ is an orthonormal basis. As a consequence, if $f$ is an even or odd function, $f(A) = \sum_{i=1}^d f(\lambda_i) \ket{\psi_i}\bra{\psi_i} = f^{\SV}(A)$ can be achieved by singular value transformation defined in \Cref{def:matrix-SV-function}.

\subsection{Distances and divergences for quantum states}
We will provide an overview of relevant quantum distances and divergences, along with useful inequalities among different quantum distance-like measures. 
Additionally, we recommend~\cite[Section 3.1]{BOW19} for a nice survey on quantum distances and divergences. 
We say that a square matrix $\rho$ is a quantum state if $\rho$ is positive semi-definite and $\Tr(\rho)=1$.

\begin{definition}[Quantum distances and divergences]
    \label{def:quantum-distances}
    For any quantum states $\rho_0$ and $\rho_1$, we define several distance-like measures and relevant quantities: 
    \begin{itemize}
        \item \textnormal{\textbf{Trace distance}.} $\td(\rho_0,\rho_1)\coloneqq\frac{1}{2}\Tr|\rho_0-\rho_1|=\frac{1}{2}\Tr(((\rho_0-\rho_1)^\dagger(\rho_0-\rho_1))^{1/2}).$
        \item \textnormal{\textbf{(Uhlmann) Fidelity}.} $\F(\rho_0,\rho_1)\coloneqq\Tr|\sqrt{\rho_0}\sqrt{\rho_1}|$.
        \item \textnormal{\textbf{Squared Hilbert--Schmidt distance}.} $\HS(\rho_0,\rho_1)\coloneqq\frac{1}{2}\Tr(\rho_0-\rho_1)^2.$
        \item \textnormal{\textbf{von Neumann entropy}.} $\S(\rho)\coloneqq-\Tr(\rho \ln \rho)$ for any quantum state $\rho$. 
        \item \textnormal{\textbf{Quantum Jensen-Shannon divergence}.} $\QJS(\rho_0,\rho_1) \coloneqq \S\big(\tfrac{\rho_0+\rho_1}{2}\big)-\tfrac{\S(\rho_0)+\S(\rho_1)}{2}$. 
    \end{itemize}
\end{definition}

The trace distance and the squared Hilbert--Schmidt distance reach the minimum of $0$ when $\rho_0$ equals $\rho_1$, while the fidelity attains a maximum value of $1$. Additionally, there are two equalities when at least one of the two states is a pure state --- A quantum state $\rho$ is a pure state if and only if $\Tr(\rho^2)=1$, equivalently $\rho=\ket{\psi}\bra{\psi}$ for some $\ket{\psi}$ satisfying $\|\ket{\psi}\|_2=1$: 
\begin{itemize}
\item For a pure state $\rho_0$ and a mixed state $\rho_1$, $\F^2(\rho_0,\rho_1) = \Tr(\rho_0 \rho_1)$.
\item For two pure states $\rho_0$ and $\rho_1$, $\Tr(\rho_0 \rho_1) = 1 - \HS(\rho_0,\rho_1)$.
\end{itemize}

Moreover, we have $\HS(\rho_0,\rho_1) = \frac{1}{2}(\Tr(\rho_0^2) + \Tr(\rho_1^2)) - \Tr(\rho_0\rho_1)$. Additionally, Fuchs and van de Graaf~\cite{FvdG99} showed a well-known inequality between the trace distance and the fidelity:
\begin{lemma}[Trace distance vs.~fidelity, adapted from~\cite{FvdG99}]
\label{lemma:traceDist-vs-fidelity}
For any states $\rho_0$ and $\rho_1$, 
\[1-\F(\rho_0,\rho_1) \leq \td(\rho_0,\rho_1) \leq \sqrt{1-\F^2(\rho_0,\rho_1)}.\]
\end{lemma}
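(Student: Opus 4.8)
The plan is to prove the two inequalities separately, reducing each to an elementary inequality between probability distributions. Two classical-to-quantum bridges suffice. First, the variational characterization $\td(\rho_0,\rho_1)=\max_{0\preceq P\preceq I}\Tr\big(P(\rho_0-\rho_1)\big)$, which identifies the trace distance with the maximum over two-outcome measurements of the total variation distance of the outcome distributions, and in particular shows $\td$ is non-increasing under any quantum channel (monotonicity/data processing). Second, Uhlmann's theorem, $\F(\rho_0,\rho_1)=\max|\innerprod{u_0}{u_1}|$ over purifications $\ket{u_0},\ket{u_1}$ of $\rho_0,\rho_1$ on a common space, together with its measurement dual: the fidelity equals the \emph{minimum}, over POVMs $\{M_x\}$, of the classical fidelity $\sum_x\sqrt{p_xq_x}$ with $p_x=\Tr(M_x\rho_0)$ and $q_x=\Tr(M_x\rho_1)$. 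Since this statement sits in the preliminaries, I would cite these facts (e.g.\ from \cite{NC02}) rather than reprove them.

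For the lower bound $1-\F(\rho_0,\rho_1)\le\td(\rho_0,\rho_1)$, I would fix a POVM $\{M_x\}$ attaining the fidelity as the minimum above, so that $\sum_x\sqrt{p_xq_x}=\F(\rho_0,\rho_1)$ for the induced distributions $p,q$. Monotonicity under the measurement channel gives $\td(\rho_0,\rho_1)\ge\frac12\sum_x|p_x-q_x|$. It then remains to verify the scalar inequality $1-\sum_x\sqrt{p_xq_x}\le\frac12\sum_x|p_x-q_x|$, which follows termwise from $\big(\sqrt{p_x}-\sqrt{q_x}\big)^2\le\big|\sqrt{p_x}-\sqrt{q_x}\big|\big(\sqrt{p_x}+\sqrt{q_x}\big)=|p_x-q_x|$ upon summing and using $\sum_xp_x=\sum_xq_x=1$. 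Chaining the three estimates yields the claim.

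For the upper bound $\td(\rho_0,\rho_1)\le\sqrt{1-\F^2(\rho_0,\rho_1)}$, I would use Uhlmann's theorem in purification form: choose purifications $\ket{u_0},\ket{u_1}$ on a common space with $|\innerprod{u_0}{u_1}|=\F(\rho_0,\rho_1)$ (absorbing a global phase so that $\innerprod{u_0}{u_1}\ge 0$, which is harmless), and compute the trace distance of the two pure states directly. The Hermitian operator $\Delta=\ketbra{u_0}{u_0}-\ketbra{u_1}{u_1}$ is supported on $\span\{\ket{u_0},\ket{u_1}\}$ with $\Tr(\Delta)=0$ and $\Tr(\Delta^2)=2\big(1-|\innerprod{u_0}{u_1}|^2\big)$, so its nonzero eigenvalues are $\pm\sqrt{1-|\innerprod{u_0}{u_1}|^2}$; hence $\td(\ketbra{u_0}{u_0},\ketbra{u_1}{u_1})=\sqrt{1-|\innerprod{u_0}{u_1}|^2}=\sqrt{1-\F^2(\rho_0,\rho_1)}$. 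Applying monotonicity of the trace distance under the partial trace that maps $\ketbra{u_i}{u_i}\mapsto\rho_i$ completes the proof.

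The routine parts are the scalar inequality and the $2\times 2$ spectral computation. The real content lies in the second bridge — the measurement/purification duality for the fidelity — so in a self-contained treatment the main decision is how much of Uhlmann's theorem and its dual to import versus reprove; as a preliminaries lemma I would simply cite them and keep the argument to the two short reductions above.
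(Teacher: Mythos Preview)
The paper does not prove this lemma; it is stated in the preliminaries as a known result and attributed to \cite{FvdG99} without argument. Your proposal is a correct, standard proof of the Fuchs--van de Graaf inequalities (essentially the one found in \cite{NC02}), so there is nothing to compare against and nothing to fix.
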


The joint entropy theorem (\Cref{lemma:joint-entropy-theorem}) enhances our understanding of entropy in classical-quantum states and is necessary for our usages of the von Neumann entropy. 

\begin{lemma}[Joint entropy theorem, adapted from Theorem 11.8(5) in~\cite{NC10}]
    \label{lemma:joint-entropy-theorem}
    Suppose $p_i$ are probabilities corresponding to a distribution $D$, $\ket{i}$ are orthogonal states of a system $A$, and $\{\rho_i\}_i$ is any set of density operators for another system $B$. Then 
    \[\S\rbra*{ \sum_i p_i\ket{i}\bra{i} \otimes \rho_i } = \H(D) + \sum_i p_i \S(\rho_i).\]
\end{lemma}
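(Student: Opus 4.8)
The plan is to reduce the statement to a direct computation of the von Neumann entropy from an explicitly diagonalized form of the classical--quantum state $\sigma \coloneqq \sum_i p_i \ketbra{i}{i} \otimes \rho_i$. First I would discard every index $i$ with $p_i = 0$ (the density operator attached to an impossible outcome is irrelevant) and then fix, for each remaining $i$, a spectral decomposition $\rho_i = \sum_j \lambda^i_j \ketbra{e^i_j}{e^i_j}$, where $\{\ket{e^i_j}\}_j$ is an orthonormal eigenbasis of $\rho_i$ and $\lambda^i_j \geq 0$ with $\sum_j \lambda^i_j = \Tr(\rho_i) = 1$. Substituting into $\sigma$ gives
\[
\sigma = \sum_{i : p_i > 0} \sum_j (p_i \lambda^i_j)\, \ketbra{i}{i} \otimes \ketbra{e^i_j}{e^i_j}.
\]

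The key observation is that the family $\{\ket{i}\otimes\ket{e^i_j}\}_{i,j}$ is orthonormal: for a fixed $i$ the vectors $\ket{e^i_j}$ are orthonormal by construction, and for $i \neq i'$ the states $\ket{i}$ and $\ket{i'}$ of system $A$ are orthogonal, so $\innerprod{i}{i'}\cdot\innerprod{e^i_j}{e^{i'}_{j'}} = 0$. This is the only place the hypothesis that the $\ket{i}$ are orthogonal enters the argument, and it is essential: the $\ket{e^i_j}$ for different $i$ need not be mutually orthogonal, so tensoring with the $\ket{i}$ is exactly what makes the whole family an eigenbasis. Consequently the displayed expression is a valid spectral decomposition of $\sigma$, with eigenvalues $\{p_i\lambda^i_j\}_{i,j}$.

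With the eigenvalues in hand I would then simply expand the entropy. Using $\S(\sigma) = -\Tr(\sigma\ln\sigma) = -\sum_{i,j} p_i\lambda^i_j \ln(p_i\lambda^i_j)$ together with the standard convention $0\ln 0 = 0$ (so vanishing $\lambda^i_j$ contribute nothing),
\[
\S(\sigma) = -\sum_{i,j} p_i\lambda^i_j \ln p_i \;-\; \sum_{i,j} p_i\lambda^i_j \ln\lambda^i_j.
\]
In the first sum $\sum_j \lambda^i_j = 1$ collapses the $j$-summation, leaving $-\sum_i p_i\ln p_i = \H(D)$; in the second sum $-\sum_j \lambda^i_j\ln\lambda^i_j = \S(\rho_i)$ by the definition in \Cref{def:quantum-distances}, leaving $\sum_i p_i \S(\rho_i)$. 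Adding the two yields the claimed identity, where $\H$ denotes the Shannon entropy in the same units (natural logarithm) as the von Neumann entropy $\S$.

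The argument is entirely elementary linear algebra — no subadditivity or concavity of entropy is needed, only the diagonalization above. The only points requiring care, rather than a genuine obstacle, are the bookkeeping around zero probabilities (handled by deleting those indices up front) and the continuity convention $x\ln x \to 0$ used to drop zero eigenvalues from the sums.
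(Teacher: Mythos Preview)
Your proof is correct and is precisely the standard diagonalization argument found in Nielsen and Chuang. The paper itself does not prove this lemma; it simply cites \cite{NC02} and states the result, so there is nothing further to compare.
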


Let us now turn our attention to the quantum Jensen-Shannon divergence, which is defined in~\cite{MLP05}. 
For simplicity, we define $\QJS_2(\rho_0,\rho_1)\coloneqq\QJS(\rho_0,\rho_1)/\ln 2$ using the base-$2$ (matrix) logarithmic function.  Notably, when considering size-$2$ ensembles with a uniform distribution, the renowned Holevo bound~\cite{Holevo73JS} (see Theorem 12.1 in~\cite{NC10}) indicates that the \textit{quantum Shannon distinguishability} studied in~\cite{FvdG99} is at most the quantum Jensen-Shannon divergence. Consequently, this observation yields inequalities between the trace distance and the quantum Jensen-Shannon divergence.\footnote{For a detailed proof of these inequalities, please refer to \cite[Section 2.2]{Liu23}.}

\begin{lemma}[Trace distance vs.\ quantum Jensen-Shannon divergence, adapted from~\cite{FvdG99,Holevo73JS,BH09}]
    \label{lemma:QJS-vs-traceDist}
    For any quantum states $\rho_0$ and $\rho_1$, we have  
    \[1-\binH\rbra*{\frac{1-\td(\rho_0,\rho_1)}{2}} \leq \QJS_2(\rho_0,\rho_1) \leq \td(\rho_0,\rho_1).\]
    \noindent Here, the binary entropy $\binH(p)\coloneqq-p\log(p)-(1-p)\log(1-p)$.
\end{lemma}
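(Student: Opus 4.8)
The plan is to route both inequalities through the Holevo $\chi$-quantity. Writing $\sigma \coloneqq \tfrac{\rho_0+\rho_1}{2}$, the definition unpacks to $\QJS_2(\rho_0,\rho_1) = \tfrac{1}{\ln 2}\bigl(\S(\sigma) - \tfrac12\S(\rho_0) - \tfrac12\S(\rho_1)\bigr)$, which is exactly the Holevo $\chi$-quantity (in bits) of the ensemble $\{(\tfrac12,\rho_0),(\tfrac12,\rho_1)\}$; equivalently, by the joint entropy theorem (\Cref{lemma:joint-entropy-theorem}), it equals the quantum mutual information $I(X{:}B)_\tau$, in bits, of the classical--quantum state $\tau_{XB} \coloneqq \tfrac12\ketbra{0}{0}_X \otimes \rho_0 + \tfrac12\ketbra{1}{1}_X \otimes \rho_1$. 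This cq-state is the common pivot for both bounds.

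For the lower bound I would measure the $B$-register of $\tau_{XB}$ with the Holevo--Helstrom measurement $\{\Pi_0,\Pi_1\}$~\cite{Holevo73TraceDist,Helstrom69} to obtain a classical bit $Y$; by the Holevo--Helstrom bound the guess $\widehat{X} = Y$ errs with probability exactly $\tfrac12 - \tfrac12\td(\rho_0,\rho_1) \le \tfrac12$, so Fano's inequality for a binary variable gives $\S(X|Y) \le \binH\!\bigl(\tfrac{1-\td(\rho_0,\rho_1)}{2}\bigr)$ and hence $I(X{:}Y) = 1 - \S(X|Y) \ge 1 - \binH\!\bigl(\tfrac{1-\td(\rho_0,\rho_1)}{2}\bigr)$. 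Since a measurement on $B$ is a quantum channel, the data-processing inequality --- equivalently, the Holevo bound~\cite{Holevo73JS} --- yields $\QJS_2(\rho_0,\rho_1) = I(X{:}B)_\tau \ge I(X{:}Y)$, the claimed lower bound. In the vocabulary of~\cite{FvdG99} this is the chain $1 - \binH\!\bigl(\tfrac{1-\td}{2}\bigr) \le (\text{quantum Shannon distinguishability}) \le \QJS_2$, the last step being the Holevo bound.

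The harder direction, and the step I expect to be the main obstacle, is the upper bound $\QJS_2(\rho_0,\rho_1) \le \td(\rho_0,\rho_1)$: this estimate is dimension-free, so the Fannes--Audenaert continuity bound for $\S$ (which carries a $\log(\dim)$ factor) is of no use. For it I would invoke Bri\"et and Harremo\"es~\cite{BH09}, who prove precisely that the quantum Jensen--Shannon divergence in bits is at most the trace distance (a detailed account also appears in~\cite[Appendix B]{Liu23}). The guiding intuition is the commuting/classical case: with $M = \tfrac{P+Q}{2}$ and $a_x = P(x)/(2M(x)) \in [0,1]$, one has $\mathrm{JSD}_2(P,Q) = \sum_x M(x)\bigl(1 - \binH(a_x)\bigr)$ while the total variation distance equals $\sum_x M(x)\,|2a_x - 1|$, so there the inequality reduces termwise to the scalar fact $1 - \binH\!\bigl(\tfrac{1+t}{2}\bigr) \le |t|$ for $t \in [-1,1]$, which is immediate from concavity of $\binH$. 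Carrying this over to non-commuting $\rho_0,\rho_1$ without reintroducing a dimension dependence is exactly where~\cite{BH09} does the real work, via convexity and monotonicity properties of $\QJS_2$; the lower bound, by contrast, is routine once the Holevo--Helstrom measurement and Fano's inequality are in hand. As a final sanity check I would note that the two displayed estimates are mutually consistent, since $1 - \binH\!\bigl(\tfrac{1-\td}{2}\bigr) \le \td$ is the same scalar fact with $t = \td(\rho_0,\rho_1)$.
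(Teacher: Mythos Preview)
Your proposal is correct and aligns with the paper's own treatment. The paper does not spell out a proof but indicates exactly the route you take: it notes (just before the lemma) that the Holevo bound upper-bounds the quantum Shannon distinguishability of~\cite{FvdG99} by $\QJS_2$, which is your lower-bound chain via the Holevo--Helstrom measurement and Fano, and it cites~\cite{BH09} (and \cite[Appendix~B]{Liu23}) for the upper bound $\QJS_2 \le \td$, precisely as you do.
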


\subsection{Space-bounded quantum computation}

We say that a function $s(n)$ is \textit{space-constructible} if there exists a deterministic space $s(n)$ Turing machine that takes $1^n$ as input and outputs $s(n)$ in the unary encoding. 
Moreover, we say that a function $f(n)$ is $s(n)$-\textit{space computable} if there exists a deterministic space $s(n)$ Turing machine that takes $1^n$ as an input and output $f(n)$. 
Our definitions of space-bounded quantum computation are formulated in terms of \textit{quantum circuits}, whereas many prior works focused on \textit{quantum Turing machines}~\cite{Wat09,Wat03,vMW12}. For a discussion on the equivalence between space-bounded quantum computation using \textit{quantum circuits} and \textit{quantum Turing machines}, we refer readers to \cite[Appendix A]{FL18} and \cite[Section 2.2]{FR21}. 

We begin by defining time-bounded and space-bounded quantum circuit families and then proceed to the corresponding complexity class $\BQUSPACE[s(n)]$. We use the abbreviated notation $C_x$ to indicate that the circuit $C_{|x|}$ takes input $x$.

\begin{definition}[Time- and space-bounded quantum circuit families]
    \label{def:unitary-quantum-circuit}
    A (unitary) quantum circuit is a sequence of quantum gates, each of which belongs to some fixed gateset that is universal for quantum computation, such as $\{\Had, \CNOT, \T\}$. 
    For a promise problem $\calL = (\calL_{\yes},\calL_{\no})$, we say that a family of quantum circuits $\{C_x: x\in \calL\}$ is $t(n)$-time-bounded if there is a deterministic Turing machine that, on any input $x \in \calL$, runs in time $O(t(|x|))$, and outputs a description of $C_x$ such that $C_x$ accepts (resp., rejects) if $x \in \calL_{\yes}$ (resp., $x \in \calL_{\no}$). 
    
    \noindent Similarly, we say that a family of quantum circuits $\{C_x: x\in \calL\}$ is $s(n)$-space-bounded if there is a deterministic Turing machine that, on any input $x \in \calL$, runs in space $O(s(|x|))$ (and hence time $2^{O(s(|x|))}$), and outputs a description of $C_x$ such that $C_x$ accepts (resp., rejects) if $x \in \calL_{\yes}$ (resp., $x \in \calL_{\no}$); furthermore, $C_x$ acts on $O(s(|x|))$ qubits and has $2^{O(s(|x|))}$ gates.
\end{definition}

\begin{definition}[{$\BQUSPACE[s(n),a(n),b(n)]$}, adapted from~{\cite[Definition 5]{FR21}}]
    \label{def:BQUSPACE}
    Let $s\colon\bbN \rightarrow \bbN$ be a space-constructible function such that $s(n) \geq \Omega(\log{n})$. Let $a(n)$ and $b(n)$ be functions that are computable in deterministic space $s(n)$. 
    A promise problem $(\calL_{\yes},\calL_{\no})$ is in $\BQUSPACE[s(n),a(n),b(n)]$ if there exists a family of $s(n)$-space-bounded  (unitary) quantum circuits $\{C_x\}_{x\in\calL}$, where $n=|x|$, satisfying the following:
    \begin{itemize}[itemsep=0.33em,topsep=0.33em,parsep=0.33em]
        \item The output qubit is measured in the computational basis after applying $C_x$. We say that $C_x$ \textit{accepts} $x$ if the measurement outcome is $1$, whereas $C_x$ \textit{rejects} $x$ if the outcome is $0$. 
        \item If $x \in \calL_{\yes}$, $\Pr{C_x \text{ accepts } x} \geq a(|x|)$. 
        \item If $x \in \calL_{\no}$, $\Pr{C_x \text{ accepts } x} \leq b(|x|)$. 
    \end{itemize}
\end{definition}
We remark that \Cref{def:BQUSPACE} is \textit{gateset-independent}, given that the gateset is closed under adjoint and all entries in chosen gates have reasonable precision. This property is due to the space-efficient Solovay--Kitaev theorem presented in~\cite{vMW12}. 
Moreover, we can achieve error reduction for $\BQUSPACE[s(n),a(n),b(n)]$ as long as $a(n)-b(n) \geq 2^{-O(s(n))}$, which follows from~\cite{FKLMN16} or our space-efficient QSVT-based construction in \Cref{subsec:BQUL-error-reduction}. 
We thereby define $\BQUSPACE[s(n)]\coloneqq\BQUSPACE[s(n),2/3,1/3]$ to represent (two-sided) bounded-error unitary quantum space, and $\BQUL\coloneqq\BQUSPACE[O(\log{n})]$ to denote unitary quantum logspace. 

We next consider general space-bounded quantum computation, which allows \textit{intermediate quantum measurements}. As indicated in \cite[Section 4.1]{AKN98}, for any quantum channel $\Phi$ mapping from density matrices on $k_1$ qubits to density matrices on $k_2$ qubits, we can exactly simulate this quantum channel $\Phi$ by a unitary quantum circuit acting on $2k_1+k_2$ qubits. Therefore, we extend \Cref{def:unitary-quantum-circuit} to \textit{general quantum circuits}, which allows local operations, such as intermediate measurements in the computational basis, resetting qubits to their initial states, and tracing out qubits. 
Now we proceed with a definition on $\BQSPACE[s(n)]$.

\begin{definition}[{$\BQSPACE[s(n),a(n),b(n)]$}, adapted from~{\cite[Definition 7]{FR21}}]
    \label{def:BQSPACE}
    Let $s\colon\bbN \rightarrow \bbN$ be a space-constructible function such that $s(n) \geq \Omega(\log{n})$. Let $a(n)$ and $b(n)$ be functions that are computable in deterministic space $s(n)$. 
    A promise problem $(\calL_{\yes},\calL_{\no})$ is in $\BQSPACE[s(n),a(n),b(n)]$ if there exists a family of $s(n)$-space-bounded general quantum circuits $\{\Phi_x\}_{x\in\calL}$, where $n=|x|$, such that the following holds:
    \begin{itemize}[itemsep=0.33em,topsep=0.33em,parsep=0.33em]
        \item The output qubit is measured in the computational basis after applying $\Phi_x$. We say that $\Phi_x$ \textit{accepts} $x$ if the measurement outcome is $1$, whereas $\Phi_x$ \textit{rejects} $x$ if the outcome is $0$.
        \item If $x \in \calL_{\yes}$, $\Pr{\Phi_x \text{ accepts } x} \geq a(|x|)$.
        \item If $x \in \calL_{\no}$, $\Pr{\Phi_x \text{ accepts } x} \leq b(|x|)$. 
    \end{itemize}
\end{definition}

It is noteworthy that unitary quantum circuits, which correspond to unitary channels, are a specific instance of general quantum circuits that correspond to quantum channels. We thus infer that $\BQUSPACE[s(n)] \subseteq \BQSPACE[s(n)]$ for any $s(n) \geq \Omega(\log{n})$. However, the opposite direction was a long-standing open problem. 
Recently, Fefferman and Remscrim~\cite{FR21} demonstrated a remarkable result that $\BQSPACE[s(n)] \subseteq \BQUSPACE[O(s(n))]$. 
In addition, it is evident that $\BQSPACE[s(n)]$ can achieve error reduction since it admits sequential repetition simply by resetting working qubits. Therefore, we define $\BQSPACE[s(n)]\coloneqq\BQSPACE[s(n),2/3,1/3]$ to represent (two-sided) bounded-error general quantum space, and denote general quantum logspace by $\BQL\coloneqq\BQSPACE[O(\log{n})]$. 

\vspace{1em}
We now turn our attention to \textit{one-sided} bounded-error unitary quantum space $\RQUSPACE[s(n)]$ and $\coRQUSPACE[s(n)]$ for $s(n) \geq \Omega(\log{n})$. These complexity classes were first introduced by Watrous~\cite{Wat01} and have been further discussed in~\cite{FR21}. We proceed with the definitions:
\begin{itemize}
    \item $\RQUSPACE[s(n),a(n)]\coloneqq\BQUSPACE[s(n),a(n),0]$; 
    \item $\coRQUSPACE[s(n),b(n)]\coloneqq\BQUSPACE[s(n),1,b(n)]$. 
\end{itemize}
Note that $\RQUSPACE[s(n),a(n)]$ and $\coRQUSPACE[s(n),b(n)]$ can achieve error reduction, as shown in~\cite{Wat01} or our space-efficient QSVT-based construction in~\Cref{subsec:BQUL-error-reduction}. We define 
\begin{align*}
    \RQUSPACE[s(n)] &\coloneqq \BQUSPACE\sbra*{ s(n),1/2,0 },\\
    \coRQUSPACE[s(n)] &\coloneqq \BQUSPACE\sbra*{ s(n),1,1/2 }
\end{align*}
to represent one-sided bounded-error unitary quantum space, as well as logspace counterparts 
\begin{align*}
    \RQUL &\coloneqq \RQUSPACE[O(\log{n})],\\
    \coRQUL &\coloneqq \coRQUSPACE[O(\log{n})].
\end{align*}

\begin{remark}[\RQUL{} and \coRQUL{} are gateset-dependent]
    \label{remark-RQUL-gateset-dependent}
    We observe that changing the gateset using the space-efficient Solovay--Kitaev theorem~\cite{vMW12} can cause errors, revealing the \textit{gateset-dependence} of unitary quantum space classes with one-sided bounded-error. To address this issue, we adopt a larger gateset $\mathcal{G}$ for $\RQUSPACE[s(n)]$ and $\coRQUSPACE[s(n)]$, which includes any single-qubit gates whose amplitudes can be computed in deterministic $O(s(n))$ space. 
\end{remark}

\subsection{Polynomial approximation via averaged Chebyshev truncation}
\label{subsec:Chebyshev-polys-and-truncated-expansion}

We begin by defining Chebyshev polynomials and then introduce Chebyshev truncation and averaged Chebyshev truncation, with the latter commonly known as the \textit{de La Vall\'ee Poussin partial sum}. These concepts are essential to our space-efficient quantum singular value transformation techniques (space-efficient QSVT, see \Cref{sec:space-efficient-QSVT}). For a comprehensive review of Chebyshev series and Chebyshev expansion, we refer the reader to~\cite[Chapter 3]{Rivlin90}. 

\begin{definition}[Chebyshev polynomials]
The Chebyshev polynomials (of the first kind) $T_k(x)$ are defined via the following recurrence relation: 
\[T_0(x)\coloneqq 1, T_1(x)\coloneqq x, \text{ and } T_{k+1}(x)\coloneqq 2x T_k(x)-T_{k-1}(x).\] 
For $x \in [-1,1]$, an equivalent definition is $T_k(\cos \theta) = \cos(k \theta)$.
\end{definition}

To use Chebyshev polynomials (of the first kind) for Chebyshev expansion, we first need to define an inner product between two functions, $f$ and $g$, as long as the following integral exists:

\begin{equation}
    \label{eq:polynomial-inner-product}
    \innerprodF{f}{g} \coloneqq \frac{2}{\pi} \int_{-1}^1 \frac{f(x)g(x)}{\sqrt{1-x^2}} \dx = \frac{2}{\pi} \int_{-\pi}^0 f(\cos\theta)g(\cos\theta) \dtheta.
\end{equation}

The Chebyshev polynomials form an orthonormal basis in the inner product space induced by $\innerprodF{\cdot}{\cdot}$ defined in \Cref{eq:polynomial-inner-product}.
As a result, any continuous and integrable function $f: [-1,1] \rightarrow \bbR$ whose Chebyshev coefficients satisfy $\lim_{k \rightarrow \infty} c_k=0$, where $c_k$ is defined in \Cref{eq:Chebyshev-expansion}, has a Chebyshev expansion given by:
\begin{equation}
    \label{eq:Chebyshev-expansion}
    f(x)=\frac{1}{2} c_0 T_0(x) + \sum_{k=1}^{\infty} c_k T_k(x), \text{ where }  c_k\coloneqq\innerprodF{T_k}{f}.
\end{equation} 

A natural approach to approximating functions with a Chebyshev expansion is to consider the truncated version of the Chebyshev expansion $\tilde{P}_d = c_0/2 + \sum_{k=1}^d c_k T_k$, denoted as \textit{Chebyshev truncation}. Remarkably, $\tilde{P}_d$ provides a \textit{nearly best} uniform approximation to $f$: 
\begin{lemma}[Nearly best uniform approximation by Chebyshev truncation, adapted from Theorem 3.3 in~\cite{Rivlin90}]
\label{lemma:truncated-Chebyshev-expansion}
For any continuous and integrable function $f\colon[-1,1]\rightarrow \bbR$, let $\varepsilon_d(f)$ be the truncation error that corresponds to the degree-$d$ best uniform approximation on $[-1,1]$ to $f$, then the degree-$d$ Chebyshev truncation polynomial $\tilde{P}_d$ satisfies
\[\varepsilon_d(f) \leq \max_{x\in[-1,1]} |f(x)-\tilde{P}_d(x)| \leq \Big( 4+\frac{4}{\pi^2} \log{d} \Big) \varepsilon_d(f)\]

\noindent Consequently, if there is a degree-$d$ polynomial $P^*_d\in\bbR[x]$ such that $\max_{x\in[-1,1]} |f(x)-P^*_d(x)| \leq \epsilon$, then the degree-$d$ Chebyshev truncation polynomial $\tilde{P}_d$ satisfies 
\[\max_{x\in[-1,1]} |f(x)-\tilde{P}_d(x)| \leq O(\epsilon \log d).\]
\end{lemma}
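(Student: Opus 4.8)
The statement to prove is Lemma~\ref{lemma:truncated-Chebyshev-expansion}, the near-optimality of Chebyshev truncation as a uniform approximant. Since the excerpt explicitly says it is \emph{adapted from Theorem 3.3 in~\cite{Rivlin90}}, the plan is essentially to invoke and unpack the classical Lebesgue-constant argument for Chebyshev (Fourier-cosine) partial sums, and then derive the stated corollary by a standard triangle-inequality comparison with the best approximant.

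\textbf{Plan.} First I would reduce everything to Fourier analysis on the circle via the substitution $x=\cos\theta$: a degree-$d$ polynomial in $x$ corresponds to an even trigonometric polynomial of degree $d$ in $\theta$, and the Chebyshev truncation $\tilde P_d$ of $f$ corresponds exactly to the $d$-th partial sum $S_d$ of the Fourier cosine series of $g(\theta)\coloneqq f(\cos\theta)$. This identification is what makes the inner product in \eqref{eq:polynomial-inner-product} the natural one. Then the operator $f\mapsto \tilde P_d$ is a bounded linear projection onto the space $\mathcal{P}_d$ of degree-$\le d$ polynomials, in the sup norm; its operator norm is the Lebesgue constant $\Lambda_d$ of the Fourier projection, which is classically known to satisfy $\Lambda_d \le 4 + \tfrac{4}{\pi^2}\log d$ (this is the Dirichlet-kernel $L^1$-norm estimate; one bounds $\tfrac{1}{\pi}\int_{-\pi}^{\pi}|D_d(\theta)|\,d\theta$ by splitting the integral near $0$ and using $|\sin(\theta/2)|\ge |\theta|/\pi$ on the rest). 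The left inequality $\varepsilon_d(f)\le \max_x|f-\tilde P_d|$ is immediate because $\tilde P_d\in\mathcal P_d$ and $\varepsilon_d(f)$ is the error of the \emph{best} degree-$d$ approximant, hence a lower bound for the error of any particular degree-$d$ polynomial.

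\textbf{The main inequality.} For the upper bound, let $P^\star_d\in\mathcal P_d$ be the best uniform approximant, so $\|f-P^\star_d\|_\infty=\varepsilon_d(f)$. Since the truncation operator is a linear projection onto $\mathcal P_d$, it fixes $P^\star_d$, i.e. $\widetilde{(P^\star_d)}_d = P^\star_d$. Therefore
\[
\|f-\tilde P_d\|_\infty \le \|f-P^\star_d\|_\infty + \|P^\star_d - \tilde P_d\|_\infty = \varepsilon_d(f) + \|\widetilde{(P^\star_d - f)}_d\|_\infty \le \varepsilon_d(f) + \Lambda_d\,\varepsilon_d(f) = (1+\Lambda_d)\,\varepsilon_d(f),
\]
and $1+\Lambda_d \le 4 + \tfrac{4}{\pi^2}\log d$ for $d$ large enough (the small cases being absorbed into the constant), which gives the claimed bound. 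The corollary is then purely mechanical: if some $P^\star_d$ achieves $\|f-P^\star_d\|_\infty\le\epsilon$ then $\varepsilon_d(f)\le\epsilon$ by definition of $\varepsilon_d$ as the infimum, so $\|f-\tilde P_d\|_\infty \le (4+\tfrac{4}{\pi^2}\log d)\,\epsilon = O(\epsilon\log d)$.

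\textbf{Expected obstacle.} There is no deep obstacle — the real content is the Lebesgue-constant estimate $\Lambda_d = O(\log d)$, which is classical and which the paper is content to cite from Rivlin; the only care needed is bookkeeping. Specifically I would be careful that (i) the normalization of the inner product in \eqref{eq:polynomial-inner-product} (the factor $2/\pi$ and the $c_0/2$ convention) really does make $\tilde P_d$ the exact Fourier partial sum of $g(\theta)=f(\cos\theta)$, since an off-by-a-factor there would corrupt the constant; (ii) the hypotheses ``continuous and integrable on $[-1,1]$'' suffice for the Fourier series to converge in the relevant sense and for $\varepsilon_d(f)\to 0$ (Weierstrass), so that all the quantities in the statement are finite; and (iii) the small-$d$ regime, where $4+\tfrac{4}{\pi^2}\log d$ might not literally dominate $1+\Lambda_d$, is handled by the absolute constant $4$ in front. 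None of this requires new ideas beyond transcribing the classical proof into the present normalization.
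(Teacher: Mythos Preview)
Your proposal is correct and is precisely the classical Lebesgue-constant argument that underlies Theorem~3.3 in Rivlin, which the paper cites without reproducing a proof. The paper does not give its own argument for this lemma, so there is nothing further to compare against; your sketch (Chebyshev truncation $=$ Fourier partial sum under $x=\cos\theta$, projection fixes $P^\star_d$, triangle inequality plus the Dirichlet-kernel $L^1$ bound $\Lambda_d=O(\log d)$) is exactly the intended route.
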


It is noteworthy that the proof of \Cref{lemma:truncated-Chebyshev-expansion} in~\cite{Rivlin90} relies only on the linear decay of Chebyshev coefficients $c_k$ for any Chebyshev expansion. However, for functions with a Chebyshev expansion whose Chebyshev coefficients decay almost exponentially, Chebyshev truncation is ``asymptotically'' as good as the best uniform approximation: 

\begin{lemma}[A sufficient condition that Chebyshev truncation is ``asymptotically'' best, adapted from~Equation (3.44) in~\cite{Rivlin90}]
    \label{lemma:truncated-Chebyshev-expansion-verygood}
    For any function $f$ that admits a Chebyshev expansion, consider a degree-$d$ Chebyshev truncation polynomial $\tilde{P}_d$, and let $\varepsilon_d(f)$ be the truncation error corresponding to the degree-$d$ best uniform approximation on $[-1,1]$ to $f$.  If the Chebyshev coefficients of $f$ satisfy $\sum_{j=2}^{\infty} |c_{d+j}| \leq \eta |c_{d+1}|$, then 
    \[\varepsilon_d(f) \leq \max_{x\in[-1,1]} |f(x)-\tilde{P}_d(x)| \leq \frac{4}{\pi} (1+\eta) \varepsilon_d(f).\]
\end{lemma}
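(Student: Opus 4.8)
The plan is to prove the claimed two-sided bound on $\max_{x\in[-1,1]}|f(x)-\tilde P_d(x)|$ starting from the Chebyshev expansion $f = \frac{c_0}{2}T_0 + \sum_{k\ge 1}c_k T_k$ and the truncation $\tilde P_d = \frac{c_0}{2}T_0 + \sum_{k=1}^d c_k T_k$. The lower bound is immediate and general: for any degree-$d$ polynomial $P$, $\max_{x\in[-1,1]}|f(x)-P(x)| \ge \varepsilon_d(f)$ by definition of the best uniform approximation error, and $\tilde P_d$ has degree $d$, so $\varepsilon_d(f) \le \max_{x\in[-1,1]}|f(x)-\tilde P_d(x)|$. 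So the content is entirely in the upper bound, and the point of this particular lemma (as opposed to \Cref{lemma:truncated-Chebyshev-expansion}) is to replace the $\log d$ factor by $\frac{4}{\pi}(1+\eta)$ under the hypothesis $\sum_{j\ge 2}|c_{d+j}| \le \eta|c_{d+1}|$.

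First I would write the tail exactly: $f(x) - \tilde P_d(x) = \sum_{j\ge 1} c_{d+j}T_{d+j}(x)$, so since $|T_k(x)|\le 1$ on $[-1,1]$,
\[
\max_{x\in[-1,1]}|f(x)-\tilde P_d(x)| \;\le\; \sum_{j\ge 1}|c_{d+j}| \;=\; |c_{d+1}| + \sum_{j\ge 2}|c_{d+j}| \;\le\; (1+\eta)|c_{d+1}|.
\]
The remaining step is to bound $|c_{d+1}|$ by $\frac{4}{\pi}\varepsilon_d(f)$. Here I would invoke the standard fact relating a single Chebyshev coefficient to the best-approximation error: if $P_d^*$ is the best degree-$d$ uniform approximant, then its Chebyshev coefficients of index $>d$ vanish, so $c_{d+1} = \innerprodF{T_{d+1}}{f} = \innerprodF{T_{d+1}}{f-P_d^*}$. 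Passing to the $\theta$-form of the inner product in \Cref{eq:polynomial-inner-product}, $|c_{d+1}| = \bigl|\frac{2}{\pi}\int_{-\pi}^0 \cos((d+1)\theta)\,(f-P_d^*)(\cos\theta)\,\dtheta\bigr| \le \frac{2}{\pi}\cdot\pi\cdot\varepsilon_d(f)\cdot\sup|\cos| \le 2\varepsilon_d(f)$; a slightly more careful estimate (or just citing Equation (3.44) / the relevant bound in~\cite{Rivlin90}) gives the stated constant $\frac{4}{\pi}$. Combining, $\max_{x\in[-1,1]}|f(x)-\tilde P_d(x)| \le (1+\eta)|c_{d+1}| \le \frac{4}{\pi}(1+\eta)\varepsilon_d(f)$, as claimed.

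The only genuinely delicate point is getting the constant exactly $\frac{4}{\pi}$ rather than $2$ in the bound $|c_{d+1}|\le \frac{4}{\pi}\varepsilon_d(f)$; this is where I would lean on the precise statement in~\cite{Rivlin90} (Equation (3.44) and its surrounding discussion) rather than re-deriving it, since the excerpt explicitly cites that equation. Everything else — the triangle inequality on the tail, $|T_k|\le 1$, and the vanishing of high-index Chebyshev coefficients of $P_d^*$ — is routine. I do not expect any real obstacle; the proof is short, and the hypothesis $\sum_{j\ge 2}|c_{d+j}|\le \eta|c_{d+1}|$ is doing exactly the work of controlling the tail beyond its leading term, which is precisely what is needed to avoid the worst-case $\log d$ blowup from \Cref{lemma:truncated-Chebyshev-expansion}.
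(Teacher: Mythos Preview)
The paper does not supply its own proof of this lemma; it simply cites Equation~(3.44) in Rivlin. Your argument is correct and is essentially the standard one. The only loose thread you flag --- getting the constant $\tfrac{4}{\pi}$ rather than $2$ --- does not require anything deep: in your bound $|c_{d+1}| = \bigl|\tfrac{2}{\pi}\int_{-\pi}^{0}\cos((d{+}1)\theta)\,(f-P_d^*)(\cos\theta)\,\dtheta\bigr|$, instead of using $|\cos|\le 1$ you use $\int_{0}^{\pi}|\cos(k\theta)|\,\dtheta = 2$ for every integer $k\ge 1$, which immediately yields $|c_{d+1}|\le \tfrac{2}{\pi}\cdot 2\cdot \varepsilon_d(f)=\tfrac{4}{\pi}\varepsilon_d(f)$. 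With that, your chain $\max_{[-1,1]}|f-\tilde P_d|\le \sum_{j\ge 1}|c_{d+j}|\le (1+\eta)|c_{d+1}|\le \tfrac{4}{\pi}(1+\eta)\varepsilon_d(f)$ is complete, and the lower bound is trivial as you say.
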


Although \Cref{lemma:truncated-Chebyshev-expansion-verygood} improves the truncation error in \Cref{lemma:truncated-Chebyshev-expansion} from $O(\epsilon \log{d})$ to $O(\epsilon)$, it only applies to a fairly narrow range of functions, such as sine and cosine functions. 
Using an average of Chebyshev truncations, known as the de La Vall\'ee Poussin partial sum, we obtain the degree-$d$ \textit{averaged Chebyshev truncation} $\hat{P}_{d'}$, which is a polynomial of degree $d'=2d-1$:
\begin{equation}
    \label{eq:averaged-Chebyshev-truncation}
    \hat{P}_{d'}(x) \coloneqq \frac{1}{d} \sum_{l=d}^{d'} \tilde{P}_l(x) 
    = \frac{\hat{c}_0}{2} + \sum_{k=1}^{d'} \hat{c}_k T_k(x) \text{ where } \hat{c}_k = \begin{cases}
        c_k ,& 0 \leq k \leq d\\
        \frac{2d-k}{d} c_k,& k > d
    \end{cases},
\end{equation}
we can achieve the truncation error $4 \epsilon$ for any function that admits Chebyshev expansion. 
\begin{lemma}[Asymptotically best approximation by averaged Chebyshev truncation, adapted from Exercise 3.4.7 in~\cite{Rivlin90}]
    \label{lemma:averaged-Chebyshev-truncation}
    For any function $f$ that has a Chebyshev expansion, consider the degree-$d$ averaged Chebyshev truncation $\hat{P}_{d'}$ defined in \Cref{eq:averaged-Chebyshev-truncation}. 
    Let $\varepsilon_d(f)$ be the truncation error corresponding to the degree-$d$ best uniform approximation on $[-1,1]$ to $f$. If there exists a degree-$d$ polynomial $P^*_d\in\bbR[x]$ such that $\max_{x\in[-1,1]} |f(x)-P^*_d(x)| \leq \epsilon$, then
    \[ \max_{x\in[-1,1]} \big| f(x) - \hat{P}_{d'}(x) \big| \leq 4 \varepsilon_d(f) \leq 4 \max_{x\in[-1,1]} |f(x)-P^*_d(x)| \leq 4\epsilon. \]
\end{lemma}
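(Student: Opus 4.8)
The plan is to recognize the degree-$d$ averaged Chebyshev truncation $\hat{P}_{d'}$ of \Cref{eq:averaged-Chebyshev-truncation} as the image of $f$ under a bounded linear operator $V_d$ (the de La Vall\'ee Poussin operator) that reproduces every polynomial of degree at most $d$, and to bound its operator norm $\|V_d\|$ by $3$; the lemma is then immediate. This is the classical argument behind \cite[Exercise 3.4.7]{Rivlin90}, which I would reproduce as follows.

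First I would pass to the Fourier picture via the substitution $x=\cos\theta$. Setting $g(\theta):=f(\cos\theta)$, the identity $T_k(\cos\theta)=\cos(k\theta)$ together with the inner product \Cref{eq:polynomial-inner-product} shows that the Chebyshev coefficient $c_k$ of \Cref{eq:Chebyshev-expansion} equals the $k$-th Fourier cosine coefficient of the even $2\pi$-periodic function $g$, so the degree-$l$ Chebyshev truncation $\tilde{P}_l$ is carried to the $l$-th (symmetric) Fourier partial sum $S_l g$. Hence $\hat{P}_{d'}(\cos\theta)=\frac1d\sum_{l=d}^{2d-1}(S_l g)(\theta)=(\mathcal V_d\ast g)(\theta)$, where $\mathcal V_d:=\frac1d\sum_{l=d}^{2d-1}D_l$ is the de La Vall\'ee Poussin kernel and $D_l$ is the Dirichlet kernel of order $l$.

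Next I would bound $\|\mathcal V_d\|_{L^1}$. Using the telescoping identity $\sum_{l=0}^{m}D_l=(m+1)K_m$, with $K_m$ the Fej\'er kernel, one obtains $\mathcal V_d=2K_{2d-1}-K_{d-1}$; since each Fej\'er kernel is nonnegative and $L^1$-normalized, the triangle inequality gives $\|\mathcal V_d\|_{L^1}\le 2\|K_{2d-1}\|_{L^1}+\|K_{d-1}\|_{L^1}=3$. As convolution with $\mathcal V_d$ on $C[-1,1]$ has operator norm exactly $\|\mathcal V_d\|_{L^1}$, we get $\|V_d\|\le 3$. Moreover, any polynomial $Q$ of degree at most $d$ has a finite Chebyshev expansion of degree $\le d$, so $\tilde{P}_l[Q]=Q$ for every $l\ge d$ and averaging gives $V_d[Q]=Q$; in particular $V_d[P^*_d]=P^*_d$. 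Combining these, by linearity
\[
f-\hat{P}_{d'}=f-V_d[f]=\bigl(f-P^*_d\bigr)-V_d\bigl[f-P^*_d\bigr],
\]
whence $\max_{x\in[-1,1]}|f(x)-\hat{P}_{d'}(x)|\le(1+\|V_d\|)\max_{x\in[-1,1]}|f(x)-P^*_d(x)|\le 4\max_{x\in[-1,1]}|f(x)-P^*_d(x)|$. Taking $P^*_d$ to be a best degree-$d$ uniform approximant yields the bound $4\varepsilon_d(f)$, and since any $P^*_d$ with error at most $\epsilon$ satisfies $\varepsilon_d(f)\le\epsilon$, the full chain of inequalities in the statement follows.

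The main obstacle is the $L^1$ bound on $\mathcal V_d$, i.e.\ the uniform bound on the Lebesgue constant of the de La Vall\'ee Poussin means: everything rests on the kernel identity $\mathcal V_d=2K_{2d-1}-K_{d-1}$ together with the nonnegativity and normalization of the Fej\'er kernel, which are classical (see \cite[Chapter 3]{Rivlin90}) but demand care with the index shift arising because the average in \Cref{eq:averaged-Chebyshev-truncation} runs over $l\in\{d,d+1,\dots,2d-1\}$ rather than symmetrically about $d$, and with the normalization conventions when translating between \Cref{eq:polynomial-inner-product} and Fourier coefficients. Once this $L^1$ estimate is in hand, the remainder is the standard ``near-projection plus reproducing property'' argument and is routine.
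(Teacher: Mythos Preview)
Your proposal is correct and follows precisely the classical de La Vall\'ee Poussin argument that \cite[Exercise 3.4.7]{Rivlin90} intends: the reproducing property $V_d[Q]=Q$ for $\deg Q\le d$, the kernel identity $\mathcal V_d=2K_{2d-1}-K_{d-1}$, and the Fej\'er normalization together give $\|V_d\|\le 3$ and hence the factor $1+\|V_d\|\le 4$.

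Note, however, that the paper does not supply its own proof of this lemma; it simply states the result and defers to Rivlin's exercise. There is therefore nothing in the paper to compare your argument against beyond the citation itself, and your write-up is exactly the standard resolution of that exercise.
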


\vspace{1em}
Lastly, since the $\ell_1$ norm of the coefficient vector corresponding to the polynomial approximation plays a key role in our space-efficient QSVT (\Cref{sec:space-efficient-QSVT}), we provide upper bounds for the coefficient vector $\hat{\bfc} \coloneqq (\hat{c}_0,\cdots,\hat{c}_{d'})$ in \Cref{lemma:coefficient-vector-norm-bound}. Interestingly, Chebyshev coefficients $c_k$ (and so do $\hat{c}_k$) decay a bit faster if the function $f$ becomes a bit smoother.

\begin{lemma}[$\ell_1$-norm bounds on the averaged truncated Chebyshev coefficient vector]
\label{lemma:coefficient-vector-norm-bound}
For any function $f$ that admits a Chebyshev expansion and is bounded with $\max_{x\in[-1,1]} |f(x)| \leq B_0$ for some constant $B_0>0$, we have the following $\ell_1$-norm bounds for the coefficient vector $\hat\bfc$ that corresponds to the degree-$d$ averaged Chebyshev truncation $\widehat P_{d'}$ with $d' = 2d-1$:
\begin{enumerate}[label={\upshape(\roman*)},itemsep=0.33em,topsep=0.33em,parsep=0.33em]
    \item \label{thmitem:l1-norm-bound-general}For any such function $f$ satisfying our conditions, we have
    $\|\hat\bfc\|_1 \le O(B_0\sqrt{d})$;
    \item \label{thmitem:l1-norm-bound-first-conti}If $f(\cos\theta)$ is absolutely continuous on $[-\pi,0]$ and     its angular first derivative satisfies $\int_{-\pi}^{0} \abs*{ \frac{\dd}{\dd\theta}f(\cos\theta) }\dd\theta \leq B_1$, then $\|\hat\bfc\|_1 \leq O(B_0+B_1\log d)$;
    \item \label{thmitem:l1-norm-bound-twice-conti}If $f$ is additionally (at least) twice continuously differentiable and its angular second derivative satisfies $\int_{-\pi}^{0} \abs*{ \frac{\dd^2}{\dd\theta^2}f(\cos\theta) } \dd\theta \leq B_2$, then $\|\hat\bfc\|_1 \leq O(B_0+B_2)$.
\end{enumerate}
\end{lemma}

\begin{proof}
Let $g(\theta) \coloneqq f(\cos\theta)$, and recall that $c_k=\frac{2}{\pi}\int_{-\pi}^{0} g(\theta)\cos(k\theta)\,d\theta$. The $\ell_1$ norm of the coefficient vector $\hat{\bfc}$ can be expressed as
\begin{equation}
    \label{eq:coeff-vec-l1-norm}
    \norm{\hat{\bfc}}_1 = \abs{c_0} + \sum_{k=1}^d \abs{c_k} + \sum_{k=d+1}^{2d-1} \frac{2d-k}{d} \abs{c_k} \coloneqq \abs{c_0} + \sum_{k=1}^{2d-1} w_k \abs{c_k}.
\end{equation}

We begin by bounding the zeroth coefficient:
\begin{equation}
    \label{eq:zeroth-coeff-upper-bound}
    \abs{c_0} = \abs*{ \frac{2}{\pi}\int_{-\pi}^{0} g(\theta)\,\dd\theta }
    \leq \frac{2}{\pi}\int_{-\pi}^{0}|g(\theta)|\,\dd\theta 
    \leq 2B_0.
\end{equation}

\parheading{\Cref{thmitem:l1-norm-bound-general}: the general case.} 
By Parseval's identity (e.g.,~\cite[Theorem 1.3(ii)]{SS03}) for the cosine Fourier coefficients of $g$, we obtain 
\begin{equation}
    \label{eq:nonzeroth-coeff-general}
    \sum_{k=1}^{\infty}\abs{c_k}^2 \leq \frac{\abs{c_0}^2}{2}+\sum_{k=1}^{\infty}\abs{c_k}^2 = \frac{2}{\pi}\int_{-\pi}^{0}\abs{g(\theta)}^2\,\dd\theta \leq 2B_0^2.
\end{equation}
Here, the last inequality uses $\abs{g(\theta)}\leq B_0$. 

Applying the Cauchy--Schwarz inequality to \Cref{eq:coeff-vec-l1-norm}, it follows that
\begin{subequations}
\label{eq:coeff-general-bound}
\begin{align}
    \norm{\hat{\bfc}}_1 &= \abs{c_0} + \sum_{k=1}^{2d-1} w_k \abs{c_k} \\
    &\leq 2B_0 + \rbra*{ \sum_{k=1}^{2d-1} w_k^2 }^{1/2} \rbra*{ \sum_{k=1}^{2d-1}|c_k|^2 }^{1/2}\\
    &\leq 2B_0 + \rbra*{ \frac{4d}{3} }^{1/2} \sqrt{2} B_0 \leq O(B_0\sqrt{d}).
\end{align}
\end{subequations}
Here, the second line follows from \Cref{eq:zeroth-coeff-upper-bound}, and the third line follows from \Cref{eq:nonzeroth-coeff-general} and the fact that
\[ \sum_{k=1}^{2d-1} w_k^2 = \sum_{k=1}^{d}1+\sum_{k=d+1}^{2d-1}\rbra*{ \frac{2d-k}{d} }^2 = d+\frac{1}{d^2}\sum_{j=1}^{d-1}j^2 \leq \frac{4d}{3}. \]

\parheading{\Cref{thmitem:l1-norm-bound-first-conti}: absolutely continuous with bounded angular first derivative.} 
For every $k\geq 1$, integration by parts gives
\begin{subequations}
\label{eq:coeff-first-conti}
\begin{align}
    \abs{c_k} &= \abs*{\frac{2}{\pi}\int_{-\pi}^{0} g(\theta)\cos(k\theta)\,\dd\theta}\\
    &= \abs*{\frac{2}{\pi} \cdot \frac{g(\theta)\sin(k\theta)}{k} \bigg|_{-\pi}^{0} - \frac{2}{\pi k}\int_{-\pi}^{0} g'(\theta)\sin(k\theta)\,\dd\theta}\\
    &\leq \frac{2}{\pi k}\int_{-\pi}^{0}|g'(\theta)|\,\dd\theta\\
    &\leq \frac{2B_1}{\pi k}.
\end{align}
\end{subequations}
Here, the third line follows from $ \frac{g(\theta)\sin(k\theta)}{k} \big|_{-\pi}^{0} = 0$ as $\sin(0)=\sin(-k\pi)=0$, and the last line uses the condition $\int_{-\pi}^{0}|g'(\theta)|\,\dd\theta\leq B_1$. 

Plugging \Cref{eq:zeroth-coeff-upper-bound,eq:coeff-first-conti} into \Cref{eq:coeff-vec-l1-norm}, we conclude that 
\[ \norm{\hat{\bfc}}_1 = \abs{c_0} + \sum_{k=1}^{2d-1} w_k \abs{c_k} 
\leq \abs{c_0} + \sum_{k=1}^{2d-1}  \abs{c_k} \leq 2B_0 + \frac{2B_1}{\pi}\sum_{k=1}^{2d-1}\frac{1}{k} \leq O(B_0 + B_1 \log{d}). \]
Here, the first inequality uses the fact that $0 \leq w_k \leq 1$, and the last inequality follows from the Euler--Maclaurin formula. 

\parheading{\Cref{thmitem:l1-norm-bound-twice-conti}: twice continuously differentiable with bounded angular second derivative.} 
Analogously to \Cref{eq:coeff-first-conti}, integrating by
parts once more, for any $k\ge 1$, it follows that
\begin{subequations}
\label{eq:coeff-twice-conti}
\begin{align}
    \abs{c_k} &= \abs*{- \frac{2}{\pi k}\int_{-\pi}^{0} g'(\theta)\sin(k\theta)\,\dd\theta}\\
    &= \frac{2}{\pi k} \abs*{ -\frac{g'(\theta)\cos(k\theta)}{k} \bigg|_{-\pi}^{0} +\frac{1}{k}\int_{-\pi}^{0} g''(\theta)\cos(k\theta)\,\dd\theta }\\
    &\leq \frac{2}{\pi k^2}\int_{-\pi}^{0}|g''(\theta)|\,\dd\theta\\
    &\leq \frac{2B_2}{\pi k^2}.
\end{align}
\end{subequations}
Here, the boundary term vanishes in the third line because $g'(-\pi)=g'(0)=0$, as $g'(\theta)=-f'(\cos\theta)\sin\theta$, and the last line uses the condition $\int_{-\pi}^{0}\left|g''(\theta)\right|\,\dd\theta \leq B_2$.

Plugging \Cref{eq:zeroth-coeff-upper-bound,eq:coeff-twice-conti} into \Cref{eq:coeff-vec-l1-norm}, we conclude that 
\[ \norm{\hat{\bfc}}_1 \leq \abs{c_0} + \sum_{k=1}^{2d-1}  \abs{c_k} \leq 2B_0 + \frac{2B_2}{\pi}\sum_{k=1}^{2d-1}\frac{1}{k^2} \leq O(B_0 + B_2). \]
Here, the last inequality also follows from the Euler--Maclaurin formula. 
\end{proof}

\subsection{Tools for space-bounded randomized and quantum algorithms}

Our convention assumes that for any algorithm $\calA$ in bounded-error randomized time $t(n)$ and space $s(n)$, $\calA$ outputs the correct value with probability at least $2/3$  (viewed as ``success probability''). 
We first proceed with space-efficient success probability estimation.

\begin{lemma}[Space-efficient success probability estimation by sequential repetitions]
\label{lemma:success-probability-estimation}
Let $\calA$ be a randomized (resp., quantum) algorithm that outputs the correct value with probability $p$, has time complexity $t(n)$, and space complexity $s(n)$. We can obtain an additive-error estimation $\hat{p}$ such that $|p - \hat{p}| \leq \epsilon$, where $\epsilon \geq 2^{-O(s(n))}$. Moreover, this estimation can be computed in bounded-error randomized (resp., quantum) time $O(\epsilon^{-2} t(n))$ and space $O(s(n))$. 
\end{lemma}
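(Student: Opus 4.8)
The plan is the textbook empirical-mean estimator; the only point that needs care is bounding the space spent on the repetition bookkeeping. Here ``success probability'' means $p = \Pr{\calA\ \text{outputs its designated value on a single run}}$ (e.g.\ the probability that the output bit equals $1$), and the goal is an estimate $\hat{p}$ with $|\hat{p}-p|\le\epsilon$.

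First I would set $N \coloneqq \lceil 2\epsilon^{-2}\rceil$ and run $\calA$ sequentially $N$ times: before run $i$, reset all of $\calA$'s workspace — and, in the quantum case, all of $\calA$'s qubits — to the all-zero state; after run $i$, read the output bit $b_i \in \binset$ and increment a tally $k$ whenever $b_i$ equals the designated value; at the end, output $\hat{p} \coloneqq k/N$. Since every run starts from a clean state, $b_1,\dots,b_N$ are i.i.d.\ Bernoulli$(p)$, so Hoeffding's inequality gives $\Pr{|\hat{p}-p| > \epsilon} \leq 2\exp(-2N\epsilon^2) \leq 2e^{-4} < 1/3$, which matches the paper's bounded-error convention; the confidence could be boosted by a larger constant or by a median of independent copies, neither of which is needed here.

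For the accounting: the $N$ runs of $\calA$ cost $O(N\,t(n)) = O(\epsilon^{-2}t(n))$ time, with the resets and the $O(N)$ tally updates lower-order, giving the claimed time bound. For space, the $O(s(n))$ workspace cells (resp.\ qubits) of $\calA$ are \emph{recycled} across all rounds rather than re-allocated, so they contribute only $O(s(n))$; the extra memory is the loop index $i$ and the tally $k$, each an integer in $\{0,\dots,N\}$, i.e.\ $O(\log N) = O(\log(1/\epsilon))$ bits, and the hypothesis $\epsilon \geq 2^{-O(s(n))}$ gives $\log(1/\epsilon) = O(s(n))$, so these counters fit within $O(s(n))$ space; finally $\hat{p} = k/N$ is reported as the pair $(k,N)$ (or in any fixed-point encoding with $O(\log(1/\epsilon))$ bits), again within budget. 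In the quantum case, ``reset between runs'' is a legitimate operation in the general space-bounded model $\BQSPACE$, and by the discussion in the excerpt $\BQSPACE$ equals $\BQUSPACE$ up to a constant-factor space overhead, so sequential repetition is well-defined and space-preserving there as well.

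The hard part here is essentially just the two bookkeeping checks: (i) that $\calA$'s workspace is \emph{reused} round by round — without this the space would blow up to $\Theta(N\,s(n))$ — and (ii) that $N \leq 2^{O(s(n))}$, which is exactly what $\epsilon \geq 2^{-O(s(n))}$ buys, so that the loop counter and the running tally each stay within $O(s(n))$ bits. Everything else is the standard Hoeffding estimate.
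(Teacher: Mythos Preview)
Your proposal is correct and follows essentially the same approach as the paper: sequential repetition with $N=\Theta(\epsilon^{-2})$ runs, the Chernoff--Hoeffding bound, and workspace reuse for the space bound. If anything you are more careful than the paper, which does not spell out the $O(\log(1/\epsilon))=O(s(n))$ counter-size check or the quantum reset issue (the latter is handled in the paper by a brief remark after the proof pointing to $\BQL=\BQUL$).
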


\begin{proof}
Consider a $m$-time sequential repetition of the algorithm $\calA$, and let $X_i$ be a random variable indicating whether the $i$-th repetition succeeds, then we obtain a random variable $X = \frac{1}{m} \sum_{i=1}^m X_i$ such that $\bbE[X]=p$.
Now let $\hat{X}=\frac{1}{m} \sum_{i=1}^m \hat{X}_i$ be the additive-error estimation, where $\hat{X}_i$ is the outcome of $\calA$ in the $i$-th repetition. By the Chernoff--Hoeffding bound (e.g.,~\cite[Theorem 4.12]{MU17}), we know that 
\[\Pr{|\hat{X}-p|\geq \epsilon} \leq 2\exp(-2m\epsilon^2).\] 
By choosing $m=\ceil*{2/\epsilon^{2}}$, this choice of $m$ ensures that this procedure based on $\calA$ succeeds with probability at least $2/3$.

Furthermore, the space complexity of our algorithm is $O(\log{m}) = O(\log{1/\epsilon}) = O(s(n))$ since we can simply reuse the workspace. Also, the time complexity is $m\cdot t(n) = O(\epsilon^{-2} t(n))$ as desired.
\end{proof}

Notably, when applying \Cref{lemma:success-probability-estimation} to a quantum algorithm, we introduce intermediate measurements to retain space complexity through reusing working qubits. While space-efficient success probability estimation without intermediate measurements is possible,\footnote{Fefferman and Lin~\cite{FL18} noticed that one can achieve space-efficient success probability estimation for quantum algorithms without intermediate measurements via quantum amplitude estimation~\cite{BHMT02}. } we will use \Cref{lemma:success-probability-estimation} for convenience, given that $\BQL=\BQUL$~\cite{FR21}.

\vspace{1em}
The SWAP test was originally proposed for pure states in~\cite{BCWdW01}. Subsequently, in~\cite{KMY09}, it was demonstrated that the SWAP test can also be applied to mixed states. 

\begin{lemma}[SWAP test for mixed states, adapted from~{\cite[Proposition 9]{KMY09}}]
\label{lemma:swap-test}
    Let $\rho_0$ and $\rho_1$ be two quantum states, which may be mixed. 
    There exists a $(2n+1)$-qubit quantum circuit that outputs $0$ with probability $\frac{1+\Tr(\rho_0\rho_1)}{2}$, using a single query to each corresponding state-preparation circuit $Q_0$ and $Q_1$, and $O(n)$ one- and two-qubit quantum gates. 
\end{lemma}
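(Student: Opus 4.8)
The plan is to analyze the standard SWAP test and carry out the density-matrix computation directly for mixed inputs. First I would describe the circuit: adjoin one ancilla qubit (the control) in state $\ket{0}$, apply a Hadamard to it, apply the controlled operation $\ketbra{0}{0}\otimes I+\ketbra{1}{1}\otimes\SWAP$, where $\SWAP$ exchanges the two $n$-qubit data registers (one holding $\rho_0$, the other $\rho_1$), apply a Hadamard to the control again, and measure the control in the computational basis. Since the two data registers start in $\rho_0\otimes\rho_1$, the whole circuit acts on $2n+1$ qubits and consumes one copy of each state.

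Second, I would compute the probability that the measured control bit equals $0$. Writing $\bra{0}\Hadamard=\bra{+}$ and conjugating $\rho_0\otimes\rho_1$ by the controlled-$\SWAP$, projecting the control onto $\ketbra{0}{0}$ collapses the outcome-$0$ probability to
\[
\frac14\,\Tr\!\Big[\rho_0\otimes\rho_1+(\rho_0\otimes\rho_1)\SWAP+\SWAP(\rho_0\otimes\rho_1)+\SWAP(\rho_0\otimes\rho_1)\SWAP\Big].
\]
The first and last traces are both $1$ (the last by cyclicity of the trace, since $\SWAP(\rho_0\otimes\rho_1)\SWAP=\rho_1\otimes\rho_0$), while the two middle traces are equal and both equal $\Tr(\rho_0\rho_1)$ by the ``swap trick'' $\Tr[\SWAP(A\otimes B)]=\Tr(AB)$ — which I would justify in one line by linearity on rank-one $A,B$, or by expanding $\SWAP=\sum_{i,j}\ketbra{i}{j}\otimes\ketbra{j}{i}$. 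Adding these up gives $\tfrac14\big(1+\Tr(\rho_0\rho_1)+\Tr(\rho_0\rho_1)+1\big)=\tfrac{1+\Tr(\rho_0\rho_1)}{2}$, as required. Alternatively, one can purify $\rho_0,\rho_1$, invoke the pure-state SWAP test of~\cite{BCWdW01}, and use $|\innerprod{\psi_0}{\psi_1}|^2=\Tr(\rho_0\rho_1)$ for suitable purifications, but the direct computation avoids bookkeeping the reference systems.

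Third, I would bound the gate count: the controlled-$\SWAP$ of two $n$-qubit registers is the product of $n$ single-pair controlled-$\SWAP$ (Fredkin) gates, each of which is a constant number of one- and two-qubit gates (e.g.\ two \CNOT{}s and a \Toffoli{}, the latter itself a constant-size circuit over $\{\Hadamard,\CNOT,\T\}$). Together with the two Hadamards on the control this is $O(n)$ gates in total. I do not expect a genuine obstacle here; the only step demanding care is the trace algebra above, specifically making sure the mixed-state case (as opposed to the familiar pure-state one) is handled correctly, which is exactly where the swap-trick identity is doing the work.
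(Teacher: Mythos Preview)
The paper does not prove this lemma; it simply states it as a known result adapted from \cite[Proposition~9]{KMY09}. Your proposal is a correct and complete proof of the statement via the standard density-matrix analysis of the SWAP test, including the swap-trick identity $\Tr[\SWAP(A\otimes B)]=\Tr(AB)$ and the $O(n)$ gate-count argument for decomposing the controlled-$\SWAP$ into $n$ Fredkin gates.
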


A matrix $B$ is said to be \textit{sub-stochastic} if all its entries are non-negative and the sum of entries in each row (respectively, column) is strictly less than $1$. Moreover, a matrix $B$ is \textit{row-stochastic} if all its entries are non-negative and the sum of entries in each row is equal to $1$.

\begin{lemma}[Sub-stochastic matrix powering in bounded space]
\label{lemma:substochastic-matrix-powering}
Let $B$ be an $l \times l$ upper-triangular sub-stochastic matrix, where each entry of $B$ requires at most $\ell$-bit precision. Then, there exists an explicit randomized algorithm that computes the matrix power $B^k[s,t]$ in $O\rbra{\log{l} + \log{k}}$ space and $O(\ell k)$ time. Specifically, the algorithm accepts with probability $B^k[s,t]$. 
\end{lemma}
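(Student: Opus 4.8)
The plan is to interpret the matrix power $B^k[s,t]$ probabilistically as a reachability probability for a length-$k$ walk on the state space $\{1,\dots,l\}$, and to realize this walk by a space-bounded randomized algorithm that only ever stores a current ``position'' together with a step counter. Concretely, I would think of $B$ (after possibly padding) as the transition ``matrix'' of a Markov process that from state $i$ either moves to state $j$ with probability $B[i,j]$ or, since $B$ is sub-stochastic, halts/rejects with the remaining probability $1-\sum_j B[i,j]$. Then $B^k[s,t]$ is exactly the probability that, started in state $s$, the walk is at state $t$ after exactly $k$ steps (never having halted). Because $B$ is upper-triangular, each step strictly increases the state index, so in fact the walk is guaranteed to ``use up'' the index within $l$ steps; this is a convenience but not strictly needed for the space bound.

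The algorithm $\mathcal{A}$ maintains three pieces of data: the current state $i \in \{1,\dots,l\}$ (initialized to $s$), a step counter $j$ running from $0$ to $k$, and scratch space to sample the next transition. At each of the $k$ steps, from current state $i$ the algorithm reads the row $(B[i,1],\dots,B[i,l])$ of $B$ entry by entry; since each entry has at most $\ell$-bit precision, it can draw $\ell$ fresh random bits and, walking through the partial sums $\sum_{j' \le m} B[i,j']$, decide in $O(\ell)$ time and $O(\log l + \ell)$ space whether to move to some state $m$ or to reject (this is a standard ``inverse-CDF'' sampling step using the $\ell$-bit random string as a dyadic rational in $[0,1)$). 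After $k$ steps without rejection, $\mathcal{A}$ accepts if and only if the final state equals $t$, and rejects otherwise. A direct induction on $k$ — unrolling one application of $B$ and using that the per-step transition probabilities are precisely the entries of $B$ — shows $\Pr[\mathcal{A}\text{ accepts}] = B^k[s,t]$. The counter $j$ needs $O(\log k)$ bits, the state index needs $\lceil \log (l+1)\rceil$ bits, and the sampling scratch is $O(\ell)$ bits; reusing the scratch across steps, the total workspace is $O(\log l + \log k + \ell)$, which is $\log(l+1)$ up to the additive $O(\log k + \ell)$ overhead that is implicit in the statement (the step counter and the precision of the entries must be stored in any case). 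The running time is $k$ iterations of an $O(\ell)$-time inner loop, i.e. $O(\ell k)$.

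The main technical point — and the only place requiring care — is the sampling step: one must verify that a single transition of the walk can be performed exactly (not just approximately) in logarithmic space, given that the entries $B[i,j]$ are $\ell$-bit dyadic rationals. This works because with $\ell$ uniformly random bits one obtains a uniform dyadic rational $r$ of denominator $2^\ell$, and comparing $r$ against the cumulative sums $c_m := \sum_{j' \le m} B[i,j']$ (each itself a dyadic rational computable on the fly by adding $\ell$-bit quantities, which needs only $O(\ell + \log l)$ bits even accounting for carries) reproduces the probabilities $B[i,m]$ exactly, with ``$r \ge c_l$'' being the rejection event of probability $1 - \sum_{m} B[i,m] > 0$ guaranteed by sub-stochasticity. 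The upper-triangular hypothesis is used only to note monotonicity of the index, which keeps the bookkeeping clean and also shows the process terminates naturally; it is otherwise not essential to correctness or the resource bounds.
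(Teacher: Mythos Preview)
Your proposal is correct and follows essentially the same approach as the paper: interpret $B^k[s,t]$ as the probability that a $k$-step random walk on $\{1,\dots,l\}$ with transition ``probabilities'' $B[i,j]$ and residual halting probability $1-\sum_j B[i,j]$ reaches $t$ from $s$, then simulate each transition exactly with $\ell$ fresh coin flips. The paper phrases the halting event as an explicit absorbing dummy state $\perp$ (padding $B$ to an $(l{+}1)\times(l{+}1)$ row-stochastic $\hat{B}$), whereas you keep the halting implicit; your treatment of the inverse-CDF sampling and the space accounting for the step counter and $\ell$-bit scratch is actually more explicit than the paper's.
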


\begin{proof}
Our randomized algorithm leverages the equivalence between space-bounded randomized computation and Markov chains, see~\cite[Section 2.4]{Saks96} for a detailed introduction. 

First, we construct a row-stochastic matrix $\hat{B}$ from $B$ by adding an additional column and row. Let $\hat{B}[i,j]$ denote the entry at the $i$-th column and the $j$-th row of $\hat{B}$. Specifically, 
\[\hat{B}[i,j] \coloneqq \begin{cases}
B[i,j], & \text{ if } 1 \leq i,j \leq l;\\
1-\sum_{s=j}^{l} B[s,j],& \text{ if } i=l+1 \text{ and } 1\leq j \leq l+1;\\
0, & \text{ if } 1 \leq i \leq l \text{ and } j=l+1.\\
\end{cases}\]

Next, we view $\hat{B}$ as a transition matrix of a Markov chain since $\hat{B}$ is row-stochastic. We consequently have a random walk on the directed graph $G=(V,E)$ where $V=\{1,2,\cdots,l\}\cup \{\perp\}$ and $(u,v)\in E$ iff $\hat{B}(u,v)>0$. 
In particular, the probability that a $k$-step random walk starting at node $s$ and ending at node $t$ is exactly $\hat{B}^k[s,t]=B^k[s,t]$. This is because the walker who visits the dummy node $\perp$ will not reach other nodes. 

Finally, note that $\hat{B}$ is a $(l+1)\times (l+1)$ matrix, the matrix powering of $\hat{B}^k$ can be computed in $O(\log{l}+\log{k})$ space. 
In addition, the overall time complexity is $O(\ell k)$ since we simulate the dyadic rationals (with $\ell$-bit precision) of a single transition exactly by $\ell$ coin flips. 
\end{proof}

\section{Space-efficient quantum singular value transformations}
\label{sec:space-efficient-QSVT}

We begin by defining the \textit{projected unitary encoding} and its special forms, viz. the bitstring indexed encoding and the block-encoding. 
\begin{definition}[Projected unitary encoding and its special forms, adapted from~\cite{GSLW19}] 
    \label{def:bitstring-indexed-encoding}
    Let $U$ be an $(\alpha, a, \epsilon)$-projected unitary encoding of a linear operator $A$ if $\|A- \alpha \tilde{\Pi} U \Pi\| \leq \epsilon$, where $U$ and orthogonal projections $\tilde{\Pi}$ and $\Pi$ act on $s+a$ qubits, and both $\rank(\tilde{\Pi})$ and $\rank(\Pi)$ are at least $2^a$ ($a$ is viewed as the number of ancillary qubits).
Furthermore, we are interested in two special forms of the projected unitary encoding: 
\begin{itemize}[itemsep=0.33em,topsep=0.33em,parsep=0.33em]
    \item \textbf{Bitstring indexed encoding.} We say that a projected unitary encoding is a \textit{bitstring indexed encoding} if both orthogonal projections $\tilde{\Pi}$ and $\Pi$ span on $\tilde{S},S \subseteq \{\ket{0},\ket{1}\}^{\otimes(a+s)}$, respectively.\footnote{Typically, to ensure these orthogonal projections coincide with space-bounded quantum computation, we additionally require the corresponding subsets $\tilde{S}$ and $S$ admit space-efficient set membership, namely deciding the membership of these subsets is in deterministic $O(s+a)$ space.} 
    In particular, for any $\ket{\tilde{s_i}} \in \tilde{S}$ and $\ket{s_j} \in S$, we have a matrix representation $A_{\tilde{S},S}(i,j) \coloneqq \bra{\tilde{s}_i} U \ket{s_j}$ of $A$. 
    \item \textbf{Block encoding.} We say that a projected unitary encoding is a block-encoding if both orthogonal projections are of the form $\Pi=\tilde{\Pi}=\ket{0}\bra{0}^{\otimes a}\otimes I_s$. We use the shorthand $A=(\bra{\bar{0}}\otimes I_s) U (\ket{\bar{0}}\otimes I_s)$ for convenience. 
\end{itemize} 
\end{definition}
See \Cref{subsec:singular-value-decomp-and-trans} for definitions of singular value decomposition and transformation. 
With these definitions in place, we present the main (informal) theorem in this section:
\begin{theorem}[Space-efficient QSVT]
    \label{thm:space-efficient-QSVT}
    Let $f\colon\bbR \rightarrow \bbR$ be a continuous function bounded on the closed interval of interest $\calI \subseteq [-1,1]$. If there exists a degree-$d$ polynomial $P^*_d$ that approximates $h\colon[-1,1] \rightarrow \bbR$, where $h$ approximates $f$ only on $\calI$ with additive error at most $\epsilon$, such that $\max_{x \in [-1,1]} |h(x)-P^*_d(x)| \leq \epsilon$, then degree-$d$ averaged Chebyshev truncation yields another degree-$d'$ polynomial $P_{d'}$, with $d'=2d-1$, satisfying the following conditions: 
   \[ \max_{x \in \calI} |f(x)-P_{d'}(x)| \leq O(\epsilon) \quad \text{and} \quad \max_{x \in [-1,1]} |P_{d'}(x)| \leq 1. \]
    \noindent Moreover, there is a space-efficient classical algorithm for computing any entry in the coefficient vector $\hat{\bfc}$ of the averaged Chebyshev truncation polynomial $P_{d'}$\emph{:} 
    \begin{itemize}[itemsep=0.33em,topsep=0.33em,parsep=0.33em]
        \item If $f$ is a continuously bounded function with $\max_{x\in[-1,1]} |f''(x)| \leq \poly(d)$,\footnote{This conclusion also applies to a linear combination of bounded functions, provided that the coefficients are bounded and can be computed deterministically and space-efficiently.} then any entry in the coefficient vector $\hat{\bfc}$ can be computed in deterministic $O(\log{d})$ space; 
        \item If $f$ is a piecewise-smooth function, then any entry in the coefficient vector $\hat{\bfc}$ can be computed in bounded-error randomized $O(\log{d})$ space.
    \end{itemize}
    \noindent Furthermore, for any $(1,a,0)$-bitstring indexed encoding $U$ of $A=\tilde{\Pi} U \Pi$, acting on $s+a$ qubits where $a(n) \leq s(n)$, and any $P_{d'}$ with $d'\leq 2^{O(s(n))}$, we can implement an 
    $(\alpha, a+\log d+O(1), \epsilon_{\alpha})$-bitstring indexed encoding of the quantum singular value transformation $P_{d'}^{\SV}(A)$ that acts on $O(s(n))$ qubits using $O(d^2\eta_{\alpha})$ queries to $U$, where $\epsilon_{\alpha}$ is specified in \Cref{thm:LCU-averaged-chebyshev-truncation}. 
    Here, $\alpha = \|\hat\bfc\|_1$ with $\eta_\alpha = 1$ in general, and particularly $\alpha = 1$ with $\eta_{\alpha} = \|\hat\bfc\|_1$ if $P^{\SV}_{d'}(A)$ is a partial isometry.
    It is noteworthy that $\|\hat{\bfc}\|_1$ is bounded by $O(\log{d})$ in general, and can be improved to a constant bound for twice continuously differentiable functions.
\end{theorem}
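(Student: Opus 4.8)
\textbf{Overview and the polynomial $P_{d'}$.} The plan is to establish the three assertions of \Cref{thm:space-efficient-QSVT} in turn: the approximation and boundedness of $P_{d'}$, the space-efficient computation of its coefficient vector $\hat{\bfc}$, and the space-efficient implementation of $P_{d'}^{\SV}(A)$. I would define $P_{d'}$ to be the degree-$d$ averaged Chebyshev truncation of the auxiliary function $h$, not of $f$ directly. Since $P^*_d$ is a degree-$d$ polynomial with $\max_{x\in[-1,1]}|h(x)-P^*_d(x)|\le\epsilon$, the best degree-$d$ uniform approximation error obeys $\varepsilon_d(h)\le\epsilon$, so \Cref{lemma:averaged-Chebyshev-truncation} gives $\max_{x\in[-1,1]}|h(x)-P_{d'}(x)|\le 4\varepsilon_d(h)\le 4\epsilon$; combined with $\max_{x\in\calI}|h(x)-f(x)|\le\epsilon$ this yields $\max_{x\in\calI}|f(x)-P_{d'}(x)|\le 5\epsilon=O(\epsilon)$. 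For the global bound, note $\max_{x\in[-1,1]}|P_{d'}(x)-h(x)|\le 4\epsilon$, so if $h$ is built with $\max_{x\in[-1,1]}|h(x)|\le 1-O(\epsilon)$ (or, failing that, after dividing $P_{d'}$ by $1+O(\epsilon)$, which worsens the $\calI$-error only by $O(\epsilon)$) we obtain $\max_{x\in[-1,1]}|P_{d'}(x)|\le 1$. Parity preservation is automatic, since averaging Chebyshev truncations keeps only $T_k$ of a parity already present in $h$.

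\textbf{Computing $\hat{\bfc}$.} In the continuously bounded case one takes $h=f$ on $\calI=[-1,1]$, so $\hat{c}_k$ equals $c_k=\frac{2}{\pi}\int_{-\pi}^0 f(\cos\theta)\cos(k\theta)\,\dtheta$ up to the explicit rescaling factor $\tfrac{2d-k}{d}$ of \Cref{eq:averaged-Chebyshev-truncation}. I would estimate this integral by a trapezoidal (or midpoint) quadrature with $N\le 2^{O(s(n))}$ equispaced nodes: since $k\le d'$ and $|f''|\le\poly(d)$, the integrand has second derivative $\poly(d)$, so the quadrature error is $O(\poly(d)/N^2)$, which beats the target per-coefficient precision $O(\log(\epsilon^2/d))$ bits once $N$ is a suitable $\poly(d/\epsilon)$; a loop counter over the nodes and a running partial sum fit in $O(\log d)$ bits, and evaluating $\cos$ and $f$ to that precision is also $O(\log d)$ space, so $\calA_f$ is deterministic $O(\log d)$ space (the same holds for bounded, space-efficiently computable linear combinations of such functions). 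In the piecewise-smooth case $\calI\subsetneq[-1,1]$ and $f$ need not behave well outside $\calI$, so $h$ is assembled in two steps: a low-degree Fourier approximation $g$ of $f$ obtained from the Hamiltonian-simulation construction of \cite[Appendix B]{vAGGdW17}, and then $h\coloneqq g$ times a Gaussian error function, which forces the approximation error to vanish outside $\calI$. The coefficients of $g$ turn out to be entries of powers of a substochastic matrix, which \Cref{lemma:substochastic-matrix-powering} evaluates by a space-bounded random walk on a directed graph; this is the sole randomized ingredient and is why $\calA_f$ becomes bounded-error randomized (still $O(\log d)$ space). Feeding the now-explicitly-computable $h$ into the bounded-function subroutine above then finishes this case.

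\textbf{Implementing $P_{d'}^{\SV}(A)$.} For the quantum part I would start from the ``known-angle'' fact implicit in \cite{GSLW19}: since the quantum-signal-processing phases of the Chebyshev polynomial $T_k$ are explicit ($\phi_1=(1-k)\pi/2$ and $\phi_j=\pi/2$ for $j\ge 2$), each $T_k^{\SV}(A)$ is realized by alternating phase modulation using $k$ queries to $U$ and $U^\dagger$, $O(1)$ ancillary qubits, and preserving the bitstring-indexed form (\Cref{def:bitstring-indexed-encoding}). I would then combine the $T_k^{\SV}(A)$ by the LCU technique of \cite{BCC+15}: a $(\log d+O(1))$-qubit index register is prepared in $\sum_{k}\sqrt{|\hat{c}_k|/\|\hat{\bfc}\|_1}\,\ket{k}$ by a space-efficient cascade of controlled rotations whose angles come from partial sums of $|\hat{c}_k|$ (computable in $O(\log d)$ space by the previous paragraph), a \textsc{Select} that, conditioned on the index being $\ket{k}$, runs the standalone degree-$k$ Chebyshev circuit, and the inverse preparation; this block-encodes $\tfrac1{\|\hat{\bfc}\|_1}P_{d'}^{\SV}(A)$ up to the composed error $\epsilon_\alpha$ of \Cref{thm:LCU-averaged-chebyshev-truncation}. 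Running the degree-$k$ circuit inside \textsc{Select} for every $k\le d'$ costs $\sum_{k\le d'}O(k)=O(d^2)$ queries --- the quadratic (rather than linear) dependence stems precisely from implementing each Chebyshev term separately instead of performing one composite degree-$d'$ QSVT, the latter requiring a space-efficient angle-finding algorithm we do not have. When $P_{d'}^{\SV}(A)$ is an isometry I would restore the normalization $\alpha=1$ by robust oblivious amplitude amplification, paying an extra factor $\|\hat{\bfc}\|_1$ (so $\eta_\alpha=\|\hat{\bfc}\|_1$); otherwise $\alpha=\|\hat{\bfc}\|_1$ and $\eta_\alpha=1$. The qubit count is $s+a+\log d+O(1)=O(s(n))$ because $a\le s$ and $d'\le 2^{O(s(n))}$, and the bounds $\|\hat{\bfc}\|_1=O(\log d)$ in general and $O(1)$ for twice continuously differentiable $f$ are exactly \Cref{lemma:coefficient-vector-norm-bound}.

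\textbf{Main obstacle.} I expect the hardest part to be the piecewise-smooth branch of the coefficient computation: assembling $h$ correctly from the Fourier approximation and the Gaussian-error-function localizer, identifying the Fourier coefficients of $g$ with substochastic matrix powers so that \Cref{lemma:substochastic-matrix-powering} genuinely applies, and then propagating the several interacting budgets (the Fourier degree and erf width, the averaged-truncation degree $d$, and the $\sim\epsilon^2/d$ per-coefficient precision) through the subsequent averaged Chebyshev truncation without inflating the $\calI$-error beyond $O(\epsilon)$ or the degree beyond $2^{O(s(n))}$. A secondary difficulty is bookkeeping: keeping every intermediate object a genuine bitstring indexed encoding with space-efficient set membership, rather than a generic projected unitary encoding, so that the whole construction stays inside logarithmic-space quantum computation.
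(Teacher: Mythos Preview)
Your proposal is correct and follows essentially the same approach as the paper: averaged Chebyshev truncation of the auxiliary function $h$ for the approximation and boundedness guarantees, composite trapezoidal quadrature for the continuously bounded case (\Cref{lemma:space-efficient-bounded-funcs}), the Fourier-approximation-plus-erf-localizer construction with substochastic matrix powering for the piecewise-smooth case (\Cref{thm:space-efficient-smooth-funcs} and \Cref{lemma:space-efficient-low-weight-approx}), and the Chebyshev-via-explicit-angles plus LCU plus robust oblivious amplitude amplification pipeline for the quantum implementation (\Cref{thm:LCU-averaged-chebyshev-truncation}). Your identification of the main obstacle---the parameter bookkeeping in the piecewise-smooth branch---also matches where the paper spends most of its technical effort.
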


We remark that we can apply \Cref{thm:space-efficient-QSVT} to general forms of the projected unitary encoding $U$ with orthogonal projections $\Pi$ and $\tilde{\Pi}$, as long as such an encoding meets the conditions: (1) The basis of $\Pi$ and $\tilde{\Pi}$ admits a well-defined order; (2) Both controlled-$\Pi$ and controlled-$\tilde{\Pi}$ admit computationally efficient implementation. We note that bitstring indexed encoding defined in \Cref{def:bitstring-indexed-encoding} trivially meets the first condition, and a sufficient condition for the second condition is that the corresponding subsets $S$ and $\tilde{S}$ have space-efficient set membership. 

\vspace{1.5em}
Next, we highlight the main technical contributions leading to our space-efficient quantum singular value transformations (\Cref{thm:space-efficient-QSVT}). To approximately implement a space-efficient QSVT $f^{\SV}(A)$, we require \textit{the pre-processing} to find a space-efficient polynomial approximation $P^{(f)}_{d'} \approx f$ on $\calI$. These polynomial approximations are detailed in \Cref{subsec:coefficients-approx}:
\begin{itemize}[itemsep=0.33em,topsep=0.33em,parsep=0.33em,leftmargin=2em]
    \item We provide deterministic space-efficient polynomial approximations for \textit{continuously bounded} functions (\Cref{lemma:space-efficient-bounded-funcs}) using averaged Chebyshev truncation (see \Cref{subsec:Chebyshev-polys-and-truncated-expansion}), including the sign function (\Cref{corr:space-efficient-sign}). 
    \item We present bounded-error randomized space-efficient polynomial approximations for \textit{piecewise-smooth} functions (\Cref{thm:space-efficient-smooth-funcs}), such as the normalized logarithmic function (\Cref{corr:space-efficient-log}). To achieve this, we adapt the time-efficient technique in~\cite[Lemma 37]{vAGGdW17} to the space-efficient scenario by leveraging space-efficient random walks (\Cref{lemma:space-efficient-low-weight-approx}).
\end{itemize}

With an appropriate polynomial approximation $P_{d'}^{(f)}$, we can implement the space-efficient QSVT $P_{f,d'}^{\SV}(A)$, as established in \Cref{subsec:poly-applied-to-unitary-encodings} (specifically \Cref{thm:LCU-averaged-chebyshev-truncation}). It is worth noting that a space-efficient QSVT for Chebyshev polynomials is implicitly shown in~\cite{GSLW19}, as stated in \Cref{lemma:Chebyshev-poly-implementation}. We establish \Cref{thm:LCU-averaged-chebyshev-truncation} by combining this result with the LCU technique (\Cref{lemma:space-efficient-LCU}) and the renormalization procedure (\Cref{lemma:renormalizing-encodings}, if necessary and applicable). 

In addition to these general techniques, we provide explicit space-efficient QSVT examples in \Cref{subsec:space-efficient-QSVT-examples}, including those for the sign function (\Cref{corr:sign-polynomial-implementation}) and the normalized logarithmic function (\Cref{corr:log-polynomial-implementation}). 
Notably, the former leads to a simple proof of space-efficient error reduction for unitary quantum computations (\Cref{subsec:BQUL-error-reduction}). 

\subsection{Space-efficient bounded polynomial approximations}
\label{subsec:coefficients-approx}

We provide a systematic approach for constructing \textit{space-efficient} polynomial approximations of real-valued piecewise-smooth functions, which is a space-efficient counterpart of~\cite[Corollary 23]{GSLW19}. Notably, our algorithm (\Cref{lemma:space-efficient-bounded-funcs}) is \textit{deterministic} for continuous functions that are bounded on the interval $[-1,1]$. However, for general piecewise-smooth functions, we only introduce a \textit{randomized} algorithm (\Cref{thm:space-efficient-smooth-funcs}). In addition, please refer to \Cref{subsec:Chebyshev-polys-and-truncated-expansion} as a brief introduction to Chebyshev polynomial and (averaged) Chebyshev truncation. 

\subsubsection{Continuously bounded functions}

We design a space-efficient algorithm for computing the coefficients of a polynomial approximation with high accuracy for continuously bounded functions. Our approach leverages the averaged Chebyshev truncation, specifically \textit{the de La Vall\'ee Poussin partial sum}, in conjunction with numerical integration, namely \textit{the composite trapezium rule}. 

\begin{lemma}[Space-efficient polynomial approximations for bounded functions]
\label{lemma:space-efficient-bounded-funcs}
For any continuous function $f$ that is bounded with $\max_{x\in[-1,1]} |f(x)| \leq B$ for some known constant $B>0$. Let $P^*_{f,d}$ be a degree-$d$ polynomial with the same parity as $f$ satisfying $\max_{x\in[-1,1]} \!|f(x)-P^*_{f,d}  \!(x)| \leq \epsilon$. By employing the degree-$d$ averaged Chebyshev truncation, we can obtain a degree-$d'$ polynomial $P^{(f)}_{d'}$ that has the same parity as $P^*_{f,d} $ and satisfies $\max_{x\in[-1,1]} |f(x)-P^{(f)}_{d'}(x)| \leq 4\epsilon$.\footnote{It is noteworthy that for any even function $f$, the degree of $P^{(f)}_{d'}$ is $2d-2$ rather than $2d-1$. Nevertheless, for the sake of convenience, we continue to choose $d'=2d-1$.} This polynomial $P^{(f)}_{d'}$ is defined as a linear combination of Chebyshev polynomials $T_k(\cos \theta) = \cos(k \theta)$ with $d'=2d-1$ and the integrand $F_k(\theta)\coloneqq\cos(k\theta)f(\cos{\theta})$: 
\begin{equation}
    \label{eq:averaged-truncated-Chebyshe-expansion-theta}
    P^{(f)}_{d'} = \frac{\hat{c}_0}{2} + \sum_{k=1}^{d'} \hat{c}_k T_k, \text{ where } c_k = \frac{2}{\pi} \int_{-\pi}^0 F_k(\theta) \dd\theta \text{ and } \hat{c}_k=\begin{cases}
        c_k ,& 0 \leq k \leq d\\
        \frac{2d-k}{d} c_k,& k>d
    \end{cases}.
\end{equation}
If the integrand $F_k(\theta)$ satisfies $\max_{\xi\in [-\pi,0]} |F_k''(\xi)| \leq O(d^\gamma)$ for every $0\leq k\leq d'$ and some constant $\gamma$, then any entry of the coefficient vector $\hat{\bfc}=(\hat{c}_0,\cdots,\hat{c}_{d'})$, up to additive error $\epsilon$ for $\|\hat{\bfc}\|_1$, can be computed in deterministic time $O(d^{(\gamma+1)/2}\epsilon^{-1/2} t(\ell))$ and space $O(\log(d^{(\gamma+3)/2} \epsilon^{-3/2}B))$, where $\ell=O(\log(d^{(\gamma+3)/2}\epsilon^{-3/2}))$ and evaluating $F_k(\theta)$ in $\ell$-bit precision is in deterministic time $t(\ell)$ and space $O(\ell)$. Furthermore, the coefficient vector $\hat{\bfc}$ has the following $\ell_1$ norm bounds:
 \begin{enumerate}[label={\upshape(\roman*)},itemsep=0.33em,topsep=0.33em,parsep=0.33em]
    \item \label{thmitem:polyApprox-l1-norm-bound-general}For any function $f$ satisfying our conditions, we have $\|\hat{\bfc}\|_1 \leq O(B\sqrt{d})$; 
    \item \label{thmitem:polyApprox-l1-norm-bound-first-conti}If $f(\cos\theta)$ is absolutely continuous on $[-\pi,0]$ and satisfies $\int_{-\pi}^0 \abs*{ \frac{\dd}{\dd\theta} f(\cos\theta) } \dd\theta \leq O(B)$, then $\|\hat{\bfc}\|_1 \leq O(B\log d)$;
    \item \label{thmitem:polyApprox-l1-norm-bound-twice-conti}If $f$ is additionally twice continuously differentiable and satisfies $\int_{-\pi}^0 \abs*{ \frac{\dd^2}{\dd\theta^2} f(\cos\theta) } \dd\theta \leq O(B)$, then $\|\hat{\bfc}\|_1 \leq O(B)$.
\end{enumerate}
\end{lemma}

\begin{proof}
We begin with the polynomial approximation $P^{(f)}_{d'}$ obtained from the degree-$d$ averaged Chebyshev truncation expressed in \Cref{eq:averaged-truncated-Chebyshe-expansion-theta}. The degree $d'$ is $2d-1$ if $f$ is odd, and $2d-2$ if $f$ is even. To bound the truncation error of $P^{(f)}_{d'}$, we require a degree-$d$ polynomial $P^*_{f,d} $ such that $\max_{x\in[-1,1]} |f(x)-P^*_{f,d} (x)| \leq \epsilon$. By utilizing \Cref{lemma:averaged-Chebyshev-truncation}, we obtain the desired error bound $\max_{x\in[-1,1]} |f(x)-P^{(f)}_{d'}(x)| \leq 4\epsilon$.

\paragraph*{Computing the coefficients.}
To compute the coefficients $\hat{c}_k$ for $0 \leq k \leq d'$, it suffices to compute the Chebyshev coefficients $c_k$ for $0 \leq k \leq 2d-1$. Noting that $c_k = \frac{2}{\pi}\int_{-\pi}^{0} F_k(\theta) \dtheta$ where $F_k(\theta)\coloneqq\cos(k \theta) f(\cos{\theta})$, we can estimate the numerical integration using the composite trapezium rule, e.g.,~\cite[Section 7.5]{SM03}. 
The application of this method yields the following:
\begin{subequations}
    \label{eq:integration-summations}
    \begin{align}
        \int_{-\pi}^{0} \!F_k(\theta) \dd\theta &\approx \frac{\pi}{m} \rbra*{ \frac{F_k(\theta_0)}{2} + \sum_{l=1}^{m-1} F_k(\theta_l) + \frac{F_k(\theta_m)}{2} },\\
        \text{ where } \theta_l &\coloneqq \frac{\pi l}{m}-\pi \text{ for } l=0,1,\cdots,m.
    \end{align}
\end{subequations}
Moreover, we know the upper bound on the numerical errors for computing the coefficient $c_k$:
\begin{equation}
    \label{eq:integration-errors}
    \varepsilon_{d',k}^{(f)} \coloneqq \sum_{l=1}^m \abs*{  \int_{\theta_{l-1}}^{\theta_l}  F_k(\theta)\dd\theta - \frac{\pi}{2m} \cdot \left( F_k(\theta_{l-1}) + F_k(\theta_l) \right) } \leq \frac{\pi^3}{12m^2} \max_{\xi\in[-\pi,0]} \abs*{ F_k''(\xi) }.
\end{equation}
To obtain an upper bound on the number of intervals $m$, we need to ensure that the error of the numerical integration is within 
\[\varepsilon_{d'}^{(f)} = \sum_{k=0}^d \varepsilon^{(f)}_{d',k} + \sum_{k=d+1}^{d'} \frac{2d-k}{d} \varepsilon^{(f)}_{d',k} \leq \sum_{k=0}^{d'} \varepsilon^{(f)}_{d',k} \leq \epsilon.\]
Plugging the assumption $|F_k''(\xi)| \leq O(d^{\gamma})$ into \Cref{eq:integration-errors}, by choosing an appropriate value of $m=\Theta(\epsilon^{-1/2}d^{(\gamma+1)/2})$, we establish that $\varepsilon^{(f)}_{d'} \leq O(d^{\gamma+1})/m^2 \leq \epsilon$. Moreover, to guarantee that the accumulated error is $O(\epsilon/d)$ in \Cref{eq:integration-summations}, we need to evaluate the integrand $F_k(\theta)$ with $\ell$-bit precision, where $\ell=O(\log{(dm/\epsilon)})=O(\log(\epsilon^{-3/2}d^{(\gamma+3)/2}))$. 

Lastly, the desired $\ell_1$-norm bounds for the coefficient vector $\hat{\bfc}$ follow directly from \Cref{lemma:coefficient-vector-norm-bound} by setting that $B_0\coloneqq B$ and, in \Cref{thmitem:polyApprox-l1-norm-bound-first-conti,thmitem:polyApprox-l1-norm-bound-twice-conti}, using the assumed angular derivative bounds as $B_1=O(B)$ and $B_2=O(B)$.

\paragraph*{Analyzing time and space complexity.}
The presented numerical integration algorithm is deterministic, and therefore, the time complexity for computing the integral is $O(m t(\ell))$, where $t(\ell)$ is the time complexity for evaluating the integrand $F_k(\theta)$ within $2^{-\ell}$ accuracy (i.e., $\ell$-bit precision) in $O(\ell)$ space. 
The space complexity required for computing the numerical integration is the number of bits required to index the integral intervals and represent the resulting coefficients. To be specific, the space complexity is 
\begin{align*}
    \max\big\{O(\log{m}),O(\ell),\log\|\hat{\bfc}\|_{\infty}\big\} 
    &\leq O\big(\max\big\{ \log\big(\epsilon^{-\frac{3}{2}}d^{\frac{\gamma+3}{2}}\big),\log B \big\} \big) \\
    &\leq O\big(\log\big(\epsilon^{-\frac{3}{2}}d^{\frac{\gamma+3}{2}}B\big)\big).
\end{align*}
Here, $\|\hat{\bfc}\|_{\infty}$ satisfies the upper bound
\[\|\hat{\bfc}\|_{\infty} = \max_{0 \leq k \leq d'} \frac{2}{\pi} \abs*{\int_{-\pi}^0 \cos(k\theta) f(\cos{\theta}) \dtheta} \leq  \max_{-\pi \leq \theta \leq 0} O(|f(\cos\theta)|) \leq O(B),\] 
and the last inequality is due to the fact that 
\[\forall A, B > 0, ~\Theta(\max\{\log A, \log B\}) = \Theta(\log(AB)). \qedhere\] 
\end{proof}

It is worth noting that evaluating a large family of functions, called holonomic functions, with $\ell$-bit precision requires only \textit{deterministic} $O(\ell)$ space:

\begin{remark}[Space-efficient evaluation of holonomic functions]
    \label{remark:evaluating-funcs}
    Holonomic functions encompass several commonly used functions,\footnote{For a more detailed introduction, please refer to~\cite[Section 4.9.2]{BZ10}.} such as polynomials, rational functions, sine and cosine functions (but not other trigonometric functions such as tangent or secant), exponential functions, logarithms (to any base), the Gaussian error function, and the normalized binomial coefficients. In~\cite{CGKZ05, Mezzarobba12}, these works have demonstrated that evaluating a holonomic function with $\ell$-bit precision is achievable in deterministic time $\tilde{O}(\ell)$ and space $O(\ell)$. Prior works achieved the same time complexity, but with a space complexity of $O(\ell\log{\ell})$. 
\end{remark}

In addition, we provide an example in \Cref{remark:square-function} that achieves only a logarithmically weaker bound on $\|\hat{\bfc}\|_1$ using \Cref{lemma:space-efficient-bounded-funcs}, whereas a constant norm bound can be achieved by leveraging \Cref{thm:space-efficient-smooth-funcs} for piecewise-smooth functions. 

\begin{remark}[On the norm bound of the square-root function's polynomial approximation]
    \label{remark:square-function}
    We consider a function $\Sqrt_\delta(x)$ that coincides with $\sqrt{x}$ on the interval $[\delta, 1]$.\footnote{Since the second derivative of the square-root function $\sqrt{x}$ is unbounded at $x=0$, we cannot directly apply \Cref{lemma:space-efficient-bounded-funcs} to $\sqrt{x}$.} Specifically, $\Sqrt_\delta(x)$ is defined as $\sqrt{x}$ for $x\geq \delta$, $-\sqrt{-x}$ for $x\leq -\delta$, and $x/\sqrt{\delta}$ for $x\in (-\delta,\delta)$. 
    Since $\Sqrt_\delta(\cos\theta)$ is absolutely continuous and has angular first derivative with bounded $L_1$ norm, \Cref{lemma:space-efficient-bounded-funcs} gives $\norm{\hat{\bfc}}_1 \leq O(\log d)$; however, the angular second-derivative condition with an $O(1)$ bound need not hold for this direct construction.
\end{remark}

\vspace{1em}
We now present an example of bounded functions, specifically the sign function. 
\begin{corollary}[Space-efficient approximation to the sign function]
    \label{corr:space-efficient-sign}
    For any $\delta > 0$ and $\epsilon > 0$, there exists an explicit odd polynomial $P_{d'}^{\sign}(x)=\hat{c}_0/2+\sum_{k=1}^{d'} \hat{c}_k T_k(x) \in \bbR[x]$ of degree $d'\leq\tilde C_{\sign
    }\delta^{-1}\log{\epsilon^{-1}}$, where $d'=2d-1$ and $\tilde C_{\sign}$ is a universal constant. 
    Any entry of the coefficient vector $\hat{\bfc}\coloneqq(\hat{c}_0,\cdots,\hat{c}_{d'})$ can be computed in deterministic time $\tilde{O}\big(\epsilon^{-1/2} d^{2}\big)$ and space $O(\log(\epsilon^{-3/2} d^3))$. Furthermore, the polynomial $P_{d'}^{\sign}$ satisfies the following conditions: 
    \begin{align*}
    \forall x \in [-1,1] \setminus [-\delta,\delta],& \left|\sign(x)-P^{\sign}_{d'}(x)\right| \leq C_{\sign} \epsilon, \text{ where } C_{\sign}=5;\\
    \forall x \in [-1,1],& \left|P_{d'}^{\sign}(x) \right| \leq 1.
    \end{align*} 
    Additionally, the coefficient vector $\hat{\bfc}$ has a norm bounded by $\|\hat{\bfc}\|_1 \leq \hat{C}_{\sign}\log d'$, where $\hat{C}_{\sign}$ is another universal constant.
    Without loss of generality, we assume that $d'\geq 2$ and that $\hat{C}_{\sign}$ and $\tilde{C}_{\sign}$ are at least $1$.
\end{corollary}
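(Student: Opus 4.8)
The plan is to obtain $P_{d'}^{\sign}$ as the degree-$d$ averaged Chebyshev truncation of a smooth surrogate for the sign function, and then to let \Cref{lemma:space-efficient-bounded-funcs} handle both the approximation and the space-efficient computation of the coefficients. Concretely, set $h(x)\coloneqq\erf(k_0 x)$ with $k_0=\Theta\!\big(\delta^{-1}\sqrt{\log\epsilon^{-1}}\big)$ chosen, via the standard Gaussian-tail estimate, so that $|\erf(k_0 x)-\sign(x)|\le\epsilon$ for every $x\in[-1,1]\setminus(-\delta,\delta)$; note $h$ is odd, $C^{\infty}$, and $|h(x)|<1$ on $[-1,1]$. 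Crucially, $h$ admits a low-degree bounded polynomial approximation: by the $\erf$-approximation result of Gilyén, Su, Low, and Wiebe~\cite{GSLW19} there is an odd degree-$d$ polynomial $P^*_d$ with $d=O\!\big(\delta^{-1}\log\epsilon^{-1}\big)$, $\max_{x\in[-1,1]}|h(x)-P^*_d(x)|\le\epsilon$, and $\max_{x\in[-1,1]}|P^*_d(x)|\le 1$; this polynomial is used only to certify the hypothesis of \Cref{lemma:space-efficient-bounded-funcs} and is never computed. Applying \Cref{lemma:space-efficient-bounded-funcs} with $f\coloneqq h$ and $B\coloneqq 1$ (together with \Cref{lemma:averaged-Chebyshev-truncation}) then produces an odd degree-$d'$ polynomial $P_{d'}^{\sign}\coloneqq P^{(h)}_{d'}=\hat c_0/2+\sum_{k=1}^{d'}\hat c_k T_k$, with $d'=2d-1\le\tilde C_{\sign}\,\delta^{-1}\log\epsilon^{-1}$ and $\max_{x\in[-1,1]}|h(x)-P_{d'}^{\sign}(x)|\le 4\epsilon$.

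The two analytic conditions now drop out. For $x\in[-1,1]\setminus[-\delta,\delta]$ the triangle inequality gives $|P_{d'}^{\sign}(x)-\sign(x)|\le|P_{d'}^{\sign}(x)-h(x)|+|h(x)-\sign(x)|\le 4\epsilon+\epsilon=5\epsilon$, i.e.\ $C_{\sign}=5$. The global bound only comes out as $|P_{d'}^{\sign}(x)|\le|h(x)|+4\epsilon<1+4\epsilon$; I would absorb this $O(\epsilon)$ slack in the usual way, e.g.\ by rerunning the argument with target accuracy $\epsilon/5$ against the rescaled surrogate $(1-\epsilon)\,\erf(k_0 x)$, which has absolute value $\le 1-\epsilon$ on $[-1,1]$ and is still $2\epsilon$-close to $\sign$ on $[-1,1]\setminus(-\delta,\delta)$, changing only the universal constants. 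For the algorithmic part, the coefficients are $\hat c_k=\tfrac{2}{\pi}\int_{-\pi}^{0}\cos(k\theta)\,\erf(k_0\cos\theta)\,\dtheta$ (with the averaging weights $\tfrac{2d-k}{d}$ for $k>d$), so evaluating the integrand $F_k(\theta)=\cos(k\theta)\erf(k_0\cos\theta)$ to $\ell$-bit precision reduces to evaluating $\cos$ and $\erf$, both holonomic, hence is doable in deterministic $\tilde O(\ell)$ time and $O(\ell)$ space by \Cref{remark:evaluating-funcs}. A direct computation bounds $\max_{\theta\in[-\pi,0]}|F_k''(\theta)|$ by $O(d^2)$: the leading term is $k^2\cos(k\theta)\erf(k_0\cos\theta)$ with $k\le d'=O(d)$, and the terms from differentiating $\erf(k_0\cos\theta)$ are $O(k_0^2)=O(d^2)$ once the factor $e^{-k_0^2\cos^2\theta}$ is used to tame the a priori $O(k_0^3)$-size pieces. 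Thus \Cref{lemma:space-efficient-bounded-funcs} applies with $\gamma=2$ and $B=1$, yielding the claimed deterministic time $\tilde O(\epsilon^{-1/2}d^2)$ and space $O(\log(\epsilon^{-3/2}d^3))$ for computing any $\hat c_k$ to a precision keeping the total error $O(\epsilon)$.

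For the $\ell_1$-norm bound, $h=\erf(k_0\cdot)$ is (in particular twice) continuously differentiable and bounded by $1$ on $[-1,1]$, so the twice-continuously-differentiable case of \Cref{lemma:coefficient-vector-norm-bound} (invoked inside \Cref{lemma:space-efficient-bounded-funcs}) gives $\|\hat\bfc\|_1\le O(1)=:\hat C_{\sign}$, a universal constant. I expect the main obstacle to be the routine-but-delicate estimate $\max_\theta|F_k''(\theta)|=O(d^2)$: the second derivative of $\erf(k_0\cos\theta)$ has terms of apparent size $k_0^3$, and collapsing them to $O(k_0^2)$ requires the bound $|y\,e^{-y^2}|\le 1/\sqrt{2e}$ with $y=k_0\cos\theta$; a weaker estimate would inflate $\gamma$ and hence the stated time and space. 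A second point to check carefully is that the norm bound $\|\hat\bfc\|_1=O(1)$ is genuinely independent of $\delta$ and $\epsilon$ — which is precisely the $C^2$-case of \Cref{lemma:coefficient-vector-norm-bound} — while removing the boundedness slack and translating ``total Chebyshev-coefficient error $\le\epsilon$'' into ``uniform error $O(\epsilon)$ of $P_{d'}^{\sign}$'' is only bookkeeping.
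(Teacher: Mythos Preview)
Your proposal is correct and follows essentially the same route as the paper: take $h(x)=\erf(\kappa x)$ as the smooth surrogate, invoke the existing degree-$d$ polynomial approximation of $\erf(\kappa\cdot)$ to certify the hypothesis of \Cref{lemma:space-efficient-bounded-funcs}, apply the averaged Chebyshev truncation, use the triangle inequality for $C_{\sign}=5$, appeal to the $C^2$-case of \Cref{lemma:coefficient-vector-norm-bound} for $\|\hat\bfc\|_1=O(1)$, and normalize at the end.

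There are two small deviations worth flagging. First, the paper bounds $\max_\theta|F_k''(\theta)|$ crudely by $O(\kappa)+k^2+O(\kappa^3)+O(k\kappa)=O(d^3)$ (that is, $\gamma=3$), which already yields exactly the stated time $\tilde O(\epsilon^{-1/2}d^2)$ and space $O(\log(\epsilon^{-3/2}d^3))$. Your refinement via $|ye^{-y^2}|\le(2e)^{-1/2}$ correctly knocks the $\kappa^3$ term down to $O(\kappa^2)$, giving $\gamma=2$; this would actually produce \emph{sharper} bounds than claimed, so it certainly suffices, but it is more work than required. Second, the paper handles the $|P_{d'}^{\sign}|\le 1+O(\epsilon)$ slack by the simpler device of rescaling the final polynomial by $(1+\epsilon)^{-1}$ rather than rerunning the construction against a pre-shrunken surrogate; both fixes are valid bookkeeping.
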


\begin{proof}
We start from a degree-$d$ polynomial $\tilde{P}_d^\sign$ that well-approximates $\sign(x)$:
\begin{proposition}[Polynomial approximation of the sign function, adapted from~{\cite[Lemma 10 and Corollary 4]{LC17}}]
\label{prop:poly-approx-sgn}
For any $\delta > 0$, $x\in \bbR$, $\epsilon\in (0,\sqrt{2e\pi})$. Let $\kappa = \frac{2}{\delta}\log^{1/2}\left(\frac{\sqrt{2}}{\sqrt{\pi}\epsilon}\right)$, then 
\[g_{\delta,\epsilon}(x)\coloneqq\erf(\kappa x) \text{ satisfies that } |g_{\delta,\epsilon}(x)| \leq 1 \text{ and } \max_{|x|\geq \delta/2} \left| g_{\delta,\epsilon}(x) - \sign(x) \right| \leq \epsilon.\] 
Moreover, there is an explicit odd polynomial $\tilde{P}_d^{\sign}\in\bbR[x]$ of degree $d=O(\sqrt{(\kappa^2+\log{\epsilon^{-1}})\log{\epsilon^{-1}}})$ such that $\max_{x\in[-1,1]} \left| \tilde{P}_d^{\sign}(x)-\erf(\kappa x) \right| \leq \epsilon$.
\end{proposition}

By \Cref{prop:poly-approx-sgn}, we obtain a degree-$d$ polynomial $\tilde{P}^\sign_d$ that well approximates the function $\erf(\kappa x)$ where $\kappa=O(\delta^{-1}\sqrt{\log{\epsilon^{-1}}})$.

To utilize \Cref{lemma:space-efficient-bounded-funcs}, it suffices to upper bound the second derivative $\max_{\xi \in [-\pi,0]}|F_k''(\xi)|$ for any $0 \leq k \leq d'$, as specified in \Cref{fact:sign-func-derivative-bound}. The proof is deferred to the end of this part.

\begin{fact}
\label{fact:sign-func-derivative-bound}
Let $F_k(\theta)=\erf(\kappa \cos{\theta}) \cos(k\theta)$, it holds that
\[ \forall k\in\cbra{0,1,\dots, d'}, \quad\max_{\xi \in [-\pi,0]}|F_k''(\xi)| \leq \frac{2}{\sqrt{\pi}} \kappa + k^2 + \frac{4}{\sqrt{\pi}} \kappa^3 + \frac{4}{\sqrt{\pi}} k \kappa.\] 
\end{fact}

Note that both $\kappa$ and $k$ are at most $O(d)$. By \Cref{fact:sign-func-derivative-bound}, we have $\max_{\xi \in [-\pi,0]}|F_k''(\xi)| \leq O(d^3)$ for any $0 \leq k \leq d'$. Utilizing \Cref{lemma:space-efficient-bounded-funcs}, we obtain a polynomial approximation $P_{d'}^{\sign}(x) = \hat{c}_0/2+\sum_{k=1}^{d'} \hat{c}_k T_k(x)$ with a degree of $d'=2d-1 \leq \tilde{C}_{\sign} \delta^{-1}\log{\epsilon^{-1}}$, where $\tilde{C}_{\sign}$ is a universal constant. This polynomial satisfies $\max_{x\in [-1,1]} |\erf(\kappa x)-P_{d'}^{\sign}(x)| \leq 4\epsilon$. Then, we can derive:
\[\max_{x\in[-1,1]\setminus[-\delta,\delta]} |\sign(x)-P^{\sign}_{d'}(x)| \leq \epsilon + \max_{x\in[-1,1]} |\erf(\kappa x)-P^{\sign}_{d'}(x)| \leq C_{\sign}\epsilon, \quad \text{where } C_{\sign}=5.\]

Moreover, to bound the norm $\|\hat{\bfc}\|_1$, it suffices to consider the function $\erf(\kappa x)$ due to \Cref{prop:poly-approx-sgn}. Let $g(\theta)\coloneqq \erf(\kappa\cos\theta)$. We observe that $|g(\theta)|\leq 1$ and $g'(\theta)=-\frac{2\kappa}{\sqrt{\pi}}\sin\theta \cdot e^{-\kappa^2\cos^2\theta}$. Since $\sin\theta\leq 0$ on $[-\pi,0]$, we have $g'(\theta)\geq 0$ on this interval. Hence, it follows that $\int_{-\pi}^0 |g'(\theta)|\,\dd\theta =g(0)-g(-\pi) =\erf(\kappa)-\erf(-\kappa) =2\erf(\kappa)\leq 2$. Therefore, by \Cref{lemma:space-efficient-bounded-funcs}\ref{thmitem:polyApprox-l1-norm-bound-first-conti} and the assumption $d'\geq 2$, we obtain $\|\hat{\bfc}\|_1 \leq \hat{C}_{\sign}\log d'$ for some universal constant $\hat{C}_{\sign}$.

For the complexity of computing coefficients $\{\hat{c}_k\}_{k=1}^{d'}$, note that the evaluation of the integrand $F(\theta)$ requires $\ell$-bit precision, where $\ell=O(\log(\epsilon^{-3/2}d^{3}))$. Following $t(\ell) = \tilde{O}(\ell)$ specified in \Cref{remark:evaluating-funcs}, any entry of the coefficient vector $\hat{\bfc}$ can be computed in deterministic time $O(\epsilon^{-1/2}d^{2} t(\ell)) = \tilde{O}(\epsilon^{-1/2}d^{2})$ and space $O(\log(\epsilon^{-3/2} d^3))$. 

Finally, we note that $\max_{x\in[-1,1]}|P_{d'}^{\sign}(x)| \leq 1+\epsilon$ due to numerical errors in computing the coefficients $\{\hat{c}_k\}_{k=1}^{d'}$. We finish the proof by considering the normalized polynomial $\hat{P}_{d'}^{\sign}/(1+\epsilon)$ rather than $\hat{P}_{d'}$ and adjusting the coefficient vector $\hat{\bfc}$ of $P_{d'}^{\sign}$ accordingly.
\end{proof}

We now give the proof of \Cref{fact:sign-func-derivative-bound} stated above:
\begin{proof}[Proof of \Cref{fact:sign-func-derivative-bound}]
Through a straightforward calculation, we have derived that 
\begin{subequations}
\label{eq:sign-func-derivative-bound}
\begin{align}
    |F_k''(\theta)| =& \frac{2}{\sqrt{\pi}} \left|\kappa \exp(-\kappa^2 \cos^2{\theta}) \cos{\theta} \cos(k \theta) \right|
        + \left|k^2\cos(k \theta) \erf(\kappa \cos(\theta))\right|\\
        &+\frac{4}{\sqrt{\pi}} \left|\kappa^3 \exp(-\kappa^2\cos^2{\theta}) \cos{\theta} \cos(k \theta) \sin^2{\theta}\right|\\
        &+\frac{4}{\sqrt{\pi}} \left|k \kappa \exp(-\kappa^2 \cos^2{\theta})\sin{\theta}\sin(k \theta)\right|\\
    \leq& \frac{2}{\sqrt{\pi}} \kappa + k^2+\frac{4}{\sqrt{\pi}}\kappa^3+\frac{4}{\sqrt{\pi}}k\kappa.
\end{align}
\end{subequations}
The last line owes to the facts that $|\erf(x)|\leq 1$, $\exp(-x^2) \leq 1$, $|\sin{x}|\leq 1$, and $|\cos{x}|\leq 1$ for any $x$. We complete the proof by noting that \Cref{eq:sign-func-derivative-bound} holds for any $0 \leq k \leq d'$. 
\end{proof}

\subsubsection{Piecewise-smooth functions}
\label{subsubsec:piecewise-smooth}

We present a randomized algorithm for constructing bounded polynomial approximations of piecewise-smooth functions, offering a \textit{space-efficient} alternative to~\cite[Corollary 23]{GSLW19}, as described in \Cref{thm:space-efficient-smooth-funcs}. Our algorithm leverages \Cref{lemma:space-efficient-bounded-funcs,lemma:space-efficient-low-weight-approx}. 

Since this subsection mostly focuses on polynomial approximations, we introduce some notation for convenience. For a function $f\colon \calI \rightarrow \bbR$ and an interval $\calI' \subseteq \calI$, we define $\|f\|_{\calI'} \coloneqq \sup\{|f(x)| \colon x \in \calI'\}$ to denote the supremum of the function $f$ on the interval $\calI'$.

\begin{theorem}[Taylor series-based space-efficient bounded polynomial approximations]
    \label{thm:space-efficient-smooth-funcs}
    Consider a real-valued function $f\colon[x_0-r-\delta, x_0+r+\delta] \rightarrow \bbR$ such that $f(x_0+x)=\sum_{l=0}^{\infty} a_l x^l$ for all $x \in [-r-\delta, r+\delta]$, where $x_0 \in [-1,1]$, $r\in(0,2]$, $\delta \in (0,r]$. Assume that $\sum_{l=0}^{\infty} (r+\delta)^l |a_l| \leq B$ where $B > 0$. Let $\epsilon \in (0,\frac{1}{2B}]$ such that $B > \epsilon$, then there is a polynomial $P_{d'}\in\bbR[x]$ of degree $d'=2d-1 \leq O(\delta^{-1} \log(\epsilon^{-1} B))$, corresponding to some degree-$d$ averaged Chebyshev truncation, such that any entry of the coefficient vector $\hat{\bfc}$ can be computed in bounded-error randomized time $\tilde{O}(\max\{ (\delta')^{-5} \epsilon^{-2} B^2, d^2\epsilon^{-1/2} \})$ and space $O(\log(d^3(\delta')^{-4}\epsilon^{-3/2}B))$ where $\delta'\coloneqq\frac{\delta}{2(r+\delta)}$, such that
    \begin{align*}
        \|f(x) - P(x)\|_{[x_0-r,x_0+r]} & \leq O(\epsilon),\\
        \|P(x)\|_{[-1,1]} & \leq O(\epsilon) + \|f(x)\|_{[x_0-r-\delta/2,x_0+r+\delta/2]} \leq O(\epsilon) + B,\\
        \|P(x)\|_{[-1,1] \setminus [x_0-r-\delta/2, x_0+r+\delta/2]} & \leq O(\epsilon).
    \end{align*}
    Furthermore, the coefficient vector $\hat{\bfc}$ of $P_{d'}$ has a norm bounded by $\|\hat{\bfc}\|_1 \leq O(B\sqrt d)$.
\end{theorem}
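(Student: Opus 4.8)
The plan is to reduce the statement to \Cref{lemma:space-efficient-bounded-funcs}, which already handles continuously bounded functions via averaged Chebyshev truncation and the composite trapezium rule, and whose only substantive hypotheses are that the input function be continuous and bounded on $[-1,1]$, admit a low-degree polynomial approximant, and be evaluable space-efficiently. The obstruction is that $f$ is controlled only on $\calI=[x_0-r-\delta,x_0+r+\delta]$ and may be huge (or undefined) outside it, so first I would replace $f$ by an auxiliary function $h\colon[-1,1]\to\bbR$ with four properties: (i) $\|f-h\|_{[x_0-r,x_0+r]}\le O(\epsilon)$; (ii) $\|h\|_{[-1,1]\setminus[x_0-r-\delta/2,\,x_0+r+\delta/2]}\le O(\epsilon)$; (iii) $\|h\|_{[-1,1]}\le B+O(\epsilon)$, with a degree-$d$ polynomial approximant $P^*_d$, $d=O(\delta^{-1}\log(\epsilon^{-1}B))$, satisfying $\|h-P^*_d\|_{[-1,1]}\le\epsilon$; and (iv) $h(\cos\theta)$ evaluable to $\ell$-bit precision in bounded-error randomized space $O(\ell)$. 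Here I use $r\le 2$ (hence $\delta\le r\le 2$) to note that $\delta'=\tfrac{\delta}{2(r+\delta)}\asymp\delta$, so the degree bounds in $\delta^{-1}$ and $(\delta')^{-1}$ agree up to constants. Given such an $h$, I set $P_{d'}$ to be its degree-$d$ averaged Chebyshev truncation, so $d'=2d-1$; feeding $P^*_d$ into \Cref{lemma:averaged-Chebyshev-truncation} gives $\|h-P_{d'}\|_{[-1,1]}\le 4\epsilon$, and combining this with (i)--(iii) via the triangle inequality yields the three displayed error bounds. Since the $h$ I construct is $C^\infty$ (a product of a trigonometric polynomial with a Gaussian-error-function bump) with $\|h\|_{[-1,1]}=O(B)$, \Cref{lemma:coefficient-vector-norm-bound} gives the norm bound $\|\hat{\bfc}\|_1\le O(B)$.

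\textbf{Constructing $h$.}
Following the Fourier-approximation technique of~\cite[Appendix B]{vAGGdW17} underlying~\cite[Corollary 23]{GSLW19}, I would first convert the Taylor data $\{a_l\}$ of $f$ into a Fourier polynomial $g(x)=\sum_{|k|\le m}\hat g_k\,e^{\rmi\pi k(x-x_0)/(r+\delta)}$ of degree $m=O((\delta')^{-1}\log(\epsilon^{-1}B))=O(\delta^{-1}\log(\epsilon^{-1}B))$ with $\|f-g\|_{[x_0-r,x_0+r]}\le\epsilon$; the degree bound comes from the tail estimate $\sum_{l>m}|a_l|r^l\le B\,(r/(r+\delta))^{m+1}\le\epsilon$, while crucially $|g(x)|\le\sum_k|\hat g_k|\le B$ for \emph{all} $x$ because $|e^{\rmi\pi k(\cdot)}|=1$. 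I would then multiply by a smooth plateau $R(x)$ assembled from two shifted Gaussian error functions (cf.\ \Cref{prop:poly-approx-sgn}) with $R\ge 1-\epsilon$ on $[x_0-r,x_0+r]$, $|R|\le\epsilon$ off $[x_0-r-\delta/2,x_0+r+\delta/2]$, $|R|\le 1$ throughout, and transition width $\Theta(\delta)$, which forces steepness $\kappa=O(\delta^{-1}\sqrt{\log(\epsilon^{-1}B)})$. Setting $h\coloneqq g\cdot R$ and running the construction with $\epsilon$ replaced by $\epsilon/B$ gives (i)--(iii): on $[x_0-r,x_0+r]$, $|h-f|\le|g|\,|1-R|+|g-f|$; off the enlarged interval, $|h|\le|g|\,|R|$; and globally $|h|\le|g|\le B$. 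The approximant $P^*_d$ is obtained by replacing each $e^{\rmi\pi k(\cdot)}$ by its Jacobi--Anger (Chebyshev) truncation of degree $O(\delta^{-1}\log(\epsilon^{-1}B))$ and $\erf(\kappa\cdot)$ by the polynomial of \Cref{prop:poly-approx-sgn}, then multiplying. Finally, $R(\cos\theta)$ is holonomic and hence evaluable to $\ell$-bit precision in deterministic space $O(\ell)$ by \Cref{remark:evaluating-funcs}, so property (iv) reduces to evaluating $g(\cos\theta)=\sum_k\hat g_k\,e^{\rmi\pi k(\cos\theta-x_0)/(r+\delta)}$, i.e.\ to computing the coefficients $\hat g_k$.

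\textbf{Computing the $\hat g_k$ --- the main obstacle.}
This is the crux of the argument and the reason the final algorithm is randomized. In the construction of~\cite[Appendix B]{vAGGdW17} each $\hat g_k$ is a finite sum $\hat g_k=\sum_{l\le m}a_l\,w_{k,l}$ of the Taylor coefficients, and I would show that each weight $w_{k,l}$ equals, up to an elementary space-efficiently computable scalar, an entry of the $l$-th power of a \emph{fixed, bounded-size, upper-triangular sub-stochastic matrix} $M$ --- the matrix encoding the one-step recursion relating consecutive-degree Fourier contributions. Then, writing $p_l\coloneqq|a_l|(r+\delta)^l/B$ (a sub-probability distribution over $l\le m$, as $\sum_l|a_l|(r+\delta)^l\le B$), the normalized coefficient $\hat g_k/B$ is an expectation that I estimate by sampling $l\sim(p_l)$, running the space-efficient random walk of \Cref{lemma:substochastic-matrix-powering} to obtain an unbiased bit of mean $M^l[\cdot,\cdot]$, and averaging over enough trials (\Cref{lemma:success-probability-estimation}) --- all in $O(\log(\cdot))$ space, which is why $\calA_f$ is bounded-error randomized rather than deterministic. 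Feeding this randomized $O(\ell)$-space evaluation routine for $h(\cos\theta)=g(\cos\theta)R(\cos\theta)$ into \Cref{lemma:space-efficient-bounded-funcs} --- with its parameter $\gamma$ a constant, since the $\theta$-second-derivative of the integrand $\cos(k\theta)h(\cos\theta)$ is $\poly(d)$ ($g$ has degree $O(d)$ and the error functions in $R$ have steepness $\kappa=\poly(d)$) --- produces the averaged Chebyshev truncation $P_{d'}$, and composing the time and space costs of \Cref{lemma:space-efficient-bounded-funcs}, \Cref{lemma:substochastic-matrix-powering}, and \Cref{lemma:success-probability-estimation} gives the claimed randomized time $\tilde O(\max\{(\delta')^{-5}\epsilon^{-2}B^2,\,d^2\epsilon^{-1/2}\})$ (the $d^2\epsilon^{-1/2}$ term is the composite-trapezium integration of \Cref{lemma:space-efficient-bounded-funcs}; the other term is the random-walk coefficient estimation, the $B^2$ factor being the variance cost of the $1/B$ rescaling) and space $O(\log(d^3(\delta')^{-4}\epsilon^{-3/2}B))$. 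The genuinely delicate points are extracting the sub-stochastic-matrix-power structure of the $w_{k,l}$ so that \Cref{lemma:substochastic-matrix-powering} applies, keeping the $\epsilon^{-2}B^2$ sample overhead under control, and the ``somewhat complicated calculation'' verifying that $g\cdot R$ leaks only $O(\epsilon)$ outside $[x_0-r-\delta/2,x_0+r+\delta/2]$ once $\epsilon\mapsto\epsilon/B$ and $\kappa$ are chosen accordingly.
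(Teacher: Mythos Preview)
Your plan is correct and matches the paper's architecture: build $h=g\cdot R$ with $g$ the Fourier approximation of~\cite{vAGGdW17} and $R$ a Gaussian-error-function plateau, then feed $h$ into \Cref{lemma:space-efficient-bounded-funcs}; the randomized ingredient is exactly computing the Fourier coefficients of $g$ via sub-stochastic matrix powering (\Cref{lemma:substochastic-matrix-powering}). The paper resolves the ``delicate point'' you flag by packaging the Fourier step separately as \Cref{lemma:space-efficient-low-weight-approx}: the substitution $x=\tfrac{2}{\pi}\arcsin(\sin(\tfrac{x\pi}{2}))$ turns $x^k$ into $\sum_l b_l^{(k)}\sin^l(\tfrac{x\pi}{2})$, and the convolution recursion for $b_l^{(k)}$ is the $k$-th power of an explicit upper-triangular sub-stochastic matrix $B_1$ whose entries are the Taylor coefficients of $\tfrac{2}{\pi}\arcsin$. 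Two small corrections to your write-up: that matrix is $(L{+}1)\times(L{+}1)$ with $L=O((\delta')^{-2}\log(B/\epsilon))$, not ``bounded-size''; and the paper does not importance-sample $l\sim p_l$ but simply iterates deterministically over all $O(KLM)$ summands and runs a random walk to estimate each $b_l^{(k)}$ individually, which is what produces the $\tilde O((\delta')^{-5}\epsilon^{-2}B^2)$ term after the union bound and sequential repetition.
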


The main ingredient, and the primary challenge, for demonstrating \Cref{thm:space-efficient-smooth-funcs} is to construct a low-weight approximation using Fourier series, as shown in~\cite[Lemma 37]{vAGGdW17}, which requires computing the powers of sub-stochastic matrices in bounded space (\Cref{lemma:substochastic-matrix-powering}). 

\begin{lemma}[Space-efficient low-weight approximation by Fourier series]
\label{lemma:space-efficient-low-weight-approx}
Let $0 < \delta,\epsilon < 1$ and $f\colon\bbR \rightarrow \bbR$ be a real-valued function such that $|f(x)-\sum_{k=0}^K a_k x^k| \leq \epsilon/4$ for all $x \in \calI_{\delta}$, the interval $\calI_{\delta}\coloneqq[-1+\delta,1-\delta]$ and $\|\bfa\|_1 \leq O(\max\{\epsilon^{-1},\delta^{-1}\})$. Then there is a coefficient vector $\bfc\in \bbC^{2M+1}$ such that
\begin{itemize}
    \item For even functions, $\left| f(x) - \sum_{m=-M}^M c^{(\even)}_m \!\cos(\pi x m) \right| \leq \epsilon$ for any $x \in \calI_{\delta}$; 
    \item For odd functions, $\left| f(x) - \sum_{m=-M}^M c^{(\odd)}_m \!\sin\!\left(\pi x \!\left(m\!+\!\frac{1}{2}\right) \right) \right| \leq \epsilon$ for any $x \in \calI_{\delta}$;
    \item Otherwise, $\Big| f(x) - \sum_{m=-M}^M \big(c^{(\even)}_m \!\cos(\pi x m) \!+\! c^{(\odd)}_m \!\sin\!\left(\pi x \!\left(m\!+\!\frac{1}{2}\right) \right)\big) \Big| \leq \epsilon$ for any $x \in \calI_{\delta}$. 
\end{itemize}
Here $M\coloneqq\max\left(2\lceil \delta^{-1} \ln(4\|a\|_1\epsilon^{-1}) \rceil,0\right)$ and $\|\bfc\|_1 \leq \|\bfa\|_1$. Moreover, the coefficient vector $\bfc$ can be computed in bounded-error randomized time $\tilde{O}(\delta^{-5}\epsilon^{-2})$ and space $ O(\log(\delta^{-4}\epsilon^{-1}))$. 
\end{lemma}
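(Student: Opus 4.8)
The plan is to keep the combinatorial core of \cite[Lemma 37]{vAGGdW17} untouched — the value of $M$, the $O(\epsilon)$ Fourier-approximation error on $\calI_{\delta}$, the norm bound $\|\bfc\|_1\le\|\bfa\|_1$, and the even/odd case split — and to replace only the subroutine that \emph{computes} the coefficient vector $\bfc$ by a space-efficient one. First I would unwind the shape of those coefficients. Starting from $p(x)=\sum_{k=0}^K a_k x^k$ with $|f-p|\le\epsilon/4$ on $\calI_{\delta}$, one passes (as in \cite{vAGGdW17}) to a variable in which each power has an \emph{exact} finite low-weight Fourier expansion — concretely $(\sin(\pi x/2))^{l}=(2\rmi)^{-l}\sum_{j=0}^{l}\binom{l}{j}(-1)^{j}e^{\rmi\pi(l-2j)x/2}$, where the frequency-$m$ coefficient has modulus $2^{-l}\binom{l}{(l-m)/2}$, i.e.\ exactly the probability that a length-$l$ simple $\pm1$ nearest-neighbour walk ends at site $m$ — truncates the resulting series at $l\le L=O(M)$ and at $|m|\le M$, and regroups. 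The upshot is that every entry $c_m$ is an explicitly describable linear combination, with dyadic-rational magnitudes, phases in $\{\pm1,\pm\rmi\}$, and total $\ell_1$-mass $\le\|\bfa\|_1$, of $O(M)$ quantities each of which is an entry $[W^{l}]_{0,t}$ of the $l$-th power ($l\le L$) of an explicit, $\ell$-bit, $O(M L)$-dimensional \emph{sub-stochastic} matrix $W$ — the time-augmented (hence acyclic, thus upper-triangular) transition matrix of the corresponding killed $\pm1$ walk. Since the hypothesis gives $\|\bfa\|_1\le O(\max\{\epsilon^{-1},\delta^{-1}\})$, we have $M=\tilde{O}(\delta^{-1})$, so $W$ has dimension $\tilde{O}(\delta^{-2})$ (hence logarithmically many index bits) and the outer weights are computable on the fly; the $O(\epsilon)$ accuracy on $\calI_{\delta}$, the bound $\|\bfc\|_1\le\|\bfa\|_1$, and the value of $M$ are then inherited verbatim from \cite[Lemma 37]{vAGGdW17}, using the slack between the hypothesized $\epsilon/4$ polynomial error and the target $\epsilon$.

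The genuinely new step — and where I expect the main obstacle — is evaluating a single $c_m$ to additive precision $\epsilon'\coloneqq\epsilon/\poly(M)$ within logarithmic space. Computing the matrix powers $[W^{l}]_{0,t}$ deterministically in logspace is not available (it would place iterated matrix powering in deterministic logspace, beyond what is currently known). Instead I would invoke \Cref{lemma:substochastic-matrix-powering}: there is an explicit randomized procedure accepting with probability \emph{exactly} $[W^{l}]_{0,t}$, using $O(\log(M L))=O(\log\delta^{-1})$ space and $O(\ell l)=\tilde{O}(\delta^{-1})$ time. Running \Cref{lemma:success-probability-estimation} on it with $\tilde{O}((\epsilon')^{-2})$ sequential repetitions gives an $\epsilon'$-accurate estimate of $[W^{l}]_{0,t}$ in $\tilde{O}((\epsilon')^{-2}\delta^{-1})$ randomized time and $O(\log(M(\epsilon')^{-1}))$ space. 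Summing the $O(M)$ such estimates against the explicitly computed outer weights, and tracking that the total error on $c_m$ stays $O(\epsilon)$ after the choice $\epsilon'=\tilde{\Theta}(\epsilon\delta)$ — here the $\ell_1$-mass bound $\le\|\bfa\|_1$ on the weights is exactly what keeps the accumulated errors from blowing up — yields an estimate of $c_m$. Boosting via a median over $O(\log M)$ independent runs keeps the failure probability small enough for a union bound over all $2M+1$ coefficients, producing the claimed \emph{bounded-error randomized} (rather than deterministic) logspace algorithm.

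Finally, the resource count. There are $2M+1=\tilde{O}(\delta^{-1})$ coefficients; each needs $O(M)=\tilde{O}(\delta^{-1})$ matrix-power estimates; each estimate costs $\tilde{O}((\epsilon')^{-2})\cdot\tilde{O}(\delta^{-1})=\tilde{O}(\delta^{-3}\epsilon^{-2})$ time with $\epsilon'=\tilde{\Theta}(\epsilon\delta)$, so the total time is $\tilde{O}(\delta^{-5}\epsilon^{-2})$. The space is dominated by the index into the $\tilde{O}(\delta^{-2})$-dimensional walk, $O(\log\delta^{-1})$, plus the working precision $\ell=O(\log(M L(\epsilon')^{-1}))=O(\log(\delta^{-3}\epsilon^{-1}))$, for a total of $O(\log(\delta^{-4}\epsilon^{-1}))$, matching the statement. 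In short, the only place new ideas are needed is the reduction of the \cite{vAGGdW17} coefficient computation to \emph{sub-stochastic matrix powering} and the attendant precision/error bookkeeping; everything else — the existence, accuracy, norm bound, and even/odd form of the Fourier approximation — is imported unchanged from \cite[Lemma 37]{vAGGdW17}.
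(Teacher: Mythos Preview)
Your overall architecture matches the paper's: import the existence, the $O(\epsilon)$ error on $\calI_\delta$, the norm bound $\|\bfc\|_1\le\|\bfa\|_1$, and the value of $M$ directly from \cite[Lemma~37]{vAGGdW17}, and replace only the coefficient computation by sub-stochastic matrix powering via \Cref{lemma:substochastic-matrix-powering} plus \Cref{lemma:success-probability-estimation}. But you have misidentified \emph{which} quantity is the matrix power. The normalized binomial $2^{-l}\binom{l}{(l-m)/2}$ --- the frequency-$m$ Fourier coefficient of $\sin^l(\pi x/2)$ --- is not the bottleneck: it is holonomic (\Cref{remark:evaluating-funcs}) and computable deterministically in $O(\ell)$ space and $\tilde O(\ell)$ time. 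What is \emph{not} obviously logspace-computable is the coefficient $b^{(k)}_l$ of $y^l$ in $\big((2/\pi)\arcsin y\big)^k$, which appears once you make the substitution $y=\sin(\pi x/2)$ and write $x^k=\sum_l b^{(k)}_l y^l$. These satisfy the convolution recursion $b^{(k+1)}_l=\sum_{l'} b^{(k)}_{l'} b^{(1)}_{l-l'}$, i.e.\ $b^{(k)}_l=[B_1^k]_{1,l}$ for the upper-triangular Toeplitz matrix $B_1$ with first row $(b^{(1)}_1,\dots,b^{(1)}_L)$. Since $\bfb^{(1)}$ is entrywise nonnegative with $\|\bfb^{(1)}\|_1=1$, the truncated $B_1$ is sub-stochastic, and \Cref{lemma:substochastic-matrix-powering} applies to \emph{it}. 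That walk --- with step-size distribution $b^{(1)}_l\propto\binom{l-1}{(l-1)/2}2^{-l}/l$ on the odd integers, not the symmetric $\pm1$ nearest-neighbour walk --- is the one the paper simulates. Your ``outer weights computable on the fly'' would have to include $b^{(k)}_l$, and those are precisely what cannot be computed on the fly in logspace without the matrix-powering trick.

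A related slip: $L=\lceil\delta^{-2}\ln(4\|\bfa\|_1/\epsilon)\rceil$, not $O(M)$, because the truncation in $l$ rests on the bound $|\sin(\pi(1-\delta)/2)|\le 1-\delta^2$. So each $c_m$ is a sum of $O(KL)=\tilde O(\delta^{-3})$ terms (over $k\le K$ and $l\le L$), not $O(M)$ terms. The final $\tilde O(\delta^{-5}\epsilon^{-2})$ time and $O(\log(\delta^{-4}\epsilon^{-1}))$ space bounds do come out once you plug in the correct $K,L,M$ and power the correct matrix $B_1$, but your intermediate counts (the ``$O(M)$ quantities'' per coefficient, the choice $\epsilon'=\tilde\Theta(\epsilon\delta)$, the $O(ML)$-dimensional walk) do not reflect the actual structure of the \cite{vAGGdW17} formula.
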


\begin{proof}
We begin by noticing that the truncation error of $\sum_{k=0}^K a_k x^k$, as shown in~\cite[Theorem A.4]{SM03}, is $(1-\delta)^{k+1} \leq e^{-\delta(k+1)} \leq \epsilon$, implying that $K \geq \Omega(\delta^{-1} \ln{\epsilon^{-1}})$. 
Without loss of generality, we can assume that $\|\bfa\|_1 \geq \epsilon/2$.\footnote{This is because if $\|\bfa\|_1 < \epsilon/2$, then $\|f\|_{\calI_{\delta}} \leq \|f(x)-\sum_{k=0}^K a_k x^k\|_{\calI_{\delta}} + \|\sum_{k=0}^K a_k x^k \|_{\calI_{\delta}} \leq \epsilon/4+\|\bfa\|_1 < \epsilon$, implying that $M=0$ and $\bfc=0$.} 

\paragraph*{Construction of polynomial approximations.}
Our construction involves three approximations, as described in Lemma 37 of~\cite{vAGGdW17}. We defer the detailed proofs of all three approximations to the end of this subsection. 

The first approximation combines the assumed $\sum_{k=0}^K a_k x^k$ with $\arcsin(x)$'s Taylor series.
\begin{proposition}[First approximation]
    \label{prop:first-low-weight-approx}
    Let $\hat{f}_1(x)\! \coloneqq \!\sum_{k=0}^K a_k x^k$ such that $\|f-\hat{f}_1\|_{\calI_{\delta}} \leq \epsilon/4$. Then we know that $\hat{f}_1(x) = \sum_{k=0}^K a_k \sum_{l=0}^{\infty} b_l^{(k)} \sin^{l}\left(\frac{x \pi}{2}\right)$ where the coefficients $b_l^{(k)}$ satisfy that
    \begin{equation}
    \label{eq:logQSVT-recursive-formula}
    b_l^{(k+1)}=\sum_{l'=0}^l b_{l'}^{(k)} b_{l-l'}^{(1)}, \text{ where } b^{(1)}_l=\begin{cases}
    0 & \text{ if } l \text{ is even,}\\
    \binom{l-1}{ \frac{l-1}{2} } \frac{2^{-l+1}}{l}\cdot\frac{2}{\pi} & \text{ if } l \text{ is odd.}
    \end{cases}
    \end{equation}
    Furthermore, the coefficients $\{b^{(k)}_l\}$ satisfies the following: 
    \begin{enumerate}[label={\upshape(\arabic*)}]
        \item $\|\bfb^{(k)}\|_1=1$ for all $k\geq 1$; 
        \item $\bfb^{(k)}$ is entry-wise non-negative for all $k \geq 1$; 
        \item $b^{(k)}_l=0$ if $l$ and $k$ have different parities. 
    \end{enumerate}
\end{proposition}

The second approximation truncates the series at $l=L$, and bounds the truncation error. 
\begin{proposition}[Second approximation]
    \label{prop:second-low-weight-approx}
    Let $\hat{f}_2(x)\coloneqq\! \sum_{k=0}^K a_k \sum_{l=0}^L b_l^{(k)} \sin^l\!\left(\frac{x\pi}{2}\right)$, where $L\coloneqq\lceil \delta^{-2} \ln(4\|\bfa\|_1 \epsilon^{-1})\rceil$, then we have $\|\hat{f}_1-\hat{f}_2\|_{\calI_{\delta}} \leq \epsilon/4$.  
\end{proposition}

The third approximation approximates the functions $\sin^l(x)$ in $\hat{f}_2(x)$ using a tail bound of the binomial distribution. Notably, this construction not only quadratically improves the dependence on $\delta$, but also ensures that the integrand's second derivative is \textit{bounded} when combined with \Cref{lemma:space-efficient-bounded-funcs}.
\begin{proposition}[Third approximation]
    \label{prop:third-low-weight-approximation}
    Let $\hat{f}_3(x)$ be polynomial approximations of $f$ that depends on the parity of $f$ such that $\|\hat{f}_2\!-\!\hat{f}_3\|_{\calI_{\delta}} \!\leq\! \epsilon/2$ and $M\!=\!\lfloor \delta^{-1}\ln(4\|\bfa\|_1\epsilon^{-1}) \!\rfloor$, then we have 
    \begin{align*}
    \hat{f}_3^{(\even)}(x) &\coloneqq \textstyle\sum\limits_{k=0}^K a_k\! \sum\limits_{\hat{l}=0}^{L/2} (-1)^{\hat{l}} 2^{-2\hat{l}} b_{2\hat{l}}^{(k)} \sum\limits_{m'=\hat{l}-M}^{\hat{l}+M}  (-1)^{m'} \binom{2\hat{l}}{m'} \cos(\pi x (m'-\hat{l})),\\
    \hat{f}_3^{(\odd)}(x) &\coloneqq \textstyle\sum\limits_{k=0}^{K} a_k \sum\limits_{\hat{l}=0}^{(L-1)/2} (-1)^{\hat{l}+1} 2^{-2\hat{l}-1} b_{2\hat{l}+1}^{(k)} \sum\limits_{m'=\hat{l}+1-M}^{\hat{l}+1+M}  (-1)^{m'} \binom{2\hat{l}+1}{m'} \sin\!\big(\pi x \big(m'-\hat{l}-\tfrac{1}{2}\big)\big).
    \end{align*}
    Therefore, we have that $\hat{f}_3(x) \coloneqq \hat{f}_3^{(\even)}(x)$ if $f$ is even, whereas $\hat{f}_3(x) \coloneqq \hat{f}_3^{(\odd)}(x)$ if $f$ is odd. In addition, if $f$ is neither even or odd, then $\hat{f}_3(x)\coloneqq\hat{f}^{(\even)}_3(x) + \hat{f}^{(\odd)}_3(x)$. 
\end{proposition}

We adopt the third approximation as our construction by rearranging the summations and introducing a new parameter $m$. The value of $m$ is defined as $m\coloneqq m'-\hat{l}$ if $f$ is even and $m\coloneqq m'-\hat{l}-1$ if $f$ is odd. Moreover, the definition of $m$ depends on the parity of $l=2\hat{l}+1$\footnote{In particular, the summand in $\hat{f}_3(x)$ is $c_m^{(\even)} \cos(\pi x m) + c_m^{(\odd)} \sin\!\big(\pi x\big( m+\frac{1}{2}\big)\big)$ if $f$ is neither even nor odd. } if $f$ is neither even nor odd. By applying this approach, we can derive the following:
\begin{subequations}
    \label{eq:low-weight-approx-rearranging-terms}
    \begin{align}
        \hat{f}^{(\even)}_3(x) &= \textstyle \sum\limits_{m=-M}^M c^{(\even)}_m \cos(\pi x m),\\
        \text{ where } c^{(\even)}_m &\coloneqq (-1)^m \sum\limits_{k=0}^K a_k \sum\limits_{\hat{l}=0}^{L/2} b^{(k)}_{2\hat{l}} \tbinom{2\hat{l}}{m+\hat{l}} 2^{-2\hat{l}}; \\
        \hat{f}^{(\odd)}_3(x) &= \textstyle \sum\limits_{m=-M}^M c^{(\odd)}_m \sin\big(\pi x \big(m+\tfrac{1}{2}\big)\big),\\ 
        \text{ where } c^{(\odd)}_m &\coloneqq(-1)^m \sum\limits_{k=0}^K a_k \sum\limits_{\hat{l}=0}^{(L-1)/2} b^{(k)}_{2\hat{l}+1} \tbinom{2\hat{l}+1}{m+\hat{l}+1} 2^{-2\hat{l}-1}.
    \end{align}
\end{subequations}

We then notice that the rearrangement of terms in \Cref{eq:low-weight-approx-rearranging-terms} can be directly applied to the definition of $\hat{f}_3(x)$ in \Cref{prop:third-low-weight-approximation}. As a consequence, we obtain the following bound on the accumulative error: 
\[\|f - \hat{f}_3\|_{\calI_{\delta}} \leq \|f - \hat{f}_1\|_{\calI_{\delta}} + \|\hat{f}_1 - \hat{f}_2\|_{\calI_{\delta}} + \|\hat{f}_2 - \hat{f}_3\|_{\calI_{\delta}} \leq \epsilon.\] 
Additionally, we remark that $\|\bfc\|_1 \leq \|\bfa\|_1$, which follows from $\|\bfb^{(k)}\|_1=1$ (see \Cref{prop:first-low-weight-approx}) and $\sum_{m=0}^l \binom{l}{m}=2^l$. 

\paragraph*{Analyzing time and space complexity.}
To evaluate the bounded polynomial approximation $\hat{f}_3(x)$ with $\epsilon$ accuracy, it is necessary to approximate the summand with $\ell$-bit precision, where $\ell=O(\log(KLM\epsilon^{-1}))=O(\log(\delta^{-4}\epsilon^{-1}))$. Since the summand is a product of a constant number of holonomic functions, approximating $b^{(k)}_l$ with $\ell$-bit precision is sufficient. Other quantities in the summand can be evaluated with the desired accuracy in deterministic time $\tilde{O}(\ell)$ and space $O(\ell)$ as stated in \Cref{remark:evaluating-funcs}.

We now present a bounded-error randomized algorithm for estimating $b^{(k)}_l$. As $\bfb^{(1)}$ is entry-wise non-negative and $\sum_{i=1}^l b_i^{(1)} < \|\bfb^{(1)}\|_1=1$ following \Cref{prop:first-low-weight-approx}, we can express the recursive formula in \Cref{eq:logQSVT-recursive-formula} as the matrix powering of a sub-stochastic matrix $B_1$:
\[
B_1^k \coloneqq \begin{pmatrix}
b_1^{(1)} & b_2^{(1)} & \cdots & b_{l-1}^{(1)} & b_l^{(1)}\\
0 & b_1^{(1)} & \cdots & b_{l-2}^{(1)} & b_{l-1}^{(1)}\\
\vdots & \vdots & \ddots & \vdots & \vdots\\
0 & 0 & \cdots & b_{1}^{(1)} & b_{2}^{(1)}\\
0 & 0 & \cdots & 0 & b_{1}^{(1)}\\
\end{pmatrix}^k
=
\begin{pmatrix}
b_1^{(k)} & b_2^{(k)} & \cdots & b_{l-1}^{(k)} & b_l^{(k)}\\
0 & b_1^{(k)} & \cdots & b_{l-2}^{(k)} & b_{l-1}^{(k)}\\
\vdots & \vdots & \ddots & \vdots & \vdots\\
0 & 0 & \cdots & b_{1}^{(k)} & b_{2}^{(k)}\\
0 & 0 & \cdots & 0 & b_{1}^{(k)}\\
\end{pmatrix}
\coloneqq B_k.
\]

In addition, we approximate the sub-stochastic matrix $B_1$ by dyadic rationals with $\ell$-bit precision, denoted as $\hat{B}_1$. Utilizing \Cref{lemma:substochastic-matrix-powering}, we can compute any entry $\hat{B}_1^k[s,t]$ with a randomized algorithm that runs in $O(\ell k)$ time and $\log(l+1)$ space with acceptance probability $\hat{B}_1^k[s,t]$.
To evaluate $\hat{B}_1^k[s,t]$ with an additive error of $\epsilon$, we use the sequential repetitions outlined in \Cref{lemma:success-probability-estimation}. Specifically, we repeat the algorithm $m=2\epsilon^{-2} \ln(KLM) = O(\epsilon^{-2}\log(\delta^{-4}))$ times, and each turn succeeds with probability at least $1-1/(3KLM)$. Noting that the number of the evaluation of ${b^{(k)}_l}$ for computing $\hat{f}_3(x)$ is $O(KLM)$, and by the union bound, we can conclude that the success probability of evaluating all coefficients in $\bfc$ is at least $2/3$. 

Finally, we complete the proof by analyzing the overall computational complexity. It is evident that our algorithm utilizes $O(\ell+\log{m}) = O(\log(\delta^{-4}\epsilon^{-3}))$ space because indexing $m$ repetitions requires additional $O(\log{m})$ bits.
Moreover, since there are $O(KLM)$ summands in $\hat{f}_3(x)$, and evaluating $b^{(k)}_l$ takes $m$ repetitions with time complexity $O(\ell K)$ for a single turn, the overall time complexity is $O(KLM\cdot\ell K\cdot\epsilon^{-2}\log(KLM))=\tilde{O}(\delta^{-5}\epsilon^{-2})$. 
\end{proof}

\vspace{1em}
Now we present the proof of \Cref{thm:space-efficient-smooth-funcs}, which is a space-efficient and randomized algorithm for constructing bounded polynomial approximations for piecewise-smooth functions. 

\begin{proof}[Proof of \Cref{thm:space-efficient-smooth-funcs}]
Our approach is based on Theorem 40 in~\cite{vAGGdW17} and Corollary 23 in~\cite{GSLW19}. Firstly, we obtain a Fourier approximation $\hat{f}(x)$ of the given function $f(x)$ by truncating it using \Cref{lemma:space-efficient-low-weight-approx}. Next, we ensure that $\hat{f}(x)$ is negligible outside the interval $[-x_0-r,x_0+r]$ by multiplying it with a suitable rectangle function, denoted as $h(x)$. Finally, we derive a space-efficient polynomial approximation $\hat{h}(x)$ of $h(x)$ by applying \Cref{lemma:space-efficient-bounded-funcs}.

\paragraph*{Construction of a bounded function.} 
Let us begin by defining a linear transformation $L(x)\coloneqq\frac{x-x_0}{r+\delta}$ that maps $[x_0-r-\delta,x_0+r+\delta]$ to $[-1,1]$. For convenience, we denote $g(y)\coloneqq f(L^{-1}(y))$ and $b_l\coloneqq a_l(r+\delta)^l$, then it is evident that $g(y)\coloneqq\sum_{l=0}^{\infty} b_ly^l$ for any $y\in[-1,1]$. 

To construct a Fourier approximation by \Cref{lemma:space-efficient-low-weight-approx}, we need to bound the truncation error $\varepsilon_J^{(g)}$. We define $\delta'\coloneqq\frac{\delta}{2(r+\delta)}$ and $J\coloneqq\lceil (\delta')^{-1} \log(12B\epsilon^{-1}) \rceil$. This ensures that the truncation error $\varepsilon_J^{(g)}\coloneqq\big|g(y)-\sum_{j=0}^{J-1} b_jy^j\big|$ for any $y\in[-1+\delta',1-\delta']$ satisfies the following: 
\[\varepsilon_J^{(g)} = \abs*{ \sum_{j=J}^{\infty} b_jy^j } \leq \sum_{j=J}^{\infty} \big| b_j(1-\delta')^j \big| \leq (1-\delta')^J \sum_{j=J}^{\infty} |b_j| \leq (1-\delta')^J B \leq e^{-\delta' J} B \leq \frac{\epsilon}{12} \coloneqq \frac{\epsilon'}{4}. \]
Afterward, let $\hat{\bfb}\coloneqq(b_0,b_1,\cdots,b_{J-1})$, then we know that $\|\hat{\bfb}\|_1 \leq \|\bfb\|_1 \leq B$ by the assumption. Now we utilize \Cref{lemma:space-efficient-low-weight-approx} and obtain the Fourier approximation $\hat{g}(y)$:

\begin{equation}
    \label{eq:smooth-funcs-Fourier-approxY}
    \hat{g}(y) \!\coloneqq\! \begin{cases}
     \sum_{m=-M}^M c_m^{(\even)} \cos(\pi ym), & \text{if } f \text{ is even}\\
     \sum_{m=-M}^M c_m^{(\odd)} \sin\!\big(\pi y \big(m\!+\!\tfrac{1}{2}\big)\big), & \text{if } f \text{ is odd}\\
     \sum_{m=-M}^M \Big(c_m^{(\even)} \!\cos(\pi ym) + c_m^{(\odd)} \!\sin\big(\pi y \big(m\!+\!\tfrac{1}{2}\big)\big)\Big), & \text{otherwise}
    \end{cases}.
\end{equation}
By appropriately choosing $M=O\big( (\delta')^{-1}\!\log\big(\|\hat{\bfb}\|_1/\epsilon'\big)\big) = O\big( r \delta^{-1}\!\log\big(B/\epsilon\big) \big)$, we obtain that the vectors of coefficients $\bfc^{(\even)}$ and $\bfc^{(\odd)}$ satisfy $\|\bfc^{(\even)}\|_1 \leq \|\hat{\bfb}\|_1 \leq \|\bfb\|_1 \leq B$ and similarly $\|\bfc^{(\odd)}\|_1 \leq B$. Plugging $f(x)=g(L(x))$ into \Cref{eq:smooth-funcs-Fourier-approxY}, we conclude that $\hat{f}(x)=\hat{g}(L(x))$ is a Fourier approximation of $f$ with an additive error of $\epsilon/3$ on the interval $[x_0-r-\delta/2, x_0+r+\delta/2]$: 
\begin{equation*}
    \hat{f}(x) = \hat{g}\Big( \frac{x\!-\!x_0}{r\!+\!\delta} \Big) = \begin{cases}
     \sum\limits_{m=-M}^M c_m^{(\even)} \!\!\cos\!\big(\pi m \big(\frac{x-x_0}{r+\delta}\big) \big), & \text{if } f \text{ is even}\\
     \sum\limits_{m=-M}^M c_m^{(\odd)} \!\!\sin\!\big(\pi \big(m+\tfrac{1}{2}\big) \big(\frac{x-x_0}{r+\delta}\big) \big), & \text{if } f \text{ is odd}\\
     \sum\limits_{m=-M}^M c_m^{(\even)} \!\!\cos\!\big(\pi m \big(\frac{x-x_0}{r+\delta}\big) \big) + c_m^{(\odd)} \sin\!\big(\pi \big(m+\tfrac{1}{2}\big) \big(\frac{x-x_0}{r+\delta}\big) \big), & \text{otherwise}
    \end{cases}.
\end{equation*}

\paragraph*{Making the error negligible outside the interval. }
Subsequently, we define the function $h(x) = \hat{f}(x) \cdot R(x)$ such that it becomes negligible outside the interval of interest, i.e., $[x_0-r-\delta/2,x_0+r+\delta/2]$. 
Here, the approximate rectangle function $R(x)$ is $\tilde{\epsilon}$-close to $1$ on the interval $[x_0-r,x_0+r]$, and is $\tilde{\epsilon}$-close to $0$ on the interval $[-1,1] \setminus [x_0-r-2\tilde{\delta}, x_0+r+2\tilde{\delta}]$, where $\tilde{\epsilon}\coloneqq\epsilon/(3B)$ and $\tilde{\delta}\coloneqq\delta/4$. Moreover, $|R(x)|\leq 1$ for any $x \in [-1,1]$. 
Similar to~\cite[Lemma 29]{GSLW19}, $R(x)$ can be expressed as a linear combination of Gaussian error functions: 
\begin{align*}
    R(x) &\coloneqq \frac{1}{2} \rbra*{ \erf\!\big( \kappa (x - x_0 + r + \tilde{\delta}) \big) - \erf \big( \kappa (x - x_0 - r - \tilde{\delta}) \big) },\\
    \text{where } \kappa &\coloneqq \frac{2}{\tilde{\delta}} \log^{1/2} \rbra*{\frac{\sqrt{2}}{\sqrt{\pi} \tilde{\epsilon}}} = \frac{8}{\delta}  \log^{1/2} \rbra*{\frac{\sqrt{18}B}{\sqrt{\pi}\epsilon}}.
\end{align*}

\paragraph*{Bounded polynomial approximation via averaged Chebyshev truncation. }
We present an algorithmic, space-efficient, randomized polynomial approximation method using averaged Chebyshev truncation to approximate the function $h(x)\coloneqq\hat{f}(x) \cdot R(x)$. As suggested in \Cref{prop:time-efficient-smooth-funcs}, we use an explicit polynomial approximation $P^*_d(x)$ of the bounded function $h(x)$ of degree $d=O(\delta^{-1} \log(B\epsilon^{-1}))$ that satisfies the conditions specified in \Cref{eq:smooth-funcs-opt}.

\begin{proposition}[Bounded polynomial approximations based on a local Taylor series, adapted from~{\cite[Corollary 23]{GSLW19}}]
\label{prop:time-efficient-smooth-funcs}
Let $x_0\in [-1,1]$, $r\in(0,2]$, $\delta\in(0,r]$ and let $f\colon[x_0-r-\delta, x_0+r+\delta] \rightarrow \bbR$ and be such that $f(x_0+x)\!\coloneqq\!\sum_{l=0}^{\infty} a_l x^l$ for all $x\in[-r\!-\!\delta,r\!+\!\delta]$. Suppose $B>0$ is such that $\sum_{l=0}^{\infty} (r+\delta)^l |a_l| \leq B$. Let $\epsilon \in \big(0,\!\frac{1}{2B}\big]$, there is a $\epsilon/3$-precise Fourier approximation $\hat{f}(x)$ of $f(x)$ on the interval $[x_0\!-r-\delta/2, x_0\!+\!r\!+\!\delta/2]$, where $\hat{f}(x)\!\coloneqq\!\sum_{m=-M}^M\! \operatorname{Re}\!\Big[\tilde{c}_m e^{-\frac{\rmi \pi m}{2(r+\delta)} x_0} e^{\frac{\rmi \pi m}{2(r+\delta)} x}\Big]$ and $\|\tilde{\bfc}\|_1 \leq B$. We have an explicit polynomial $P^*_d\in\bbR[x]$ of degree $d=O(\delta^{-1}\log(B\epsilon^{-1}))$ s.t. 
    \begin{subequations}
    \label{eq:smooth-funcs-opt}
    \begin{align}
        \|\hat{f}(x)R(x) - P^*_d(x)\|_{[x_0-r,x_0+r]} & \leq \epsilon,\\
        \|P^*_d(x)\|_{[-1,1]} & \leq \epsilon + \|\hat{f}(x)R(x)\|_{[x_0-r-\delta/2,x_0+r+\delta/2]} \leq \epsilon + B,\\
        \|P^*_d(x)\|_{[-1,1] \setminus [x_0-r-\delta/2, x_0+r+\delta/2]} & \leq \epsilon.
    \end{align}
    \end{subequations}
\end{proposition}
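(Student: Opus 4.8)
The plan is to establish \Cref{prop:time-efficient-smooth-funcs} by recalling the construction behind \cite[Corollary 23]{GSLW19} (itself built on \cite[Theorem 40]{vAGGdW17}), but organized so that the Fourier approximation $\hat f$ and the rectangle function $R$ are exposed as explicit intermediate objects; this is exactly what lets the enclosing proof of \Cref{thm:space-efficient-smooth-funcs} later reuse $\hat f$ and $R$ while replacing only the final polynomial step by its space-efficient (averaged-Chebyshev) analogue.

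\emph{Step 1: the Fourier approximation $\hat f$.} Under the affine substitution $y=(x-x_0)/(r+\delta)$, which carries $[x_0-r-\delta,x_0+r+\delta]$ onto $[-1,1]$ and $[x_0-r-\delta/2,x_0+r+\delta/2]$ onto $[-1+\delta',1-\delta']$ with $\delta'=\delta/(2(r+\delta))$, the function $g(y)=f(x_0+(r+\delta)y)$ has power series $\sum_l b_l y^l$ with $b_l=a_l(r+\delta)^l$, so $\sum_l|b_l|\le B$. Truncating at degree $J=O((\delta')^{-1}\log(B\epsilon^{-1}))$ leaves error $(1-\delta')^J B\le\epsilon/12$ on $[-1+\delta',1-\delta']$, and then the existence statement of \Cref{lemma:space-efficient-low-weight-approx} (equivalently \cite[Lemma 37]{vAGGdW17}) turns $\sum_{j=0}^{J-1}b_jy^j$ into $\hat g(y)=\sum_{m=-M}^M\operatorname{Re}\!\big[\tilde c_m e^{\rmi\pi my/2}\big]$ with $M=O((\delta')^{-1}\log(B\epsilon^{-1}))$, $\|\tilde{\bfc}\|_1\le\|\bfb\|_1\le B$, and total error $\le\epsilon/3$ on $[-1+\delta',1-\delta']$. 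Undoing the substitution gives $\hat f(x)=\hat g\big((x-x_0)/(r+\delta)\big)$ of the claimed form, an $\epsilon/3$-precise Fourier approximation of $f$ on $[x_0-r-\delta/2,x_0+r+\delta/2]$, with frequencies $\omega_m=\pi m/(2(r+\delta))$ obeying $|\omega_m|\le\pi M/(2(r+\delta))=O(\delta^{-1}\log(B\epsilon^{-1}))$ (the factor $r+\delta$ cancels against $M$); using $r\le 2$, each of $J$, $M$, and $\max_m|\omega_m|$ is $O(\delta^{-1}\log(B\epsilon^{-1}))$. Note also $|\hat f(x)|\le\|\tilde{\bfc}\|_1\le B$ for every real $x$.

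\emph{Step 2: the rectangle function $R$ and the polynomial $P^*_d$.} Take $R$ exactly as in the enclosing proof --- a difference of two error functions as in \cite[Lemma 29]{GSLW19} --- which is $\tilde\epsilon$-close to $1$ on $[x_0-r,x_0+r]$, $\tilde\epsilon$-close to $0$ on $[-1,1]\setminus[x_0-r-\delta/2,x_0+r+\delta/2]$, and bounded by $1$ everywhere, with $\tilde\epsilon=\epsilon/(3B)$ and sharpness $\kappa=\Theta(\delta^{-1}\sqrt{\log(B\epsilon^{-1})})$. Since $|\hat f|\le B$ everywhere, $h=\hat fR$ already obeys $\|h\|_{[-1,1]\setminus[x_0-r-\delta/2,x_0+r+\delta/2]}\le B\tilde\epsilon=\epsilon/3$, $|h|\le\max\big(\|h\|_{[x_0-r-\delta/2,x_0+r+\delta/2]},\epsilon/3\big)$ on $[-1,1]$, and $\|h\|_{[x_0-r-\delta/2,x_0+r+\delta/2]}\le B$. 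Now build $P^*_d$ by approximating $\hat f$ and $R$ on \emph{all} of $[-1,1]$ by bounded polynomials and multiplying: the Jacobi--Anger expansion (e.g.\ \cite[Lemmas 57 and 58]{GSLW19}) approximates each $\cos(\omega_m x),\sin(\omega_m x)$ on $[-1,1]$ by a polynomial of degree $O(|\omega_m|+\log\epsilon_1^{-1})$ bounded by $1$ there, and the $\tilde{\bfc}$-weighted combination yields $P_{\hat f}$ with $\deg P_{\hat f}=O(\delta^{-1}\log(B\epsilon^{-1}))$, $\|P_{\hat f}-\hat f\|_{[-1,1]}\le O(B\epsilon_1)$, and $\|P_{\hat f}\|_{[-1,1]}\le B+O(B\epsilon_1)$; likewise, composing the error-function approximant of \Cref{prop:poly-approx-sgn} with the affine shifts appearing in $R$ (whose arguments stay in a bounded interval) yields a polynomial $P_R$ with $\deg P_R=O\big(\sqrt{(\kappa^2+\log\epsilon_2^{-1})\log\epsilon_2^{-1}}\big)=O(\delta^{-1}\log(B\epsilon^{-1}))$, $\|P_R-R\|_{[-1,1]}\le\epsilon_2$, and $\|P_R\|_{[-1,1]}\le 1+\epsilon_2$. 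Set $P^*_d=P_{\hat f}P_R$, of degree $d=O(\delta^{-1}\log(B\epsilon^{-1}))$.

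\emph{Step 3: verifying \Cref{eq:smooth-funcs-opt} and the main obstacle.} With $\epsilon_1,\epsilon_2=\Theta(\epsilon/B)$, the bound $|P^*_d-\hat fR|\le|P_{\hat f}-\hat f|\,|P_R|+|\hat f|\,|P_R-R|\le O(B\epsilon_1)(1+\epsilon_2)+B\epsilon_2\le\epsilon/2$ holds on all of $[-1,1]$. The first line of \Cref{eq:smooth-funcs-opt} is then immediate; the third follows since there $|\hat fR|\le\epsilon/3$, hence $|P^*_d|\le\epsilon/3+\epsilon/2<\epsilon$; and the second follows from $|P^*_d|\le|\hat fR|+\epsilon/2\le\max\big(\|\hat fR\|_{[x_0-r-\delta/2,x_0+r+\delta/2]},\epsilon/3\big)+\epsilon/2\le\|\hat fR\|_{[x_0-r-\delta/2,x_0+r+\delta/2]}+\epsilon$ combined with $\|\hat fR\|_{[x_0-r-\delta/2,x_0+r+\delta/2]}\le B$. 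The step I expect to be the main obstacle is the degree bookkeeping: one must check that the Taylor truncation length $J$, the Fourier bandwidth $M$ (equivalently $\max_m|\omega_m|$), and the error-function sharpness $\kappa$ all collapse to $O(\delta^{-1}\log(B\epsilon^{-1}))$ once the $r+\delta$ cancellations and the bound $r\le 2$ are used, and --- crucially for the vanishing-outside estimate --- that the Jacobi--Anger and error-function polynomial approximants are uniformly accurate and bounded on \emph{all} of $[-1,1]$, not merely on the central subinterval, so that forming the product $P_{\hat f}P_R$ does not reintroduce error away from $[x_0-r,x_0+r]$; chaining the error budgets ($\epsilon/12$ truncation, $\epsilon/3$ Fourier, $\epsilon/(3B)$ rectangle, $\Theta(\epsilon/B)$ polynomial) is then routine.
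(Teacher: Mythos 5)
Your proof is correct and reconstructs the standard argument behind Corollary~23 of~\cite{GSLW19} (built on Theorem~40 of~\cite{vAGGdW17}), which is exactly the source this proposition is adapted from and which the paper invokes as a black box rather than reproving. The decomposition you use --- affine substitution plus power-series truncation to obtain $\hat f$, an error-function difference for $R$, and then Jacobi--Anger approximants for $P_{\hat f}$ together with an $\erf$-approximant for $P_R$ with $P^*_d = P_{\hat f} P_R$ --- matches the cited proof modulo organizing the final combination as a product of two sums instead of a per-$m$ sum of products (the resulting polynomial is the same), and your degree bookkeeping, the $r+\delta$ cancellation, and the error chaining are all carried out correctly.
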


To utilize \Cref{lemma:space-efficient-bounded-funcs}, we need to bound the second derivative $\max_{\xi\in[-\pi,0]} |F_k''(\xi)|$, where the integrand $F_k(\theta)\coloneqq\cos(k\theta) h(\cos\theta)$ for any $0 \leq k \leq {d'}$ with $d'=2d-1$. We will calculate this upper bound directly in \Cref{fact:smooth-funcs-derivative}, and the proof is deferred to the end of this section. 

\begin{fact}
    \label{fact:smooth-funcs-derivative}
    Consider the integrand $F_k(\theta) \!=\! \sum_{m=-M}^M \!\frac{c_m}{2} \big( H_{k,m}^{(+)} \!-\! H_{k,m}^{(-)} \big)$ for any function $f$ which is either even or odd. If $f$ is even, we have that $c_m=c_m^{(\even)}$ defined in \Cref{lemma:space-efficient-low-weight-approx}, and 
    \begin{equation}
        \label{eq:smooth-funcs-derivative-even}
        H_{k,m}^{(\pm)}(\theta) \coloneqq \cos\!\rbra*{ \pi m \rbra*{ \frac{\cos{\theta} - x_0}{r+\delta} } } \cdot \cos(k\theta) \cdot \erf\!\rbra*{\kappa \rbra*{ \cos{\theta} - x_0 \pm r \pm \frac{\delta}{4} }}.
    \end{equation}
    Likewise, if $f$ is odd, we know that $c_m=c_m^{(\odd)}$ defined in \Cref{lemma:space-efficient-low-weight-approx}, and 
    \begin{equation}
        \label{eq:smooth-funcs-derivative-odd}
        H_{k,m}^{(\pm)}(\theta) \coloneqq \sin\!\rbra*{ \pi \rbra*{ m+\frac{1}{2} } \rbra*{ \frac{\cos{\theta} - x_0}{r+\delta} } } \cdot \cos(k\theta) \cdot \erf\!\rbra*{ \kappa \rbra*{ \cos{\theta} - x_0 \pm r \pm \frac{\delta}{4} } }.
    \end{equation}
    Moreover, the integrand is $F_k(\theta) \!=\! \sum_{m=-M}^M \!\Big(\frac{c^{(\even)}_m}{2} \big( \hat{H}_{k,m}^{(+)} \!-\! \hat{H}_{k,m}^{(-)} \big) + \frac{c^{(\odd)}_m}{2} \big( \tilde{H}_{k,m}^{(+)} \!-\! \tilde{H}_{k,m}^{(-)} \big)\Big)$ when $f$ is neither even nor odd, where $\hat{H}_{k,m}^{(\pm)}$ and $\tilde{H}_{k,m}^{(\pm)}$ follow from \Cref{eq:smooth-funcs-derivative-even,eq:smooth-funcs-derivative-odd}, respectively.     
    Regardless of the parity of $f$, we have that the second derivative $\abs*{F''_k(\theta)} \leq O(Bd^3)$. 
\end{fact}

Together with \Cref{fact:smooth-funcs-derivative}, we are ready to apply \Cref{lemma:space-efficient-bounded-funcs} to $h(x)=\hat{f}(x)R(x)$, resulting in a degree-$d'$ polynomial $P_{d'} = \hat{c}_0/2 + \sum_{k=1}^{d'} \hat{c}_k T_k$ where $d'=2d-1$ and $\hat{c}_k$ is defined as in \Cref{eq:averaged-truncated-Chebyshe-expansion-theta}. Since $P_{d'}$ is the degree-$d$ averaged Chebyshev truncation of the function $h$ and satisfies \Cref{eq:smooth-funcs-opt}, we define intervals $\calI_{\inter}\coloneqq[x_0-r,x_0+r]$ and $\calI_{\exter}\coloneqq [-1,1] \setminus [x_0-r-\delta/2, x_0+r+\delta/2]$ to obtain: 
\begin{subequations}
\label{eq:smooth-funcs-approxpoly-errors}
\begin{align}
\|f(x)-P_{d'}(x)\|_{\calI_{\inter}} &\leq \|f(x)-h(x)\|_{\calI_{\inter}} + \|h(x)-P_{d'}(x)\|_{\calI_{\inter}} \leq \epsilon + 4\epsilon = O(\epsilon), \\
\|P_{d'}(x)-0\|_{\calI_{\exter}} & \leq \|P_{d'}(x)-h(x)\|_{\calI_{\exter}} + \|h(x)-0\|_{\calI_{\exter}} \leq 4\epsilon + 2B \cdot \frac{\epsilon}{3B} \leq O(\epsilon).  
\end{align}
\end{subequations}

We can achieve the desired error bound by observing \Cref{eq:smooth-funcs-approxpoly-errors} implies: 
\[ \|P_{d'}(x)\|_{[-1,1]} \leq  \|P_{d'}(x)\|_{\calI_{\exter}} + \|P_{d'}(x)\|_{[-1,1] \setminus \calI_{\exter}} \leq O(\epsilon) + B. \]
Moreover, since $|R(x)|\leq 1$ and the low-weight Fourier approximation satisfies $|\hat f(x)|\leq O(B)$ on $[-1,1]$, we have $\|h(x)\|_{[-1,1]}\leq O(B)$. By \Cref{lemma:space-efficient-bounded-funcs}\ref{thmitem:polyApprox-l1-norm-bound-general}, we deduce that the norm of the coefficient vector $\hat{\bfc}$ of the polynomial $P_{d'}$ is bounded by $\|\hat{\bfc}\|_1 \leq O(B\sqrt d)$.

\paragraph*{Analyzing time and space complexity.}
The construction of $\hat{f}(x)$ can be implemented in bounded-error randomized time $\tilde{O}((\delta')^{-5}\epsilon^{-2}B^2)$ and space
$O(\log((\delta')^{-4}\epsilon^{-1}B))$, given that this construction uses \Cref{lemma:space-efficient-low-weight-approx} with $\delta'=\frac{\delta}{2(r+\delta)} \in (0,\frac{1}{2}]$ and $\epsilon'=\frac{\epsilon}{3B}$.
Having $\hat{f}(x)$, we can construct a bounded polynomial approximation $\hat{h}(x)$ deterministically using \Cref{lemma:space-efficient-bounded-funcs}. This construction can be implemented in deterministic time $O(d^{(\gamma+1)/2}\epsilon^{-1/2} t(\ell)) \leq \tilde{O}(d^2\epsilon^{-1/2})$ and space $O(\log(d^{(\gamma+3)/2}\epsilon^{-3/2}B)) \leq O(\log(d^3\epsilon^{-3/2}B))$ since the integrand $F_k(\theta)$ is a product of a constant number of (compositions of) holonomic functions (\Cref{remark:evaluating-funcs}). Therefore, our construction can be implemented in bounded-error randomized time $\tilde{O}(\max\big\{ (\delta')^{-5} \epsilon^{-2} B^2, d^2\epsilon^{-1/2} \big\})$ and space $O(\max\{\log((\delta')^{-4} \epsilon^{-1} B), \log(d^3 \epsilon^{-3/2} B) \}) \leq O(\log(d^3(\delta')^{-4}\epsilon^{-3/2}B))$.
\end{proof}

\vspace{1em}
With the aid of \Cref{thm:space-efficient-smooth-funcs}, we can provide a space-efficient polynomial approximation to the normalized logarithmic function utilized in~\cite[Lemma 11]{GL20}.

\begin{corollary}[Space-efficient polynomial approximation to the normalized logarithmic function]
    \label{corr:space-efficient-log}
    Let $\beta \in (0,1]$ and $\epsilon\in(0,1/2)$, there is an even polynomial $P^{\ln}_{d'}$ of degree $d'=2d-1 \leq \tilde{C}_{\ln} \beta^{-1} \log{\epsilon^{-1}}$, where $P^{\ln}_{d'}$ corresponds to some degree-$d$ averaged Chebyshev truncation and $\tilde{C}_{\ln}$ is a universal constant, such that 
    \begin{align*}
    \forall x \in [\beta,1],& \left| P^{\ln}_{d'}(x) - \tfrac{\ln(1/x)}{2\ln(2/\beta)} \right| \leq C_{\ln} \epsilon , \text{ where } C_{\ln} \text{ is a universal constant}, \\
    \forall x \in [-1,1],& |P^{\ln}_{d'}(x)| \leq 1.
    \end{align*}
    Moreover, the coefficient vector $\bfc^{\ln}$ of $P^{\ln}_{d'}$ has a norm bounded by $\|\bfc^{\ln}\|_1 \leq \hat{C}_{\ln} \sqrt{d'}$, where $\hat{C}_{\ln}$ is another universal constant. In addition, any entry of the coefficient vector $\bfc^{\ln}$ can be computed in bounded-error randomized time $\tilde{O}(\max\{\beta^{-5}\epsilon^{-2}, d^2\epsilon^{-1/2}\})$ and space $O(\log(d^3 \beta^{-4} \epsilon^{-3/2}))$. 
    Without loss of generality, we assume that all constants $C_{\ln}$, $\hat{C}_{\ln}$, and $\tilde{C}_{\ln}$ are at least $1$. 
\end{corollary}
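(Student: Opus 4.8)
The plan is to obtain $P^{\ln}_{d'}$ by applying \Cref{thm:space-efficient-smooth-funcs} to a normalized logarithm on the positive interval $[\beta,1]$ and then \emph{symmetrizing} the resulting polynomial to make it even. I would take $f(x)\coloneqq\frac{\ln(1/x)}{\ln(2/\beta)}$, regarded as an analytic function on $(0,\infty)$, and (for $\beta\le\tfrac12$; the regime $\beta\in(\tfrac12,1]$ is handled identically after enlarging $[\beta,1]$ to $[\tfrac12,1]$, where $\beta^{-1}=\Theta(1)$) choose $x_0\coloneqq\tfrac{1+\beta}{2}$, $r\coloneqq\tfrac{1-\beta}{2}$, $\delta_0\coloneqq\tfrac{\beta}{2}$, so that $[x_0-r,x_0+r]=[\beta,1]$, $\delta_0\le r$, and $r+\delta_0=\tfrac12<x_0$ — the last inequality guaranteeing that the Taylor series of $-\ln$ around $x_0$ converges to $-\ln$ on $[-r-\delta_0,r+\delta_0]$. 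Writing $-\ln(x_0+x)=-\ln x_0-\sum_{l\ge1}\tfrac{(-1)^{l-1}}{l}(x/x_0)^l$, a direct computation gives $\sum_{l\ge0}(r+\delta_0)^l|a_l|=|\ln x_0|-\ln\!\bigl(1-\tfrac{r+\delta_0}{x_0}\bigr)=\ln(2/\beta)$, so after the normalization by $\ln(2/\beta)$ the hypothesis of \Cref{thm:space-efficient-smooth-funcs} holds with $B=1=O(1)$. I would extend $f$ arbitrarily to the part $x\le\beta/2$ of the domain that the theorem formally asks for, since its conclusions reference $f$ only on $[x_0-r-\delta_0/2,\,x_0+r+\delta_0/2]\subseteq(0,\infty)$.

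Feeding this into \Cref{thm:space-efficient-smooth-funcs} (after possibly shrinking $\epsilon$ by a universal constant, which is harmless) yields a polynomial $\tilde P$ of degree $2d-1=O(\delta_0^{-1}\log(\epsilon^{-1}B))=O(\beta^{-1}\log\epsilon^{-1})$ with $\|\tilde P-f\|_{[\beta,1]}=O(\epsilon)$, $\|\tilde P\|_{[-1,1]}\le B+O(\epsilon)=1+O(\epsilon)$, $\|\tilde P\|_{[-1,1]\setminus[\,3\beta/4,\,1+\beta/4\,]}=O(\epsilon)$, and $\|\tilde{\bfc}\|_1=O(B)=O(1)$. Since $f$ — hence the integrand in the underlying averaged Chebyshev truncation — is a composition of a constant number of holonomic functions (\Cref{remark:evaluating-funcs}), the coefficient-computation bounds of the theorem apply verbatim, giving the claimed bounded-error randomized time $\tilde O(\max\{\beta^{-5}\epsilon^{-2},\,d^2\epsilon^{-1/2}\})$ and space $O(\log(d^3\beta^{-4}\epsilon^{-3/2}))$ once one notes $\delta'=\tfrac{\delta_0}{2(r+\delta_0)}=\tfrac{\beta}{2}$.

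I would then set $P^{\ln}_{d'}(x)\coloneqq\tfrac12\bigl(\tilde P(x)+\tilde P(-x)\bigr)$, an even polynomial of degree at most $2d-1$. For $x\in[\beta,1]$ we have $-x\in[-1,-\beta]\subseteq[-1,1]\setminus[\,3\beta/4,\,1+\beta/4\,]$, so $|\tilde P(-x)|=O(\epsilon)$ and therefore $\bigl|P^{\ln}_{d'}(x)-\tfrac12 f(x)\bigr|\le\tfrac12\|\tilde P-f\|_{[\beta,1]}+\tfrac12|\tilde P(-x)|=O(\epsilon)$, with $\tfrac12 f(x)=\tfrac{\ln(1/x)}{2\ln(2/\beta)}$. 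Because $T_k(-x)=(-1)^k T_k(x)$, the symmetrization annihilates the odd Chebyshev modes of $\tilde P$ and keeps the even ones, so $\|\bfc^{\ln}\|_1\le\|\tilde{\bfc}\|_1=O(1)$ and $\|P^{\ln}_{d'}\|_{[-1,1]}\le\|\tilde P\|_{[-1,1]}\le1+O(\epsilon)$. As in \Cref{corr:space-efficient-sign}, I would finally divide by $1+O(\epsilon)$ — a known constant multiple of $\epsilon$ coming from the explicit error bounds of \Cref{thm:space-efficient-smooth-funcs}, hence free for the stated complexities — to enforce $\|P^{\ln}_{d'}\|_{[-1,1]}\le1$; this perturbs the $[\beta,1]$-error and the $\ell_1$-bound only by $O(\epsilon)$ and $O(1)$ factors, yielding the universal constants $C_{\ln},\hat C_{\ln},\tilde C_{\ln}$.

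The main obstacle is producing an \emph{even} polynomial without inflating the degree: the obvious reduction of approximating $-\tfrac12\ln y$ on $[\beta^2,1]$ and substituting $y=x^2$ costs degree $\Theta(\beta^{-2}\log\epsilon^{-1})$, too large; symmetrizing a one-sided approximation is what keeps the degree at $O(\beta^{-1}\log\epsilon^{-1})$, and this only works because \Cref{thm:space-efficient-smooth-funcs} makes $\tilde P$ genuinely $O(\epsilon)$-negligible on \emph{all} of $[-1,-\beta]$ — the clause that dictates both the choice $\delta_0=\Theta(\beta)$ and the placement of $[x_0-r-\delta_0/2,\,x_0+r+\delta_0/2]$. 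The rest is routine bookkeeping: verifying the $B=O(1)$ computation, tracking the renormalization constant, and checking that $f$ being analytic only on $(0,\infty)$ is immaterial.
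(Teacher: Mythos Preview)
Your proposal is correct and follows essentially the same approach as the paper: apply \Cref{thm:space-efficient-smooth-funcs} to the normalized logarithm on $[\beta,1]$, then symmetrize via $x\mapsto\tfrac12(\tilde P(x)+\tilde P(-x))$ (the paper uses $(1+\eta)^{-1}(\tilde P(x)+\tilde P(-x))$ with $B=\tfrac12$ instead of your $\tfrac12$-averaging with $B=1$, which is equivalent), and finally renormalize. Your parameter choices differ cosmetically --- $x_0=\tfrac{1+\beta}{2}$ versus the paper's $x_0=1$ --- but the mechanism and the resulting bounds are identical; if anything, you are slightly more careful than the paper about the constraint $\delta_0\le r$ (your aside about $\beta>\tfrac12$) and about the formal domain requirement of \Cref{thm:space-efficient-smooth-funcs}.
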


\begin{proof}
Consider the function $f(x)\coloneqq\frac{\ln(1/x)}{2\ln(2/\beta)}$. We apply \Cref{thm:space-efficient-smooth-funcs} to $f$ by choosing the same parameters as in Lemma 11 of~\cite{GL20}, specifically $\epsilon'=\epsilon/2$, $x_0=1$, $r=1-\beta$, $\delta=\beta/2$, and $B=1/2$.\footnote{As indicated in Lemma 11 of~\cite{GL20}, since the Taylor series of $f(x)$ at $x=1$ is $\frac{1}{2\ln(2/\beta)} \sum_{l=1}^{\infty} \frac{(-1)^lx^l}{l}$, we obtain that $B=f\big(\frac{\beta}{2}-1\big)=\frac{1}{2\ln(2/\beta)} \sum_{l=1}^\infty \frac{(1-\beta/2)^l}{l} = -\frac{1}{2\ln(2/\beta)} \sum_{l=1}^{\infty} \frac{(-1)^{l-1}}{l} (\beta/2-1)^l = -\frac{1}{2\ln(2/\beta)}\ln\frac{\beta}{2} = \frac{1}{2}$.} This results in a space-efficient randomized polynomial approximation $\tilde{P}_{d'}\in\bbR[x]$ of degree $d'=2d-1 = O(\delta^{-1} \log(\epsilon^{-1} B)) \leq \tilde{C}_{\ln} \beta^{-1} \log{\epsilon^{-1}}$, where $\tilde{P}_{d'}$ corresponds to some degree-$d$ averaged Chebyshev truncation and $\tilde{C}_{\ln}$ is a universal constant. By appropriately choosing $\eta \leq 1/2$ such that $C'_{\ln} \epsilon = \eta/4$ for a universal constant $C'_{\ln}$, this polynomial approximation $\tilde{P}_{d'}$ satisfies the following inequalities: 
\begin{subequations}
\label{eq:conditions-log}
\begin{align}
\|f(x)-\tilde{P}_{d'}(x)\|_{[\beta,2-\beta]} & \leq C'_{\ln} \epsilon = \tfrac{\eta}{4}\\
\|\tilde{P}_{d'}(x)\|_{[-1,1]} &\leq B+ C'_{\ln} \epsilon \leq \tfrac{1}{2} + C'_{\ln} \epsilon = \tfrac{1}{2}+\tfrac{\eta}{4}\\
\|\tilde{P}_{d'}(x)\|_{[-1,\beta/2]} &\leq C'_{\ln} \epsilon = \tfrac{\eta}{4}.
\end{align}
\end{subequations}
Additionally, using \Cref{thm:space-efficient-smooth-funcs}, the coefficient vector $\bfc^{(\tilde{P})}$ of $\tilde{P}_{d'}$ satisfies 
\[ \norm[\big]{ \bfc^{(\tilde{P})} }_1 \leq O\rbra[\big]{B\sqrt{d'}} \leq \hat{C}'_{\ln}\sqrt{d'},\] where $\hat{C}'_{\ln}$ is a universal constant.
Noticing that $\delta'=\frac{\delta}{2(r+\delta)}=\frac{\beta/2}{2(1-\beta+\beta/2)}=\frac{\beta}{4(1-\beta/2)}=\Theta(\beta)$, our utilization of \Cref{thm:space-efficient-smooth-funcs} yields a bounded-error randomized algorithm that requires space 
\[O(\log(d^3 (\delta')^{-4} \epsilon^{-3/2} B)) = O(\log(d^3 \beta^{-4} \epsilon^{-3/2}))\] and time 
\[\tilde{O}(\max\{(\delta')^{-5} \epsilon^{-2} B^2, d^2 \epsilon^{-1/2} \}) = \tilde{O}(\max\{\beta^{-5}\epsilon^{-2}, d^2 \epsilon^{-1/2}\}).\] 

Furthermore, noting that the real-valued function $f(x)$ is defined only when $x>0$, $\tilde{P}_{d'}(x)$ is not an even polynomial in general. Instead, we consider $P^{\ln}_{d'}(x)\coloneqq(1+\eta)^{-1}(\tilde{P}_{d'}(x)+\tilde{P}_{d'}(-x))$ for all $x\in[-1,1]$. 
Together with \Cref{eq:conditions-log}, we have derived that: 
\begin{subequations}
\label{eq:log-polynomial-error}
\begin{align}
& \|f(x) - P^{\ln}_{d'}(x)\|_{[\beta,1]} \\
 \leq~& \big\|f(x) - \tfrac{1}{1+\eta} \tilde{P}_{d'}(x) \big\|_{[\beta,1]} + \big\|\tfrac{1}{1+\eta}\tilde{P}_{d'}(-x)\big\|_{[\beta,1]}\\
\leq~& \big\|f(x) -  \tilde{P}_{d'}(x)\big\|_{[\beta,1]}  +  \big\|\tilde{P}_{d'}(x) - \tfrac{1}{1+\eta} \tilde{P}_{d'}(x)\big\|_{[\beta,1]}  +  \big\|\tfrac{1}{1+\eta} \tilde{P}_{d'}(-x)\big\|_{[\beta,1]}\\
\leq~& \tfrac{\eta}{4} + \tfrac{\eta}{1+\eta}\cdot \big(\tfrac{1}{2} + \tfrac{\eta}{4}\big) + \tfrac{1}{1+\eta}\cdot \tfrac{\eta}{4}\\
\leq~& \eta.
\end{align}
\end{subequations}

Here, the last line owes to the fact that $\eta > 0$.
Consequently, \Cref{eq:log-polynomial-error} implies that $\|f(x)-P^{\ln}_{d'}(x)\|_{[\beta,1]} \leq 4C'_{\ln} \epsilon \coloneqq C_{\ln} \epsilon$ for another universal constant $C_{\ln}$. 
Notice $P^{\ln}_{d'}$ is an even polynomial with $\deg(P^{\ln}_{d'}) \leq \tilde{C}_{\ln} \beta^{-1} \log \epsilon^{-1}$, \Cref{eq:conditions-log} yields that: 
\begin{align*}
\|P^{\ln}_{d'}(x)\|_{[-1,1]} = \|P^{\ln}_{d'}(x)\|_{[0,1]} \leq \|\tfrac{1}{1+\eta}\tilde{P}_{d'}(x)\|_{[0,1]} + \|\tfrac{1}{1+\eta}\tilde{P}_{d'}(x)\|_{[-1,0]}
\leq \tfrac{1}{1+\eta} \cdot \tfrac{1+\eta}{2} + \tfrac{1}{1+\eta} \cdot \tfrac{\eta}{2} \leq 1. 
\end{align*}
Here, the last inequality is due to $\eta \leq 1/2$. 

It remains to verify how the coefficient vector changes when passing from $\tilde P_{d'}$ to $P^{\ln}_{d'}$. Write $\tilde P_{d'}(x)=\frac{\tilde c_0}{2}+\sum_{k=1}^{d'}\tilde c_kT_k(x)$. 
Since $T_k(-x)=(-1)^kT_k(x)$, the polynomial $P^{\ln}_{d'}(x)=(1+\eta)^{-1} \rbra[\big]{ \tilde P_{d'}(x)+\tilde P_{d'}(-x)}$ has coefficients $c^{\ln}_0=\frac{2}{1+\eta}\tilde c_0$ and $c^{\ln}_k=\frac{1+(-1)^k}{1+\eta}\tilde c_k$ for each $1 \leq k \leq d'$.
Therefore, we obtain: for a universal constant $\hat C_{\ln}$,
\[ \|\bfc^{\ln}\|_1 \leq \frac{2}{1+\eta}\abs*{\tilde c_0}+\sum_{k=1}^{d'}\frac{\abs*{1+(-1)^k}}{1+\eta}\abs*{\tilde c_k} \leq 2\norm{\bfc^{(\tilde P)}}_1 \leq 2 \hat{C}'_{\ln} \sqrt{d'} \coloneqq \hat{C}_{\ln} \sqrt{d'}.  \]
Here, the second inequality uses the fact that $2/(1+\eta) \leq 2$. The same coefficient-computation time and space bounds hold for $\bfc^{\ln}$, since each coefficient of $P^{\ln}_{d'}$ is obtained from the corresponding coefficient of $\tilde P_{d'}$ by multiplication by the explicitly computable factor $(1+(-1)^k)/(1+\eta)$.
\end{proof}

\vspace{1em}
We now provide detailed constructions in the proofs of \Cref{prop:first-low-weight-approx,prop:second-low-weight-approx,prop:third-low-weight-approximation}:

\begin{proof}[Proof of \Cref{prop:first-low-weight-approx}]
    We construct a Fourier series by a linear combination of the power of sines. We first note that $x = \frac{2}{\pi}\cdot\arcsin\!\left( \sin\left( \frac{x\pi}{2} \right) \right)$ for all $x \in [-1,1]$, and plug it into $\hat{f}_1(x)\coloneqq\sum_{k=0}^K a_kx^k$, which deduces that $\|f-\hat{f}_1\|_{\calI_{\delta}} \leq \epsilon/4$ by the assumption. Let $\bfb^{(k)}$ be the coefficients of $\left( \frac{\arcsin{y}}{\pi/2} \right)^k=\sum_{l=0}^{\infty} b_l^{(k)} y^l$ for all $y \in [-1,1]$, then we result in our first approximation. 
    Moreover, we observe that $\frac{\pi}{2}\cdot\bfb^{(1)}$ is exactly the Taylor series of arcsin, whereas we know that $\left( \frac{\arcsin{y}}{\pi/2} \right)^{k+1} = \left( \frac{\arcsin{y}}{\pi/2} \right)^{k} \cdot \left( \sum_{l=0}^{\infty} b_l^{(1)} y^l \right)$ for $k > 1$, which derives \Cref{eq:logQSVT-recursive-formula} by comparing the coefficients. 
    In addition, notice that $\|\bfb^{(k)}\|_1 = \sum_{l=0}^{\infty} b^{(k)}_l 1^l = \left(\frac{\arcsin{1}}{\pi/2}\right)^k=1$, together with straightforward reasoning follows from \Cref{eq:logQSVT-recursive-formula}, we deduce the desired property for $\{b^{(k)}_l\}$. 
\end{proof}

\begin{proof}[Proof of \Cref{prop:second-low-weight-approx}]
    We truncate the summation over $l$ in $f_1(x)$ at $l=L$, and it suffices to bound the truncation error. For all $k \in \bbN$ and $x \in [-1+\delta,1-\delta]$, we obtain the error bound:
    \[\Bigg| \sum_{l=\lfloor L \rfloor}^{\infty} b^{(k)}_l \sin^l\!\left(\tfrac{x\pi}{2}\right)\Bigg| \!
      \leq \! \sum_{l=\lfloor L \rfloor}^{\infty} b^{(k)}_l \Big| \sin^l\!\left( \tfrac{x\pi}{2} \right) \Big| \!
      \leq \! \sum_{l=\lfloor L \rfloor}^{\infty} b^{(k)}_l |1-\delta^2|^l
      \leq (1-\delta^2)^L \sum_{l=\lfloor L \rfloor}^{\infty} b^{(k)}_l \leq (1-\delta^2)^L.\]
    Here, the second inequality owing to $\forall \delta \in [0,1]$, $\sin\!\left( (1-\delta)\frac{\pi}{2} \right) \leq 1-\delta^2$, and the last inequality is due to $\|\bfb^{(k)}\|_1=1$ in \Cref{prop:first-low-weight-approx}. 
    By appropriately choosing $L\coloneqq\delta^{-2} \ln(4\|\bfa\|_1 \epsilon^{-1})$, we obtain that $\|\hat{f}_1-\hat{f}_2\|_{\calI_{\delta}} \leq \sum_{k=0}^K \abs{a_k} (1-\delta^2)^L \leq \|\bfa\|_1\cdot \exp(-\delta^2 L) \leq \epsilon/4$. 
\end{proof}

\begin{proof}[Proof of \Cref{prop:third-low-weight-approximation}]
    We upper-bound $\sin^l(x)$ in $\hat{f}_2(x)$ defined in \Cref{prop:second-low-weight-approx} using a tail bound of binomial coefficients. We obtain that 
    \[ \sin^l(z)=\rbra*{ \frac{e^{-\rmi z}-e^{\rmi z}}{-2\rmi} }^l=\rbra*{ \frac{\rmi}{2} }^l\sum_{m=0}^l (-1)^m \binom{l}{m} \exp(\rmi z(2m-l))\] 
    by a direct calculation, which implies the counterpart for real-valued functions:
    \begin{equation}
        \label{eq:power-of-sine}
        \sin^l(z) = \begin{cases}
            2^{-l} (-1)^{(l+1)/2} \sum\limits_{m'=0}^l (-1)^{m'} \binom{l}{m'} \sin(z(2m'-l)), & \text{if } l \text{ is odd;}\\
            2^{-l} (-1)^{l/2} \sum\limits_{m'=0}^l (-1)^{m'} \binom{l}{m'} \cos(z(2m'-l)), & \text{if } l \text{ is even.}
        \end{cases}
    \end{equation}
    Recall that the Chernoff bound (e.g., Corollary A.1.7 in~\cite{AS16}) which corresponds a tail bound of binomial coefficients, and assume that $l \leq L$, we have derived that:  
    \begin{equation}
        \label{eq:binomial-tail-bounds}
        \sum_{m'=0}^{\lfloor l/2 \rfloor-M} 2^{-l} \binom{l}{m'} 
        = \sum_{m'=\lceil l/2 \rceil+M}^l  2^{-l} \binom{l}{m'}
        \leq e^{-\frac{2M^2}{l}}
        \leq e^{-\frac{2M^2}{L}}
        \leq \rbra*{ \frac{\epsilon}{4\|a\|_1} }^2
        \leq  \frac{\epsilon}{4\|a\|_1}. 
    \end{equation}
    Here, we choose $M=\lceil \delta^{-1} \ln(4\|\bfa\|_1\epsilon^{-1}) \rceil$, and the last inequality is because of the assumption $\epsilon \leq 2\|\bfa\|_1$. As stated in \Cref{prop:first-low-weight-approx}, $b^{(k)}_l=0$ if $k$ and $l$ have different parities. 
    Consequently, we only need to consider all odd (resp., even) $l \leq L$ for odd (resp., even) functions. If the function $f$ is neither even nor odd, we must consider all $l \leq L$. Plugging \Cref{eq:binomial-tail-bounds} into \Cref{eq:power-of-sine}, we can derive that: 
    \begin{itemize}
        \item If $l$ is odd, 
        \begin{equation}
            \label{eq:truncating-power-of-sine-odd}
            \norm[\bigg]{ \sin^l(z) - 2^{-l} (-1)^{(l+1)/2} \sum_{m'=(l+1)/2-M}^{(l+1)/2+M} (-1)^{m'} \binom{l}{m'} \sin(z(2m'-l)) }_{\calI_{\delta}} \leq \frac{\epsilon}{2\norm*{\bfa}_1};
        \end{equation}
        \item If $l$ is even, 
        \begin{equation}
            \label{eq:truncating-power-of-sine-even}
            \norm[\bigg]{ \sin^l(z) - 2^{-l} (-1)^{l/2} \sum_{m'=l/2-M}^{l/2+M} (-1)^{m'} \binom{l}{m'} \cos(z(2m'-l)) }_{\calI_{\delta}} \leq \frac{\epsilon}{2\norm*{\bfa}_1}.
        \end{equation}
    \end{itemize}

    Plugging \Cref{eq:truncating-power-of-sine-odd,eq:truncating-power-of-sine-even} into $\hat{f}_2(x)$, and substituting $z=x\pi/2$, this equation leads to $\hat{f}_3(x)$ as desired. In addition, combining $\sum_{k=0}^K |a_k| \sum_{l=0}^{\lfloor L \rfloor} |b^{(k)}_l| \leq \sum_{k=0}^{K} |a_k| = \|\bfa\|_1$ with \Cref{eq:truncating-power-of-sine-odd,eq:truncating-power-of-sine-even}, we achieve that $\|\hat{f}_2-\hat{f}_3\|_{\calI_{\delta}} \leq \epsilon/2$. 
\end{proof}

Finally, we present the proof of \Cref{fact:smooth-funcs-derivative}:
\begin{proof}[Proof of \Cref{fact:smooth-funcs-derivative}] 
We begin by deriving an upper bound of the second derivative of the integrand $F_k(\theta)$: 
\begin{subequations}
\label{eq:smooth-funcs-derivative}
\begin{align}
    |F''_k(\theta)| &\leq \sum_{m=-M}^M  \frac{\abs{c_m}}{2} \abs*{\frac{\dd^2}{\dtheta^2} H_{k,m}^{(+)}(\theta) - \frac{\dd^2}{\dtheta^2} H_{k,m}^{(-)}(\theta)} \\
    &\leq \frac{\norm*{\bfc}_1}{2} \max_{-\pi \leq \theta \leq 0} \rbra*{ \abs*{\frac{\dd^2}{\dtheta^2} H_{k,m}^{(+)}(\theta)} + \abs*{\frac{\dd^2}{\dtheta^2} H_{k,m}^{(-)}(\theta)} }. 
\end{align}
\end{subequations}

By a straightforward calculation, we have the second derivatives of $H_{k,m}^{\pm}(\theta)$ if $f$ is even: 
\begin{align*}
    \tfrac{\dd^2}{\dtheta^2} H_{k,m}^{(\pm)}(\theta)\!=\!
    &-k^2 \cos(k \theta) \cos\!\big(\tfrac{\pi  m (\cos{\theta}-x_0)}{\delta +r}\big) \erf\!\big(\kappa  \big(\cos{\theta}-x_0 \mp r \mp \tfrac{\delta}{4}\big)\big)\\
    &-\tfrac{\pi ^2 m^2}{(\delta +r)^2} \sin^2(\theta) \cos(k \theta) \cos\!\big(\tfrac{\pi  m (\cos{\theta}-x_0)}{\delta +r}\big) \erf\!\big(\kappa  \big(\cos{\theta}-x_0 \mp r \mp \tfrac{\delta}{4}\big)\big)\\
    &+\tfrac{\pi  m}{\delta +r} \cos{\theta} \cos(k \theta) \sin\!\big(\tfrac{\pi  m (\cos{\theta}-x_0)}{\delta +r}\big) \erf\!\big(\kappa  \big(\cos{\theta}-x_0 \mp r \mp \tfrac{\delta}{4}\big)\big)\\
    &-\tfrac{2 \pi  k m}{\delta +r} \sin(\theta) \sin(k \theta) \sin\!\big(\tfrac{\pi  m (\cos{\theta}-x_0)}{\delta +r}\big) \erf\!\big(\kappa  \big(\cos{\theta}-x_0 \mp r \mp \tfrac{\delta}{4}\big)\big)\\
    &-\tfrac{2 \kappa}{\sqrt{\pi }}  \cos{\theta} \cos(k \theta) \cos\!\big(\tfrac{\pi  m (\cos{\theta}-x_0)}{\delta +r}\big) e^{-\kappa ^2 \big(\cos{\theta}-x_0 \mp r \mp \tfrac{\delta}{4}\big)^2}\\
    &-\tfrac{4 \sqrt{\pi } \kappa m}{\delta +r} \sin^2(\theta) \cos(k \theta) \sin\!\big(\tfrac{\pi  m (\cos{\theta}-x_0)}{\delta +r}\big) e^{-\kappa ^2 \big(\cos{\theta}-x_0 \mp r \mp \frac{\delta}{4}\big)^2}\\
    &+\tfrac{4 \kappa  k}{\sqrt{\pi }} \sin(\theta) \sin(k \theta) \cos\!\big(\tfrac{\pi  m (\cos{\theta}-x_0)}{\delta +r}\big) e^{-\kappa ^2 \big(\cos{\theta}-x_0 \mp r \mp \tfrac{\delta}{4}\big)^2}\\
    &-\tfrac{4 \kappa^3}{\sqrt{\pi }} \sin^2(\theta) \cos(k \theta) \cos\!\big(\tfrac{\pi  m (\cos{\theta}-x_0)}{\delta +r}\big) \big(\cos{\theta}-x_0 \mp r \mp \frac{\delta}{4}\big) e^{-\kappa ^2 \big(\cos{\theta}-x_0 \mp r \mp \tfrac{\delta}{4}\big)^2}.
\end{align*}

Noting that all functions appear in $\frac{\dd^2}{\dtheta^2} H_{k,m}^{(\pm)}(\theta)$, viz.~$\sin{x}$, $\cos{x}$, $\exp(-x^2)$, and $\erf(x)$,  are at most $1$, as well as $|x_0\pm r \pm \delta/4| \leq 7/2$, then we obtain that 
\begin{subequations}
\label{eq:summand-derivative}
\begin{align}
&\big|\tfrac{\dd^2}{\dtheta^2} H_{k,m}^{(\pm)}(\theta)\big| \\
\leq~& k^2 + \tfrac{2\kappa}{\sqrt{\pi}} + \tfrac{4\kappa k}{\sqrt{\pi}} + \tfrac{18\kappa^3}{\sqrt{\pi}} + m \cdot \big( \tfrac{\pi}{\delta+r} + \tfrac{2\pi k}{\delta+r} + \tfrac{4\sqrt{\pi}\kappa}{\delta+r} \big) + m^2 \cdot \tfrac{\pi^2}{(\delta+r)^2}\\
\leq~& (d')^2 + O(d) + O(d^2) + O(d^3) + \tfrac{M}{\delta+r} \cdot (O(1) \!+\! O(d) \!+\! O(d)) + M^2 \cdot \tfrac{O(1)}{(\delta+r)^2}\\
=~& O(d^3).
\end{align}
\end{subequations}

Here, the second line according to $k \leq d'=2d-1$ and $\kappa \leq O(d)$, also the last line is due to facts that $M \leq O(rd)$ and $1/2 \leq r/(\delta+r) \leq 1$ if $0 < \delta \leq r$ and $0 < r \leq 2$. 
Additionally, a similar argument shows that the upper bound in \Cref{eq:summand-derivative} applies to odd functions and functions that are neither even nor odd as well. This is because a direct computation yields the second derivatives of $H^{(\pm)}_{k,m}(\theta)$ when $f$ is odd: 
\begin{align*}
    \tfrac{\dd^2}{\dtheta^2} H_{k,m}^{(\pm)}(\theta)\!=\!
    &-k^2 \cos (k \theta) \sin\!\big(\tfrac{\pi  \big(m+\tfrac{1}{2}\big) (\cos \theta-x_0)}{\delta +r}\big) \erf\!\big(\kappa  \big(\cos \theta-x_0 \mp r \mp \tfrac{\delta}{4}\big)\big)\\
    &-\tfrac{\pi  \big(m+\tfrac{1}{2}\big)}{\delta +r} \cos \theta \cos (k \theta) \cos\!\big(\tfrac{\pi  \big(m+\tfrac{1}{2}\big) (\cos \theta-x_0)}{\delta +r}\big) \erf\!\big(\kappa  \big(\cos \theta-x_0 \mp r \mp \tfrac{\delta}{4}\big)\big)\\
    &-\tfrac{\pi ^2 \big(m+\tfrac{1}{2}\big)^2}{(\delta +r)^2} \sin ^2\theta \cos (k \theta) \sin\!\big(\tfrac{\pi  \big(m+\tfrac{1}{2}\big) (\cos \theta-x_0)}{\delta +r}\big) \erf\!\big(\kappa  \big(\cos \theta-x_0 \mp r \mp \tfrac{\delta}{4}\big)\big)\\
    &+\tfrac{2 \pi  k \big(m+\tfrac{1}{2}\big)}{\delta +r} \sin \theta \sin (k \theta) \cos\!\big(\tfrac{\pi  \big(m+\tfrac{1}{2}\big) (\cos \theta-x_0)}{\delta +r}\big) \erf\!\big(\kappa  \big(\cos \theta-x_0 \mp r \mp \tfrac{\delta}{4}\big)\big)\\
    &+\tfrac{4 \sqrt{\pi } \kappa  \big(m+\tfrac{1}{2}\big)}{\delta +r} \sin ^2\theta \cos (k \theta) \cos\!\big(\tfrac{\pi  \big(m+\tfrac{1}{2}\big) (\cos \theta-x_0)}{\delta +r}\big) e^{-\kappa ^2 \big(\cos \theta-x_0 \mp r \mp \tfrac{\delta}{4}\big)^2}\\
    &-\tfrac{2 \kappa}{\sqrt{\pi }}  \cos \theta \cos (k \theta) \sin\!\big(\tfrac{\pi  \big(m+\tfrac{1}{2}\big) (\cos \theta-x_0)}{\delta +r}\big) e^{-\kappa ^2 \big(\cos \theta-x_0 \mp r \mp \tfrac{\delta}{4}\big)^2}\\
    &+\tfrac{4 \kappa  k}{\sqrt{\pi }} \sin \theta \sin (k \theta) \sin\!\big(\tfrac{\pi  \big(m+\tfrac{1}{2}\big) (\cos \theta-x_0)}{\delta +r}\big) e^{-\kappa ^2 \big(\cos \theta-x_0 \mp r \mp \tfrac{\delta}{4}\big)^2}\\
    &-\tfrac{4 \kappa ^3}{\sqrt{\pi }} \sin ^2\theta \cos (k \theta) \big(\!\cos \theta\!-\!x_0 \!\mp\! r \!\mp\! \tfrac{\delta}{4}\big) \sin\!\big(\tfrac{\pi  \big(m+\tfrac{1}{2}\big) (\cos \theta-x_0)}{\delta +r}\big) e^{-\kappa ^2 \big(\cos \theta-x_0 \mp r \mp \tfrac{\delta}{4}\big)^2}.
\end{align*}

Substituting \Cref{eq:summand-derivative} into \Cref{eq:smooth-funcs-derivative}, and noticing that the coefficient vector $\|\bfc^{(\even)} + \bfc^{(\odd)}\|_1 \leq B$ regardless of the parity of $f$, we conclude that $|F''_k(\theta)| \leq O(Bd^3)$.  
\end{proof}

\subsection{Applying averaged Chebyshev truncation to bitstring indexed encodings}
\label{subsec:poly-applied-to-unitary-encodings}

With space-efficient bounded polynomial approximations of piecewise-smooth functions, it suffices to implement averaged Chebyshev truncation on bitstring indexed encodings, as specified in \Cref{thm:LCU-averaged-chebyshev-truncation}. The proof combines \Cref{lemma:Chebyshev-poly-implementation,lemma:space-efficient-LCU,lemma:renormalizing-encodings}.
\begin{theorem}[Averaged Chebyshev truncation applied to bitstring indexed encodings]
    \label{thm:LCU-averaged-chebyshev-truncation}
    Let $A$ be an Hermitian matrix acting on $s$ qubits, and let $U$ be a $(1,a,\epsilon_1)$-bitstring indexed encoding of $A$ that acts on $s+a$ qubits. For any degree-$d$ averaged Chebyshev truncation $P_{d'}(x)=\hat{c}_0/2+\sum_{k=1}^{d'} \hat{c}_k T_k(x)$ where $d'=2d-1 \leq 2^{O(s(n))}$ and $T_k$ is the $k$-th Chebyshev polynomial \emph{(}of the first kind\emph{)}, equipped with an evaluation oracle $\Eval$ that returns $\tilde{c}_k$ with precision $\varepsilon\coloneqq O(\epsilon_2^2/d')$, assume further that either $U$ is a block-encoding or $P_{d'}$ is either even or odd. Then we have the following bitstring indexed encoding of $P_{d'}(A)$ depending on whether $P_{d'}(A)$ is a partial isometry \emph{(}up to a normalization factor\emph{)}:\footnote{This condition differs from the one that $A$ is a partial isometry. Specifically, $P_{d'}(A)$ is a partial isometry (up to a normalization factor) if $A$ is a partial isometry, whereas $\sign^{\SV}(A)$ is a partial isometry for any $A$.}
    \begin{itemize}[itemsep=0.33em,topsep=0.33em,parsep=0.33em]
        \item \textbf{Partial isometry $P_{d'}(A)$\emph{:}} We obtain a $(1,a',(144d'\sqrt{\epsilon_1}+36\epsilon_2)\|\hat{\bfc}\|_1)$-bitstring indexed encoding $V_{\rm normed}$ of $P_{d'}(A)$ that acts on $s+a'$ qubits where $a'\coloneqq a+\lceil \log{d'} \rceil+3$. 
        \item \textbf{General $P_{d'}(A)$\emph{:}} We obtain a $(\|\hat{\bfc}\|_1,\hat{a},(4d'\sqrt{\epsilon_1}+\epsilon_2)\|\hat{\bfc}\|_1)$-bitstring indexed encoding $V_{\rm unnorm}$ of $P_{d'}(A)$ that acts on $s+\hat{a}$ qubits where $\hat{a}\coloneqq a+\lceil \log{d'} \rceil+1$. 
    \end{itemize}

    \noindent Let $V$ be the bitstring indexed encoding of $P_{d'}(A)$. The implementation of $V$ requires $O(d^2\eta_{V})$ uses of $U$, $U^{\dagger}$, $\textsc{C}_{\Pi}\textsc{NOT}$, $\textsc{C}_{\tilde{\Pi}}\textsc{NOT}$, and multi-controlled single-qubit gates.\footnote{As indicated in~Figure 3(c) of~\cite{GSLW19} (see also Lemma 19 in~\cite{GSLW18}), we replace the single-qubit gates used in \Cref{lemma:Chebyshev-poly-implementation} with multi-controlled (or ``multiply controlled'') single-qubit gates.\label{footnote:multi-controlled-gates}} The description of the resulting quantum circuit of $V$ can be computed in deterministic time $\tilde{O}(d^2 \eta_{V} \log(d/\epsilon_2))$, space $O(\max\{s(n),\log(d/\epsilon_2^2)\})$, and $O(d^2 \eta_V)$ oracle calls to $\Eval$ with precision $\varepsilon$. Here, $\eta_V = \|\hat{\bfc}\|_1$ if $V=V_{\rm normed}$ whereas $\eta_V = 1$ if $V=V_{\rm unnorm}$. 

    \noindent Furthermore, our construction straightforwardly extends to any linear \emph{(}possibly non-Hermitian\emph{)} operator $A$ by simply replacing $P_{d'}(A)$ with $P^{\SV}_{d'}(A)$ defined in \Cref{def:matrix-SV-function}. 
\end{theorem}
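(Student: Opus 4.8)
The plan is to realize the encoding of $P_{d'}(A)$ as a linear combination of the encodings of the individual Chebyshev polynomials $T_k^{\SV}(A)$ and then, when applicable, renormalize. First I would invoke \Cref{lemma:Chebyshev-poly-implementation}, which gives a space-efficient quantum singular value transformation for each Chebyshev polynomial: from the $(1,a,\epsilon_1)$-bitstring indexed encoding $U$ of $A$ one obtains a bitstring indexed encoding $U_k$ of $T_k^{\SV}(A)$ on essentially the same $s+a$ qubits, using $k$ applications of $U$, $U^{\dagger}$, $\textsc{C}_{\Pi}\textsc{NOT}$, $\textsc{C}_{\tilde{\Pi}}\textsc{NOT}$, and single-qubit rotations whose angles are explicitly known ($\phi_1=(1-k)\pi/2$, $\phi_j=\pi/2$). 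Since $T_k$ has parity $k\bmod 2$, each $T_k^{\SV}$ is a genuine singular value transformation in the sense of \Cref{def:matrix-SV-function}, so this step works verbatim for an arbitrary (possibly non-Hermitian) linear operator $A$ and reduces to $T_k(A)$ when $A$ is Hermitian; this already yields the final clause of the theorem.

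Second, I would combine the $U_k$ by the LCU technique (\Cref{lemma:space-efficient-LCU}, following~\cite{BCC+15}): querying $\Eval$ to prepare an ancillary state with amplitudes proportional to $\sqrt{|\tilde c_k|}$ over a $\lceil\log d'\rceil$-qubit index register, absorbing the signs of $\tilde c_k$ into phases on the $\textsc{SELECT}=\sum_k\ketbra{k}{k}\otimes U_k$ step, and then uncomputing the index state. Because $\sum_{k\le d'}k=O((d')^2)=O(d^2)$, this uses $O(d^2)$ calls to $U$, $U^{\dagger}$ and to $\Eval$, and produces $V_{\rm unnorm}$, a $(\|\tilde{\bfc}\|_1,\,a+\lceil\log d'\rceil+1,\,\cdot\,)$-bitstring indexed encoding of $\sum_k\tilde c_k T_k^{\SV}(A)$, which approximates $P_{d'}^{\SV}(A)$. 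The index register (with space-efficient membership checking) keeps the encoding bitstring indexed, and the total qubit count is $s+a+O(\log d')=O(s(n))$ since $a\le s$ and $d'\le 2^{O(s(n))}$. Whenever $P_{d'}^{\SV}(A)$ is an isometry up to a normalization factor — in particular when $A$ is an isometry, and always for $P_{d'}^{\sign}$ since $\sign^{\SV}(A)$ is an isometry — I would then apply robust oblivious amplitude amplification (\Cref{lemma:renormalizing-encodings}) to boost the subnormalization of $V_{\rm unnorm}$ from $\|\hat{\bfc}\|_1$ back to $1$, at the cost of an $O(\|\hat{\bfc}\|_1)$ multiplicative factor in queries, one extra ancilla (so $a'=a+\lceil\log d'\rceil+3$), and the constants $36,144$ in the error; this yields $V_{\rm normed}$.

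The hard part will be the error bookkeeping that produces the $\sqrt{\epsilon_1}$, $d'$, and $\|\hat{\bfc}\|_1$ dependence. There are two sources of error. First, since $U$ only $\epsilon_1$-approximately encodes $A$, the robustness of the quantum singular value transformation~\cite{GSLW19} implies that the transformation produced by applying \Cref{lemma:Chebyshev-poly-implementation} to $U$ differs from $T_k^{\SV}(A)$ in operator norm by at most $4k\sqrt{\epsilon_1}\le 4d'\sqrt{\epsilon_1}$; summing this against $|\hat c_k|$ and tracking the subnormalization $\|\hat{\bfc}\|_1$ contributes the $d'\sqrt{\epsilon_1}\|\hat{\bfc}\|_1^2$ term. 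Second, since $\Eval$ returns $\tilde c_k$ only to precision $\varepsilon=O(\epsilon_2^2/d')$, the LCU state-preparation error is $\ell_2$-bounded by $O(\sqrt{d'\varepsilon})=O(\epsilon_2)$ — the square root being the price of encoding coefficients in amplitudes — which propagates to $O(\epsilon_2\|\hat{\bfc}\|_1)$ in the final operator error after rescaling; amplification then multiplies both terms by the recorded constants. Combining these via the triangle inequality, together with the $\|\hat{\bfc}\|_1\le O(\log d)$ bound (or $O(1)$ for twice continuously differentiable $f$) from \Cref{lemma:coefficient-vector-norm-bound}, gives the stated parameters.

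Finally I would check the resource accounting. The classical description of $V$ is producible by a deterministic procedure that indexes the $O(d^2)$ gates and the $\lceil\log d'\rceil$-qubit register, queries $\Eval$ $O(d^2\eta_V)$ times, and stores coefficients to $\varepsilon$ precision, hence runs in time $\tilde O(d^2\eta_V\log(d/\epsilon_2))$ and space $O(\max\{s(n),\log(d/\epsilon_2^2)\})$, where $\eta_V=\|\hat{\bfc}\|_1$ for $V=V_{\rm normed}$ and $\eta_V=1$ for $V=V_{\rm unnorm}$; and parity preservation (even/odd Chebyshev polynomials give even/odd singular value transformations per \Cref{def:matrix-SV-function}) makes the extension to non-Hermitian $A$ immediate, with $P_{d'}(A)$ replaced throughout by $P^{\SV}_{d'}(A)$.
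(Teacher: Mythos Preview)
Your proposal is correct and follows essentially the same approach as the paper: apply \Cref{lemma:Chebyshev-poly-implementation} to obtain $(1,a+1,4k\sqrt{\epsilon_1})$-encodings of each $T_k^{\SV}(A)$, combine them via the LCU technique of \Cref{lemma:space-efficient-LCU} to get $V_{\rm unnorm}$, and in the isometry case renormalize via \Cref{lemma:renormalizing-encodings} to get $V_{\rm normed}$, with the same error decomposition and resource accounting. One tiny arithmetic slip: the renormalization step adds two ancillas (so $a'=\hat a+2$), not one, but this does not affect the argument.
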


\begin{remark}[QSVT implementations of averaged Chebyshev truncation preserve the parity]
    \label{remark:QSVT-parity-preserving}
    As shown in \Cref{prop:angles-for-Chebyshev-polys}, we can implement the quantum singular value transformation $T_k^{\SV}(A)$ \textit{exactly} for any linear operator $A$ that admits a bitstring indexed encoding, because the rotation angles corresponding to the $k$-th Chebyshev polynomials are either $\pi/2$ or $(1-k)\pi/2$, indicating that $T_k(0)=0$ for any odd $k$.
    We then implement the QSVT corresponding to the averaged Chebyshev truncation polynomial $P_{d'}(x) = \sum_{l=0}^{({d'}-1)/2} \hat{c}_{2l+1} T_{2l+1}(x)$, as described in \Cref{corr:sign-polynomial-implementation}, although the actual implementation results in a slightly different polynomial, $\tilde{P}_{d'}(x) = \sum_{l=0}^{({d'}-1)/2} \tilde{c}_{2l+1} T_{2l+1}(x)$. However, we still have $\tilde{P}_{d'}(0)=0=P_{d'}(0)$, indicating that the implementations in \Cref{thm:LCU-averaged-chebyshev-truncation} preserve the parity.
\end{remark}

We first demonstrate an approach, based on~\cite[Lemma 3.12]{MY23}, that constructs Chebyshev polynomials of bitstring indexed encodings in a space-efficient manner. 
\begin{lemma}[Chebyshev polynomials applied to bitstring indexed encodings]
    \label{lemma:Chebyshev-poly-implementation}
    Let $A$ be a linear operator acting on $s$ qubits, and let $U$ be a $(1,a,\epsilon)$-bitstring indexed encoding of $A$ that acts on $s+a$ qubits. Then, for the $k$-th Chebyshev polynomial \emph{(}of the first kind\emph{)} $T_k(x)$ of degree $k\leq 2^{O(s)}$, there exists a new $(1,a+1,4k\sqrt{\epsilon})$-bitstring indexed encoding $V$ of $T_k^{\SV}(A)$ that acts on $s+a+1$ qubits. This implementation requires $k$ uses of $U$, $U^{\dagger}$, $\textsc{C}_{\Pi}\textsc{NOT}$, $\textsc{C}_{\tilde{\Pi}}\textsc{NOT}$, and $k$ single-qubit gates. 
    Moreover, we can compute the description of the resulting quantum circuit in deterministic time $k$ and space $O(s)$. 

    \noindent Furthermore, consider $A'\coloneqq \tilde{\Pi} U \Pi$, where $\tilde{\Pi}$ and $\Pi$ are the corresponding orthogonal projections of the bitstring indexed encoding $U$. If $A$ and $A'$ satisfy the conditions $\|A-A'\| + \big\|\frac{A+A'}{2}\big\|^2 \leq 1$ and $\big\|\frac{A+A'}{2}\big\|^2 \leq \zeta$, then $V$ is a $\big(1,a+1,\frac{\sqrt{2}}{\sqrt{1-\zeta}}k\epsilon\big)$-bitstring indexed encoding of $T^{\SV}_k(A)$. 
\end{lemma}

\begin{proof}
As specified in~\Cref{prop:angles-for-Chebyshev-polys}, we first notice that we can derive the sequence of rotation angles corresponding to Chebyshev polynomials $T_k(x)$ by directly factorizing them. 

\begin{proposition}[Chebyshev polynomials in quantum signal processing, adapted from~Lemma 6 in~\cite{GSLW19}]
    \label{prop:angles-for-Chebyshev-polys}
    Let $T_k \in \bbR[x]$ be the $k$-th Chebyshev polynomial (of the first kind). Consider the corresponding sequence of rotation angles  $\Phi\in\bbR^k$ such that $\phi_1\coloneqq(1-k)\pi/2$, and $\phi_j\coloneqq\pi/2$ for all $j\in[k] \setminus \{1\}$, then we know that $\prod_{j=1}^k \left[\begin{psmallmatrix} \exp(i\phi_j) & 0\\ 0 & \exp(-i\phi_j)\\ \end{psmallmatrix} \begin{psmallmatrix} x & \sqrt{1-x^2}\\ \sqrt{1-x^2} & -x\\ \end{psmallmatrix}\right] = \begin{psmallmatrix} T_k & \cdot\\ \cdot & \cdot \end{psmallmatrix}$.
\end{proposition}

Then, we implement the quantum singular value transformation $T_k^{\SV}(A)$, utilizing an alternating phase modulation (\Cref{prop:phase-modulation}) with the aforementioned sequence of rotation angles, denoted by $V$. 

\begin{proposition}[QSVT by alternating phase modulation, adapted from Theorem 10 and Figure 3 in~\cite{GSLW19}]
    \label{prop:phase-modulation}
    Suppose $P\in\bbC[x]$ is a polynomial, and let $\Phi\in\bbR^n$ be the corresponding sequence of rotation angles. We can construct $P^{\rm (SV)}(\tilde{\Pi} U \Pi) = \begin{cases} 
    \tilde{\Pi} U_{\Phi} \Pi, & \text{if } n \text{ is odd}\\
    \Pi U_{\Phi} \Pi, & \text{if } n \text{ is even}
    \end{cases}$ with a single ancillary qubit. 
    Moreover, this implementation in \emph{\cite[Figure 3]{GSLW19}} makes $k$ uses of $U$, $U^{\dagger}$, $\textsc{C}_{\Pi}\textsc{NOT}$, $\textsc{C}_{\tilde{\Pi}}\textsc{NOT}$, and single-qubit gates. 
\end{proposition}

Owing to the robustness of QSVT (Lemma 22 in~\cite{GSLW18}, full version of~\cite{GSLW19}), we have that 
$\big\| T_k^{\rm (SV)}(U) - T_k^{\rm (SV)}(U')\big\| \leq 4k\sqrt{\|A-A'\|} = 4k\sqrt{\epsilon}$, where $U'$ is a $(1,a,0)$-bitstring indexed encoding of $A$. 
Moreover, with a tighter bound for $A$ and $A'$, namely $\|A-A'\| + \big\|\frac{A+A'}{2}\big\|^2 \leq 1$, we can deduce that $\|T_k^{\SV}(U) - T_k^{\SV}(U')\| \leq k \frac{\sqrt{2}}{\sqrt{1-\|(A+A')/2\|^2}} \|A-A'\| \leq \frac{\sqrt{2}}{\sqrt{1-\zeta}} k\epsilon$ following~\cite[Lemma 23]{GSLW18}, indicating an improved dependence of $\epsilon$.
Finally, we can compute the description of the resulting quantum circuits in $O(\log{k}) = O(s(n))$ space and $O(k)$ times because of the implementation specified in \Cref{prop:phase-modulation}. 
\end{proof}

\vspace{1em}
We then proceed by presenting a linear combination of bitstring indexed encodings, which adapts the LCU technique proposed by Berry, Childs, Cleve, Kothari, and Somma in~\cite{BCC+15}, and incorporates a space-efficient state preparation operator. 
For a non-zero real vector $\bfy=(y_0,\ldots,y_{m-1})$, we say that
$P_{\abs{\bfy}}$ is an $\epsilon$-state preparation operator for $\bfy$ if $P_{\abs{\bfy}}\ket{\bar{0}} \coloneqq \sum_{i=0}^{m-1}\sqrt{\hat{y}'_i}\ket{i}$ for some $\hat{\bfy}'$ satisfying $ \|\abs{\bfy}/\|\bfy\|_1-\hat{\bfy}'\|_1 \leq \epsilon$. 

\begin{lemma}[Linear combinations of bitstring indexed encodings, adapted from Lemma 29 in~\cite{GSLW19}] 
    \label{lemma:space-efficient-LCU}
    Given a matrix $A=\sum_{i=0}^{m-1} y_i A_i$ such that each linear operator $A_i$~$(0 \leq i < m)$ acts on $s$ qubits with the corresponding $(1,a,\epsilon_1)$-bitstring indexed encoding $U_i$ acting on $s+a$ qubits associated with the same projections $\tilde{\Pi}$ and $\Pi$. Also each $y_i$~$(0 \leq i < m)$ can be expressed in $O(s(n))$ bits with an evaluation oracle $\Eval$ that returns $\hat{y}_i$ with precision $\varepsilon\coloneqq O(\epsilon_2^2/m)$. Then utilizing an $\epsilon_2$-state preparation operator $P_{\abs{\bfy}}$ for the nonnegative vector $(\abs*{y_0},\ldots,\abs*{y_{m-1}})$ acting on $O(\log m)$ qubits, the diagonal unitary $D_{\bfy}=\sum_{i=0}^{m-1}\sigma_i \ket{i}\bra{i}+\rbra*{ I-\sum_{i=0}^{m-1}\ket{i}\bra{i}}$, 
    where $\sigma_i\in\{\pm 1\}$ satisfies $y_i=\sigma_i\abs*{y_i}$,
    and a $(s+a+\lceil\log{m}\rceil)$-qubit unitary $W=\sum_{i=0}^{m-1} \ket{i}\bra{i} \otimes U_i + \big(I-\sum_{i=0}^{m-1} \ket{i}\bra{i}\big)\otimes I$, we can implement a $(\|\bfy\|_1,a+\lceil\log{m}\rceil, \epsilon_1 \|\bfy\|_1 + \epsilon_2  \|\bfy\|_1)$-bitstring indexed encoding of $A$ acting on $s + a + \lceil\log m\rceil$ qubits with a single use of $W$, $P_{\abs{\bfy}}$, $P_{\abs{\bfy}}^{\dagger}$, and $D_\bfy$. 
    In addition, the (classical) pre-processing can be implemented in deterministic time $\tilde{O}(m^2 \log(m/\epsilon_2))$ and space $O(\log(m/\epsilon_2^2))$, as well as $m^2$ oracle calls to $\Eval$ with precision $\varepsilon$. 
\end{lemma}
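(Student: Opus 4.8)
The plan is to transplant the linear‑combination‑of‑block‑encodings lemma, \cite[Lemma 29]{GSLW19}, to the setting of bitstring indexed encodings (\Cref{def:bitstring-indexed-encoding}), and to add the one ingredient that is absent there: a deterministic, logarithmic‑space synthesis of the state‑preparation operator $P_{\bfy}$ from the evaluation oracle $\Eval$. The quantum circuit for $V$ is immediate once $P_{\bfy}$ is in hand, so the work splits into (i) a one‑line matrix identity plus a norm estimate certifying that the projected unitary we build is again a \emph{bitstring indexed} encoding of $A$ with the stated normalization and error, and (ii) the space‑efficient construction and error analysis of $P_{\bfy}$.

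For (i): writing $b\coloneqq\lceil\log m\rceil$ and letting $\Pi,\tilde\Pi$ be the common projections of the $U_i$ (which span subsets $S,\tilde S\subseteq\{\ket0,\ket1\}^{\otimes(s+a)}$ with space‑efficient membership — in every application of the lemma the $U_i$ genuinely share $\Pi,\tilde\Pi$, and the general case reduces to this by padding), I would set $\Pi'\coloneqq\ket{\bar0}\bra{\bar0}^{\otimes b}\otimes\Pi$ and $\tilde\Pi'\coloneqq\ket{\bar0}\bra{\bar0}^{\otimes b}\otimes\tilde\Pi$; these span $\{0^b\}\times S$ and $\{0^b\}\times\tilde S$ and hence inherit space‑efficient set membership, so whatever unitary sits between them is again bitstring indexed. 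Taking $V\coloneqq(P_{\bfy}^{\dagger}\otimes I_{s+a})\,W\,(P_{\bfy}\otimes I_{s+a})$ and using that $P_{\bfy}\ket{\bar0}$ is supported on indices $<m$ (so the $\big(I-\sum_{i<m}\ket i\bra i\big)\otimes I$ part of $W$ drops out), a short expansion gives $\tilde\Pi'\,V\,\Pi'=\sum_{i=0}^{m-1}\hat y_i\,\tilde\Pi U_i\Pi$. Comparing $\|\bfy\|_1\,\tilde\Pi'V\Pi'$ with $A=\sum_i y_iA_i$ and plugging in the per‑term guarantee on the $U_i$ together with $\big\||\bfy|/\|\bfy\|_1-\hat\bfy\big\|_1\le\epsilon_2$, $\sum_i\hat y_i=1$, and the norm bound on the $A_i$, the triangle inequality yields $\big\|A-\|\bfy\|_1\tilde\Pi'V\Pi'\big\|\le\epsilon_1\|\bfy\|_1^2+\epsilon_2\|\bfy\|_1$; this is the robustness estimate of \cite[Lemma 29]{GSLW19} rewritten for the present normalization, and I do not expect it to cause trouble. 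Since $V$ uses $W,P_{\bfy},P_{\bfy}^{\dagger}$ once each and acts on $b+s+a$ qubits, the ancilla count is $a+\lceil\log m\rceil$, as claimed.

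For (ii): I would realize $P_{\bfy}$ as the standard depth‑$b$ cascade of uniformly controlled $R_y$ rotations on $b$ qubits (the binary‑tree state‑preparation circuit, as in \cite{GSLW19}), where the angle at a tree node is fixed by the ratio of the total $\hat y$‑mass of its left child subtree to that of the node. The crux is that we cannot tabulate the $m$ numbers $\hat y_i$ — that already costs $\Omega(m)$ space — so whenever an angle is needed we \emph{recompute} the two relevant subtree masses from scratch, each as a running sum of $\Eval$ outputs over the $O(m)$ leaves involved, holding only the running sum (to $O(\log(1/\varepsilon))$ bits), a leaf index, and the node address. With $\varepsilon\coloneqq\Theta(\epsilon_2^2/m)$ this runs in deterministic space $O(\log(m/\epsilon_2^2))$; there are $O(m)$ nodes, each firing $O(m)$ queries, for $O(m^2)$ calls to $\Eval$ and, charging $\tilde{O}(\log(1/\varepsilon))$ per arithmetic step, total time $\tilde{O}(m^2\log(m/\epsilon_2))$. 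It then remains to propagate the additive error $\varepsilon$ of $\Eval$ through the normalization $\|\bfy\|_1=\sum_i|y_i|$ and through the square roots taken along the cascade, so as to certify that the exact output distribution $\hat\bfy$ of the circuit we produced satisfies $\|\bfy/\|\bfy\|_1-\hat\bfy\|_1\le\epsilon_2$, i.e.\ that $P_{\bfy}$ is an $\epsilon_2$‑state‑preparation operator; this is a routine calculation of the same flavour as in \cite[Lemma 29]{GSLW19}.

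The main obstacle is exactly this last step — making the classical pre‑processing space‑efficient. In \cite{GSLW19} one may freely store the amplitude vector and its partial sums, whereas here every subtree mass must be regenerated from $\Eval$ on demand, which is what forces the quadratic blow‑up to $O(m^2)$ queries and $\tilde{O}(m^2)$ time and what pins the oracle precision at $\varepsilon=\Theta(\epsilon_2^2/m)$: one has to verify that the rounding accumulated across $m$ amplitudes and the normalization still leaves the synthesized state $\epsilon_2$‑close in $\ell_1$, and that all of this bookkeeping fits in $O(\log(m/\epsilon_2^2))$ bits.
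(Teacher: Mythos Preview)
Your proposal is correct and matches the paper's approach essentially line for line: the paper builds the same $V=(P_{\bfy}^{\dagger}\otimes I)\,W\,(P_{\bfy}\otimes I)$, derives the identical error bound via the triangle inequality, and packages your part~(ii) as a separate proposition (citing Zalka/Kaye--Mosca/Grover--Rudolph and the analysis of McClean--Parkhill) that proves exactly the binary-tree recompute-on-demand construction you describe, with the same $m^2$ query count, $\tilde{O}(m^2\log(m/\epsilon_2))$ time, $O(\log(m/\epsilon_2^2))$ space, and precision $\varepsilon=O(\epsilon_2^2/m)$. Your remark that in practice the $U_i$ share common projections is also apt --- the paper's notation wobbles between $\Pi$ and $\Pi_i$ but the argument implicitly uses a common pair.
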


\begin{proof}
    For the $\epsilon_2$-state preparation operator $P_{\abs{\bfy}}$ such that 
    \[P_{\abs{\bfy}}\ket{\bar{0}}=\sum_{i=0}^{m-1} \sqrt{\hat{y}'_i} \ket{i},\quad\text{where } y'_i\coloneqq \frac{\abs*{y_i}}{\norm{\bfy}_1} \quad\text{and}\quad \sum_{i=0}^{m-1} \abs*{y'_i-\hat y'_i}\leq \epsilon_2,\] 
    we utilize a scheme introduced by Zalka~\cite{Zalka98} (also independently rediscovered in~\cite{GR02} and~\cite{KM01}). 
    We make an additional analysis of the required classical computational complexity, and the proof of \Cref{prop:real-state-preparation} follows immediately afterward: 
    \begin{proposition}[Space-efficient state preparation, adapted from~{\cite{Zalka98,KM01,GR02}}]
        \label{prop:real-state-preparation}
        Given an $l$-qubit quantum state $\ket{\psi}\coloneqq\sum_{i=0}^{m-1} \sqrt{p_i}\ket{i}$, where $l=\ceil*{\log{m}}$, $p_i\geq 0$, $\sum_{i=0}^{m-1}p_i=1$, and $p_i$ are associated with an evaluation oracle $\Eval(i,\varepsilon)$ that returns $p_i$ up to accuracy $\varepsilon$, we can prepare $\ket{\psi}$ up to accuracy $\epsilon$ in deterministic time $\tilde{O}(m^2\log(m/\epsilon))$ and space $O(\log(m/\epsilon^2))$, together with $m^2$ evaluation oracle calls with precision $\varepsilon\coloneqq O(\epsilon^2/m)$. 
    \end{proposition}

    Applying \Cref{prop:real-state-preparation} to the normalized nonnegative vector $(y'_0,\ldots,y'_{m-1})$ with $\epsilon=\epsilon_2$ gives the desired $P_{|\bfy|}$ with the same deterministic time, space, and oracle-call bounds as in the statement. The signs of the coefficients are implemented separately by the diagonal unitary $D_{\bfy} = \sum_{i=0}^{m-1}\sigma_i\ket{i}\bra{i} + \rbra*{ I-\sum_{i=0}^{m-1}\ket{i}\bra{i} }$, where $\sigma_i\in\{\pm 1\}$ satisfies $y_i=\sigma_i|y_i|$.

    Now consider the bitstring indexed encoding $\rbra[\big]{ P_{\abs{\bfy}}^{\dagger} D_\bfy \otimes I_{s+a} } W \rbra[\big]{ P_{\abs{\bfy}}\otimes I_{s+a} }$ of $A$ acting on $s+a+\lceil\log{m}\rceil$ qubits. Let $y'_i\coloneqq \abs{y_i}/\norm{\bfy}_1$ so that $y_i = \sigma_i y'_i \norm{\bfy}_1$, then we obtain the implementation error:
    \begin{align*}
        & \big\| A - \|\bfy\|_1 \big(\ket{\bar{0}}\bra{\bar{0}}\otimes \tilde{\Pi}\big)  \rbra[\big]{ P_{\abs{\bfy}}^{\dagger}D_\bfy\otimes I_{s+a} } W \rbra[\big]{ P_{\abs{\bfy}} \otimes I_{s+a} }  \left(\ket{\bar{0}}\bra{\bar{0}}\otimes \Pi\right) \big\| \\
        = &  \norm[\big]{ \textstyle \norm{\bfy}_1 \sum_{i=0}^{m-1} \sigma_i y'_i A_i - \norm{\bfy}_1 \sum_{i=0}^{m-1} \sigma_i \hat y'_i \, \tilde{\Pi} U_i \Pi} \\
        \leq &  \norm[\big]{ \textstyle \norm{\bfy}_1 \sum_{i=0}^{m-1} \sigma_i y'_i \big(A_i-\tilde{\Pi} U_i \Pi\big) } + \norm[\big]{ \norm{\bfy}_1 \sum_{i=0}^{m-1} \sigma_i \big(y'_i-\hat y'_i\big) \tilde{\Pi} U_i \Pi }  \\
        \leq &  \norm{\bfy}_1 \textstyle\sum_{i=0}^{m-1} y'_i \norm{ A_i-\tilde{\Pi} U_i \Pi} + \norm{\bfy}_1 \sum_{i=0}^{m-1} \abs*{y'_i-\hat y'_i} \norm{\tilde{\Pi} U_i \Pi}  \\
        \leq &  \norm{\bfy}_1 \rbra{ \textstyle \sum_{i=0}^{m-1} y'_i } \epsilon_1 + \epsilon_2 \, \norm{\bfy}_1  \\
        = & (\epsilon_1 + \epsilon_2) \|\bfy\|_1.
    \end{align*}
    Here, the second line uses $y_i=\sigma_i y'_i\norm{\bfy}_1$, $\bra{\bar{0}}P^\dagger_{\abs{\bfy}}D_\bfy\ket{i}=\sigma_i\sqrt{\hat{y}'_i}$, and $\bra{i}P_{\abs{\bfy}}\ket{\bar{0}}=\sqrt{\hat{y}'_i}$ for each $0\leq i<m$; the third and fourth lines follow from the triangle inequality and $\abs{\sigma_i}=1$; and the fifth line follows because each $U_i$ is a $(1,a,\epsilon_1)$-bitstring indexed encoding of $A_i$, $\norm{\tilde{\Pi}U_i\Pi}\leq 1$, and the error bound guaranteed by \Cref{prop:real-state-preparation}.

    Finally, the diagonal unitary $D_{\bfy}$ is computed from the signs of the coefficients and does not change the asymptotic deterministic time or space bounds of the classical preprocessing, which completes the proof.
\end{proof}

\begin{proof}[Proof of \Cref{prop:real-state-preparation}]
    We follow the analysis presented in \cite[Section III.A]{MP16}, with a particular focus on the classical computational complexity required for this state preparation procedure. The algorithm for preparing the state $\ket{\psi}$ expresses the weight $W_x$ as a telescoping product, given by
    \begin{subequations}
        \label{eq:state-preparation-telescoping-product}
        \begin{align}
            \forall x \in \binset^l,~W_x &= W_{x_1} \cdot \frac{W_{x_1 x_2}}{W_{x_1}} \cdot \frac{W_{x_1 x_2 x_3}}{W_{x_1 x_2}} \cdots \frac{W_x}{W_{x_1\cdots x_{l-1}}},\\
            \text{ where } W_x &\coloneqq \sum_{y\in\binset^{l-|x|}} |\innerprod{xy}{\psi}|^2.
        \end{align}
    \end{subequations}

    To estimate $\ket{\psi}$ up to accuracy $\epsilon$ in the $\ell_2$ norm, it suffices to approximate each weight $W_x$ up to additive error $\varepsilon \coloneqq O(\epsilon^2/m)$, as indicated in \cite[Section III.A]{MP16}. To compute $W_{x'}$, we need $2^{l-|x'|}$ oracle calls to $\Eval(\cdot, \varepsilon)$. Evaluating all terms in \Cref{eq:state-preparation-telescoping-product} requires computing $W_{x_1}, W_{x_1x_2}, \cdots, W_x$ for any $x\in\binset^l$, which can be achieved by $2^{l-1}+2^{l-2}+\cdots+1 = 2^l$ oracle calls to $\Eval(\cdot, \varepsilon)$. As we need to compute \Cref{eq:state-preparation-telescoping-product} for all $x\in\{0,1\}^l$, the overall number of oracle calls to $\Eval(\cdot, \varepsilon)$ is $2^{2l}=m^2$. The remaining computation can be achieved in deterministic time $\tilde{O}(m^2\log(m/\epsilon))$ and space $O(\log(m/\epsilon^2))$ where the time complexity is because of the iterated integer multiplication.  
\end{proof}

\vspace{1em}
To make the resulting bitstring indexed encoding from \Cref{lemma:space-efficient-LCU} with $\alpha = 1$, we need to perform \textit{a renormalization procedure} to construct a new encoding with the desired $\alpha$. 
We achieve this by extending the proof strategy outlined by Gilyen \cite[Page 52]{Gilyen19} for block-encodings to bitstring indexed encodings. This approach works specifically for \textit{partial isometries} (up to a normalization factor $\alpha$) -- since the singular values of a partial isometry are either $0$ or $1$, it suffices to consider a space-efficient QSVT associated with some Chebyshev polynomial $T_k$, with an appropriately chosen odd $k$, such that $T_k(1/\alpha)=1$ and $T_k(0/\alpha)=0$.\footnote{Renormalizing bitstring indexed encodings of \textit{non-partial isometries} for space-efficient QSVT seems achievable by mimicking~\cite[Theorem 17]{GSLW19}. This approach cleverly uses space-efficient QSVT with the sign function (\Cref{corr:sign-polynomial-implementation}), where the corresponding encoding can be re-normalized by carefully using \Cref{lemma:renormalizing-encodings}. Nevertheless, since this renormalization procedure is not required in this paper, we leave it for future work.}

The renormalization procedure is provided in \Cref{lemma:renormalizing-encodings}.
Additionally, a similar result has been established in {\cite[Lemma 7.10]{MY23}}.

\begin{lemma}[Renormalizing bitstring indexed encoding]
    \label{lemma:renormalizing-encodings}
    Let $U$ be an $(\alpha, a, \epsilon)$-bitstring indexed encoding of $A$, where $\alpha > 1$ and $0 < \epsilon < 1$, and $A$ is a partial isometry acting on $s(n)$ qubits. We can implement a quantum circuit $V$, serving as a normalization of $U$, such that $V$ is a $(1, a+2, 36\epsilon)$-bitstring indexed encoding of $A$. This implementation requires $O(\alpha)$ uses of $U$, $U^{\dagger}$, $\textsc{C}_{\Pi}\textsc{NOT}$, $\textsc{C}_{\tilde{\Pi}}\textsc{NOT}$, and $O(\alpha)$ single-qubit gates. Moreover, the description of the resulting quantum circuit can be computed in deterministic time $O(\alpha)$ and space $O(s)$.
\end{lemma}

\begin{proof}
    Following \Cref{def:bitstring-indexed-encoding}, we have $\| A - \alpha \tilde{\Pi} U \Pi \| \leq \epsilon$, where $\tilde{\Pi}$ and $\Pi$ are the corresponding orthogonal projections. 
    Because $U$ is a $(1, a, \epsilon/\alpha)$-bitstring indexed encoding $A/\alpha$, we obtain that $\|A / \alpha\| \leq \| U \| + \epsilon/\alpha = 1 + \epsilon/\alpha$, equivalently $\|A\| \leq \alpha + \epsilon$. 
    
    \paragraph*{Adjusting the encoding through a single-qubit rotation.} 
    Consider an integer $k \coloneqq 4\ceil*{\pi(\alpha+1)/4}+1 \leq 9\alpha = O(\alpha)$ so that $k \equiv 1 \pmod 4$ and $\gamma \coloneqq (\alpha+\epsilon)\sin(\pi/2k) \leq 1$. 
    We define new orthogonal projections $\tilde \Pi' \coloneqq \tilde \Pi \otimes \ket{0}\bra{0}$ and $\Pi' \coloneqq \Pi \otimes \ket{0}\bra{0}$, and combine them with $U' = U \otimes R_\gamma$, where 
    $R_\gamma = \begin{psmallmatrix}
        \gamma & -\sqrt{1-\gamma^2} \\
        \sqrt{1-\gamma^2} & \gamma
    \end{psmallmatrix}$. 
    By noting that $\tilde{\Pi'} U' \Pi' = \gamma \tilde{\Pi} U \Pi \otimes \ket{0}\bra{0}$, we deduce that $U'$ is a $(1, a + 1, \gamma \epsilon/\alpha)$-bitstring indexed encoding of $\gamma A/\alpha \otimes \ket{0}\bra{0}$, which is consequently a $(1, a+1, 2\gamma\epsilon/\alpha)$-bitstring indexed encoding of $\sin\rbra[\big]{\frac{\pi}{2k}} A \otimes \ket{0}\bra{0}$. An error bound follows:
    \[
    \left\| \frac{\gamma}{\alpha} A - \sin\left(\frac{\pi}{2k}\right) A \right\| = \left\| \frac{\epsilon}{\alpha} \sin\left(\frac{\pi}{2k}\right) A \right\| \leq \frac{\epsilon}{\alpha} \sin\left(\frac{\pi}{2k}\right) (\alpha+\epsilon) = \frac{\gamma \epsilon}{\alpha}.
    \]

    \paragraph*{Renormalizing the encoding via robust oblivious amplitude amplification.} 
    We follow the construction in~\cite[Theorem 28]{GSLW18}, the full version of~\cite{GSLW19}, and perform a meticulous analysis of the complexity. 
    We observe that it suffices to consider $k \geq 3$, as for $U'$ is already a $(1,a+1,2\gamma\epsilon/\alpha)$-bitstring indexed encoding of $A\otimes \ket{0}\bra{0}$ when $k=1$. 
    Let $\varepsilon \coloneqq 2\gamma\epsilon/\alpha$, and for simplicity, we first start by considering the case with $\varepsilon=0$. By \Cref{def:bitstring-indexed-encoding}, we have $\tilde{\Pi}' U' \Pi' = \sin\rbra*{\frac{\pi}{2k}} A \otimes \ket{0}\bra{0}$.  
    Let $T_k\in\bbR[x]$ be the degree-$k$ Chebyshev polynomial (of the first kind). By employing \Cref{lemma:Chebyshev-poly-implementation}, we can apply the QSVT associated with $T_k$ to the bitstring indexed encoding $U'$, yielding: 
    \begin{equation*}
        \tilde{\Pi}' T_k^{\SV}(U') \Pi' 
        =  T_k\rbra[\Big]{\sin\rbra[\Big]{\frac{\pi}{2k}}} A  \otimes \ket{0}\bra{0}\\
        = \cos\Big( \frac{k-1}{2}\pi \Big) A \otimes \ket{0}\bra{0}\\
        = A \otimes \ket{0}\bra{0}.
    \end{equation*}
    Here, the second equality is due to $T_k\big(\sin\big(\frac{\pi}{2k}\big)\big) = T_k\big(\cos\big(\frac{\pi}{2}-\frac{\pi}{2k}\big)\big) = \cos\big( \frac{k-1}{2} \pi \big)$, and the last equality holds because $k\equiv 1 \pmod 4$.
    
    Next, we move on the case with $\varepsilon > 0$ and restrict it to $\varepsilon \leq 1/3$.\footnote{If $\varepsilon > 1/3$, then $\| \tilde{\Pi}' U' \Pi' - A\otimes \ket{0}\bra{0}\| \leq 2 < 6\varepsilon$ always holds, implying that we can directly use $U'$ as V.}
    Let $A' \coloneqq \tilde{\Pi}' U' \Pi'$ and $\hat{A} \coloneqq \sin\rbra*{\frac{\pi}{2k}} A \otimes \ket{0}\bra{0}$, then we have $\|A' - \hat{A}\| \leq \varepsilon$, indicating that $\big\|\frac{A'+\hat{A}}{2}\big\|^2 \leq \frac{4}{9}\coloneqq\zeta$\footnote{This is because $\|A'+\hat{A}\| \leq \|A'\|+\|A'\| + \|A'-\hat{A}\| \leq 2\sin(\pi/2k) + \varepsilon \leq 2\sin(\pi/6)+1/3 = 4/3$.} and $\|A'-\hat{A}\| + \big\|\frac{A'+\hat{A}}{2}\big\|^2 \leq \frac{1}{3} + \frac{4}{9} < 1$. By employing \Cref{lemma:Chebyshev-poly-implementation}, as well as the facts that $\frac{\sqrt{2}}{\sqrt{1-\zeta}} < 2$ and $2k\varepsilon = 4k\gamma\epsilon/\alpha \leq 36\epsilon$, we can construct a $(1,a+2,36\epsilon)$-bitstring indexed encoding of $A$, denoted by $V$. 

    \vspace{1em}
    Finally, we provide the computational resources required for implementing $V$. 
    As shown in \Cref{lemma:Chebyshev-poly-implementation}, the implementation of $V$ requires $O(\alpha)$ uses of $U$, $U^{\dagger}$, $\textsc{C}_{\Pi}\textsc{NOT}$, $\textsc{C}_{\tilde{\Pi}}\textsc{NOT}$, and $O(\alpha)$ single-qubit gates. Furthermore, the description of the resulting quantum circuit can be computed in deterministic time $O(\alpha)$ and space $O(s)$.
\end{proof}

\vspace{1em}
Finally, we combine \Cref{lemma:Chebyshev-poly-implementation,lemma:space-efficient-LCU,lemma:renormalizing-encodings} to proceed with the proof of \Cref{thm:LCU-averaged-chebyshev-truncation}. 

\begin{proof}[Proof of \Cref{thm:LCU-averaged-chebyshev-truncation}]
    By using \Cref{lemma:Chebyshev-poly-implementation}, we obtain $(1,a+1,4k\sqrt{\epsilon_1})$-bitstring indexed encodings $V_k$ corresponding to $T_k(A)$, where $1 \leq k \leq d'=2d-1$. 
    If $\hat{c}_0 \neq 0$, let $V_0$ denote the trivial exact bitstring indexed encoding of $T_0(A)/2 = I/2$ acting on the same $s+a+1$ qubits; this can be implemented by a single-qubit rotation on the additional ancillary qubit.
    The descriptions of quantum circuits $\{V_k\}_{k=0}^{d'}$ can be computed in $O(s(n))$ space and $\sum_{k=0}^{d'} k=O(d^2)$ time.

    If $U$ is a block-encoding, then all $V_k$ are associated with the same projections. Otherwise, since $P_{d'}$ is either even or odd, only one parity of Chebyshev polynomials appears in $P_{d'}$; hence all relevant $V_k$ are associated with the same projections by \Cref{prop:phase-modulation}.
    Employing \Cref{lemma:space-efficient-LCU}, we obtain a $(\|\hat{\bfc}\|_1, \hat{a}, 4d'\sqrt{\epsilon_1}\|\hat{\bfc}\|_1+\epsilon_2\|\hat{\bfc}\|_1)$-bitstring indexed encoding $V_{\rm unnorm}$ for $P_{d'}(A)=\hat{c}_0I/2+\sum_{k=1}^{d'} \hat{c}_k T_k(A)$, where $\hat{a}\coloneqq a+\lceil\log{d}\rceil+1$. 
    The remaining analysis depends on whether $P_{d'}(A)$ is a partial isometry (up to a normalization factor): 
    \begin{itemize}[itemsep=0.33em,topsep=0.33em,parsep=0.33em]
    \item \textbf{Partial isometry $P_{d'}(A)$:} Since $P_{d'}(A)$ is a partial isometry, we have $\norm{P_{d'}(A)}=1$. On the other hand, $\norm{P_{d'}(A)} \leq |\hat{c}_0|/2 + \sum_{k=1}^{d'} |\hat{c}_k| \leq \|\hat{\bfc}\|_1$, and hence $\|\hat{\bfc}\|_1 \geq 1$. If $\|\hat{\bfc}\|_1=1$, then we simply take $V_{\rm normed}\coloneqq V_{\rm unnorm}$. Otherwise, we can renormalize $V_{\rm unnorm}$ by utilizing \Cref{lemma:renormalizing-encodings} and obtain a $(1, a', 144d'\sqrt{\epsilon_1}\|\hat{\bfc}\|_1 + 36\epsilon_2\|\hat{\bfc}\|_1)$-bitstring indexed encoding $V_{\rm normed}$ that acts on $s+a'$ qubits, where $a' \coloneqq \hat{a}+2 = a+\lceil\log{d'}\rceil+3$.
    A direct calculation shows that the implementation of $V_{\rm normed}$ makes $\sum_{k=1}^{d'} k \cdot O(\|\hat{\bfc}\|_1)=O(d^2\|\hat{\bfc}\|_1)$ uses of $U$, $U^{\dagger}$, $\textsc{C}_{\Pi}\textsc{NOT}$, $\textsc{C}_{\tilde{\Pi}}\textsc{NOT}$, and multi-controlled single-qubit gates. 
    The description of the quantum circuit $V_{\rm normed}$ thus can be computed in deterministic time 
    \[\max\{\tilde{O}\big((d')^2\|\hat{\bfc}\|_1\log(d'/\epsilon_2)\big),O((d')^2\|\hat{\bfc}\|_1)\} = \tilde{O}(d^2\|\hat{\bfc}\|_1\log(d/\epsilon_2))\]
    and space $O(\max\{s(n), \log(d'/\epsilon_2^2)\}) = O(\max\{s(n), \log(d/\epsilon_2^2)\})$, as well as $O\big((d')^2\|\hat{\bfc}\|_1\big) = O(d^2\|\hat{\bfc}\|_1)$ oracle calls to $\Eval$ with precision $\varepsilon$. 
    \item \textbf{General $P_{d'}(A)$:} We simply use the bitstring indexed encoding $V_{\rm unnorm}$ without renormalizing it. Similarly, the implementation of $V_{\rm unnorm}$ makes $O(d^2)$ uses of $U$, $U^{\dagger}$, $\textsc{C}_{\Pi}\textsc{NOT}$, $\textsc{C}_{\tilde{\Pi}}\textsc{NOT}$, and multi-controlled single-qubit gates. Therefore, the description of the quantum circuit $V_{\rm unnorm}$ can be computed in deterministic time $\tilde{O}(d^2\log(d/\epsilon_2))$ and space $O(\max\{s(n), \log(d/\epsilon_2^2)\})$, as well as $O(d^2)$ oracle calls to $\Eval$ with precision $\varepsilon$.
    \end{itemize}
    
    Finally, we can extend our construction to any linear operator $A$ by replacing $P_{d'}(A)$ with $P_{d'}^{\SV}$ as defined in \Cref{def:matrix-SV-function}, taking into account that the Chebyshev polynomial (of the first kind) $T_k$ is either an even or an odd function.
\end{proof}

\subsection{Examples: the sign function and the normalized logarithmic function}
\label{subsec:space-efficient-QSVT-examples}

In this subsection, we provide explicit examples that illustrate the usage of the space-efficient quantum singular value transformation (QSVT) technique. We define two functions: 
\begin{equation*}
    \label{eq:space-efficient-QSVT-examples}
    \sign(x)\coloneqq{\scriptscriptstyle
    \begin{cases}
    1,& x > 0\\
    -1,& x < 0\\
    0,& x=0
    \end{cases}}
    ~~~~~\text{and}~~~~~
    \ln_{\beta}(x)\coloneqq \frac{\ln(1/x)}{2\ln(2/\beta)}.
\end{equation*}

In particular, the sign function is a bounded function, and we derive the corresponding bitstring indexed encoding with \textit{deterministic} space-efficient (classical) pre-processing in \Cref{corr:space-efficient-sign}. On the other hand, the logarithmic function is a piecewise-smooth function that is bounded by $1$, and we deduce the corresponding bitstring indexed encoding with \textit{randomized} space-efficient (classical) pre-processing in \Cref{corr:space-efficient-log}.

\begin{corollary}[Sign polynomial with space-efficient coefficients applied to bitstring indexed encodings]
    \label{corr:sign-polynomial-implementation}
    Let $A$ be an Hermitian matrix that acts on $s$ qubits, where $s(n) \geq \Omega(\log(n))$. Let $U$ be a $(1,a, \epsilon_1)$-bitstring indexed encoding of $A$ that acts on $s+a$ qubits. 
    Then, for any $d' \leq 2^{O(s(n))}$ and $\epsilon_2 \geq 2^{-O(s(n))}$, we have an 
    $\rbra[\big]{ 1, a+\lceil\log d'\rceil+3, 144\hat{C}_{\sign}d'\log d' \epsilon_1^{1/2} + (36\hat{C}_{\sign}\log d' +37)\epsilon_2 }$-bitstring indexed encoding $V$ of $P_{d'}^{\sign}(A)$, where $P_{d'}^{\sign}$ is a space-efficient bounded polynomial approximation of the sign function \emph{(}corresponding to some degree-$d$ averaged Chebyshev truncation\emph{)} specified in \Cref{corr:space-efficient-sign}, and $\hat{C}_{\sign}$ is a universal constant. 
    This implementation requires $O(d^2 \log{d})$ uses of $U$, $U^{\dagger}$, $\textsc{C}_{\Pi}\textsc{NOT}$,  $\textsc{C}_{\tilde{\Pi}}\textsc{NOT}$, and $O(d^2 \log{d})$ multi-controlled single-qubit gates. The description of $V$ can be computed in deterministic time $\tilde{O}(\epsilon_2^{-1} d^{9/2})$ and space $O(s(n))$.

    \noindent Furthermore, our construction directly extends to any non-Hermitian \emph{(}but linear\emph{)} matrix $A$ by simply replacing $P_{d'}^{\sign}(A)$ with $P_{\sign,d'}^{\SV}(A)$ defined in the same way as \Cref{def:matrix-SV-function}.
\end{corollary}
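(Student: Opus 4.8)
The plan is to obtain \Cref{corr:sign-polynomial-implementation} as a direct instantiation of \Cref{thm:LCU-averaged-chebyshev-truncation}, where the polynomial and its coefficient oracle are supplied by \Cref{corr:space-efficient-sign}. Given the target parameters $\epsilon_1$ (the error of the input encoding $U$), $d'=2d-1\le 2^{O(s(n))}$, and $\epsilon_2\ge 2^{-O(s(n))}$, I would first invoke \Cref{corr:space-efficient-sign} with accuracy parameter $\epsilon\coloneqq\Theta(\epsilon_2^2/d')$ and the gap parameter $\delta$ fixed so that the output degree is exactly $d'$. This yields the odd polynomial $P_{d'}^{\sign}=\hat{c}_0/2+\sum_{k=1}^{d'}\hat{c}_kT_k$ with $\max_{x\in[-1,1]}|P_{d'}^{\sign}(x)|\le 1$, the approximation guarantee on $[-1,1]\setminus[-\delta,\delta]$, the norm bound $\|\hat{\bfc}\|_1\le\hat C_{\sign}$, and, crucially, a \emph{deterministic} algorithm that returns any entry $\hat{c}_k$ up to additive error $\varepsilon\coloneqq\Theta(\epsilon_2^2/d')$ in deterministic time $\tilde O(\varepsilon^{-1/2}d^2)=\tilde O(\epsilon_2^{-1}d^{5/2})$ and space $O(\log(\varepsilon^{-3/2}d^3))=O(\log(\epsilon_2^{-3}d^{9/2}))$. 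This algorithm is precisely the evaluation oracle $\Eval$ demanded by \Cref{thm:LCU-averaged-chebyshev-truncation} (whose precision requirement is $\varepsilon=O(\epsilon_2^2/d')$), so I would then feed $U$, the Hermitian operator $A=\tilde{\Pi}U\Pi$, the polynomial $P_{d'}^{\sign}$, and this oracle into \Cref{thm:LCU-averaged-chebyshev-truncation}.

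Next I would select the \emph{isometry} branch of \Cref{thm:LCU-averaged-chebyshev-truncation}: by \Cref{def:matrix-SV-function} the transformation $\sign^{\SV}(A)=\sum_i\sign(\sigma_i)\ket{\tpsi_i}\bra{\psi_i}$ is an isometry for every linear operator $A$ (the footnote to \Cref{thm:LCU-averaged-chebyshev-truncation}), so the renormalization step of \Cref{lemma:renormalizing-encodings} applies and produces the claimed $\alpha=1$ bitstring indexed encoding on $s+a'$ qubits with $a'=a+\lceil\log d'\rceil+3$. Substituting $d'=2d-1$, $\|\hat{\bfc}\|_1\le\hat C_{\sign}$, and the oracle precision $\varepsilon=\Theta(\epsilon_2^2/d')$ into the isometry-branch error bound $144d'\sqrt{\epsilon_1}\|\hat{\bfc}\|_1^2+36\epsilon_2\|\hat{\bfc}\|_1$, and folding in the residual $O(\epsilon_2)$ slack coming from the evaluation-oracle precision and the final normalization of $P_{d'}^{\sign}$ in \Cref{corr:space-efficient-sign}, yields (after absorbing universal constants) the stated error $144\hat C_{\sign}^2d\,\epsilon_1^{1/2}+(36\hat C_{\sign}+37)\epsilon_2$. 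The gate count is read off directly: the isometry branch uses $O(d^2\|\hat{\bfc}\|_1)=O(d^2)$ applications of $U$, $U^{\dagger}$, $\textsc{C}_{\Pi}\textsc{NOT}$, $\textsc{C}_{\tilde{\Pi}}\textsc{NOT}$, and multi-controlled single-qubit gates. Because $\sign$ is odd, $P_{d'}^{\sign}$ is odd (so $P_{d'}^{\sign}(0)=0$), and the extension to non-Hermitian $A$ is verbatim from the last sentence of \Cref{thm:LCU-averaged-chebyshev-truncation}, replacing $P_{d'}^{\sign}(A)$ by $P_{\sign,d}^{\SV}(A)$ as in \Cref{def:matrix-SV-function}.

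For the classical pre-processing cost I would combine the two complexity statements. \Cref{thm:LCU-averaged-chebyshev-truncation} outputs the circuit description in deterministic time $\tilde O(d^2\|\hat{\bfc}\|_1\log(d/\epsilon_2))=\tilde O(d^2\log(d/\epsilon_2))$ and space $O(\max\{s(n),\log(d/\epsilon_2^2)\})$, using $O(d^2\|\hat{\bfc}\|_1)=O(d^2)$ calls to $\Eval$; multiplying the number of calls by the per-call cost from \Cref{corr:space-efficient-sign} gives total time $O(d^2)\cdot\tilde O(\epsilon_2^{-1}d^{5/2})=\tilde O(\epsilon_2^{-1}d^{9/2})$. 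Since $d\le 2^{O(s(n))}$ and $\epsilon_2\ge 2^{-O(s(n))}$, both $\log(d/\epsilon_2^2)$ and $\log(\epsilon_2^{-3}d^{9/2})$ are $O(s(n))$, so the total space is $O(s(n))$, as required.

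The hard part will be the careful coupling of the precision parameters rather than any conceptual step: one must choose the evaluation-oracle precision $\varepsilon=\Theta(\epsilon_2^2/d')$ small enough that the LCU state-preparation error contributes only $O(\epsilon_2)$ to the overall encoding error, while simultaneously checking that this choice keeps each $\Eval$ call within $\tilde O(\epsilon_2^{-1}d^{5/2})$ time and, most delicately, within $O(s(n))$ space — which is exactly where the hypotheses $d'\le 2^{O(s(n))}$ and $\epsilon_2\ge 2^{-O(s(n))}$ are used. One should also double-check that the isometry branch (hence \Cref{lemma:renormalizing-encodings}) is legitimately applicable here, i.e.\ that $\sign^{\SV}(A)$ — and not merely $A$ — being an isometry is what the renormalization step needs; everything else is a mechanical substitution into \Cref{corr:space-efficient-sign} and \Cref{thm:LCU-averaged-chebyshev-truncation}.
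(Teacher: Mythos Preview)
Your overall plan is right, and the complexity accounting is essentially what the paper does. However, the step where you ``select the isometry branch of \Cref{thm:LCU-averaged-chebyshev-truncation}'' has a real gap, and it is exactly the point you flag in your last sentence as something to ``double-check''.

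The footnote you cite says that $\sign^{\SV}(A)$ is an isometry for every $A$. But the operator actually being encoded after the LCU step is $P_{d'}^{\sign}(A)$, not $\sign^{\SV}(A)$; these differ precisely on singular values of $A$ lying in $[-\delta,\delta]$, where $P_{d'}^{\sign}$ takes values strictly between $-1$ and $1$. Consequently $P_{d'}^{\sign}(A)$ is \emph{not} an isometry, and the paper states this explicitly. The renormalization step (\Cref{lemma:renormalizing-encodings}) amplifies via a Chebyshev polynomial $T_k$ calibrated so that $T_k(\sin(\pi/2k))=1$; applied to singular values of $P_{d'}^{\sign}(A)$ that are, say, $0.3$, it would not return anything controlled, so the error bound of that lemma does not transfer.

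The paper's fix is to first take the \emph{non-isometry} branch of \Cref{thm:LCU-averaged-chebyshev-truncation}, obtaining an unnormalized encoding $V_{\rm un}$, and then to restrict to the subspaces $\Gamma_L=\span\{\bfu_i:\sigma_i>\delta\}$ and $\Gamma_R=\span\{\bfv_i:\sigma_i>\delta\}$ spanned by singular vectors of $A$ with singular value exceeding $\delta$. On these subspaces, $\sign^{\SV}(A_{>\delta})$ \emph{is} an isometry, and $V_{\rm un}$ is (via a triangle inequality using the approximation guarantee of $P_{d'}^{\sign}$ on $|x|>\delta$) a projected unitary encoding of it with the right error. One then applies \Cref{lemma:renormalizing-encodings} with the restricted projections $\tilde\Pi|_{\Gamma_L}$, $\Pi|_{\Gamma_R}$. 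The final observation is that the renormalization is just a Chebyshev polynomial applied to $V_{\rm un}$ and hence preserves the full projections $\tilde\Pi,\Pi$ with $\|\tilde\Pi V\Pi\|\le 1$; another triangle inequality then upgrades the restricted-subspace bound to the claimed $(1,a',\cdot)$-encoding of $P_{d'}^{\sign}(A)$ on the whole space. This subspace-restriction maneuver is the missing idea in your proposal, and it is also where the extra ``$+37$'' in $(36\hat C_{\sign}+37)\epsilon_2$ comes from (two uses of the $\epsilon_2$-closeness of $P_{d'}^{\sign}$ to $\sign$ on $|x|>\delta$).
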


\begin{proof}
    Following \Cref{corr:space-efficient-sign}, we have $P_{d'}^{\sign}(x) = \hat{c}_0/2+\sum_{k=1}^{d'} \hat{c}_k T_k(x)$, where $d'=2d-1$ and $d'=O(\delta^{-1}\log{\epsilon^{-1}})$. The approximation error is given by: 
    \begin{equation}
        \label{eq:sign-approx-error}
        \forall x\in[-1,1] \setminus [-\delta,\delta],~|\sign(x)-P_{d'}^{\sign}(x)| \leq C_{\sign}\epsilon \coloneqq\epsilon_2.
    \end{equation}
    To implement $\Eval$ with precision $\varepsilon = O(\epsilon_2^2/d')$, we can compute the corresponding entry $\hat{c}_k$ of the coefficient vector, which requires deterministic time $\tilde{O}\big(\varepsilon^{-1/2} (d')^2\big)=\tilde{O}(\epsilon_2^{-1} d^{5/2})$ and space $O\big(\log(\varepsilon^{-3/2}(d')^3)\big)=O(\log(\epsilon_2^{-3}d^{9/2}))$.

    Note that $P_{d'}^{\sign}(A)$ is not a partial isometry (up to a normalization factor) and $\|\hat{\bfc}\|_1\leq \hat{C}_{\sign}\log d'$. Using \Cref{thm:LCU-averaged-chebyshev-truncation}, we have a $\rbra[\big]{\hat{C}_{\sign}\log d', a_{\rm un}, \rbra{4d'\epsilon_1^{1/2}+\epsilon_2}\hat{C}_{\sign}\log d'}$-bitstring indexed encoding $V_{\rm un}$, with projections $\tilde{\Pi}$ and $\Pi$, that acts on $s+a_{\rm un}$ qubits and $a_{\rm un} \coloneqq a+\lceil \log d' \rceil+1$. 

    \paragraph*{Renormalizing the encoding of $P^{\sign}_{d'}(A)$.}
    Notably, the renormalization procedure (\Cref{lemma:renormalizing-encodings}) is \textit{still applicable} when $P_{d'}^{\sign}(A)$ is restricted to appropriately chosen subspaces $\Gamma_L$ and $\Gamma_R$. 
    Let $A = \sum_{i=1}^{\rank(A)} \sigma_i \bfu_i \bfv_i^\dagger$ be the singular value decomposition of $A$. We define $A_{>\delta} \coloneqq \sum_{i: \sigma_i > \delta} \sigma_i \bfu_i \bfv_i^\dagger$, as well as subspaces $\Gamma_L\coloneqq\spanset\{\bfu_i|\sigma_i > \delta\}$ and $\Gamma_R\coloneqq\spanset\{\bfv_i | \sigma_i > \delta\}$. Consequently, we obtain the following for the bitstring indexed encoding $V_{\rm un}$ with projections $\tilde{\Pi}$ and $\Pi$:
    
    \begin{subequations}
    \label{eq:sign-subspace-isometry-unnorm}
    \begin{align}
        &\| \sign^{\SV}(A_{> \delta}) - \hat{C}_{\sign}\log d' \cdot \tilde{\Pi}|_{\Gamma_L} V_{\rm un} \Pi|_{\Gamma_R} \|\\
        \leq~&  \| \sign^{\SV}(A_{> \delta}) - P_{d'}^\sign(A_{> \delta})  \| + \| P_{d'}^\sign(A_{> \delta})  - \hat{C}_{\sign}\log d' \cdot \tilde{\Pi}|_{\Gamma_L} V_{\rm un} \Pi|_{\Gamma_R} \| \\
        \leq~& \epsilon_2 + \| P_{d'}^\sign(A_{> \delta})  - \hat{C}_{\sign}\log d' \cdot \tilde{\Pi} V_{\rm un} \Pi \| \\
        \leq~& \epsilon_2 + \rbra{4d'\epsilon_1^{1/2}+\epsilon_2}\hat{C}_{\sign}\log d'.
    \end{align}
    \end{subequations}
    
    Here, the second line owes to the triangle inequality, the third line uses the fact that restricting to subspaces cannot increase the operator norm together with \Cref{eq:sign-approx-error}, and the last line is because $V_{\rm un}$ is a $\rbra[\big]{\hat{C}_{\sign}\log d', a_{\rm un}, \rbra{4d'\epsilon_1^{1/2}+\epsilon_2}\hat{C}_{\sign}\log d'}$-bitstring indexed encoding of $P_{d'}^\sign(A_{> \delta})$. 
    Note that $\sign^{\SV}(A_{> \delta})$ is a partial isometry, and \Cref{eq:sign-subspace-isometry-unnorm} implies that $V_{\rm un}$ is a projected unitary encoding of $\sign^{\SV}(A_{> \delta})$. By applying \Cref{lemma:renormalizing-encodings} to $V_{\rm un}$ with projections $\tilde{\Pi}|_{\Gamma_L}$ and $\Pi|_{\Gamma_R}$, we can obtain a $\rbra[\big]{1, a_{\rm un}+2, 144\hat{C}_{\sign} d'\log d'\epsilon_1^{1/2} + 36(\hat{C}_{\sign}\log d'+1)\epsilon_2}$-projected unitary encoding of $\sign^{\SV}(A_{> \delta})$, denoted as $V$. Consequently, we can derive that: 
    
    \begin{subequations}
        \label{eq:sign-subspace-isometry-normed}
        \begin{align}
            &\| P^{\sign}_{d'}(A_{>\delta}) - \tilde{\Pi}|_{\Gamma_L} V \Pi|_{\Gamma_R}\| \\
            \leq~& \| P^{\sign}_{d'}(A_{>\delta}) - \sign^{\SV}(A_{>\delta})\| + \| \sign^{\SV}(A_{>\delta}) - \tilde{\Pi}|_{\Gamma_L} V \Pi|_{\Gamma_R}\| \\
            \leq~& \epsilon_2 + 144\hat{C}_{\sign}d'\log d' \epsilon_1^{1/2} + 36(\hat{C}_{\sign}\log d' +1)\epsilon_2.
        \end{align}
    \end{subequations}
    
    Here, the second line follows from the triangle inequality, and the third line additionally owes to \Cref{eq:sign-approx-error}. 
    Noting that \Cref{lemma:renormalizing-encodings} essentially applies a Chebyshev polynomial to $V_{\rm un}$ and preserves the projections $\tilde{\Pi}|_{\Gamma_L}$ and $\Pi|_{\Gamma_R}$, then we have $\|\tilde{\Pi} V \Pi\| \leq 1$. 
    Therefore, following \Cref{eq:sign-subspace-isometry-normed}, we conclude that $P^{\sign}_{d'}(A)$ has a $\rbra[\big]{1, a', 144\hat{C}_{\sign}d'\log d' \epsilon_1^{1/2} + (36\hat{C}_{\sign}\log d' +37)\epsilon_2 }$-bitstring indexed encoding $V$ that acts on $s+a'$ qubits, where $a'\coloneqq a+ \ceil*{\log{d'}} +3$.\footnote{We are somewhat abusing notations --- strictly speaking, $V$ corresponds to $\tilde{P}^{\sign}_{d'}(A)$, where $\tilde{P}^{\sign}_{d'}$ is another polynomial satisfying all requirements in \Cref{corr:space-efficient-sign} but does not necessarily exactly coincide with $P^{\sign}_{d'}$. } 
    
    \vspace{1em}
    Lastly, we can complete the remaining analysis similarly to \Cref{thm:LCU-averaged-chebyshev-truncation} with a partial isometry $P_{d'}(A)$. 
    Since $\|\hat{\bfc}\|_1 \leq \hat{C}_{\sign}\log d'$, the quantum circuit of $V$ makes $O(d^2\log d)$ uses of $U$, $U^{\dagger}$, $\textsc{C}_{\Pi}\textsc{NOT}$, and $\textsc{C}_{\tilde{\Pi}}\textsc{NOT}$ as well as $O(d^2\log d)$ multi-controlled single-qubit gates. 
    We note that $d'=O(d) \leq 2^{O(s(n))}$ and $\epsilon_2 \geq 2^{-O(s(n))}$. Moreover, we can compute the description of $V$ in $O(s(n))$ space since each oracle call to $\Eval$ with precision $\varepsilon$ can be computed in $O(\log(\epsilon_2^{-3} d^{9/2}))$ space. Additionally, the time complexity for computing the description of $V$ is 
    \[  \max\{\tilde{O}(d^2(\log d)\log(d/\epsilon_2)), O(d^2\log d) \cdot \tilde{O}(\epsilon_2^{-1} d^{5/2})\} = \tilde{O}(\epsilon_2^{-1} d^{9/2}).  \qedhere\]
\end{proof}

\begin{corollary}[Log polynomial with space-efficient coefficients applied to bitstring indexed encodings]
    \label{corr:log-polynomial-implementation}
    Let $A$ be an Hermitian matrix that acts on $s$ qubits, where $s(n) \geq \Omega(\log(n))$. Let $U$ be a $(1,a, \epsilon_1)$-bitstring indexed encoding of $A$ that acts on $s+a$ qubits. 
    Then, for any $d'=2d-1 \leq 2^{O(s(n))}$, $\epsilon_2 \geq 2^{-O(s(n))}$, and $\beta \geq 2^{-O(s(n))}$, we have a $\rbra[\big]{ \hat{C}_{\ln}d'^{1/2}, a+\ceil*{\log{d'}}+1, \rbra{4d' \epsilon_1^{1/2} + \epsilon_2} \hat{C}_{\ln}d'^{1/2} }$-bitstring indexed encoding $V$ of $P_{d'}^{\ln}(A)$, where $P_{d'}^{\ln}$ is a space-efficient bounded polynomial approximation of the normalized log function \emph{(}corresponding to some degree-$d$ averaged Chebyshev truncation\emph{)} specified in \Cref{corr:space-efficient-log}, and $\hat{C}_{\ln}$ is a universal constant. 
    This implementation requires $O(d^2)$ uses of $U$, $U^{\dagger}$, $\textsc{C}_{\Pi}\textsc{NOT}$,  $\textsc{C}_{\tilde{\Pi}}\textsc{NOT}$, and multi-controlled single-qubit gates. Moreover, we can compute the description of the resulting quantum circuit in bounded-error randomized time $\tilde{O}(\max\{\beta^{-5} \epsilon_2^{-4} d^4, \epsilon_2^{-1} d^{9/2}\})$ and space $O(s(n))$.
\end{corollary}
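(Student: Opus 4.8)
The plan is to obtain \Cref{corr:log-polynomial-implementation} by composing the space-efficient polynomial approximation of the normalized logarithm from \Cref{corr:space-efficient-log} with the \emph{non-isometry} case of \Cref{thm:LCU-averaged-chebyshev-truncation}. The key simplification relative to \Cref{corr:sign-polynomial-implementation} is that $\ln_{\beta}$ is non-constant on $[\beta,1]$, so the matrix function $P^{\ln}_{d'}(A)$ is not (even a scalar multiple of) an isometry; consequently no renormalization step (\Cref{lemma:renormalizing-encodings}) is needed and we may take $V \coloneqq V_{\rm unnorm}$ directly, making the argument a fairly routine instantiation of \Cref{thm:LCU-averaged-chebyshev-truncation}.

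First I would invoke \Cref{corr:space-efficient-log} with $\epsilon \coloneqq \epsilon_2/C_{\ln}$, producing an even polynomial $P^{\ln}_{d'}(x) = \hat{c}_0/2 + \sum_{k=1}^{d'} \hat{c}_k T_k(x)$ of degree $d' = 2d-1 \le \tilde{C}_{\ln}\beta^{-1}\log(C_{\ln}\epsilon_2^{-1})$ with $\|\hat{\bfc}\|_1 \le \hat{C}_{\ln}$, $\max_{x\in[-1,1]}|P^{\ln}_{d'}(x)| \le 1$, and $\max_{x\in[\beta,1]}|P^{\ln}_{d'}(x) - \ln_{\beta}(x)| \le \epsilon_2$. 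Because $\beta \ge 2^{-O(s(n))}$ and $\epsilon_2 \ge 2^{-O(s(n))}$, this forces $d' \le 2^{O(s(n))}$, so \Cref{thm:LCU-averaged-chebyshev-truncation} is applicable; and since $P^{\ln}_{d'}$ is even, only the $d$ Chebyshev polynomials $T_0, T_2, \dots, T_{d'-1}$ enter the linear combination, which is why the output encoding carries $\lceil\log d\rceil + 1$ rather than $\lceil\log d'\rceil + 1$ ancillary qubits.

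Next I would supply the evaluation oracle $\Eval$ required by \Cref{thm:LCU-averaged-chebyshev-truncation}, namely one returning each coefficient to precision $\varepsilon = O(\epsilon_2^2/d')$. Keeping the degree $d$ fixed and running the coefficient-computation procedure of \Cref{corr:space-efficient-log} (i.e.\ \Cref{thm:space-efficient-smooth-funcs} with $\delta' = \Theta(\beta)$, $B = 1/2$) to target precision $\varepsilon$ in place of $\epsilon_2$ — exactly as is done for the sign function in \Cref{corr:sign-polynomial-implementation} — gives a bounded-error randomized algorithm running in time $\tilde{O}(\max\{\beta^{-5}\varepsilon^{-2}, d^2\varepsilon^{-1/2}\}) = \tilde{O}(\max\{\beta^{-5}d^2\epsilon_2^{-4}, d^{5/2}\epsilon_2^{-1}\})$ and space $O(\log(d^3\beta^{-4}\varepsilon^{-3/2})) = O(\log(d^{9/2}\beta^{-4}\epsilon_2^{-3})) = O(s(n))$. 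Amplifying each such call to failure probability $2^{-O(s(n))}$ costs only a $\polylog$ factor, so a union bound survives the $O(d^2)$ calls made to $\Eval$ below. Feeding this oracle and $U$ into the non-isometry case of \Cref{thm:LCU-averaged-chebyshev-truncation} then yields a $(\|\hat{\bfc}\|_1, a+\lceil\log d\rceil+1, 4d'\sqrt{\epsilon_1}\|\hat{\bfc}\|_1^2 + \epsilon_2\|\hat{\bfc}\|_1)$-bitstring indexed encoding $V = V_{\rm unnorm}$ of $P^{\ln}_{d'}(A) = (P^{\ln}_{d'})^{\SV}(A)$ on $s + a + \lceil\log d\rceil + 1$ qubits; substituting $\|\hat{\bfc}\|_1 \le \hat{C}_{\ln}$ and $d' \le 2d$ gives the stated error $4d\hat{C}_{\ln}^2\epsilon_1^{1/2} + \hat{C}_{\ln}\epsilon_2$ (up to absorbing a constant factor, and noting as in \Cref{corr:sign-polynomial-implementation} together with \Cref{remark:QSVT-parity-preserving} that $V$ strictly realizes a nearby even polynomial $\tilde{P}^{\ln}_{d'}$), with $\eta_V = 1$, hence $O(d^2)$ uses of $U$, $U^{\dagger}$, $\textsc{C}_{\Pi}\textsc{NOT}$, $\textsc{C}_{\tilde{\Pi}}\textsc{NOT}$, and multi-controlled single-qubit gates.

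Finally, for the cost of computing the circuit description I would combine the (otherwise deterministic) bookkeeping of \Cref{thm:LCU-averaged-chebyshev-truncation} — time $\tilde{O}(d^2\log(d/\epsilon_2))$, space $O(\max\{s(n),\log(d/\epsilon_2^2)\}) = O(s(n))$, plus $O(d^2)$ calls to $\Eval$ — with the per-call cost above, getting total bounded-error randomized time $O(d^2)\cdot\tilde{O}(\max\{\beta^{-5}d^2\epsilon_2^{-4}, d^{5/2}\epsilon_2^{-1}\}) + \tilde{O}(d^2\log(d/\epsilon_2)) = \tilde{O}(\max\{\beta^{-5}\epsilon_2^{-4}d^4, \epsilon_2^{-1}d^{9/2}\})$ and space $O(s(n))$, since the $\Eval$ workspace is reused across calls. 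I expect the main obstacle to be bookkeeping rather than a new idea: one must propagate the \emph{oracle} precision $\varepsilon = O(\epsilon_2^2/d')$ — not the coarser approximation error $\epsilon_2$ — through the complexity bounds of \Cref{corr:space-efficient-log}, handle the $O(d^2)$ randomized oracle calls with a union bound after mild amplification, and check that with $d \le 2^{O(s(n))}$ and $\beta,\epsilon_2 \ge 2^{-O(s(n))}$ every intermediate magnitude (degree, precision, coefficient size) still fits in randomized/deterministic $O(s(n))$ space, mirroring the corresponding checks in \Cref{corr:sign-polynomial-implementation}.
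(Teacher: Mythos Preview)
Your proposal is correct and follows essentially the same approach as the paper: invoke \Cref{corr:space-efficient-log} with $C_{\ln}\epsilon = \epsilon_2$, instantiate the evaluation oracle at precision $\varepsilon = O(\epsilon_2^2/d')$, apply the non-isometry case of \Cref{thm:LCU-averaged-chebyshev-truncation} with $\|\hat\bfc\|_1 \le \hat C_{\ln}$, and amplify each of the $O(d^2)$ randomized $\Eval$ calls so that a union bound goes through (the paper uses $4\ln d'$ repetitions per call rather than your $2^{-O(s(n))}$ target, but both are absorbed into $\tilde O$). Your explicit remark that the evenness of $P^{\ln}_{d'}$ reduces the LCU index register from $\lceil\log d'\rceil$ to $\lceil\log d\rceil$ qubits is a clarification the paper's proof omits.
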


\begin{proof}
    Following \Cref{corr:space-efficient-log}, we have $P_{d'}^{\ln}(x) = c^{\ln}_0/2+\sum_{k=1}^{d'} c^{\ln}_k T_k(x)$, where $P_{d'}^{\ln}$ corresponds to some degree-$d$ averaged Chebyshev truncation and $d'=2d-1\leq \tilde{C}_{\ln}\beta^{-1}\log\rbra{\epsilon^{-1}}$. For any $\ln_{\beta}(x)$, we have $|\ln_{\beta}(x)-P_{d'}^{\ln}(x)| \leq C_{\ln}\epsilon \coloneqq\epsilon_2$ for all $x\in[\beta,1]$. To implement $\Eval$ with precision $\varepsilon = O(\epsilon_2^2/d)$, we can compute the corresponding entry $c^{\ln}_k$ of the coefficient vector by a bounded-error randomized algorithm. This requires $O(\log(\beta^{-4} \varepsilon^{-3/2} d^3))=O(\log(\beta^{-4} \epsilon_2^{-3} d^{9/2}))$ space and $\tilde{O}(\max\{\beta^{-5}\varepsilon^{-2}, \varepsilon^{-1/2} d^2\}) = \tilde{O}(\max\{\beta^{-5} \epsilon_2^{-4} d^2, \epsilon_2^{-1} d^{5/2}\})$ time. 
    Applying \Cref{thm:LCU-averaged-chebyshev-truncation} with $\|\bfc^{\ln}\|_1 \leq \hat{C}_{\ln}d'^{1/2}$, we conclude that $P_{d'}^{\ln}$ has a $\rbra[\big]{\hat{C}_{\ln}d'^{1/2}, \hat{a}, \rbra{4d'\epsilon_1^{1/2} + \epsilon_2} \hat{C}_{\ln} d'^{1/2}}$-bitstring indexed encoding $V$ that acts on $s+\hat{a}$ qubits, where $\hat{a}\coloneqq a+\lceil\log{d'}\rceil+1$. 

    Furthermore, the quantum circuit of $V$ makes $O((d')^2)=O(d^2)$ uses of $U$, $U^{\dagger}$, $\textsc{C}_{\Pi}\textsc{NOT}$, $\textsc{C}_{\tilde{\Pi}}\textsc{NOT}$, and multi-controlled single-qubit gates. 
    We note that $d'=2d-1 \leq 2^{O(s(n))}$, $\epsilon_2 \geq 2^{-O(s(n))}$, and $\beta \geq 2^{-O(s(n))}$.     
    Additionally, we can compute the description of $V$ in $O(s(n))$ space since each oracle call to $\Eval$ with precision $\varepsilon$ can be computed in $O(\log(\beta^{-4} \epsilon_2^{-3} d^{9/2}))$ space. 
    The time complexity for computing the description of $V$ is given by: 
    \begin{subequations}
    \label{eq:log-poly-time-complexity}
    \begin{align}
        & \max\cbra*{ \tilde{O}(d^2 \log(d/\epsilon_2)), O(d^2) \tilde{O}(\max\{\beta^{-5} \epsilon_2^{-4} d^2, \epsilon_2^{-1} d^{5/2}\}) }  \\
        =~& \tilde{O}(\max\{\beta^{-5} \epsilon_2^{-4} d^4, \epsilon_2^{-1} d^{9/2}\}).
    \end{align}
    \end{subequations}
    
    Finally, to guarantee that the probability that all $O((d')^2)=O(d^2)$ oracle calls to $\Eval$ succeed is at least $2/3$, we use a $\ceil*{4\ln(d'+1)}$-time sequential repetition of $\Eval$ for each oracle call. Together with the Chernoff--Hoeffding bound and the union bound, the resulting randomized algorithm succeeds with probability at least $1-(d')^2 \cdot 2e^{-4\ln(d'+1)} \geq 2/3$. We further note that the time complexity specified in \Cref{eq:log-poly-time-complexity} only increases by a $\ceil*{4\ln(d'+1)}$ factor. 
\end{proof}

\subsection{Application: space-efficient error reduction for unitary quantum computations}
\label{subsec:BQUL-error-reduction}
We provide a unified space-efficient error reduction for unitary quantum computations. In particular, one-sided error scenarios (e.g., $\RQUL$ and $\coRQUL$) have been proven in~\cite{Wat01}, and the two-sided error scenario (e.g., $\BQUL{}$) has been demonstrated in~\cite{FKLMN16}.

\begin{theorem}[Space-efficient error reduction for unitary quantum computations]
    \label{thm:space-efficient-error-reduction-unitaryQC}
    Let $s(n)$ be a space-constructible function, and let $a(n)$, $b(n)$, and $l(n)$ be deterministic $O(s(n))$ space computable functions such that $a(n)-b(n) \geq 2^{-O(s(n))}$, we know that for any $l(n)\leq O(s(n))$, there is $d\coloneqq l(n)/{\max\{\sqrt{a}-\sqrt{b},\sqrt{1-b}-\sqrt{1-a}\}}$ such that
    \[\BQUSPACE[s(n),a(n),b(n)] \subseteq \BQUSPACE\big[s(n)+\lceil\log{d}\rceil+1,1-2^{-l(n)},2^{-l(n)}\big].\]
    Furthermore, for one-sided error scenarios, we have that for any $l(n)\leq 2^{O(s(n))}$: 
    \begin{align*}
    \RQUSPACE[s(n),a(n)] \subseteq \RQUSPACE\big[s(n)+\lceil\log{d_0}\rceil+1,1-2^{-l(n)}\big] \text{ where } d_0\coloneqq \tfrac{l(n)}{{\max\{\sqrt{a},1-\sqrt{1-a}\}}},\\
    \coRQUSPACE[s(n),b(n)] \subseteq \coRQUSPACE\big[s(n)+\lceil\log{d_1}\rceil+1,2^{-l(n)}\big] \text{ where } d_1\coloneqq \tfrac{l(n)}{{\max\{1-\sqrt{b},\sqrt{1-b}\}}}.
    \end{align*}
\end{theorem}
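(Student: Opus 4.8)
The plan is to reduce error amplification for unitary quantum space to a single application of the space-efficient QSVT with the sign polynomial (\Cref{corr:sign-polynomial-implementation}), exactly in the spirit of how the usual (time-bounded) QSVT implements amplitude amplification. First I would recall the standard setup: if $\calL \in \BQUSPACE[s(n),a(n),b(n)]$ via an $s(n)$-space-bounded circuit family $\{C_x\}$, then the acceptance event corresponds to a projector $\Pi_{\out}=\ketbra{1}{1}$ on the output qubit, and the amplitude $\bra{\psi_{\out}}\Pi_{\out} C_x \ket{\bar 0}$ has squared norm $p_x \in [a(n),1]$ on \textit{yes} instances and $p_x \in [0,b(n)]$ on \textit{no} instances. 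Writing $A \coloneqq \Pi_{\out} C_x \Pi_{\In}$ where $\Pi_{\In}=\ketbra{\bar 0}{\bar 0}$, this $A$ has a single nonzero singular value $\sqrt{p_x}$, and $C_x$ itself is a $(1,a_0,0)$-bitstring indexed encoding of $A$ with $\Pi = \Pi_{\In}$, $\tilde\Pi = \Pi_{\out}$ (here the ``bitstrings'' are the all-zero ancilla strings and the single output-bit value, which trivially have space-efficient set membership). The singular value we want to separate is $\sigma = \sqrt{p_x}$, which lies in $[\sqrt{a},1]$ versus $[0,\sqrt{b}]$.

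Next I would invoke \Cref{corr:sign-polynomial-implementation} with the Hermitian reformulation: apply the space-efficient sign polynomial $P_{d'}^{\sign}$ with gap parameter $\delta \coloneqq (\sqrt a + \sqrt b)/2$ (so that $[\sqrt a, 1]$ and $[-1,-\sqrt b]\cup[\sqrt b, 1]$ sit outside $[-\delta,\delta]$ on the relevant side) and precision parameter $\epsilon$ small enough that $C_{\sign}\epsilon$ is, say, $2^{-l(n)-2}$. Since $d' = O(\delta^{-1}\log\epsilon^{-1})$ and $\delta^{-1} = O(1/(\sqrt a - \sqrt b))$ — using $\sqrt a - \sqrt b \geq 2^{-O(s(n))}$, which follows from $a - b \geq 2^{-O(s(n))}$ since $a-b = (\sqrt a-\sqrt b)(\sqrt a + \sqrt b) \leq 2(\sqrt a - \sqrt b)$ — we get $d' \leq 2^{O(s(n))}$, so the hypotheses of \Cref{corr:sign-polynomial-implementation} hold. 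This yields a $(1, a_0+\lceil\log d'\rceil+3, \epsilon')$-bitstring indexed encoding $V$ of $P_{d'}^{\sign,\SV}(A)$, built from $O(d^2)$ applications of $C_x$ and its adjoint and controlled reflections, whose description is computable in $O(s(n))$ space; measuring the appropriate output register of $V$ accepts with probability at least $1 - 2^{-l(n)}$ on \textit{yes} instances and at most $2^{-l(n)}$ on \textit{no} instances, because $P_{d'}^{\sign}(\sigma) \approx 1$ when $\sigma \geq \sqrt a$ and $\approx 0$ when $\sigma \leq \sqrt b$ (more precisely $|P_{d'}^{\sign}|\leq 1$ everywhere and it is $2^{-l(n)-2}$-close to $\pm 1$ outside the gap, so the squared acceptance amplitude has the claimed bounds after a constant-factor adjustment of $\epsilon$). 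The new circuit acts on $s(n) + \lceil\log d\rceil + O(1)$ qubits, matching the stated bound once we fold the $O(1)$ into the gateset convention; the extra additive error $\epsilon' = 144\hat C_{\sign}^2 d \epsilon_1^{1/2} + O(\epsilon_2)$ vanishes because $\epsilon_1 = 0$ here.

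For the one-sided cases I would use the one-sided structure: in $\RQUSPACE[s(n),a(n)]$ the \textit{no} acceptance probability is exactly $0$, so $\sigma \in \{0\}\cup[\sqrt a,1]$, and applying the sign polynomial with $\delta = \sqrt a/2$ maps $0 \mapsto 0$ exactly (the sign polynomial is odd, hence $P_{d'}^{\sign}(0)=0$, a point emphasized in \Cref{remark:QSVT-parity-preserving}), so perfect completeness on the rejecting side is preserved and we land in $\RQUSPACE[s(n)+\lceil\log d_0\rceil+1, 1-2^{-l(n)}]$ with $d_0 = l(n)/\max\{\sqrt a, 1-\sqrt{1-a}\}$; the two choices in the max correspond to applying the sign polynomial to $A$ versus to a unitary implementing the ``reject'' amplitude $\sqrt{1-p_x}$, whichever gives the larger effective gap. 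The $\coRQUSPACE$ case is entirely symmetric (now $\sigma \in [0,\sqrt b]\cup\{1\}$ on \textit{yes}, exact on the accepting side), giving $d_1 = l(n)/\max\{1-\sqrt b, \sqrt{1-b}\}$. The main obstacle I anticipate is bookkeeping rather than conceptual: carefully checking that the bitstring indexed encoding $C_x$ of $A$ genuinely has space-efficiently decidable projector subsets (it does, trivially), that the singular-value picture correctly tracks acceptance probability through the controlled-$\Pi$/controlled-$\tilde\Pi$ reflections even when $\Pi \neq \tilde\Pi$ (so one must use the genuine QSVT and not merely quantum signal processing on a single qubit), and that the degree $d$ chosen as $l(n)/\max\{\sqrt a - \sqrt b, \sqrt{1-b}-\sqrt{1-a}\}$ indeed produces the target error $2^{-l(n)}$ — this requires tracing the constant $C_{\sign}$ and the $\log\epsilon^{-1}$ dependence in \Cref{corr:space-efficient-sign} and choosing $\epsilon = 2^{-\Theta(l(n))}$ so that $d' = O(\delta^{-1} l(n))$ as claimed, with the $\max$ in the denominator arising precisely from the freedom to amplify either the accepting or the rejecting amplitude.
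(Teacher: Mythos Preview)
Your proposal has a genuine gap in the two-sided ($\BQUSPACE$) case. You write that ``$P_{d'}^{\sign}(\sigma) \approx 1$ when $\sigma \geq \sqrt a$ and $\approx 0$ when $\sigma \leq \sqrt b$'', but the second claim is false: the approximation guarantee of \Cref{corr:space-efficient-sign} is $|P_{d'}^{\sign}(x)-\sign(x)|\leq C_{\sign}\epsilon$ only on $[-1,1]\setminus[-\delta,\delta]$, and $\sign(x)=+1$ for every $x>0$. With your choice $\delta=(\sqrt a+\sqrt b)/2$, a no-instance singular value $\sigma\in(0,\sqrt b]$ lies \emph{inside} $[-\delta,\delta]$, where the only guarantee is $|P_{d'}^{\sign}(\sigma)|\leq 1$; in particular $P_{d'}^{\sign}(\sigma)$ may well be close to $1$ (indeed $\erf(\kappa\sigma)\to 1$ as $\sigma\uparrow\delta$), so the acceptance amplitude is not driven toward $0$. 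The plain sign polynomial thresholds at $0$, not at a positive value, and therefore cannot separate small positive singular values from large positive ones.

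The paper handles this by routing through \Cref{lemma:space-efficient-singular-value-discrimination}: instead of $\sign(x)$ one uses the odd function
\[
Q(x)=\tfrac{1}{2}\big[(1-\tfrac{\varepsilon}{2})\sign(x+t)+(1-\tfrac{\varepsilon}{2})\sign(x-t)+\varepsilon\,\sign(x)\big]
\]
with $t=(\sqrt a+\sqrt b)/2$, which evaluates to $1$ on $[t+\delta,1]$ and to $\varepsilon/2$ on $(0,t-\delta]$; its space-efficient polynomial approximation (a convex combination of three shifted copies of $P_{d'}^{\sign}$) is then fed into the QSVT machinery. This is the missing ingredient, and it is also where the $\max\{\sqrt a-\sqrt b,\sqrt{1-b}-\sqrt{1-a}\}$ in the degree bound actually arises (via the choice $\Pi'=\tilde\Pi$ versus $\Pi'=I-\tilde\Pi$ in that lemma). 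Your one-sided arguments are essentially fine: in $\RQUSPACE$ the no-instance singular value is exactly $0$, so oddness gives $P_{d'}^{\sign}(0)=0$ exactly; the $\coRQUSPACE$ case works after passing to the rejecting projector as you indicate. But for $\BQUSPACE$ you need a genuine threshold-at-$t$ polynomial, not the raw sign polynomial.
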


By choosing $s(n)=\Theta(\log(n))$, we derive error reduction for logarithmic-space quantum computation in a unified approach:\footnote{It is noteworthy that the promise errors in~\cite[Corollary 2]{FKLMN16} can be reduced to $2^{-l(n)}$ for any function $l(n)$ that is polynomial in $n$, which is \textit{exponentially} smaller than the promise errors in \Cref{corr:error-reduction-untary-quantum-logspace}. Intuitively, the approaches in~\cite{FKLMN16} rely on probability behavior --- for example, repeating a procedure polynomially many times yields an exponentially small promise error --- whereas our approach relies on polynomial approximations, where the use of numerical integration is limited by logarithmic-bit numerical precision. \label{footnote:space-efficient-error-reduction-subtly}}  
\begin{corollary}[Error reduction for $\BQUL$, $\RQUL$, and $\coRQUL$]
    \label{corr:error-reduction-untary-quantum-logspace}
    For deterministic logspace computable functions $a(n)$, $b(n)$, and $l(n)$ satisfying $a(n)-b(n) \geq 1/\poly(n)$ and $l(n)\leq O(\log{n})$, we have the following inclusions: 
    \begin{align*}
    \BQUL[a(n),b(n)] &\subseteq \BQUL[1-2^{-l(n)},2^{-l(n)}], \\
    \RQUL[a(n)] &\subseteq \RQUL[1-2^{-l(n)}], \\
    \coRQUL[b(n)] &\subseteq \coRQUL[2^{-l(n)}].
    \end{align*}
\end{corollary}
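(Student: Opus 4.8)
The plan is to recast error reduction as a singular value transformation problem and then apply the space-efficient QSVT of \Cref{sec:space-efficient-QSVT}, with the sign-polynomial construction of \Cref{corr:space-efficient-sign,corr:sign-polynomial-implementation} (and the Chebyshev-polynomial implementation \Cref{lemma:Chebyshev-poly-implementation}) doing the amplification. Given a promise problem in $\BQUSPACE[s(n),a(n),b(n)]$ with circuit family $\{C_x\}$ on $k(n)=O(s(n))$ qubits and designated output qubit, write $p_x:=\Pr[C_x\text{ accepts }x]$ and put $\Pi:=\ketbra{\bar0}{\bar0}$, $\tilde\Pi:=\ketbra{1}{1}_{\mathrm{out}}\otimes I$; then $C_x$ is a $(1,0,0)$-bitstring indexed encoding (\Cref{def:bitstring-indexed-encoding}) of the rank-$\le1$ operator $A_x:=\tilde\Pi C_x\Pi$, whose unique nonzero singular value is $\sigma_x=\sqrt{p_x}$, and both projections have space-efficient set membership. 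Passing to the Hermitianisation $\widehat A_x:=\ketbra{0}{1}\otimes A_x+\ketbra{1}{0}\otimes A_x^{\dagger}$ on one extra qubit (so that \Cref{def:matrix-SV-function} applies directly) yields, from $C_x$ and $C_x^{\dagger}$, a $(1,O(1),0)$-bitstring indexed encoding of a Hermitian operator with eigenvalues $\pm\sqrt{p_x}$ (and $0$'s). Working with the amplitude $\sqrt{p_x}$ rather than with $p_x$ is precisely what produces the gap $\sqrt a-\sqrt b$ (instead of $a-b$) in the final degree; and for any bounded odd/even polynomial $P$, the amplified circuit obtained below accepts with probability $|P(\sqrt{p_x})|^2$ up to the encoding error, so it suffices to sharpen the map $\sqrt{p_x}\mapsto$ value.

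For the two-sided case, set $\mu:=\tfrac{\sqrt a+\sqrt b}{2}$, combine the encoding of $\widehat A_x$ with the trivial encoding of $\mu I$ via \Cref{lemma:space-efficient-LCU} to get an $(\alpha,O(1),0)$-bitstring indexed encoding, $\alpha=1+\mu=O(1)$, of $\tfrac1\alpha(\widehat A_x-\mu I)$, whose eigenvalues are $\ge\Theta(\sqrt a-\sqrt b)$ on yes-instances and $\le-\Theta(\sqrt a-\sqrt b)$ on no-instances. Apply the sign QSVT of \Cref{corr:sign-polynomial-implementation} with gap parameter $\delta=\Theta(\sqrt a-\sqrt b)$ and error $\epsilon=2^{-\Theta(l(n))}$; because $\sign^{\SV}$ restricted to the span of eigenvectors of absolute value $>\delta$ is an isometry, the renormalisation of \Cref{lemma:renormalizing-encodings} applies, giving a $(1,\lceil\log d\rceil+O(1),2^{-\Theta(l)})$-bitstring indexed encoding of (an approximation of) $\sign^{\SV}$ of that operator, of degree $d'=2d-1=O\bigl(l(n)/(\sqrt a-\sqrt b)\bigr)$ with $\|\hat\bfc\|_1=O(1)$. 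Forming $\tfrac12(I+\sign^{\SV})$ by one more LCU step produces a circuit that, started on $\ket{\bar0}$, accepts with probability $\ge 1-2^{-l}$ on yes-instances (up to an $O(1)$ normalisation factor coming from the Hermitianisation overlap, removed by a final constant number of exact amplification rounds via \Cref{prop:angles-for-Chebyshev-polys}, which add only $O(1)$ ancillas) and $\le 2^{-l}$ on no-instances; running the same construction on the complementary projection $\ketbra{0}{0}_{\mathrm{out}}\otimes I$ replaces $\sqrt a-\sqrt b$ by $\sqrt{1-b}-\sqrt{1-a}$, and taking the better of the two gives the $\max$ in the definition of $d$. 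The circuit acts on $s(n)+\lceil\log d\rceil+1$ qubits, and its classical pre-processing runs in $O(\log(d/\epsilon)+s(n))=O(s(n))$ space exactly because $l(n)\le O(s(n))$ (so that $\epsilon=2^{-\Theta(l)}$ and the coefficient precision $O(\epsilon^2/d)$ of \Cref{corr:space-efficient-sign,thm:LCU-averaged-chebyshev-truncation} fit in $O(s)$ bits), while $a-b\ge 2^{-O(s)}$ ensures $\sqrt a-\sqrt b\ge 2^{-O(s)}$ and hence $\lceil\log d\rceil=O(s)$; this is the first inclusion.

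For the one-sided cases, perfect completeness/soundness is preserved by parity. For $\coRQUSPACE[s,b]=\BQUSPACE[s,1,b]$ one has $\sigma_x=1$ on yes-instances and $\sigma_x\le\sqrt b$ on no-instances, so instead of the sign polynomial use the \emph{even} polynomial $q(x):=T_k\bigl(\tfrac{2x^2}{b}-1\bigr)\big/T_k\bigl(\tfrac2b-1\bigr)$, which is $1$ at $x=1$ exactly, bounded by $1$ on $[0,1]$, and at most $|T_k(\tfrac2b-1)|^{-1}\le 2^{-l/2}$ on $[0,\sqrt b]$ once $k=O\bigl(l(n)/\sqrt{1-b}\bigr)$ (or $O(l(n)/(1-\sqrt b))$, according to the super-unit growth rate of $T_k$); implementing $q^{\SV}$ only uses \Cref{lemma:Chebyshev-poly-implementation} with the explicit angles of \Cref{prop:angles-for-Chebyshev-polys} applied to the affinely shifted Hermitian encoding $\tfrac2b A_x^{\dagger}A_x-I$ built as above, so there is \emph{no} $l$-dependent coefficient precision anywhere (only the single scalar $T_k(\tfrac2b-1)$ is computed, in $O(s+\log k)$ space); this keeps the yes-probability at $1$ exactly and drives the no-probability below $2^{-l}$, adding $\lceil\log d_1\rceil+1$ qubits with $d_1=l(n)/\max\{1-\sqrt b,\sqrt{1-b}\}$, and it tolerates $l(n)\le 2^{O(s(n))}$ since nothing of precision $2^{-\Theta(l)}$ is stored. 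The case $\RQUSPACE[s,a]=\BQUSPACE[s,a,0]$ is symmetric: since no-instances sit at $\sigma_x=0$, a fixed point of every odd polynomial, one amplifies only the yes side with the analogous odd Chebyshev-type polynomial, obtaining $d_0=l(n)/\max\{\sqrt a,1-\sqrt{1-a}\}$. Finally, \Cref{corr:error-reduction-untary-quantum-logspace} is the instantiation $s(n)=\Theta(\log n)$: then $a-b\ge1/\poly(n)$ forces $\sqrt a-\sqrt b\ge1/\poly(n)$, so $d=\poly(n)$, $\lceil\log d\rceil=O(\log n)$, and the added space is $O(\log n)$, while $l(n)\le O(\log n)$ is automatically $\le O(s(n))$ in the two-sided case and $\le 2^{O(s(n))}$ in the one-sided cases.

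The main obstacle I expect is the operator-level bookkeeping: implementing the Hermitianisation and the $-\mu I$ shift so the result stays a genuine \emph{bitstring indexed} encoding with computational-basis projections checkable in $O(s)$ space, tracking the $O(1)$ normalisation factors (in particular why \Cref{lemma:renormalizing-encodings}, which requires an isometry, may still be invoked after restricting to the large-eigenvalue subspace, and how the Hermitianisation overlap is cleaned up by the final constant-overhead rounds), and, in the one-sided cases, pinning down polynomials that simultaneously preserve perfect completeness/soundness by parity, hit the claimed degrees, and have coefficients/rotation-angles free of any $l$-dependent precision — that last property is exactly what buys the $l\le 2^{O(s)}$ regime. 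A secondary point is folding the $\lceil\log d\rceil+O(1)$ QSVT ancillas into the single measured output qubit, done by computing the conjunction ``output $=1$ and ancillas $=\bar0$'' into a fresh qubit before measuring.
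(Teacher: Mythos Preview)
Your proposal is on the right track and captures the paper's core idea: view error reduction as discriminating whether the singular value $\sigma_x=\sqrt{p_x}$ of the bitstring indexed encoding $A_x=\tilde\Pi C_x\Pi$ lies above $\sqrt a$ or below $\sqrt b$, and amplify via a sign-type QSVT. The paper, however, routes everything through a single \emph{singular value discrimination} lemma (\Cref{lemma:space-efficient-singular-value-discrimination}): instead of Hermitianising and then shifting the operator via LCU, it shifts the \emph{polynomial}, applying directly to $A_x$ the odd function $Q(x)=\tfrac12\bigl[(1-\tfrac{\varepsilon}{2})\sign(x+t)+(1-\tfrac{\varepsilon}{2})\sign(x-t)+\varepsilon\,\sign(x)\bigr]$ with $t=\tfrac{\sqrt a+\sqrt b}{2}$, implemented through \Cref{corr:sign-polynomial-implementation}. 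This avoids the extra eigenvalues $-\sqrt{p_x}$ and $0$ your Hermitianisation introduces, the subsequent $\tfrac12(I+\sign)$ step, and the constant rounds of amplification you need to clean up the $\tfrac{1}{\sqrt2}$ Hermitianisation overlap. The two approaches are equivalent in spirit, but the paper's is tidier.

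For the one-sided cases the paper does something simpler than your Chebyshev construction. It observes that by possibly switching from $\tilde\Pi$ to $I-\tilde\Pi$ (so that $\sigma_x'=\sqrt{1-p_x}$), the ``perfect'' side always corresponds to singular value $0$; since $Q$ is odd and the QSVT implementation preserves parity (\Cref{remark:QSVT-parity-preserving}), $P(0)=0$ exactly, and one-sidedness falls out uniformly for both $\RQUL$ and $\coRQUL$. Your $\coRQUL$ polynomial $q(x)=T_k\bigl(\tfrac{2x^2}{b}-1\bigr)/T_k\bigl(\tfrac{2}{b}-1\bigr)$ is a nice idea but does not go through as written: the operator $\tfrac{2}{b}A_x^\dagger A_x-I$ has operator norm $\tfrac{2}{b}-1>1$ on yes-instances, so it cannot be put inside a $(1,*,0)$-encoding to which \Cref{lemma:Chebyshev-poly-implementation} applies; any normalisation by $\tfrac{2}{b}+1$ destroys the exact identity $q(1)=1$ you rely on for perfect completeness. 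The paper's complementary-projection trick sidesteps this entirely. Finally, the corollary itself is obtained exactly as you say, by instantiating $s(n)=\Theta(\log n)$; your accounting of $\lceil\log d\rceil=O(\log n)$ and the classical preprocessing space is correct.
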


The construction described in \Cref{thm:space-efficient-error-reduction-unitaryQC} crucially relies on \Cref{lemma:space-efficient-singular-value-discrimination}, the proof of which follows directly from Theorem 20 in~\cite{GSLW19}.

\begin{lemma}[Space-efficient singular value discrimination]
\label{lemma:space-efficient-singular-value-discrimination}
Let $0 \leq \alpha < \beta \leq 1$ and $U$ be a $(1,0,0)$-bitstring indexed encoding of $A\coloneqq\tilde{\Pi} U \Pi$, where $U$ acts on $s$ qubits and $s(n)\geq \Omega(\log{n})$. 
Consider an unknown quantum state $\ket{\psi}$, with the promise that it is a right singular vector of $A$ with a singular value either above $\beta$ or below $\alpha$. 
There is a degree-$d'$ polynomial $P$, where $d'=O(\delta^{-1} \log \varepsilon^{-1})$ and $\delta \coloneqq \max\{\beta-\alpha,\sqrt{1-\alpha^2}-\sqrt{1-\beta^2}\}/2$, such that there is a singular value discriminator $U_P$ that distinguishes the two cases with error probability at most $\varepsilon \geq 2^{-O(s(n))}$. Moreover, the discriminator $U_P$ achieves one-sided error when $\alpha=0$ or $\beta=1$.

\noindent Furthermore, the quantum circuit implementation of $U_P$ requires $O(d^2\log{d})$ uses of $U$, $U^{\dagger}$, $\textsc{C}_{\Pi}\textsc{NOT}$,  $\textsc{C}_{\tilde{\Pi}}\textsc{NOT}$, and multi-controlled single-qubit gates. In addition, the description of the implementation can be computed in deterministic time $\tilde{O}(\varepsilon^{-1} \delta^{-9/2})$ and space $O(s(n))$. 
\end{lemma}

\begin{proof}
    Let the singular value decomposition of $A$ be $A = W\Sigma V^{\dagger} = \sum_{i} \sigma_i \ket{\tilde{\psi_i}}\bra{\psi_i}$.\footnote{When $\Pi'=I-\tilde{\Pi}$, this SVD notation $A = \tilde{\Pi}U\Pi = \sum_{i} \sigma_i \ket{\tilde{\psi_i}}\bra{\psi_i}$ is instead applied to $(I-\tilde{\Pi})U\Pi$. By \cite[Definition 12]{GSLW18}, the right singular vectors are the same as those of $\tilde{\Pi}U\Pi$, while each singular value $\sigma_i$ is replaced by $\sqrt{1-\sigma_i^2}$. The threshold projectors below are understood with respect to this complementary SVD.} 
    Note that $U$ is a $(1,0,0)$-bitstring indexed encoding, with projections $\tilde{\Pi}$ and $\Pi$, of $A$. 
    Let singular value threshold projectors $\Pi_{\geq\delta}$ and $\Pi'_{\geq\delta}$ be defined as $\Pi_{\geq\delta} \coloneqq \Pi V\Sigma_{\geq \delta} V^{\dagger} \Pi$ and ${\Pi'}_{\geq\delta}\coloneqq{\Pi'} W \Sigma_{\geq \delta} W^{\dagger} {\Pi'}$, respectively, with similar definitions for $\Pi_{\leq \delta}$ and ${\Pi'}_{\leq \delta}$.

    To discriminate whether the singular value corresponding to a given right singular vector of $A$ exceeds a certain threshold, we need an $\varepsilon$-\textit{singular value discriminator} $U_P$. Specifically, it suffices to construct a $(1,a,\tilde{\epsilon})$-bitstring indexed encoding $U_P$ of $P(A)$, associated with an appropriate odd polynomial $P$, that satisfies \Cref{eq:threshold-projections}. The parameters $a$ and $\tilde{\epsilon}$  will be specified later. 
    \begin{subequations}
    \label{eq:threshold-projections}
    \begin{align}
        \textstyle 
        \Big\| \big(\bra{0}^{\otimes a}\otimes{\Pi'}_{\geq t+\delta}\big) U_P \big(\ket{0}^{\otimes a}\otimes\Pi_{\geq t+\delta}\big) - \sum_{i \in \Lambda} \ket{\tilde{\psi}_i}\bra{\psi_i}\Big\| &\leq \varepsilon,\\
        \Big\| \big(\bra{0}^{\otimes a}\otimes{\Pi'}_{\leq t-\delta}\big) U_P \big(\ket{0}^{\otimes a}\otimes\Pi_{\leq t-\delta}\big) - 0 \Big\| &\leq \varepsilon.
    \end{align}
    \end{subequations}
    Here, the index set $\Lambda \coloneqq \{ i \colon \sigma_i \geq t+\delta\}$. Additionally, following the proof in~\cite[Theorem 20]{GSLW19}, $\Pi'$ is defined as $\tilde{\Pi}$ if $\beta-\alpha \geq \sqrt{1-\alpha^2}-\sqrt{1-\beta^2}$, and as $I-\tilde{\Pi}$ otherwise.\footnote{By applying \cite[Definition 12]{GSLW18} (the full version of~\cite{GSLW19}) to $\Pi'\coloneqq I-\tilde{\Pi}$, we know that $\ket{\psi}$ is a right singular vector of $\Pi' U\Pi$ with singular value at least $\sqrt{1-\alpha^2}$ in the first case, or with a singular value of at most $\sqrt{1-\beta^2}$ in the second case. Additionally, in one-sided error scenarios, if $\alpha=0$, then $\beta-\alpha = \beta \geq 1-\sqrt{1-\beta^2}=\sqrt{1-\alpha^2} - \sqrt{1-\beta^2}$; while if $\beta=1$, then $\beta-\alpha=1-\alpha \leq \sqrt{1-\alpha^2} = \sqrt{1-\alpha^2} - \sqrt{1-\beta^2}$.}

    With the construction of this bitstring indexed encoding $U_P$, we can apply an $\varepsilon$-singular value discriminator with $\Pi' = \tilde{\Pi}$ by choosing $t \coloneqq (\alpha+\beta)/2$ and $\delta \coloneqq (\beta-\alpha)/2$; and with $\Pi'=I-\tilde{\Pi}$ by choosing $t\coloneq \rbra[\big]{\sqrt{1-\beta^2}+\sqrt{1-\alpha^2}}/2$ and $\delta \coloneqq \rbra[\big]{\sqrt{1-\alpha^2}-\sqrt{1-\beta^2}}/2$, respectively. Next, we measure $\ket{0}\bra{0}^{\otimes a} \otimes\Pi'$: If the final state is in $\Img\big(\ket{0}\bra{0}^{\otimes a} \otimes\Pi'\big)$, there exists a singular value $\sigma_i$ above $\beta$ (resp., $\sqrt{1-\alpha^2}$); otherwise, all singular values $\sigma_i$ must be below $\alpha$ (resp., $\sqrt{1-\beta^2}$).
    Furthermore, we can make the error one-sided when $\alpha=0$ or $\beta=1$, since a space-efficient QSVT associated with an \textit{odd} polynomial always preserves $0$ singular values (see \Cref{remark:QSVT-parity-preserving}). 
    
    It remains to implement an $\varepsilon$-singular value discriminator $U_P$ for some  odd polynomial $P$. 
    
    \paragraph*{Implementing $\varepsilon$-singular value discriminator.} We begin by considering the following odd function $Q(x)$ such that $Q(A) \approx U_P$ and $Q(A)$ satisfies \Cref{eq:threshold-projections}:   
    \[Q(x)\coloneqq \tfrac{1}{2}\left[\big(1-\tfrac{\varepsilon}{2}\big)\cdot \sign(x+t) + \big(1-\tfrac{\varepsilon}{2}\big)\cdot \sign(x-t) + \varepsilon\cdot \sign(x)\right].\] 
    
    Let $B \coloneqq 1 + \log\frac{1+t}{\delta} + \log\frac{1}{\varepsilon}$, and choose $\epsilon \coloneqq \frac{\varepsilon}{64B \cdot \max\cbra{1,36\hat{C}_{\sign}+37,C_{\sign}+38,\tilde{C}_{\sign}}}$.
    Using the space-efficient polynomial approximation $P_{d'}^{\sign}$ of the sign function (\Cref{corr:space-efficient-sign}), with threshold parameter $\delta/(1+t)$, error parameter $\epsilon$, and the affine maps $x\mapsto (x+t)/(1+t)$ and $x\mapsto (x-t)/(1+t)$, we obtain the following degree-$d'$ polynomial $P$ associated with some degree-$d$ averaged Chebyshev truncation: 
    \begin{equation*}
        P(x)=\tfrac{1}{2}\left[\big(1-\tfrac{\varepsilon}{2}\big)\cdot P_{d'}^{\sign}\rbra[\big]{\tfrac{x+t}{1+t}} + \big(1-\tfrac{\varepsilon}{2}\big)\cdot P_{d'}^{\sign}\rbra[\big]{\tfrac{x-t}{1+t}} + \varepsilon\cdot P_{d'}^{\sign}(x)\right].
    \end{equation*}    
    
    Noting that $P(x)$ is a convex combination of $P_{d'}^{\sign}\rbra[\big]{\tfrac{x+t}{1+t}}$, $P_{d'}^{\sign}\rbra[\big]{\tfrac{x-t}{1+t}}$, and $P_{d'}^{\sign}(x)$, the constant $\hat{C}_{\sign}$ specified in \Cref{corr:space-efficient-sign} remains the same, while $\tilde{C}_{\sign}$ absorbs the factor $2$.
    Hence, $P$ is a polynomial of degree $d'=2d-1 \leq \tilde{C}_{\sign}\frac{1+t}{\delta}\log\frac{1}{\epsilon}$, and the coefficient vector $\hat{\bfc}^{(P)}$ satisfies $\|\hat{\bfc}^{(P)}\|_1 \leq \hat{C}_{\sign} \log{d'}$. 

    Recall the notation $\|f\|_{\calI}$ defined in \Cref{subsubsec:piecewise-smooth}, namely $\|f\|_{\calI} \coloneqq \sup\{|f(x)|: x\in \calI\}$.
    Let $D(x) \coloneqq \sign(x) - P^{\sign}_{d'}$, $\calI_{0} \coloneqq (0,t-\delta]$, and $\calI_{1} \coloneqq [t+\delta,1]$. 
    Following \Cref{corr:space-efficient-sign}, we obtain: 
    \begin{subequations}
        \label{eq:singular-value-discriminator-poly-error}
        \begin{align}
            \|P(x)-Q(x)\|_{[\delta-t,0)} &= \|P(x) - Q(x)\|_{\calI_0}\\
            &\leq \tfrac{2-\varepsilon}{4}\|D\rbra[\big]{\tfrac{x+t}{1+t}}\|_{\calI_0} + \tfrac{2-\varepsilon}{4}\|D\rbra[\big]{\tfrac{x-t}{1+t}}\|_{\calI_0} + \tfrac{\varepsilon}{2} \|D(x)\|_{\calI_0}\\
            &\leq \big(1 - \tfrac{\varepsilon}{2}\big) C_{\sign} \epsilon + \tfrac{\varepsilon}{2},\\
            \|P(x)-Q(x)\|_{[-1,-t-\delta]} &= \|P(x) - Q(x)\|_{\calI_1}\\
            &\leq \tfrac{2-\varepsilon}{4}\|D\rbra[\big]{\tfrac{x+t}{1+t}}\|_{\calI_1} + \tfrac{2-\varepsilon}{4}\|D\rbra[\big]{\tfrac{x-t}{1+t}}\|_{\calI_1} + \tfrac{\varepsilon}{2} \|D(x)\|_{\calI_1}\\
            &\leq \big(1 -\tfrac{\varepsilon}{2} + \tfrac{\varepsilon}{2} \big) C_{\sign} \epsilon.
        \end{align}
    \end{subequations}
    Here, the equalities hold because both $P$ and $Q$ are odd functions. 

    Using \Cref{thm:LCU-averaged-chebyshev-truncation} with $P$, $\epsilon_1 \coloneqq 0$, and $\epsilon_2 \coloneqq \epsilon$, we obtain a $\rbra{\|\hat{\bfc}^{(P)}\|_1, \ceil*{\log{d'}+1}, \epsilon\|\hat{\bfc}^{(P)}\|_1}$-bitstring indexed encoding of $P(A)$. Applying the same renormalization argument as in the proof of \Cref{corr:sign-polynomial-implementation}, we obtain a $(1, a, \tilde{\epsilon})$-bitstring indexed encoding $U_P$ of $P(A)$, with $a \coloneqq \lceil \log{d'} \rceil+3$ and $\tilde{\epsilon} \coloneqq (36 \hat{C}_{\sign} \log{d'} +37) \epsilon$. 
    One can verify that $\rbra{36 \hat{C}_{\sign}\log d' + 37}\epsilon \leq \varepsilon/2$ and $\rbra{C_{\sign}+1}\epsilon \leq \varepsilon/2$.
    Together with \Cref{eq:singular-value-discriminator-poly-error}, we obtain:
    \begin{align*}
    \textstyle 
        \Big\| \big(\bra{0}^{\otimes a}\otimes{\Pi'}_{\geq t+\delta}\big) U_P \big(\ket{0}^{\otimes a}\otimes\Pi_{\geq t+\delta}\big) - \sum_{i\in\Lambda} \ket{\tilde{\psi}_i}\bra{\psi_i}\Big\| &\leq \big(1\!-\!\tfrac{\varepsilon}{2} \!+\! \tfrac{\varepsilon}{2} \big) C_{\sign} \epsilon + \tilde{\epsilon} \leq \varepsilon, \\
        \Big\| \big(\bra{0}^{\otimes a}{\Pi'}_{\leq t-\delta}\big) U_P \big(\ket{0}^{\otimes a}\otimes\Pi_{\leq t-\delta}\big) - 0 \Big\| &\leq C_{\sign} \epsilon + \tfrac{\varepsilon}{2} + \tilde{\epsilon} \leq \varepsilon.
    \end{align*}
    Hence, we conclude that our construction of $U_P$ indeed satisfies \Cref{eq:threshold-projections}. 

    Finally, we analyze the complexity of this $\varepsilon$-singular value discriminator $U_P$. Following \Cref{corr:sign-polynomial-implementation}, the quantum circuit implementation of $U_P$ requires $O(d^2 \log{d})$ uses of $U$, $U^{\dagger}$, $\textsc{C}_{\Pi}\textsc{NOT}$, $\textsc{C}_{\tilde{\Pi}}\textsc{NOT}$, and multi-controlled single-qubit gates. Moreover, we can compute the description of the circuit implementation in deterministic time $\tilde{O}(\epsilon^{-1} d^{9/2}) = \tilde{O}(\varepsilon^{-1} \delta^{-9/2})$ and space $O(s(n))$,
    where $\delta = \max\{\beta-\alpha, \sqrt{1-\alpha^2}-\sqrt{1-\beta^2}\}/2$. 
\end{proof}

\vspace{1em}

Finally, we provide the proof of \Cref{thm:space-efficient-error-reduction-unitaryQC}, which closely relates to Theorem 38 in~\cite{GSLW18} (the full version of~\cite{GSLW19}). 

\begin{proof}[Proof of \Cref{thm:space-efficient-error-reduction-unitaryQC}] 
It suffices to amplify the promise gap by QSVT. 
Note that the probability that a $\BQUSPACE{[s(n)]}$ circuit $C_x$ accepts is 
$\Pr{C_x \text{ accepts}} = \| \ket{1}\bra{1}_{\Out} C_x \ket{0^{k+m}} \|_2^2 \geq a$ for \textit{yes} instances, whereas $\Pr{C_x \text{ accepts }} = \| \ket{1}\bra{1}_{\Out} C_x \ket{0^{k+m}} \|_2^2 \leq b$ for \textit{no} instances. 
Then consider a $(1,0,0)$-bitstring indexed encoding $M_x\coloneqq\Pi_{\Out} C_x \Pi_{\In}$ such that
$\| M_x \| \geq \sqrt{a}$ for \textit{yes} instances while $\| M_x \| \leq \sqrt{b}$ for \textit{no} instances, where $\Pi_{\In}\coloneqq \ketbra{0}{0}^{\otimes k+m}$ and $\Pi_{\Out}\coloneqq \ket{1}\bra{1}_{\Out} \otimes I_{m+k-1}$. 
Since $\|M_x\|=\sigma_{\max}(M_x)$ where $\sigma_{\max}(M_x)$ is the largest singular value of $M_x$, it suffices to distinguish the largest singular value of $M_x$ is either above $\sqrt{a}$ or below $\sqrt{b}$. By setting $\alpha\coloneqq\sqrt{b}$, $\beta\coloneqq\sqrt{a}$, and $\varepsilon\coloneqq2^{-l(n)}$, this task is a direct corollary of \Cref{lemma:space-efficient-singular-value-discrimination}. 
\end{proof}

\section{Space-bounded quantum state testing}
\label{sec:space-bounded-quantum-state-testing}

We begin by defining the problem of quantum state testing in a space-bounded manner: 

\begin{definition}[Space-bounded Quantum State Testing]
    \label{def:space-bounded-quantum-state-testing}
    Given polynomial-size quantum circuits (devices) $Q_0$ and $Q_1$ that act on $O(\log{n})$ qubits and have a succinct description (the ``source code'' of devices), with $r(n)$ specified output qubits, where $r(n)$ is a deterministic logspace computable function such that $0 < r(n) \leq O(\log(n))$. For clarity, $n$ represents the (total) number of gates in $Q_0$ and $Q_1$.\footnote{It is noteworthy that in the time-bounded scenario, the input length of circuits, the size of circuit descriptions, and the number of gates in circuits are polynomially equivalent. However, in the space-bounded scenario, only the last two quantities are polynomially equivalent, and their dependence on the first quantity may be exponential.} Let $\rho_i$ denote the mixed state obtained by running $Q_i$ on the all-zero state $\ket{\bar{0}}$ and tracing out the non-output qubits. 
    
    \noindent We define a \textit{space-bounded quantum state testing} problem, with respect to a specified distance-like measure, to decide whether $\rho_0$ and $\rho_1$ are easily distinguished or almost indistinguishable. 
    Likewise, we also define a \textit{space-bounded quantum state certification} problem to decide whether $\rho_0$ and $\rho_1$ are easily distinguished or \textit{exactly} indistinguishable. 
\end{definition}

We remark that space-bounded quantum state certification, defined in \Cref{def:space-bounded-quantum-state-testing}, represents a ``white-box'' (log)space-bounded counterpart of quantum state certification~\cite{BOW19}. 

\begin{remark}[Lifting to exponential-size instances by succinct encodings]
    \label{remark:succinct-encodings}
    For $s(n)$ space-uniform quantum circuits $Q_0$ and $Q_1$ acting on $O(s(n))$ qubits, if these circuits admit a succinct encoding,\footnote{For instance, the construction in \cite[Remark 11]{FL18}, or~\cite{PY86,BLT92} in general.} namely there is a deterministic $O(s(n))$-space Turing machine with time complexity $\poly(s(n))$ can uniformly generate the corresponding gate sequences, then \Cref{def:space-bounded-quantum-state-testing} can be extended to any $s(n)$ satisfying $\Omega(\log{n}) \leq s(n) \leq \poly(n)$.\footnote{\Cref{def:space-bounded-quantum-state-testing} (mostly) coincides with the case of $s(n)=\Theta(\log{n})$ and directly takes the corresponding gate sequence of $Q_0$ and $Q_1$ as an input. }
\end{remark}

Next, we define space-bounded quantum state testing problems, based on \Cref{def:space-bounded-quantum-state-testing}, with respect to four commonplace distance-like measures.

\begin{definition}[Space-bounded Quantum State Distinguishability Problem, $\GapQSD_{\log}$]
    \label{def:space-bounded-QSD}
    Consider deterministic logspace computable functions $\alpha(n)$ and $\beta(n)$, satisfying $0 \leq \beta(n) < \alpha(n) \leq 1$ and $\alpha(n)-\beta(n) \geq 1/\poly(n)$. Then the promise is that one of the following holds:    
    \begin{itemize}
        \item \textit{Yes} instances: A pair of quantum circuits $(Q_0,Q_1)$ such that $\td(\rho_0,\rho_1) \geq \alpha(n)$;
        \item \textit{No} instances: A pair of quantum circuits $(Q_0,Q_1)$ such that $\td(\rho_0,\rho_1) \leq \beta(n)$. 
    \end{itemize}
    Moreover, we also define the certification counterpart of \GapQSDlog{}, referred to as \CertQSDlog{}, given that $\beta=0$.
    Specifically, $\CertQSDlog[\alpha(n)] \coloneqq \GapQSDlog[\alpha(n),0]$. 
\end{definition}

Likewise, we can define \GapQJSlog{} and \GapQHSlog{}, also the certification version \coCertQHSlog{}, in a similar manner to \Cref{def:space-bounded-QSD} by replacing the distance-like measure accordingly: 
\begin{itemize}
    \item $\GapQJS_{\log}[\alpha(n),\beta(n)]$: Decide whether $\QJS_2(\rho_0,\rho_1) \geq \alpha(n)$ or $\QJS_2(\rho_0,\rho_1) \leq \beta(n)$; 
    \item $\GapQHS_{\log}[\alpha(n),\beta(n)]$: Decide whether $\HS(\rho_0,\rho_1) \geq \alpha(n)$ or $\HS(\rho_0,\rho_1) \leq \beta(n)$. 
\end{itemize}
Furthermore, we use the notation \coCertQSDlog{} to indicate the \textit{complement} of \CertQSDlog{} with respect to the chosen parameter $\alpha(n)$, and so does \coCertQHSlog{}. 

\begin{definition}[Space-bounded Quantum Entropy Difference Problem, $\GapQED_{\log}$]
	\label{def:space-bounded-QED}
     Consider a deterministic logspace computable function $g:\bbN\rightarrow\bbR^+$, satisfying $g(n)\geq 1/\poly(n)$. Then the promise is that one of the following cases holds:     
    \begin{itemize}
	   \item \textit{Yes} instance: A pair of quantum circuits $(Q_0,Q_1)$ such that $\S(\rho_0)-\S(\rho_1) \geq g(n)$;
	   \item \textit{No} instance: A pair of quantum circuits $(Q_0,Q_1)$ such that $\S(\rho_1)-\S(\rho_0) \geq g(n)$.
    \end{itemize}
\end{definition}

\paragraph{Novel complete characterizations for space-bounded quantum computation.}
We now present the main theorems in this section and the paper. 
\Cref{thm:space-bounded-quantum-state-certification-RQL-complete} establishes the first family of natural $\coRQUL$-complete problems. 
By relaxing the error requirement from one-sided to two-sided, \Cref{thm:space-bounded-quantum-state-testing-BQL-complete} identifies a new family of natural \BQL{}-complete problems on space-bounded quantum state testing. It is noteworthy that \Cref{thm:space-bounded-quantum-state-certification-RQL-complete,thm:space-bounded-quantum-state-testing-BQL-complete} also have natural exponential-size up-scaling counterparts.\footnote{We can naturally extend \Cref{thm:space-bounded-quantum-state-certification-RQL-complete,thm:space-bounded-quantum-state-testing-BQL-complete} to their exponential-size up-scaling counterparts with $2^{-O(s(n))}$-precision, employing the extended version of \Cref{def:space-bounded-quantum-state-testing} outlined in \Cref{remark:succinct-encodings}, thus achieving the complete characterizations for $\coRQUSPACE[s(n)]$ and $\BQSPACE[s(n)]$, respectively.}

\begin{theorem}
    \label{thm:space-bounded-quantum-state-certification-RQL-complete}
    The computational hardness of the following \emph{(}log\emph{)}space-bounded quantum state certification problems, for any deterministic logspace computable $\alpha(n) \geq 1/\poly(n)$, is as follows:
    \begin{enumerate}[label={\upshape(\roman*)}]
        \item $\coCertQSDlog[\alpha(n)]$ is \coRQUL{}-complete;     
        \item $\coCertQHSlog[\alpha(n)]$ is \coRQUL{}-complete.             
    \end{enumerate}
\end{theorem}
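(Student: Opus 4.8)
The plan is to prove, for each of the two problems, the easy direction (\coRQUL{}-hardness) and then the two separate containment arguments, the second of which is the crux.

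\textbf{Hardness.} For \coRQUL{}-hardness I would reduce directly from an arbitrary promise problem in \coRQUL{}, crucially \emph{without} invoking the open equivalence $\coRQUL=\coRQL$. Take a unitary logspace circuit family with perfect completeness; first apply the \coRQUL{} error reduction of \Cref{corr:error-reduction-untary-quantum-logspace} so that no-instances are accepted with probability at most $\epsilon(n)=2^{-l(n)}$ for a logspace-computable $l(n)=\Theta(\log n)$. Then set $Q_1$ to output the one-qubit state $\ket{1}$ and let $Q_0$ run the circuit and output its decision qubit. On yes-instances $\rho_0=\ket{1}\bra{1}=\rho_1$; on no-instances $\rho_0=\diag(1-p,p)$ with $p\le\epsilon(n)$, so $\td(\rho_0,\rho_1)=1-p$ and $\HS(\rho_0,\rho_1)=(1-p)^2$, both at least $1-\epsilon(n)$. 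Choosing $\epsilon(n)$ small enough makes this a valid instance for every logspace-computable $\alpha(n)\le 1-1/\poly(n)$, with the remaining range reached by a routine argument (tensoring a logarithmic number of copies of the one-qubit outputs, or the identity promise-restriction reduction from a constant-$\alpha$ instance). The same construction handles both $\coCertQSDlog$ and $\coCertQHSlog$.

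\textbf{Containment of $\coCertQHSlog$.} I would build one unitary logspace circuit whose acceptance probability is an affine function of $\HS(\rho_0,\rho_1)$ whose value at $\HS=0$ is a fixed, \emph{state-independent} constant. Using $\HS(\rho_0,\rho_1)=\tfrac12(\Tr\rho_0^2+\Tr\rho_1^2)-\Tr\rho_0\rho_1$, introduce the classical-quantum states $\sigma=\tfrac12\ket{+}\!\bra{+}\otimes\rho_0+\tfrac12\ket{-}\!\bra{-}\otimes\rho_1$ and $\sigma'=\tfrac12\ket{+}\!\bra{+}\otimes\rho_1+\tfrac12\ket{-}\!\bra{-}\otimes\rho_0$, each preparable unitarily from $Q_0,Q_1$ by a coined controlled construction followed by tracing out the coin; orthogonality of $\ket{+}\!\bra{+}$ and $\ket{-}\!\bra{-}$ gives $\Tr\sigma^2=\tfrac14(\Tr\rho_0^2+\Tr\rho_1^2)$ and $\Tr\sigma\sigma'=\tfrac12\Tr\rho_0\rho_1$, hence $\HS(\rho_0,\rho_1)=2\Tr\sigma^2-2\Tr\sigma\sigma'$. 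Conditioning on a control bit $b$, run a SWAP test (\Cref{lemma:swap-test}) on two copies of $\sigma$ when $b=0$ and a SWAP test on $\sigma,\sigma'$ with its outcome flipped when $b=1$; with $b$ uniformly superposed and only the single outcome qubit measured, the two branches do not interfere and the acceptance probability is exactly $\tfrac12+\tfrac18\HS(\rho_0,\rho_1)$ --- exactly $\tfrac12$ on yes-instances, at least $\tfrac12+\alpha(n)/8$ on no-instances. Appending a fixed one-qubit rotation (a legal gate by \Cref{remark-RQUL-gateset-dependent}) to turn the value $\tfrac12$ into $\sin^2(\tfrac{\pi}{2(2k+1)})$ for a suitable constant $k$, then running $k$ rounds of exact amplitude amplification, yields a circuit with perfect completeness and soundness $1-\Omega(\alpha(n)^2)$ (the amplification map is monotone on the relevant angle range and its defect grows like the square of the angle perturbation, which is linear in $\HS$); a final call to \Cref{corr:error-reduction-untary-quantum-logspace} drops the soundness below $1/2$, placing $\coCertQHSlog$ in \coRQUL{}.

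\textbf{Containment of $\coCertQSDlog$, and the main obstacle.} The trace distance has no finite overlap expansion, so instead I would run the state-testing template $\calT(Q_i,U_A,P_{d'}^{\sign})$ of the proof overview with $A=\tfrac{\rho_0-\rho_1}{2}$. Build an \emph{exact} $(1,a,0)$ bitstring-indexed encoding $U_A$ of the Hermitian operator $A$ by combining the standard purified-density-matrix block encodings of $\rho_0,\rho_1$ via the LCU of \Cref{lemma:space-efficient-LCU} with coefficients $\tfrac12,-\tfrac12$ (so the LCU errors vanish), take the logspace-computable \emph{odd} sign polynomial $P_{d'}^{\sign}$ of \Cref{corr:space-efficient-sign} with $d'=\poly(n)$ and window $\delta=1/\poly(n)$, apply \Cref{thm:space-efficient-QSVT} via \Cref{corr:sign-polynomial-implementation} to obtain a logspace circuit encoding $P_{d'}^{\sign}(A)$, and feed it into a Hadamard test against $\rho_i$, which outputs $0$ with probability $\tfrac12(1+\Tr(P_{d'}^{\sign}(A)\rho_i))$; combining the $i=0,1$ runs through a uniformly superposed control bit exactly as above gives acceptance probability $\tfrac12+\tfrac14(\Tr(P_{d'}^{\sign}(A)\rho_0)-\Tr(P_{d'}^{\sign}(A)\rho_1))$. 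On yes-instances $A=0$ exactly, and because the averaged-Chebyshev QSVT preserves parity (\Cref{remark:QSVT-parity-preserving}: the actually-implemented polynomial is odd and vanishes at $0$) the encoded operator is exactly $0$, so the acceptance probability is exactly $\tfrac12$, independent of $\rho_0$; on no-instances $\sign(A)=\Pi_+-\Pi_-$ gives $\Tr(\sign(A)\rho_0)-\Tr(\sign(A)\rho_1)=\Tr|\rho_0-\rho_1|=2\td(\rho_0,\rho_1)$, and replacing $\sign$ by $P_{d'}^{\sign}$ costs at most $2\epsilon+4N\delta$ with $N\le\poly(n)$ the output dimension, so that for $\epsilon,\delta=1/\poly(n)$ small enough the acceptance probability is at least $\tfrac12+\Omega(\alpha(n))$. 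The endgame is then identical to the Hilbert--Schmidt case: a normalizing rotation, exact amplitude amplification to get perfect completeness and soundness $1-\Omega(\alpha(n)^2)$, and \Cref{corr:error-reduction-untary-quantum-logspace}. I expect the main obstacle to be precisely this: securing an acceptance probability that is a \emph{fixed} constant on yes-instances despite the inherent implementation error $\epsilon_\alpha$ of \Cref{thm:space-efficient-QSVT} --- which is exactly what parity-preservation of the QSVT provides --- together with the quantitative bookkeeping of the two error sources (implementation error versus the $(-\delta,\delta)$ truncation error) against the $1/\poly$ promise gap, and verifying that exact amplitude amplification is realizable with only $O(\log n)$ extra qubits and a deterministically computable angle.
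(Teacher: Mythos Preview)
Your proposal is correct and follows the paper's high-level architecture (exact amplitude amplification to secure perfect completeness, parity preservation of the odd sign-QSVT to make the yes-instance acceptance probability a state-independent constant, then \coRQUL{} error reduction), but the concrete circuits differ in each of the three parts. For \coCertQSDlog{} the paper runs the two testers $U_i=\calT(Q_i,U_{(\rho_0-\rho_1)/2},P_{d'}^{\sign},\epsilon)$ \emph{separately}, applies the Hadamard-halving and one Grover step to each, and accepts only if all four outcome bits vanish; its soundness goes through $\max\{|p_0-\tfrac12|,|p_1-\tfrac12|\}\ge\varepsilon/2$ and the product $\sin^2(3\theta_0)\sin^2(3\theta_1)$ (\Cref{prop:coCertQSDlog-soundness-upper-bound}). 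Your single combined tester (superposed control bit, flip on the $i{=}1$ branch) is cleaner and reaches the same $1-\Omega(\alpha^2)$ soundness with one Grover step. For \coCertQHSlog{} the paper takes a heavier route: it converts the three SWAP tests into block-encodings of the one-qubit diagonal states $\varrho(p_{ij})$ via \Cref{lemma:purified-density-matrix}, LCU-combines them into a block-encoding $U$ of $\varrho\bigl(\tfrac12+\tfrac14\HS\bigr)$, and uses the \emph{amplitude} $\langle\bar 0|U|\bar 0\rangle=\tfrac12+\tfrac14\HS$ directly, so $\sin\theta=\tfrac12$ on yes-instances and no halving is needed before the Grover step. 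Your $\sigma,\sigma'$ construction avoids the block-encoding machinery entirely and simply reuses the QSD template at the cost of one extra Hadamard; this is more elementary. For hardness the paper uses two different reductions (the one-qubit fidelity argument plus Fuchs--van de Graaf for trace distance, and a separate pure-state construction $C'_x=C_x^\dagger X^\dagger\CNOT\, X C_x$ for Hilbert--Schmidt), whereas your single one-qubit-output construction handles both simultaneously and matches the paper's range $\alpha\le 1-1/\poly(n)$ from \Cref{corr:GapQSDlog-GapQHSlog-BQL-coRQL-hard}. One caveat: your ``remaining range'' aside is moot --- neither tensoring copies nor promise restriction pushes hardness to $\alpha$ closer to $1$, and the paper does not claim that range either.
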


\begin{theorem}
    \label{thm:space-bounded-quantum-state-testing-BQL-complete}
    The computational hardness of the following \emph{(}log\emph{)}space-bounded quantum state testing problems, where $\alpha(n)-\beta(n) \geq 1/\poly(n)$ or $g(n) \geq 1/\poly(n)$ as well as $\alpha(n)$, $\beta(n)$, $g(n)$ can be computed in deterministic logspace, is as follows:
    \begin{enumerate}[label={\upshape(\roman*)}]
        \item $\GapQSDlog[\alpha(n),\beta(n)]$ is \BQL{}-complete; 
        \item $\GapQEDlog[g(n)]$ is \BQL{}-complete; 
        \item $\GapQJSlog[\alpha(n),\beta(n)]$ is \BQL{}-complete; 
        \item $\GapQHSlog[\alpha(n),\beta(n)]$ is \BQL{}-complete. 
    \end{enumerate}
\end{theorem}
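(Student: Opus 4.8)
The plan is to prove each of the four problems \BQL{}-complete by establishing, separately, \BQL{}-membership and \BQL{}-hardness. Throughout we use $\BQL=\BQUL$~\cite{FR21}, so that both the circuits $C_x$ appearing in the hardness reductions and the algorithms we design may be taken unitary, together with the unified error-reduction of \Cref{corr:error-reduction-untary-quantum-logspace}. The unifying engine for membership is the state-testing framework $\calT(Q_\rho,U_A,P_{d'})$ of \Cref{fig:framework-state-testing-intro}: its Hadamard-test qubit measures $0$ with probability $\tfrac12\bigl(1+\Tr(P_{d'}^{\SV}(A)\rho)\bigr)$, so whenever $U_{P_{d'}(A)}$ is an $O(\log n)$-qubit block-encoding whose circuit description is produced by a deterministic (or, with $\BPL\subseteq\BQL$ in mind, randomized) logspace machine, \Cref{lemma:success-probability-estimation} lets us estimate $\Tr(P_{d'}^{\SV}(A)\rho)$ to additive error $1/\poly(n)$ inside \BQL{}. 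The four cases instantiate this.

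For $\GapQHSlog$ no QSVT is needed: write $\HS(\rho_0,\rho_1)=\tfrac12(\Tr(\rho_0^2)+\Tr(\rho_1^2))-\Tr(\rho_0\rho_1)$ and estimate each term by sequential repetitions of the SWAP test (\Cref{lemma:swap-test}). For $\GapQSDlog$, set $A\coloneqq(\rho_0-\rho_1)/2$ (so $\|A\|\le\tfrac12$), form a block-encoding of $A$ from standard block-encodings of $\rho_0,\rho_1$ via \Cref{lemma:space-efficient-LCU} with coefficients $(\tfrac12,-\tfrac12)$, apply \Cref{corr:sign-polynomial-implementation} with $d=\poly(n)$ and $\delta=1/\poly(n)$ to get $U_{P_{d'}^{\sign}(A)}$, and run $\calT(Q_i,U_A,P_{d'}^{\sign})$ for $i\in\{0,1\}$; using $\td(\rho_0,\rho_1)=\Tr\bigl(\tfrac12(I+\sign^{\SV}(A))(\rho_0-\rho_1)\bigr)$, the difference of the two outcome probabilities equals $\td(\rho_0,\rho_1)$ up to an error governed by $\epsilon$, by $\delta$ (eigenvalues of $A$ in $(-\delta,\delta)$ contribute at most $\poly(n)\cdot\delta$ since $\|P_{d'}^{\sign}\|_{[-1,1]}\le1$), and by the estimation accuracy, all made $\le g(n)/4$ with $\alpha-\beta\ge1/\poly(n)$. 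For $\GapQEDlog$, use $\S(\rho)=\Tr(\rho\ln(1/\rho))$ and estimate $\Tr(\rho_i P_{d'}^{\ln}(\rho_i))$ via $\calT(Q_i,U_{\rho_i},P_{d'}^{\ln})$ with $P_{d'}^{\ln}$ from \Cref{corr:log-polynomial-implementation} and $\beta=1/\poly(n)$: eigenvalues of $\rho_i$ below $\beta$ contribute at most $\poly(n)\cdot\beta$ to both $\Tr(\rho_i P_{d'}^{\ln}(\rho_i))$ and $\S(\rho_i)/(2\ln(2/\beta))$ (using $x\ln(1/x)\to0$ and $\|P_{d'}^{\ln}\|_{[-1,1]}\le1$), and those above $\beta$ are handled by the approximation guarantee, so $2\ln(2/\beta)\,\Tr(\rho_i P_{d'}^{\ln}(\rho_i))$ approximates $\S(\rho_i)$ to error $o(g(n))$, and subtracting the two estimates decides $\GapQEDlog$. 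For $\GapQJSlog$, expand $\QJS_2(\rho_0,\rho_1)=\bigl(\S(\tfrac{\rho_0+\rho_1}{2})-\tfrac12\S(\rho_0)-\tfrac12\S(\rho_1)\bigr)/\ln2$; since $\tfrac{\rho_0+\rho_1}{2}$ is prepared by a logspace circuit that flips a coin and runs $Q_0$ or $Q_1$, this reduces to three invocations of the $\GapQEDlog$ subroutine. In every case the circuits act on $O(\log n)$ qubits, and the preprocessing (computing $\hat{\bfc}$) is deterministic logspace for $\sign$ and bounded-error randomized logspace for $\ln$, absorbed via $\BPL\subseteq\BQL$ and the convention of \Cref{subsec:proof-technique-space-efficient-QSVT} that preprocessing counts as part of the quantum computation.

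For hardness, fix $\calL\in\BQL=\BQUL$ with unitary circuits $C_x$, and amplify via \Cref{corr:error-reduction-untary-quantum-logspace} so that $p\coloneqq\Pr{C_x\text{ accepts}}\ge1-2^{-n}$ on yes-instances and $\le2^{-n}$ on no-instances. Let $Q_0$ run $C_x$ and output its single output qubit, so $\rho_0=\diag(1-p,p)$, and let $Q_1$ output $\ketbra{0}{0}$; then $\td(\rho_0,\rho_1)=p$ and $\HS(\rho_0,\rho_1)=p^2$, which gives \BQL{}-hardness of $\GapQSDlog$ and $\GapQHSlog$. For $\GapQEDlog$ and $\GapQJSlog$ the reductions are adapted from~\cite{Liu23}: one engineers, from $C_x$, a pair of circuits whose entropy difference (resp.\ quantum Jensen--Shannon divergence) tracks $p$ — the key gadget being the joint entropy theorem (\Cref{lemma:joint-entropy-theorem}) applied to a classical--quantum register carrying $C_x$'s output — after which closure of \BQL{} under complement~\cite{Wat03,FR21} supplies the reversed promise for these symmetric measures.

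The main obstacle is the \BQL{}-containment of $\GapQSDlog$, and through it of $\GapQEDlog$ and $\GapQJSlog$. Two difficulties interlock. First, $P_{d'}^{\sign}$ (resp.\ $P_{d'}^{\ln}$) only approximates its target away from a small window near $0$, so one must convert this into a genuine additive bound on $\td(\rho_0,\rho_1)$ (resp.\ on $\S(\rho_0)-\S(\rho_1)$), carefully controlling the near-zero spectrum of $(\rho_0-\rho_1)/2$ (resp.\ of $\rho_i$) and choosing $d,\delta,\epsilon,\beta$ consistently against the $1/\poly(n)$ promise gap. Second, every ingredient — the LCU block-encoding of $(\rho_0-\rho_1)/2$, the QSVT implementing $P_{d'}(A)$, any (robust oblivious) amplitude amplification, and the success-probability estimation — must run on $O(\log n)$ qubits with a logspace machine emitting the circuit description; this is precisely why off-the-shelf angle-finding-based QSVT does not suffice and the space-efficient variant (\Cref{thm:space-efficient-QSVT}, with its instantiations \Cref{corr:sign-polynomial-implementation} and \Cref{corr:log-polynomial-implementation}) is required, including the fact that it preserves parity so that $P_{d'}^{\sign}(0)=0$.
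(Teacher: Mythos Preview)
Your membership arguments match the paper's: \GapQHSlog{} via three SWAP tests, \GapQSDlog{} via the tester with $P_{d'}^{\sign}$ applied to a block-encoding of $(\rho_0-\rho_1)/2$, and \GapQEDlog{} via $P_{d'}^{\ln}$ applied to block-encodings of $\rho_0,\rho_1$. For \GapQJSlog{} you estimate three entropies directly, whereas the paper gives a logspace Karp reduction to \GapQEDlog{} (building the classical--quantum states $\rho'_0,\rho'_1$ so that $\S(\rho'_0)-\S(\rho'_1)$ equals $\QJS_2(\rho_0,\rho_1)$ up to a shift); the paper explicitly remarks that your three-estimation route also works, so this is only a stylistic difference.

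There is, however, a concrete error in your hardness reduction for \GapQSDlog{} and \GapQHSlog{}. If $Q_0$ runs $C_x$ and outputs the single output qubit, the reduced state is
\[
\rho_0=\Tr_{\overline{\Out}}\!\bigl(C_x\ketbra{\bar 0}{\bar 0}C_x^{\dagger}\bigr)
=\begin{pmatrix}1-p & a\\ \bar a & p\end{pmatrix},\qquad |a|^2\le p(1-p),
\]
which is \emph{not} $\diag(1-p,p)$ in general; consequently $\td(\rho_0,\ketbra{0}{0})=\sqrt{p^2+|a|^2}$ and $\HS(\rho_0,\ketbra{0}{0})=p^2+|a|^2$, not $p$ and $p^2$. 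The reduction is salvageable (e.g., first apply a \CNOT{} from the output qubit to a fresh ancilla and designate the ancilla as the output, which kills the off-diagonal term), but as written the claimed equalities are false. The paper sidesteps this in two different ways: for \GapQSDlog{} it observes $\Pr{C_x\text{ accepts}}=\F^2(\ketbra{1}{1},\rho_1)$ and converts to trace-distance bounds via Fuchs--van de Graaf (\Cref{lemma:traceDist-vs-fidelity}), and for \GapQHSlog{} it builds a new circuit $C'_x=C_x^{\dagger}X_\sfO^{\dagger}\CNOT_{\sfO\to\sfF}X_\sfO C_x$ so that both states are \emph{pure} and $\HS$ reduces to an overlap. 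Also, the closure-under-complement step you invoke is actually needed already here (the paper proves $\overline{\GapQSD}_{\log}$ and $\overline{\GapQHS}_{\log}$ are \BQUL-hard and then complements), not only for the entropy-based problems.

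Finally, your sketch for \GapQEDlog{}/\GapQJSlog{} hardness is too vague to match either your own description or the paper. The paper does not build a direct gadget from $C_x$ via the joint entropy theorem; it chains reductions: \BQUL{}-hardness of \GapQSDlog{} $\Rightarrow$ \GapQJSlog{} (using the inequalities of \Cref{lemma:QJS-vs-traceDist} between $\td$ and $\QJS_2$) $\Rightarrow$ \GapQEDlog{} (via the same \GapQJSlog$\to$\GapQEDlog{} Karp reduction used for containment). If you intend a direct construction, you would need to spell it out.
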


\vspace{1em}
To establish \Cref{thm:space-bounded-quantum-state-certification-RQL-complete,thm:space-bounded-quantum-state-testing-BQL-complete}, we introduce a general framework for space-bounded quantum state testing in \Cref{subsec:quantum-state-testing-framework}. Interestingly, \BQL{} and \coRQUL{} containments for these problems with respect to different distance-like measures, utilizing our general framework, correspond to approximate implementations of distinct two-outcome measurements. The main technical challenges then mostly involve \textit{parameter trade-offs} when using our space-efficient QSVT to construct these approximate measurement implementations. We summarize this correspondence in \Cref{table:correspondence-distances-measurements} and the associated subsection, which provides the detailed proof. 
\begin{table}[H]
\centering
\begin{tabular}{cccc}
    \toprule
    Distance-like measure & State testing & State certification & \makecell{Two-outcome measurement\\ $\Pi_b$ for $b\in\binset$}\\
    \midrule
    Trace distance
    & \makecell{\GapQSDlog{}\\ \footnotesize{\Cref{subsec:GapQSD-in-BQL}}} 
    & \makecell{\coCertQSDlog{}\\ \footnotesize{\Cref{subsec:coCertQSD-in-coRQL}}} 
    & \makecell{$\frac{I}{2}+\frac{(-1)^b}{2}\sign^{\SV}\big(\frac{\rho_0-\rho_1}{2}\big)$\\ \footnotesize{for $\rho_0$ and $\rho_1$}}\\
    \midrule
    \makecell{Quantum entropy difference\\ Quantum JS divergence} & 
    \makecell{\GapQEDlog{} \\ \GapQJSlog{} \\ \footnotesize{\Cref{subsec:GapQED-in-BQL}}} & None & \makecell{$\frac{I}{2}-\frac{(-1)^b}{2}\cdot \frac{\ln(\rho_i)}{2\ln(2/\beta)}$\\ \footnotesize{for $\rho_i$ where $i \in \binset$}\\ \footnotesize{and $\lambda(\rho_i) \in [-\beta,\beta]$}}\\
    \midrule
    Hilbert--Schmidt distance & 
    \makecell{\GapQHSlog{}\\ \footnotesize{\Cref{subsec:coCertQSD-in-coRQL}}} & 
    \makecell{\coCertQHSlog{}\\ \footnotesize{\Cref{subsec:coCertQSD-in-coRQL}}} & 
    \makecell{$\frac{I}{2}+\frac{(-1)^b}{2}\SWAP$\\ \footnotesize{for $\rho_0\otimes\rho_1$}}\\
    \bottomrule
\end{tabular}
\caption{The correspondence between the distance-like measures and measurements.}
\label{table:correspondence-distances-measurements}
\end{table}

Lastly, the corresponding hardness proof for all these problems is provided in \Cref{subsec:BQL-coRQL-hardness}.

\subsection{Space-bounded quantum state testing: a general framework}
\label{subsec:quantum-state-testing-framework}

In this subsection, we introduce a general framework for quantum state testing that utilizes a quantum tester $\calT$. Specifically, the space-efficient tester $\calT$ succeeds (outputting the value ``$0$'') with probability $x$, which is linearly dependent on some quantity closely related to the distance-like measure of interest. Consequently, we can obtain an additive-error estimation $\widetilde{x}$ of $x$ with high probability through sequential repetition (\Cref{lemma:success-probability-estimation}).

To construct $\calT$, we combine the one-bit precision phase estimation~\cite{Kitaev95}, commonly known as the Hadamard test~\cite{AJL09}, for block-encodings (\Cref{lemma:Hadamard-test}), with our space-efficient quantum singular value transformation (QSVT) technique, which we describe in \Cref{sec:space-efficient-QSVT}.

\begin{figure}[!htp]
    \centering
    \begin{quantikz}[wire types={q,b,b,b}, classical gap=0.06cm, row sep=0.75em]
        \lstick{$\ket{0}$} & \gate{H} & \ctrl{1} & \gate{H} & \meter{} & \setwiretype{c} \rstick{$x$}\\
        \lstick{$\ket{\bar 0}$} &  & \gate[2]{U_{P_{d'}(A)}}  &  &  \\
        \lstick{$\ket{0}^{\otimes r}$} & \gate[2]{\quad Q \quad}  &  &  &  \\
        \lstick{$\ket{\bar 0}$} & & & & 
    \end{quantikz}
    \caption{Quantum tester $\calT(Q,U_A,P_{d'},\epsilon)$: the circuit implementation.}
    \label{fig:framework-state-testing}
\end{figure}
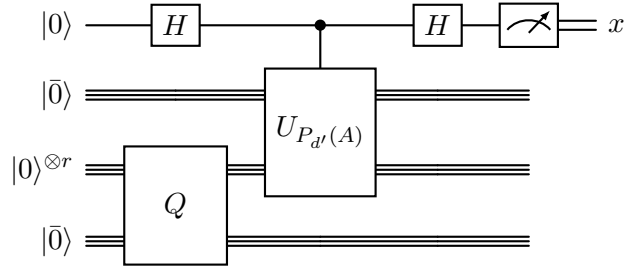

\paragraph*{Constructing a space-efficient quantum tester.}
We now provide a formal definition and the detailed construction of the quantum tester $\calT$. The quantum circuit shown in \Cref{fig:framework-state-testing} defines the quantum tester $\mathcal{T}(Q, U_A, P_{d'}, \epsilon)$ using the following parameters with $s(n)=\Theta(\log{n})$: 
\begin{itemize}[itemsep=0.33em,topsep=0.33em,parsep=0.33em]
    \item A $s(n)$-qubit quantum circuit $Q$ prepares the purification of an $r(n)$-qubit quantum state $\rho$ where $\rho$ is the quantum state of interest; 
    \item $U_A$ is a $(1,s(n)-r(n),0)$-block-encoding of an $r(n)$-qubit Hermitian operator $A$ where $A$ relates to the quantum states of interest and $r(n)\leq s(n)$; 
    \item $P_{d'}$ is the space-efficiently computable degree-$d'$ polynomial defined by \Cref{eq:averaged-Chebyshev-truncation} obtained from some degree-$d$ averaged Chebyshev truncation $P_{d'}$ with $d'=2d-1$, where $P_{d'}(x)=\hat c_0/2+\sum_{k=1}^{d'} \hat c_k T_k(x) \in \mathbb{R}[x]$ and $T_k$ is the $k$-th Chebyshev polynomial, with $d' \leq 2^{O(s(n))}$, such that the coefficients $\hat \bfc\coloneqq(\hat c_0,\cdots,\hat c_{d'})$ can be computed in bounded-error randomized space $O(s(n))$;     
    \item $\epsilon$ is the precision parameter used in the estimation of $x$, with $\epsilon \geq 2^{-O(s(n))}$. 
\end{itemize}

Leveraging our space-efficient QSVT, we assume that there is an $(\alpha,*,*)$-block-encoding of $P_{d'}(A)$, which is an approximate implementation of $U_{P_{d'}(A)}$ in \Cref{fig:framework-state-testing}.
Now, we can define the corresponding estimation procedure, $\hat{\calT}(Q, U_A, P_{d'}, \epsilon, \epsilon_H, \delta)$, namely a quantum algorithm that computes an additive-error estimate $\alpha\widetilde{x}$ of $\operatorname{Re}(\Tr(P_{d'}(A)\rho))$ from the tester $\calT(Q, U_A, P_{d'}, \epsilon)$. 
Technically speaking, $\hat{\calT}$ outputs $\widetilde{x}$ such that $\left| \alpha \widetilde x - \operatorname{Re}(\Tr(P_{d'}(A)\rho)) \right| \leq \|\hat\bfc\|_1 \epsilon + \alpha \epsilon_H$ with probability at least $1-\delta$. 
Now we will demonstrate that both the tester $\calT$ and the corresponding estimation procedure $\hat{\calT}$ are space-efficient: 

\begin{lemma}[Quantum tester $\calT$ and estimation procedure $\hat{\calT}$ are space-efficient]
    \label{lemma:space-efficient-quantum-tester}
    Assume that there is an $(\alpha,*,*)$-block-encoding of $P_{d'}(A)$ that approximately implements $U_{P_{d'}}(A)$, where $\alpha$ is either $\|\hat{\bfc}\|_1$ or $1$ based on conditions of $P_{d'}$ and $A$.
    The quantum tester $\calT(Q, U_A, P_{d'}, \epsilon)$, as specified in \Cref{fig:framework-state-testing}, accepts \emph{(}outputting the value ``$0$''\emph{)} with probability 
    \[\frac{1}{2}\big(1+ \frac{1}{\alpha}\operatorname{Re}(\Tr(P_{d'}(A)\rho))\big) \pm \frac 1 {2 \alpha} \|\hat \bfc\|_1 \epsilon.\]
    In addition, $\hat{\calT}(Q, U_A, P_{d'}, \epsilon, \epsilon_H, \delta)$ outputs $\widetilde x$ such that, with probability at least $1-\delta$, 
    \[\| \alpha \widetilde x - \operatorname{Re}(\Tr(P_{d'}(A)\rho)) \| \leq \|\hat \bfc\|_1 \epsilon + \alpha \epsilon_H.\]
    
    \noindent Moreover, we can compute the quantum circuit description of $\calT$ in deterministic space $O(s+\log(1/\epsilon))$ given the coefficient vector $\hat\bfc$ of $P_{d'}$. 
    Furthermore, we can implement the corresponding estimation procedure $\hat{\calT}$ in bounded-error quantum space $O(s+\log(1/\epsilon) + \log(1/\epsilon_H) + \log \log(1/\delta))$.
\end{lemma}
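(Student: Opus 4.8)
The plan is to analyze the circuit of \Cref{fig:framework-state-testing} directly as a Hadamard test applied to the block-encoding $U_{P_{d'}(A)}$, then track how the approximation errors from the space-efficient QSVT propagate, and finally account for the classical and quantum space budget. First I would recall the standard Hadamard test identity (\Cref{lemma:Hadamard-test}, which we may assume): for a unitary $V$ acting on the work register, the circuit that applies $H$ on an ancilla, then controlled-$V$, then $H$, and measures the ancilla, outputs $0$ with probability $\tfrac12(1+\operatorname{Re}\langle\phi|V|\phi\rangle)$ where $|\phi\rangle$ is the input state of the work register. Here, however, $V$ is not $P_{d'}(A)$ itself but an $(\alpha,*,*)$-block-encoding of it; so the controlled unitary, restricted to the block flagged by $\ket{\bar 0}$ on the block-encoding ancillas and the purification register of $\rho$, acts as $\tfrac{1}{\alpha}P_{d'}(A)$ up to the block-encoding error. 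Writing $\rho = \Tr_{\mathrm{env}}(\ket{\psi_\rho}\bra{\psi_\rho})$ for the purification $\ket{\psi_\rho} = Q\ket{\bar 0}$, the expectation $\langle\psi_\rho|(P_{d'}(A)\otimes I_{\mathrm{env}})|\psi_\rho\rangle = \Tr(P_{d'}(A)\rho)$ by the standard purification-to-trace calculation. Hence the acceptance probability equals $\tfrac12\big(1+\tfrac1\alpha\operatorname{Re}(\Tr(P_{d'}(A)\rho))\big)$ exactly when the block-encoding is exact, and the claimed $\pm\tfrac{1}{2\alpha}\|\hat\bfc\|_1\epsilon$ error term is exactly the propagation of the block-encoding error bound from \Cref{thm:LCU-averaged-chebyshev-truncation}, where the $\epsilon_\alpha$ there is of the form $O(\|\hat\bfc\|_1)$ times the coefficient precision $\epsilon$ (and the $d'\sqrt{\epsilon_1}$ term vanishes here since $U_A$ is an exact $(1,s-r,0)$-block-encoding). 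I would be careful to state the error bound in the operator-norm sense: $\|\tfrac1\alpha P_{d'}(A) - \tilde\Pi V \Pi\| \le \tfrac{1}{\alpha}\|\hat\bfc\|_1\epsilon$, which bounds the change in $\operatorname{Re}(\langle\psi_\rho|\cdot|\psi_\rho\rangle)$ by the same quantity, dividing by $2$ for the Hadamard test and keeping the $\tfrac1\alpha$ normalization.

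Next, for the estimation procedure $\hat\calT$, I would apply \Cref{lemma:success-probability-estimation} (space-efficient success probability estimation) to the tester $\calT$: taking $O(\epsilon_H^{-2}\log(1/\delta))$ sequential repetitions yields an estimate $\hat p$ of the acceptance probability $p = \tfrac12(1+\tfrac1\alpha\operatorname{Re}(\Tr(P_{d'}(A)\rho))) \pm \tfrac{1}{2\alpha}\|\hat\bfc\|_1\epsilon$ with $|\hat p - p| \le \tfrac12\epsilon_H$ except with probability $\delta$. Then set $\widetilde x := 2\hat p - 1$; rescaling by $\alpha$ and combining the two error contributions via the triangle inequality gives $|\alpha\widetilde x - \operatorname{Re}(\Tr(P_{d'}(A)\rho))| \le \|\hat\bfc\|_1\epsilon + \alpha\epsilon_H$ with probability at least $1-\delta$, as claimed. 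The extra $\log\log(1/\delta)$ term in the space bound comes from needing to index the $\log(1/\delta)$-dependent number of repetitions inside the Chernoff amplification, and the $\log(1/\epsilon_H)$ term from indexing the $\epsilon_H^{-2}$ base repetitions.

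For the space accounting of $\calT$ itself: the circuit consists of two Hadamards and one layer of controlled-$U_{P_{d'}(A)}$ and controlled-$Q$. The work register uses $O(s)$ qubits (the purification register is $O(s)$ qubits, the QSVT ancillas add $a + \lceil\log d'\rceil + O(1) = O(s)$ since $d' \le 2^{O(s)}$). Generating the circuit description requires producing the description of $U_{P_{d'}(A)}$, which by \Cref{thm:LCU-averaged-chebyshev-truncation} / \Cref{corr:sign-polynomial-implementation} / \Cref{corr:log-polynomial-implementation} is doable in deterministic space $O(\max\{s,\log(d/\epsilon^2)\}) = O(s + \log(1/\epsilon))$ given the coefficient vector $\hat\bfc$ (which is supplied); the Hadamard test wrapper adds only $O(1)$ classical overhead. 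The estimation procedure $\hat\calT$ reuses the work qubits across repetitions (intermediate measurements are allowed since $\BQL = \BQUL$), so its quantum space is $O(s + \log(1/\epsilon))$ plus the $O(\log(1/\epsilon_H) + \log\log(1/\delta))$ bits of classical counters, giving the stated bound.

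The main obstacle I anticipate is the careful bookkeeping of the normalization constant $\alpha$ and the error term across the block-encoding, the Hadamard test, and the renormalization step of \Cref{lemma:renormalizing-encodings} (used when $P_{d'}(A)$ is an isometry, e.g.\ for $P^{\sign}_{d'}$): there $\alpha = 1$ but the error picks up a multiplicative $36$ and the relevant quantity $\|\hat\bfc\|_1$ appears as $\eta_\alpha$ rather than in $\alpha$, whereas for the non-isometry case (e.g.\ $P^{\ln}_{d'}$) $\alpha = \|\hat\bfc\|_1$ and the roles swap. I would handle this by stating the analysis uniformly in terms of the $(\alpha,*,\epsilon_\alpha)$-block-encoding guarantee with $\epsilon_\alpha = O(\|\hat\bfc\|_1\epsilon)$ regardless of case, so that the final bound $\tfrac{1}{2\alpha}\|\hat\bfc\|_1\epsilon$ on the acceptance-probability error absorbs both situations, and only at the end specialize to the two cases. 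A secondary subtlety is that the polynomial actually implemented is some $\tilde P_{d'}$ slightly different from $P_{d'}$ (as flagged in \Cref{remark:QSVT-parity-preserving}), but since $\|\tilde P_{d'} - P_{d'}\|_{[-1,1]}$ is itself $O(\epsilon)$ this is subsumed into the same error budget and parity is preserved, so no separate treatment is needed.
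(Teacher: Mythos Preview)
Your proposal is correct and follows essentially the same approach as the paper: invoke the Hadamard test for block-encodings (\Cref{lemma:Hadamard-test}) together with the error guarantee from \Cref{thm:LCU-averaged-chebyshev-truncation} (with $\epsilon_1=0$) to obtain the acceptance probability, then apply sequential repetition (\Cref{lemma:success-probability-estimation}) for $\hat\calT$, and finish with the space accounting. The paper handles the two cases $\alpha=\|\hat\bfc\|_1$ and $\alpha=1$ separately rather than via your unified $(\alpha,*,\epsilon_\alpha)$ formulation, but the content is the same; your anticipated subtleties about the renormalization constant $36$ and the distinction between $P_{d'}$ and $\tilde P_{d'}$ are real but, as you note, absorbed into the stated error budget.
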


We first provide two useful lemmas for implementing our quantum tester $\mathcal{T}$. It is noteworthy that \Cref{lemma:purified-density-matrix} originates from~\cite{LC19}, as well as \Cref{lemma:Hadamard-test} is a specific version of one-bit precision phase estimation (or the Hadamard test)~\cite{Kitaev95,AJL09}. 

\begin{lemma} [Purified density matrix, {\cite[Lemma 25]{GSLW19}}]
\label{lemma:purified-density-matrix}
    Suppose $\rho$ is an $s$-qubit density operator and $U$ is an $(a+s)$-qubit unitary operator such that $U \ket{0}^{\otimes a} \ket{0}^{\otimes s} = \ket{\rho}$ and $\rho = \Tr_a(\ket{\rho}\bra{\rho})$.
    Then, we can construct an $O(a+s)$-qubit quantum circuit $\widetilde U$ that is a $(1, a+s, 0)$-block-encoding of $\rho$, using $O(1)$ queries to $U$ and $O(a+s)$ one- and two-qubit quantum gates. 
\end{lemma}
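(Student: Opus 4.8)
The plan is to use the standard ``swap trick'' that converts a purification into a block-encoding. Write the purification as $U\ket{\bar 0}_A\ket{\bar 0}_S=\ket{\rho}_{AS}$, where $A$ denotes the $a$-qubit purifying register that is traced out and $S$ the $s$-qubit system register, so that $\rho=\Tr_A(\ketbra{\rho}{\rho})$; fix a Schmidt decomposition $\ket{\rho}=\sum_j\sqrt{p_j}\,\ket{\phi_j}_A\ket{\psi_j}_S$ with $\{\ket{\phi_j}\}$ and $\{\ket{\psi_j}\}$ orthonormal, whence $\rho=\sum_j p_j\ketbra{\psi_j}{\psi_j}$. First I would adjoin a fresh $s$-qubit register $B$ (initialized to $\ket{\bar 0}$) to act as the working register of the target block-encoding, and define the $(a+2s)$-qubit unitary
\[
\widetilde U\coloneqq\big(U^{\dagger}_{AS}\otimes I_B\big)\,\mathrm{SWAP}_{SB}\,\big(U_{AS}\otimes I_B\big),
\]
where $\mathrm{SWAP}_{SB}$ exchanges the registers $S$ and $B$ and acts trivially on $A$, and where the $a+s$ qubits of $A$ and $S$ together are designated as the ancilla of the block-encoding.

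Second, I would verify the defining identity $(\bra{\bar 0}_A\bra{\bar 0}_S\otimes I_B)\,\widetilde U\,(\ket{\bar 0}_A\ket{\bar 0}_S\otimes I_B)=\rho$ by a direct computation on an arbitrary $\ket{\psi}_B$. Applying $U_{AS}\otimes I_B$ sends $\ket{\bar 0}_A\ket{\bar 0}_S\ket{\psi}_B$ to $\sum_j\sqrt{p_j}\ket{\phi_j}_A\ket{\psi_j}_S\ket{\psi}_B$; the swap sends this to $\sum_j\sqrt{p_j}\ket{\phi_j}_A\ket{\psi}_S\ket{\psi_j}_B$; and applying $U^{\dagger}_{AS}$ and then projecting the $AS$ register onto $\ket{\bar 0}\ket{\bar 0}$ multiplies the $j$-th term (which already carries $\sqrt{p_j}$) by the amplitude $\big(\bra{\bar 0}_A\bra{\bar 0}_S\big)U^{\dagger}\ket{\phi_j}_A\ket{\psi}_S=\overline{(\bra{\phi_j}\otimes\bra{\psi})\ket{\rho}}=\sqrt{p_j}\,\innerprod{\psi_j}{\psi}$, using $p_j\in\bbR$. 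Summing, the output on register $B$ is $\sum_j p_j\innerprod{\psi_j}{\psi}\ket{\psi_j}=\rho\ket{\psi}$, so $\widetilde U$ is exactly a $(1,a+s,0)$-block-encoding of $\rho$, i.e.\ an $(O(a+s),0)$-block-encoding acting on $a+2s=O(a+s)$ qubits. For the resource count: $\widetilde U$ invokes $U$ and $U^{\dagger}$ once each, hence $O(1)$ queries to $U$; the only additional gates are the $s$ two-qubit \SWAP{} gates comprising $\mathrm{SWAP}_{SB}$, giving $O(s)=O(a+s)$ one- and two-qubit gates.

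The argument is entirely routine, so I do not expect a substantive obstacle; the only points requiring care are the block-encoding bookkeeping — checking that the normalization constant is exactly $\alpha=1$ and the error exactly $0$ (both rely on the purification being exact and $\|\rho\|\le 1$), and that the designated ancilla register has exactly $a+s$ qubits — together with tracking the complex conjugation introduced by $U^{\dagger}$ in the Schmidt-basis expansion.
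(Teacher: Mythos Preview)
Your proposal is correct and is exactly the standard ``swap trick'' construction; the paper does not give its own proof of this lemma but simply cites \cite[Lemma~25]{GSLW19} (which in turn attributes the construction to Low and Chuang), and your $\widetilde U=(U^\dagger_{AS}\otimes I_B)\,\mathrm{SWAP}_{SB}\,(U_{AS}\otimes I_B)$ together with your Schmidt-basis verification is precisely that argument.
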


\begin{lemma} [Hadamard test for block-encodings, adapted from~{\cite[Lemma 9]{GP22}}]
\label{lemma:Hadamard-test}
    Suppose $U$ is an $(a+s)$-qubit unitary operator that is a block-encoding of $s(n)$-qubit operator $A$. 
    We can implement an $O(a+s)$-qubit quantum circuit that, on input $s(n)$-qubit quantum state $\rho$, outputs $0$ with probability $\frac{1+\operatorname{Re}(\Tr(A\rho))}{2}$.
\end{lemma}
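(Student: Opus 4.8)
The plan is to implement the textbook Hadamard test and verify the acceptance probability directly. Introduce one fresh control qubit $\mathsf{C}$, and write $\mathsf{A}$ for the $a$-qubit block-encoding ancilla register and $\mathsf{S}$ for the $s$-qubit register carrying the input state $\rho$. The circuit initializes $\mathsf{C}$ in $\ket 0$, $\mathsf{A}$ in $\ket{\bar 0}$, and $\mathsf{S}$ in $\rho$; applies $H$ to $\mathsf{C}$; applies the controlled unitary $\mathrm{c}U \coloneqq \ket 0\bra 0_{\mathsf{C}} \otimes I + \ket 1\bra 1_{\mathsf{C}} \otimes U$, where $U$ acts on $\mathsf{A}\mathsf{S}$; applies $H$ to $\mathsf{C}$ again; and measures $\mathsf{C}$ in the computational basis, declaring the output to be $0$ iff the outcome is $0$. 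This uses $1 + a + s = O(a+s)$ qubits, $O(1)$ applications of (a controlled version of) $U$, and $O(1)$ additional single-qubit gates, matching the claimed resource bound; in particular no decomposition of $U$ is needed.

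Next I would compute the probability of outcome $0$. Since this probability is linear (affine) in the input density operator, it suffices to treat a pure input $\rho = \ket\phi\bra\phi$ and then take the convex combination over a spectral decomposition of a general $\rho$. For pure input, after the first Hadamard the joint state is $\tfrac{1}{\sqrt 2}(\ket 0 + \ket 1)_{\mathsf{C}}\,\ket{\bar 0}_{\mathsf{A}}\,\ket\phi_{\mathsf{S}}$; after $\mathrm{c}U$ it is $\tfrac{1}{\sqrt 2}\bigl(\ket 0_{\mathsf{C}}\ket{\bar 0}\ket\phi + \ket 1_{\mathsf{C}}\,U(\ket{\bar 0}\ket\phi)\bigr)$; and after the second Hadamard the component on $\ket 0_{\mathsf{C}}$ equals $\tfrac12\bigl(\ket{\bar 0}\ket\phi + U(\ket{\bar 0}\ket\phi)\bigr)$. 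Hence the probability of measuring $0$ is $\tfrac14\bigl\|\ket{\bar 0}\ket\phi + U(\ket{\bar 0}\ket\phi)\bigr\|_2^2 = \tfrac12\bigl(1 + \operatorname{Re}\langle\bar 0,\phi|\,U\,|\bar 0,\phi\rangle\bigr)$, by expanding the squared norm and using unitarity of $U$.

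The final step is to identify $\langle\bar 0,\phi|\,U\,|\bar 0,\phi\rangle$ with $\Tr(A\ket\phi\bra\phi)$ via the block-encoding property. Since $U$ is a block-encoding of $A$, by definition $(\bra{\bar 0}_{\mathsf{A}}\otimes I_{\mathsf{S}})\,U\,(\ket{\bar 0}_{\mathsf{A}}\otimes I_{\mathsf{S}}) = A$ as an operator on $\mathsf{S}$, so $\langle\bar 0,\phi|\,U\,|\bar 0,\phi\rangle = \bra\phi A\ket\phi = \Tr(A\ket\phi\bra\phi)$, giving acceptance probability $\tfrac{1 + \operatorname{Re}\Tr(A\ket\phi\bra\phi)}{2}$; summing over $\rho = \sum_j p_j \ket{\phi_j}\bra{\phi_j}$ yields the claimed value $\tfrac{1 + \operatorname{Re}\Tr(A\rho)}{2}$. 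If $\rho$ is only available through a circuit that prepares a purification $\ket\psi$ on $\mathsf{S}\mathsf{R}$, the identical argument with $\ket\psi$ in place of $\ket\phi$ and $U\otimes I_{\mathsf{R}}$ in place of $U$ applies, using $\langle\bar 0,\psi|\,(U\otimes I_{\mathsf{R}})\,|\bar 0,\psi\rangle = \langle\psi|\,(A\otimes I_{\mathsf{R}})\,|\psi\rangle = \Tr(A\rho)$ after tracing out $\mathsf{R}$. There is essentially no obstacle: the only points needing care are the bookkeeping of which register $\mathrm{c}U$ acts on and the observation that the block-encoding ancilla is initialized to $\ket{\bar 0}$ and is effectively projected back onto $\ket{\bar 0}$ through interference with the $\ket 0_{\mathsf{C}}$ branch; an approximate block-encoding would only perturb the stated probability by an additive $O(\epsilon)$, which is not required for this exact statement and will be tracked separately where needed (e.g.\ in \Cref{lemma:space-efficient-quantum-tester}).
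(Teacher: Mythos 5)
Your proof is correct and is exactly the standard Hadamard-test computation that the cited reference (Lemma 9 of GP22) uses; the paper itself states this lemma without proof, importing it from that reference. The reduction to pure inputs by affinity in $\rho$, the norm expansion giving $\tfrac12\bigl(1+\operatorname{Re}\langle\bar 0,\phi|U|\bar 0,\phi\rangle\bigr)$, and the identification of the bracket with $\bra{\phi}A\ket{\phi}$ via the block-encoding identity are all as expected, so there is nothing to add.
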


Finally, we proceed with the actual proof of \Cref{lemma:space-efficient-quantum-tester}.

\begin{proof} [Proof of \Cref{lemma:space-efficient-quantum-tester}]
Note that $U_A$ is a $(1, a, 0)$-block-encoding of $A$, where $a = s - r$. 

We first consider the case where $\alpha=\|\hat{\bfc}\|_1$, which holds for any $P_{d'}$ and $A$. 
By \Cref{thm:LCU-averaged-chebyshev-truncation} with $\epsilon_1 \coloneqq 0$ and $\epsilon_2 \coloneqq \epsilon$, we can implement an $O(s)$-qubit quantum circuit $U'$ that is a $(\|\hat\bfc\|_1, \hat a, \epsilon\|\hat\bfc\|_1)$-block-encoding of $P_{d'}(A)$, using $O(d^2)$ queries to $U_A$, where $\hat a = a + \lceil\log d'\rceil + 1$.
Assume that $U'$ is a $(1, \hat a, 0)$-block-encoding of $A'$, then $\| 
\|\hat\bfc\|_1 A' - P_{d'}(A) \| \leq \|\hat\bfc\|_1\epsilon$.
Additionally, we can compute the quantum circuit description of $U'$ in deterministic space $O(s+\log(1/\epsilon))$ given the coefficient vector $\hat\bfc$ of $P_{d'}$. 
As the quantum tester $\mathcal{T}(Q, U_A, P_{d'}, \epsilon)$ is mainly based on the Hadamard test, by employing \Cref{lemma:Hadamard-test}, we have that $\calT$ outputs $0$ with probability 
\[\Pr{x = 0} = \frac{1}{2}\big(1+\operatorname{Re}(\Tr(A'\rho))\big) = \frac{1}{2}\left(1+\operatorname{Re}\!\left(\Tr\Big(\frac{P_{d'}(A)}{\|\hat\bfc\|_1}\rho\Big)\right)\right) \pm \frac 1 2 \epsilon.\] 
It is left to construct the estimation procedure $\hat{\calT}$. As detailed in \Cref{lemma:success-probability-estimation}, we obtain an estimation $\widetilde{x}$ by sequentially repeating the quantum tester $\calT(Q,U_A,P_{d'},\epsilon)$ for $O(1/\epsilon_H^2)$ times. This repetition ensures that $|\widetilde x - \operatorname{Re}(\Tr(A'\rho))| \leq \epsilon_H$ holds with probability at least $\Omega(1)$, and derives a further implication on $P_{d'}(A)$:
\[\Pr{ \left| \|\hat\bfc\|_1 \widetilde x - \operatorname{Re}(\Tr(P_{d'}(A)\rho)) \right| \leq (\epsilon + \epsilon_H) \|\hat\bfc\|_1 } \geq \Omega(1).\]
We thus conclude that construction of the estimation procedure $\hat{\calT}(Q, U_A, P_{d'}, \epsilon, \epsilon_H, \delta)$ by utilizing $O(\log(1/\delta)/\epsilon_H^2)$ sequential repetitions of $\calT(Q, U_A, P_{d'}, \epsilon)$. Similarly following \Cref{lemma:success-probability-estimation}, $\hat{\calT}(Q, U_A, P_d, \epsilon, \epsilon_H, \delta)$ outputs an estimate $\widetilde{x}$ satisfies the following condition: 
\[\Pr{ \left| \|\hat\bfc\|_1 \widetilde x - \operatorname{Re}(\Tr(P_{d'}(A)\rho)) \right| \leq (\epsilon + \epsilon_H) \|\hat\bfc\|_1 } \geq 1-\delta.\]
In addition, a direct calculation indicates that we can implement $\hat{\calT}(Q, U_A, P_{d'}, \epsilon, \epsilon_H, \delta)$ in quantum space $O(s+\log(1/\epsilon) + \log(1/\epsilon_H) + \log \log(1/\delta))$ as desired. 

\vspace{1em}
Next, we move to the case where $\alpha=1$, applicable to certain $P_{d'}$ and $A$, namely when $P_{d'}(A)$ is a partial isometry in \Cref{thm:LCU-averaged-chebyshev-truncation} or $P_{d'}=P^{\sign}_{d'}$ in \Cref{corr:sign-polynomial-implementation}.
The proof is similar, and we just sketch the key points as follows. 
Using \Cref{thm:LCU-averaged-chebyshev-truncation} with $\epsilon_1 \coloneqq 0$ and $\epsilon_2 \coloneqq \epsilon/36$, we can implement an $O(s)$-qubit quantum circuit $U'$ that is a $(1, a', \epsilon\|\hat\bfc\|_1)$-block-encoding of $P_{d'}(A)$, using $O(d^2\|\hat\bfc\|_1)$ queries to $U_A$, where $a' = a + \lceil\log d'\rceil + 3$.
Assume that $U'$ is a $(1, a', 0)$-block-encoding of $A'$, then $\| A' - P_{d'}(A) \| \leq \|\hat\bfc\|_1\epsilon$. Similarly, $\mathcal{T}$ outputs $0$ with probability 
\begin{equation}
    \label{eq:state-tester-pacc}
    \Pr{x = 0} = \frac{1}{2}\big(1+\operatorname{Re}(\Tr(A'\rho))\big) = \frac{1}{2}\left(1+\operatorname{Re}\left(\Tr\left({P_{d'}(A)}\rho\right)\right)\right) \pm \frac 1 2 \|\hat\bfc\|_1\epsilon.
\end{equation}
Therefore, we can obtain an estimate $\widetilde x$ such that
\begin{equation}
    \label{eq:state-tester-estimation-error}
    \Pr{ \left| \widetilde x - \operatorname{Re}(\Tr(P_{d'}(A)\rho)) \right| \leq \|\hat\bfc\|_1 \epsilon + \epsilon_H } \geq 1-\delta.
\end{equation}

For the special case $P_{d'} = P^{\sign}_{d'}$, use \Cref{corr:sign-polynomial-implementation} with $\epsilon_1\coloneqq 0$ and $\epsilon_2 \coloneqq \frac{\norm{\hat\bfc^{\sign}}_1}{36\hat C_{\sign}\log d' + 37}\,\epsilon$. Then the resulting block-encoding error is at most $(36\hat C_{\sign}\log d' + 37)\epsilon_2
    = \norm{\hat\bfc^{\sign}}_1\epsilon$.
Hence the corresponding formulas in \Cref{eq:state-tester-pacc,eq:state-tester-estimation-error} hold after replacing $\norm{\hat\bfc}_1$ by $\norm{\hat\bfc^{\sign}}_1$. Since $\norm{\hat\bfc^{\sign}}_1\leq \hat C_{\sign}\log d'$, as guaranteed by \Cref{corr:space-efficient-sign}, this quantity can finally be upper-bounded by $\hat C_{\sign}\log d'$ whenever a coefficient-free bound is preferred.
\end{proof}

\subsection{\GapQSDlog{} is in \BQL{}}
\label{subsec:GapQSD-in-BQL}

In this subsection, we demonstrate \Cref{thm:GapQSDlog-in-BQL} by constructing a quantum algorithm that incorporates testers $\calT(Q_i,U_{\frac{\rho_0-\rho_1}{2}},P_d^{\sign},\epsilon)$ for $i \in \binset$, where the construction of testers utilizes the space-efficient QSVT associated with the sign function.

\begin{theorem}
    \label{thm:GapQSDlog-in-BQL}
    For any functions $\alpha(n)$ and $\beta(n)$ that can be computed in deterministic logspace and satisfy $\alpha(n)-\beta(n) \geq 1/\poly(n)$, we have that $\GapQSDlog[\alpha(n),\beta(n)]$ is in \BQL{}. 
\end{theorem}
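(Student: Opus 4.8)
The plan is to reduce $\GapQSDlog[\alpha(n),\beta(n)]$ to estimating $\td(\rho_0,\rho_1)$ through the Holevo--Helstrom identity $\td(\rho_0,\rho_1)=\tfrac12\bigl(\Tr(\sign(A)\rho_0)-\Tr(\sign(A)\rho_1)\bigr)$, where $A\coloneqq\tfrac{\rho_0-\rho_1}{2}$. This holds because the eigenvalues $\mu_i\in[-\tfrac12,\tfrac12]$ of the Hermitian operator $A$ are half the eigenvalues $\lambda_i$ of $\rho_0-\rho_1$, so $\sum_i\sign(\mu_i)\mu_i=\tfrac12\sum_i|\lambda_i|=\td(\rho_0,\rho_1)$; moreover $\sign^{\SV}(A)=\sign(A)$ since $A$ is Hermitian. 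As $\sign$ cannot be applied exactly, we replace it by the space-efficiently computable bounded polynomial $P_{d'}^{\sign}$ of \Cref{corr:space-efficient-sign} and invoke the generic quantum tester of \Cref{lemma:space-efficient-quantum-tester} with $Q_0$ and with $Q_1$ --- that is, the instantiation $\calT(Q_i,U_{(\rho_0-\rho_1)/2},P_{d'}^{\sign},\epsilon)$ depicted in \Cref{fig:framework-state-testing}.

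\textbf{Step 1: a block-encoding of $A$.} Starting from $Q_0,Q_1$, which prepare purifications of the $r(n)$-qubit states $\rho_0,\rho_1$ with $r(n)=O(\log n)$, I would first apply \Cref{lemma:purified-density-matrix} to obtain $(1,O(\log n),0)$-block-encodings of $\rho_0$ and $\rho_1$, and then combine them with the space-efficient LCU of \Cref{lemma:space-efficient-LCU} using the coefficient vector $\bfy=(\tfrac12,-\tfrac12)$, so that $\|\bfy\|_1=1$ while $\|A\|\le\tfrac12\le1$; this yields a $(1,O(\log n),0)$-block-encoding $U_A$ of $A$ whose description is computable in deterministic $O(\log n)$ space. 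Feeding $U_A$ and $P_{d'}^{\sign}$ into \Cref{lemma:space-efficient-quantum-tester} (through \Cref{corr:sign-polynomial-implementation}, using the $\alpha=1$ branch since $\sign^{\SV}$ of any operator is an isometry) produces a tester $\calT$ and an estimation procedure $\hat\calT$ implementable with $O(\log n)$ qubits and $O(\log n)$ bounded-error quantum space.

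\textbf{Step 2: parameters and the decision rule.} Write $g\coloneqq\alpha(n)-\beta(n)\ge1/\poly(n)$ and choose inverse-polynomial precision parameters $\delta\coloneqq g\cdot2^{-r(n)-5}$ and $\epsilon,\epsilon_H$ small enough that $C_{\sign}\epsilon\le g/8$ and $\hat C_{\sign}\epsilon+\epsilon_H\le g/8$; all of these are $\ge2^{-O(\log n)}$, so \Cref{corr:space-efficient-sign} produces $P_{d'}^{\sign}$ of degree $d'=O(\delta^{-1}\log\epsilon^{-1})=\poly(n)=2^{O(\log n)}$ whose coefficients $\hat\bfc$ are computable deterministically in $O(\log n)$ space with $\|\hat\bfc\|_1\le\hat C_{\sign}$, which meets the degree requirement $d'\le2^{O(s(n))}$ of the space-efficient QSVT. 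The algorithm runs $\hat\calT(Q_i,U_A,P_{d'}^{\sign},\epsilon,\epsilon_H,\tfrac1{10})$ for $i=0,1$ to obtain estimates $\widetilde x_i$ with $|\widetilde x_i-\Tr(P_{d'}^{\sign}(A)\rho_i)|\le g/8$, each with probability $\ge\tfrac9{10}$ (by \Cref{lemma:space-efficient-quantum-tester} and \Cref{lemma:success-probability-estimation}), and it accepts if and only if $\widehat{\td}\coloneqq\tfrac12(\widetilde x_0-\widetilde x_1)\ge\tfrac{\alpha(n)+\beta(n)}{2}$. Because the entire procedure runs in $O(\log n)$ quantum space, the equivalence $\BQL=\BQUL$~\cite{FR21} then places $\GapQSDlog[\alpha(n),\beta(n)]$ in $\BQL$.

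\textbf{Step 3: correctness, and the anticipated obstacle.} Using $\tfrac12\bigl(\Tr(P_{d'}^{\sign}(A)\rho_0)-\Tr(P_{d'}^{\sign}(A)\rho_1)\bigr)=\Tr(P_{d'}^{\sign}(A)A)=\sum_iP_{d'}^{\sign}(\mu_i)\mu_i$, the deviation from $\td(\rho_0,\rho_1)=\sum_i\sign(\mu_i)\mu_i$ is at most $\sum_i|P_{d'}^{\sign}(\mu_i)-\sign(\mu_i)|\,|\mu_i|$, which I would bound over two regimes. For $|\mu_i|>\delta$, the bounds $|P_{d'}^{\sign}(\mu_i)-\sign(\mu_i)|\le C_{\sign}\epsilon$ and $\sum_i|\mu_i|=\td(\rho_0,\rho_1)\le1$ give a contribution $\le C_{\sign}\epsilon\le g/8$; for $|\mu_i|\le\delta$, the bound $|P_{d'}^{\sign}(\mu_i)|\le1$ together with the fact that $A$ has at most $2^{r(n)}$ eigenvalues gives $\le2^{r(n)+1}\delta\le g/8$. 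Combining this with the two estimation errors yields $|\widehat{\td}-\td(\rho_0,\rho_1)|<g/2$ with probability $\ge\tfrac45$, so $\widehat{\td}$ lands on the correct side of the threshold $\tfrac{\alpha(n)+\beta(n)}{2}$ in both the \textit{yes} case ($\td\ge\alpha$) and the \textit{no} case ($\td\le\beta$). I expect the main difficulty to be precisely this two-scale error bound: $P_{d'}^{\sign}$ is uncontrolled on $(-\delta,\delta)$, so the near-zero part of the spectrum of $A$ must be shown to contribute less than the promise gap. This is feasible only because $\rho_0,\rho_1$ act on $O(\log n)$ qubits, which forces $\dim A\le\poly(n)$ and hence permits $\delta=1/\poly(n)$; that same polynomial bound on $\delta^{-1}$ is what keeps $d'\le2^{O(s(n))}$, so that the space-efficient QSVT of \Cref{sec:space-efficient-QSVT} applies with only $O(\log n)$ qubits.
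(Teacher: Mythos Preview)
Your proposal is correct and follows essentially the same approach as the paper: construct a block-encoding of $(\rho_0-\rho_1)/2$ via \Cref{lemma:purified-density-matrix} and \Cref{lemma:space-efficient-LCU}, apply the space-efficient tester $\hat{\calT}$ with $P_{d'}^{\sign}$ to both $Q_0$ and $Q_1$, threshold the resulting estimate at $(\alpha+\beta)/2$, and bound the error by the same two-regime split over eigenvalues $|\mu_i|\lessgtr\delta$ (this is exactly the paper's \Cref{prop:td-technical}). Your parameter choices differ only in constants from the paper's $\varepsilon=(\alpha-\beta)/4$, $\delta=\varepsilon/2^{r+3}$, $\epsilon_H=\varepsilon/4$.
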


\begin{proof}
    Inspired by time-efficient algorithms for the low-rank variant of \GapQSD{}~\cite{WZ23}, we devise a space-efficient algorithm for \GapQSDlog{}, presented in \Cref{algo:GapQSD-in-BQL}. 

    \begin{algorithm}[ht!]
		\caption{Space-efficient algorithm for $\GapQSD_{\log}$.}
		\label{algo:GapQSD-in-BQL}
        \SetKwInOut{Input}{Input}
        \SetKwInOut{Output}{Output}
        \SetKwInOut{Parameter}{Params}
        \Input{Quantum circuits $Q_i$ that prepare the purification of $\rho_i$ for $i \in \binset$.}
        \Output{An additive-error estimation of $\td(\rho_0,\rho_1)$.}
        \Parameter{$\varepsilon \coloneqq \frac{\alpha - \beta}{4}$, $\delta \coloneqq \frac{\varepsilon}{2^{r+3}}$, 
        $\epsilon \coloneqq \frac{\varepsilon}{64\,s\,\max\{1,36 \hat{C}_{\sign},2 C_{\sign}+37,\tilde C_{\sign}\}}$,
        $d' \coloneqq \tilde C_{\sign} \cdot \frac{1}{\delta} \log\frac{1}{\epsilon} = 2d\!-\!1$, $\varepsilon_H \coloneqq \frac{\varepsilon}{4}$.}
        1. Construct block-encodings of $\rho_0$ and $\rho_1$, denoted by $U_{\rho_0}$ and $U_{\rho_1}$, respectively, using $O(1)$ queries to $Q_0$ and $Q_1$ and $O(s(n))$ ancillary qubits by \Cref{lemma:purified-density-matrix}\;
        2. Construct a block-encoding of $\frac{\rho_0-\rho_1}{2}$, denoted by $U_{\tfrac{\rho_0-\rho_1}{2}}$, using $O(1)$ queries to $U_{\rho_0}$ and $U_{\rho_1}$ and $O(s(n))$ ancillary qubits by \Cref{lemma:space-efficient-LCU}\;
        \emph{Let $P_{d'}^{\sign}$ be the degree-$d'$ polynomial specified in \Cref{corr:space-efficient-sign} with parameters $\delta$ and $\epsilon$, and its coefficients $\{\hat{c}_k\}_{k=0}^{d'}$ are computable in deterministic space $O(\log (d/\epsilon))$}\;
        3. Set $x_0 \coloneqq \hat{\calT}(Q_0, U_{\frac{\rho_0-\rho_1}{2}}, P_{d'}^{\sign}, \epsilon, \epsilon_H, 1/10)$, $x_1 \coloneqq \hat{\calT}(Q_1, U_{\frac{\rho_0-\rho_1}{2}}, P_{d'}^{\sign}, \epsilon, \epsilon_H, 1/10)$\;
        4. Compute $x = (x_0 - x_1) / 2$. Return ``yes'' if $x > (\alpha+\beta)/2$, and ``no'' otherwise. 
        \BlankLine
    \end{algorithm}

    Let us demonstrate the correctness of \Cref{algo:GapQSD-in-BQL} and analyze the computational complexity. We focus on the setting with $s(n)=\Theta(\log{n})$. We set $\varepsilon \coloneqq (\alpha-\beta)/4 \geq 2^{-O(s)}$ and assume that $Q_0$ and $Q_1$ are $s(n)$-qubit quantum circuits that prepare the purifications of $\rho_0$ and $\rho_1$, respectively. 
    According to \Cref{lemma:purified-density-matrix}, we can construct $O(s)$-qubit quantum circuits $U_{\rho_0}$ and $U_{\rho_1}$ that encode $\rho_0$ and $\rho_1$ as $(1,O(s), 0)$-block-encodings, using $O(1)$ queries to $Q_0$ and $Q_1$ as well as $O(1)$ one- and two-qubit quantum gates.
    Next, we apply \Cref{lemma:space-efficient-LCU} to construct a $(1,O(s),0)$-block-encoding $U_{\frac{\rho_0-\rho_1}{2}}$ of $\frac{\rho_0 - \rho_1}{2}$, using $O(1)$ queries to $U_{\rho_0}$ and $U_{\rho_1}$, as well as $O(1)$ one- and two-qubit quantum gates.

    Let $\delta \coloneqq \frac{\varepsilon}{2^{r+3}}$, 
    $\epsilon \coloneqq \frac{\varepsilon}{64\,s\,\max\{1,36 \hat{C}_{\sign},2 C_{\sign}+37,\tilde C_{\sign}\}}$,\footnote{The parameter choice above ensures the required error bound. Indeed, since $\varepsilon \geq 2^{-O(s)}$ and $r\leq s$, we have $\log(1/\delta)=O(s)$. The definition of $\epsilon$ also gives $\log(1/\epsilon)=O(s)$, and therefore $\log d'=\log\rbra[\big]{\tilde{C}_{\sign}\delta^{-1}\log(1/\epsilon)}=O(s)$. Thus the factor $\log d'$ can be absorbed by the $s(n)$ term in the denominator of $\epsilon$; after increasing the absolute constant $64$ if necessary, this choice gives
$(36\hat{C}_{\sign}\log d' + 2C_{\sign}+37)\epsilon \leq \varepsilon/2$.}
    and $d' \coloneqq \tilde C_{\sign} \cdot \frac{1}{\delta} \log\frac{1}{\epsilon} \leq 2^{O(s(n))}$, where $\tilde C_{\sign}$ comes from \Cref{corr:space-efficient-sign}.
    Let $P_{d'}^{\sign} \in \mathbb{R}[x]$ be the polynomial specified in \Cref{corr:space-efficient-sign} with $d' = 2d-1$.
    Let $\epsilon_H = \varepsilon/4$.
    By employing \Cref{corr:sign-polynomial-implementation} (with $\epsilon_1 \coloneqq 0$ and $\epsilon_2 \coloneqq \epsilon$) and the corresponding estimation procedure $\hat{\calT}(Q_i, U_{\frac{\rho_0-\rho_1}{2}}, P_{d'}^{\sign}, \Theta(\epsilon), \epsilon_H, 1/10)$ from \Cref{lemma:space-efficient-quantum-tester}, we obtain the values $x_i$ for $i\in\binset$, ensuring the following inequalities: 
    \begin{equation}
        \label{eq:GapQSDlog-tester-pacc}
        \Pr{ \left| x_i - \Tr\left(P_{d'}^{\sign}\Big(\frac{\rho_0-\rho_1}{2}\Big)\rho_i\right) \right| \leq (36 \hat{C}_{\sign}\log{d'}+37)\epsilon + \epsilon_H } \geq 0.9 \text{ for } i\in\binset.
    \end{equation}
    Here, the implementation uses $O(d^2 \log{d})$ queries to $U_{\frac{\rho_0-\rho_1}{2}}$ and $O(d^2 \log{d})$ multi-controlled single-qubit gates. 
    Moreover, the circuit descriptions of $\hat{\calT}(Q_i, U_{\frac{\rho_0-\rho_1}{2}}, P_{d'}^{\sign}, \epsilon, \epsilon_H, 1/10)$ can be computed in deterministic time $\tilde O(d^{9/2}/\epsilon)$ and space $O(s(n))$.  

    \vspace{1em}
    Now let $x \coloneqq (x_0-x_1)/2$. We will finish the correctness analysis of \Cref{algo:GapQSD-in-BQL} by showing $\Pr{\left|x - \td(\rho_0, \rho_1)\right| \leq \varepsilon} > 0.8$ through \Cref{eq:GapQSDlog-tester-pacc}. 
    By considering the approximation error of $P_{d'}^{\sign}$ in \Cref{corr:space-efficient-sign} and the QSVT implementation error in \Cref{corr:sign-polynomial-implementation}, we derive the following inequality in \Cref{prop:td-technical}, whose proof is presented immediately afterward:

    \begin{proposition}
        \label{prop:td-technical}
        $\Pr{\left|x - \td(\rho_0, \rho_1)\right| \leq (36 \hat{C}_{\sign} \log{d'} +2 C_{\sign}+37)\epsilon + \epsilon_H + 2^{r+1} \delta} > 0.8$.
    \end{proposition}

    Under the aforementioned choice of $\delta$, $\epsilon$, and $\epsilon_H$, we have $\epsilon_{H} = \varepsilon/4$, $2^{r+1}\delta = \varepsilon/4$, and $(36 \hat{C}_{\sign} \log{d'} +2 C_{\sign}+37)\epsilon \leq \varepsilon/2$, and thus, $\Pr{\left|x - \td(\rho_0, \rho_1)\right|} > 0.8$.
    
    \vspace{1em}
    Finally, we analyze the computational resources required for \Cref{algo:GapQSD-in-BQL}. 
    According to \Cref{lemma:space-efficient-quantum-tester}, we can compute $x$ in \BQL{}, with the resulting algorithm requiring $O(d^2 \log{d} /\epsilon_H^2) = \tilde O(2^{2r}/\varepsilon^4)$ queries to $Q_0$ and $Q_1$. 
    In addition, its circuit description can be computed in deterministic time $\tilde O(d^{9/2}/\varepsilon) = \tilde O(2^{4.5r}/\varepsilon^{5.5})$.     
\end{proof}

\begin{proof}[Proof of \Cref{prop:td-technical}]
    Using the triangle inequality, we obtain the following:
\begin{align*}
    \left| \frac{x_0-x_1}{2} - \td(\rho_0, \rho_1) \right|
    & = \left| \frac{x_0-x_1}{2} - \Tr \Big( \frac{\rho_0 - \rho_1}{2} \sign\Big(\frac{\rho_0 - \rho_1}{2}\Big) \Big) \right| \\
    & \leq \left| \frac{x_0-x_1}{2} - \Tr\Big(\frac{\rho_0-\rho_1}{2} P_{d'}^{\sign}\Big(\frac{\rho_0-\rho_1}{2}\Big)\Big) \right| \\
    & \qquad + \left| \Tr\Big(\frac{\rho_0-\rho_1}{2} P_{d'}^{\sign}\Big(\frac{\rho_0-\rho_1}{2}\Big)\Big) - \Tr \Big( \frac{\rho_0 - \rho_1}{2} \sign\Big(\frac{\rho_0 - \rho_1}{2}\Big) \Big) \right|.
\end{align*}

    For the first term, by noting the QSVT implementation error in \Cref{corr:sign-polynomial-implementation}, we know by \Cref{eq:GapQSDlog-tester-pacc} that, with probability at least $0.9^2 > 0.8$, it holds that
    \begin{equation}
    \label{eq:td-first-term}
    \left| \frac{x_0-x_1}{2} - \Tr\Big(\frac{\rho_0-\rho_1}{2} P_{d'}^{\sign}\Big(\frac{\rho_0-\rho_1}{2}\Big)\Big) \right| \leq (36 \hat{C}_{\sign} \log{d'} +37)\epsilon + \epsilon_H.
    \end{equation}
    For the second term, let $\frac{\rho_0-\rho_1}{2} = \sum_{j} \lambda_j \ket{\psi_j} \bra{\psi_j}$, where $\{ \ket{\psi_j} \}$ is an orthonormal basis. Then,
    \begin{equation}
    \label{eq:td-second-term}
    \left| \Tr\Big(\frac{\rho_0\!-\!\rho_1}{2} P_{d'}^{\sign}\Big(\frac{\rho_0\!-\!\rho_1}{2}\Big)\Big) - \Tr \Big( \frac{\rho_0 \!-\! \rho_1}{2} \sign\Big(\frac{\rho_0 \!-\! \rho_1}{2}\Big) \Big) \right| \leq \sum_j \left| \lambda_j P_{d'}^{\sign}(\lambda_j) \!-\! \lambda_j \sign(\lambda_j) \right|.
    \end{equation}
    We split the summation over $j$ into three separate summations:
    \[\sum_j = \sum_{\lambda_j < -\delta} + \sum_{\lambda_j > \delta} + \sum_{-\delta \leq \lambda_j \leq \delta}.\]
    By noticing the approximation error of $P_{d'}^{\sign}$ in \Cref{corr:space-efficient-sign} and $\sum_j |\lambda_j| = \td(\rho_0,\rho_1) \leq 1$, we obtain the following results for each of the three summations: 
    \[
    \sum_{\lambda_j > \delta} \left| \lambda_j P_{d'}^{\sign}(\lambda_j) - \lambda_j \sign(\lambda_j) \right| = \sum_{\lambda_j > \delta} |\lambda_j| \left| P_{d'}^{\sign}(\lambda_j) - 1 \right| \leq \sum_{\lambda_j > \delta} |\lambda_j| C_{\sign} \epsilon \leq C_{\sign} \epsilon,
    \]
    \[
    \sum_{\lambda_j < -\delta} \left| \lambda_j P_{d'}^{\sign}(\lambda_j) - \lambda_j \sign(\lambda_j) \right| = \sum_{\lambda_j < -\delta} |\lambda_j| \left| P_{d'}^{\sign}(\lambda_j) + 1 \right| \leq \sum_{\lambda_j < -\delta} |\lambda_j| C_{\sign} \epsilon \leq C_{\sign} \epsilon,
    \]
    \[
    \sum_{-\delta \leq \lambda_j \leq \delta} \left| \lambda_j P_{d'}^{\sign}(\lambda_j) - \lambda_j \sign(\lambda_j) \right| \leq \sum_{-\delta \leq \lambda_j \leq \delta} 2|\lambda_j| \leq 2^{r+1}\delta.
    \]
    Hence, we derive the following inequality by summing over the aforementioned three inequalities: 
    \begin{equation}
    \label{eq:td-last-term}
    \sum_j \left| \lambda_j P_{d'}^{\sign}(\lambda_j) - \lambda_j \sign(\lambda_j) \right| \leq 2^{r+1} \delta + 2 C_{\sign} \epsilon.
    \end{equation}
    By combining \Cref{eq:td-first-term,eq:td-second-term,eq:td-last-term}, we conclude that
    \[
    \left|\frac{x_0-x_1}{2} - \td(\rho_0, \rho_1)\right| \leq (36 \hat{C}_{\sign} \log{d'} + 37)\epsilon + \epsilon_H + 2 C_{\sign} \epsilon + 2^{r+1} \delta. \qedhere
    \]
\end{proof}

\subsection{\GapQEDlog{} and \GapQJSlog{} are in \BQL{}}
\label{subsec:GapQED-in-BQL}

In this subsection, we will demonstrate \Cref{thm:GapQEDlog-in-BQL} by devising a quantum algorithm that encompasses testers $\calT(Q_i, U_{\rho_i},P_{d'}^{\ln},\epsilon)$ for $i \in \binset$, where the construction of testers employs the space-efficient QSVT associated with the normalized logarithmic function. Consequently, we can deduce that \GapQJSlog{} is in \BQL{} via a reduction from \GapQJSlog{} to \GapQEDlog{}. 

\begin{theorem}
    \label{thm:GapQEDlog-in-BQL}
    For any deterministic logspace computable function $g(n)$ that satisfies $g(n) \geq 1/\poly(n)$, we have that $\GapQEDlog[g(n)]$ is in \BQL{}.
\end{theorem}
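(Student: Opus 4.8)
The plan is to reduce $\GapQEDlog$ to estimating $\S(\rho_0)-\S(\rho_1)$ up to additive error $g(n)/2$ and then output ``yes'' iff the estimate is positive; since the promise guarantees $|\S(\rho_0)-\S(\rho_1)|\ge g(n)$, the sign of the estimate is then correct. As $\S(\rho)=-\Tr(\rho\ln\rho)$, the natural instantiation of the framework of \Cref{lemma:space-efficient-quantum-tester} is $A=\rho_i$, $\rho=\rho_i$, and $P_{d'}=P^{\ln}_{d'}$ from \Cref{corr:space-efficient-log}, so that $\Tr\!\big(P^{\ln}_{d'}(\rho_i)\rho_i\big)$ approximates $\S(\rho_i)/\big(2\ln(2/\beta)\big)$. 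The obstacle is that $\ln(1/x)$ is singular at $x=0$ while \Cref{corr:space-efficient-log} and \Cref{corr:log-polynomial-implementation} only control $P^{\ln}_{d'}$ on $[\beta,1]$ for $\beta\ge 2^{-O(\log n)}$, whereas an $r(n)$-qubit state $\rho_i$ may have eigenvalues far below any such $\beta$.

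To circumvent this I would regularize the inputs as in the entropy-estimation strategy of~\cite{GL20}: fix a polynomially small, deterministic-logspace-computable mixing parameter $\eta=1/\poly(n)$ and work with $\tilde\rho_i\coloneqq(1-\eta)\rho_i+\eta\,2^{-r(n)}I$, so that $\tilde\rho_i\succeq\beta I$ with $\beta\coloneqq\eta\,2^{-r(n)}\ge 1/\poly(n)\ge 2^{-O(\log n)}$ — exactly the regime of \Cref{corr:space-efficient-log}. A purification of $\tilde\rho_i$ is prepared by an $O(\log n)$-qubit circuit $\tilde Q_i$: rotate one fresh qubit to $\sqrt{1-\eta}\ket{0}+\sqrt{\eta}\ket{1}$, apply $Q_i$ controlled on $\ket{0}$ and prepare a maximally mixed purification $2^{-r(n)/2}\sum_j\ket{j}\ket{j}$ controlled on $\ket{1}$, then trace out the non-output registers. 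A $(1,O(\log n),0)$-block-encoding $U_{\tilde\rho_i}$ of $\tilde\rho_i$ is then obtained from $\tilde Q_i$ via \Cref{lemma:purified-density-matrix}, or equivalently by applying \Cref{lemma:space-efficient-LCU} to block-encodings of $\rho_i$ and $2^{-r(n)}I$ with coefficient vector $(1-\eta,\eta)$ of $\ell_1$-norm $1$.

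Next I would run the estimation procedure $\hat\calT\big(\tilde Q_i,U_{\tilde\rho_i},P^{\ln}_{d'},\epsilon,\epsilon_H,1/10\big)$ of \Cref{lemma:space-efficient-quantum-tester}, with $P^{\ln}_{d'}$ instantiated via \Cref{corr:log-polynomial-implementation} at parameters $\beta$ and $\epsilon$ (and using that $P^{\ln}_{d'}$ is even, so $P^{\ln,\SV}_{d'}(\tilde\rho_i)=P^{\ln}_{d'}(\tilde\rho_i)$ for the PSD matrix $\tilde\rho_i$), obtaining $\tilde x_i$ with $\big|\alpha\tilde x_i-\Tr(P^{\ln}_{d'}(\tilde\rho_i)\tilde\rho_i)\big|\le\|\bfc^{\ln}\|_1\epsilon+\alpha\epsilon_H$, where $\alpha=\|\bfc^{\ln}\|_1=O(1)$ by \Cref{corr:space-efficient-log}. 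Since the spectrum of $\tilde\rho_i$ lies in $[\beta,1]$, \Cref{corr:space-efficient-log} gives $\Tr(P^{\ln}_{d'}(\tilde\rho_i)\tilde\rho_i)=\S(\tilde\rho_i)/\big(2\ln(2/\beta)\big)\pm C_{\ln}\epsilon$, so $\hat s_i\coloneqq 2\ln(2/\beta)\,\alpha\tilde x_i$ estimates $\S(\tilde\rho_i)$ within additive error $O\big(\log n\cdot(\epsilon+\epsilon_H)\big)$. Finally, the continuity of the von Neumann entropy (Fannes--Audenaert inequality), together with $\td(\tilde\rho_i,\rho_i)\le\eta$ and $r(n)=O(\log n)$, yields $|\S(\tilde\rho_i)-\S(\rho_i)|=O\big(\eta\,r(n)+\binH(\eta)\big)=O\big(\eta(r(n)+\log(1/\eta))\big)$. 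Choosing $\eta=\Theta\big(g(n)/(r(n)+\log(\poly(n)/g(n)))\big)$ and $\epsilon=\epsilon_H=\Theta(g(n)/\log n)$ makes each error term at most $g(n)/8$, hence $\big|(\hat s_0-\hat s_1)-(\S(\rho_0)-\S(\rho_1))\big|<g(n)/2$ with probability $\ge 8/10$ by a union bound over $i\in\{0,1\}$, which decides the promise correctly.

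For the complexity: $d'=\tilde C_{\ln}\beta^{-1}\log\epsilon^{-1}=\poly(n)\le 2^{O(\log n)}$, the tester uses $O(d^2)$ queries to $U_{\tilde\rho_i}$ and $O(1/\epsilon_H^2)$ sequential repetitions (\Cref{lemma:success-probability-estimation}), all on $O(\log n)$ qubits; the randomized $O(\log n)$-space classical pre-processing of the coefficients of $P^{\ln}_{d'}$ from \Cref{thm:space-efficient-smooth-funcs}/\Cref{corr:space-efficient-log} is absorbed into the quantum computation since $\BPL\subseteq\BQL$~\cite{FR21}. Thus $\GapQEDlog[g(n)]\in\BQL$. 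The same construction handles $\GapQJSlog$ through the identity $\QJS_2(\rho_0,\rho_1)=\tfrac{1}{\ln 2}\big(\S(\tfrac{\rho_0+\rho_1}{2})-\tfrac{\S(\rho_0)+\S(\rho_1)}{2}\big)$, estimating also $\S(\tfrac{\rho_0+\rho_1}{2})$ after the same regularization and using a block-encoding of $\tfrac{\rho_0+\rho_1}{2}$ from \Cref{lemma:space-efficient-LCU}. The main obstacle is the tension between the log-singularity at $0$ and keeping $\beta$ (hence $d'$) polynomially bounded: $\eta$ must be large enough that $\beta=\eta\,2^{-r(n)}\ge 1/\poly(n)$ yet small enough that the entropy perturbation $O\big(\eta(r(n)+\log(1/\eta))\big)$ stays well below $g(n)$ — which is feasible precisely because $r(n)=O(\log n)$ and $g(n)\ge 1/\poly(n)$.
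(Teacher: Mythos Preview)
Your proposal is correct, but the paper takes a different (and slightly more self-contained) route. The paper does \emph{not} regularize the states: it applies $P^{\ln}_{d'}$ directly to $\rho_i$ and absorbs the small-eigenvalue issue into the error analysis. Writing $\rho_i=\sum_j\lambda_j\ket{\psi_j}\bra{\psi_j}$, it splits $\sum_j\big|2\ln(2/\beta)\,\lambda_j P^{\ln}_{d'}(\lambda_j)-\lambda_j\ln(1/\lambda_j)\big|$ into the terms with $\lambda_j>\beta$ (handled by the approximation guarantee of \Cref{corr:space-efficient-log} on $[\beta,1]$) and those with $\lambda_j\le\beta$; for the latter it uses only that $|P^{\ln}_{d'}|\le 1$ on $[-1,1]$, that $x\ln(1/x)$ is increasing on $[0,1/e]$, and that there are at most $2^r$ eigenvalues, obtaining a contribution of order $2^{r}\beta\ln(2/\beta)$. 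The choice $\beta=\Theta\big(\varepsilon/(2^{r}\ln(2^{r}/\varepsilon))\big)$ with $\varepsilon=g/4$ (\Cref{prop:GapQEDlog-beta-bound}) then makes this term at most $g/\text{const}$; see \Cref{prop:entropy-technical}.

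What each approach buys: the paper's argument stays entirely within tools already developed in the paper (no Fannes--Audenaert, no auxiliary purification circuit $\tilde Q_i$), and the parameters plug straight into \Cref{algo:GapQEDlog-in-BQL}. Your regularization $\tilde\rho_i=(1-\eta)\rho_i+\eta\,2^{-r}I$ plus Fannes--Audenaert is more modular --- once all eigenvalues lie in $[\beta,1]$ the polynomial-approximation step is a one-line application of \Cref{corr:space-efficient-log} --- at the price of importing an external continuity bound and constructing the extra state-preparation circuit. Both give the same $d'=\poly(n)$ and $O(\log n)$-qubit bounds. For \GapQJSlog{} the paper likewise differs from your three-entropy estimate: it gives a logspace Karp reduction \GapQJSlog{} $\to$ \GapQEDlog{} (\Cref{corr:GapQJSlog-in-BQL}) via a classical-quantum state construction, so that only two entropies need to be estimated.
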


\begin{proof}
    We begin with a formal algorithm in \Cref{algo:GapQEDlog-in-BQL}.

    \begin{algorithm}[ht!]
		\caption{Space-efficient algorithm for $\GapQEDlog$.}
		\label{algo:GapQEDlog-in-BQL}
        \SetKwInOut{Input}{Input}
        \SetKwInOut{Output}{Output}
        \SetKwInOut{Parameter}{Params}
        \Input{Quantum circuits $Q_i$ that prepare the purification of $\rho_i$ for $i \in \binset$.}
        \Output{An additive-error estimation of $\S(\rho_0)-\S(\rho_1)$.}
        \Parameter{$\varepsilon \coloneqq \frac{g}{4}$, $\beta \coloneqq \min\{ \frac{\varepsilon}{2^{r+6} \ln(2^{r+6}/\varepsilon)}, \frac{1}{4} \}$, 
        $\epsilon \coloneqq \frac{\varepsilon^{3/2}}{64\cdot 2^{r/2}s^2\max\{1,\hat C_{\ln},C_{\ln},\tilde C_{\ln}\}}$,
        $d' \coloneqq \tilde C_{\ln} \!\cdot\! \frac{1}{\beta}\log\frac{1}{\epsilon} = 2d\!-\!1$,
        $\epsilon_H \coloneqq \frac{\varepsilon}{8\hat C_{\ln}\sqrt{d'}\ln(2/\beta)}$.
        }
        1. Construct block-encodings of $\rho_0$ and $\rho_1$, denoted by $U_{\rho_0}$ and $U_{\rho_1}$, respectively, using $O(1)$ queries to $Q_0$ and $Q_1$ and $O(s(n))$ ancillary qubits by \Cref{lemma:purified-density-matrix}\;
        \emph{Let $P_{d'}^{\ln}$ be the degree-$d'$ polynomial specified in \Cref{corr:space-efficient-log} with parameters $\beta$ and $\epsilon$, and its coefficients $\{c_k^{\ln}\}_{k=0}^{d'}$ are computable in bounded-error randomized space $O(\log (d/\epsilon))$}\;
        2. For each $i\in\binset$, set $x_i \coloneqq \hat C_{\ln}\sqrt{d'}\cdot \hat{\mathcal T}(Q_i,U_{\rho_i},P_{d'}^{\ln},\epsilon,\epsilon_H,1/10)$\;
        3. Compute $x = 2\ln(\frac{2}{\beta})(x_0 - x_1)$. Return ``yes'' if $x > 0$, and ``no'' otherwise. 
        \BlankLine
    \end{algorithm}

    Let us now demonstrate the correctness and computational complexity of \Cref{algo:GapQEDlog-in-BQL}. We concentrate on the scenario with $s(n) = \Theta(\log{n})$ and $\varepsilon = g/4 \geq 2^{-O(s)}$. Our strategy is to estimate the entropy of each of $\rho_0$ and $\rho_1$, respectively. We assume that $Q_0$ and $Q_1$ are $s$-qubit quantum circuits that prepare the purifications of $\rho_0$ and $\rho_1$, respectively. By \Cref{lemma:purified-density-matrix}, we can construct $(1,O(s), 0)$-block-encodings $U_{\rho_0}$ and $U_{\rho_1}$ of $\rho_0$ and $\rho_1$, respectively, using $O(1)$ queries to $Q_0$ and $Q_1$ as well as $O(1)$ one- and two-qubit quantum gates.

    Let $\beta = \min\{ \frac{\varepsilon}{2^{r+6} \ln(2^{r+6}/\varepsilon)}, \frac{1}{4} \}$, 
    $\epsilon \coloneqq \frac{\varepsilon^{3/2}}{64\cdot 2^{r/2}s^2\max\{1,\hat C_{\ln},C_{\ln},\tilde C_{\ln}\}}$,
    and $d' \coloneqq \tilde C_{\ln} \cdot \frac{1}{\beta} \log \frac{1}{\epsilon} = 2^{O(s(n))}$, 
    where $\tilde C_{\ln}$ comes from \Cref{corr:space-efficient-log}.
    Let $P_{d'}^{\ln} \in \mathbb{R}[x]$ be the polynomial specified in \Cref{corr:space-efficient-log} with $d' = 2d-1$.
    Let $\epsilon_H \coloneqq \frac{\varepsilon}{8\hat C_{\ln}\sqrt{d'}\ln(2/\beta)}$. 
    By utilizing \Cref{corr:log-polynomial-implementation} (with $\epsilon_1 \coloneqq 0$ and $\epsilon_2 \coloneqq \epsilon$) and the corresponding estimation procedure $\hat{\calT}(Q_i, U_{\rho_i}, P_{d'}^{\ln}, \epsilon, \epsilon_H, 1/10)$ from \Cref{lemma:space-efficient-quantum-tester}, we obtain the values $x_i$ for $i\in\binset$, ensuring the following inequalities: 
    \begin{equation}
        \label{eq:GapQEDlog-tester-pacc}
        \Pr{ \left| x_i - \Tr\left(P_{d'}^{\ln}\!\left(\rho_i\right)\rho_i\right) \right| \leq \hat{C}_{\ln} \sqrt{d'} \rbra{\epsilon + \epsilon_H} } \geq 0.9 \text{ for } i\in\binset. 
    \end{equation}
    Here, the implementation uses $O(d^2)$ queries to $U_{\rho_i}$ and $O(d^2)$ multi-controlled single-qubit gates.
    Moreover, the circuit descriptions of $\hat{\calT}(Q_i, U_{\rho_i}, P_{d'}^{\ln}, \epsilon, \epsilon_H, 1/10)$ can be computed in bounded-error time $\tilde O(d^{9}/\epsilon^4)$ and space $O(s(n))$. 

    \vspace{1em}
    We will finish the correctness analysis of \Cref{algo:GapQEDlog-in-BQL} by showing $\Pr{\left| 2 \ln\rbra[\big]{\tfrac{2}{\beta}} x_i - \S(\rho_i)\right| \leq \varepsilon} \geq 0.9$ through \Cref{eq:GapQEDlog-tester-pacc}. By considering the approximation error of $P_{d'}^{\ln}$ in \Cref{corr:space-efficient-log} and the QSVT implementation error in \Cref{corr:log-polynomial-implementation}, we derive the following inequality in \Cref{prop:entropy-technical}, the proof of which is presented immediately afterward: 
    \begin{proposition}
        \label{prop:entropy-technical}
        The following inequality holds for $i \in \binset$: 
        \[\mathrm{Pr}\Big[\left| 2 \ln\!\big(\tfrac{2}{\beta}\big) x_i - \S(\rho_i) \right| \leq 2 \ln\!\big(\tfrac{2}{\beta}\big) \rbra*{ \hat C_{\ln}\sqrt{d'}(\epsilon+\epsilon_H)+C_{\ln}\epsilon+2^{r+1}\beta}  \Big] \geq 0.9.\] 
    \end{proposition}

    Consequently, it is left to show that $2 \ln\!\big(\tfrac{2}{\beta}\big) \rbra*{ \hat C_{\ln}\sqrt{d'}(\epsilon+\epsilon_H)+C_{\ln}\epsilon+2^{r+1}\beta}  \leq \varepsilon$ for the specified value of $\beta$, $\epsilon$, and $\epsilon_H$. This can be seen by noting that 
    \[ 2\ln\rbra[\big]{\tfrac{2}{\beta}} \hat{C}_{\ln} \sqrt{d'} \epsilon_{H} = \tfrac{\varepsilon}{4}, \quad 2\ln\rbra[\big]{\tfrac{2}{\beta}} (\hat C_{\ln}\sqrt{d'} + C_{\ln}) \epsilon \leq \tfrac{\varepsilon}{2}, \quad\text{and}\quad 2 \ln\rbra[\big]{\tfrac{2}{\beta}} \cdot 2^{r+1} \beta \leq \tfrac{\varepsilon}{4}.\]
    The first identity and the second inequality are trivial. The third inequality, stated in \Cref{prop:GapQEDlog-beta-bound}, is accompanied by a proof presented immediately afterward: 
    \begin{proposition}
        \label{prop:GapQEDlog-beta-bound}
        $2 \ln(\frac{2}{\beta}) \cdot 2^{r+1} \beta \leq \tfrac{\varepsilon}{4}$.
    \end{proposition}

    Finally, we analyze the computational resources required for \Cref{algo:GapQEDlog-in-BQL}. 
    As per \Cref{lemma:space-efficient-quantum-tester}, we can compute $x$ in \BQL{}, with the resulting algorithm requiring $O(d^2/\epsilon_H^2) = \tilde O(2^{3r}/\varepsilon^5)$ queries to $Q_0$ and $Q_1$. Furthermore, its circuit description can be computed in bounded-error randomized time $\tilde{O}(d^{9}/\epsilon^4) = \tilde O(2^{11r}/\varepsilon^{15})$. 
\end{proof}

\paragraph*{\GapQJSlog{} is in \BQL{}.}
We can achieve $\GapQJSlog \in \BQL$ by employing the estimation procedure $\hat{\calT}$ in \Cref{algo:GapQEDlog-in-BQL} for \textit{three corresponding states}, given that the quantum Jensen-Shannon divergence $\QJS(\rho_0,\rho_1)$ is a linear combination of $\S(\rho_0), \S(\rho_1)$, and $\S\big(\frac{\rho_0+\rho_1}{2}\big)$. Nevertheless, the logspace Karp reduction from \GapQJSlog{} to \GapQEDlog{} (\Cref{corr:GapQJSlog-in-BQL}) allows us to utilize $\hat{\calT}$ for only \textit{two} states.\footnote{It is worth noting that a more direct approach was recently presented in~\cite[Equation (4)]{LW25entropy}, where $\QJS(\rho_0,\rho_1)$ is expressed as the difference between the von Neumann entropies of the quantum states $\rbra[\big]{\frac{\rho_0+\rho_1}{2}} \otimes \rbra[\big]{\frac{\rho_0+\rho_1}{2}}$ and $\rho_0 \otimes \rho_1$. Consequently, the proof is much simpler than that of \Cref{corr:GapQJSlog-in-BQL}.} Furthermore, our construction is adapted from the time-bounded scenario~\cite[Lemma 4.4]{Liu23}. 

\begin{corollary}
    \label{corr:GapQJSlog-in-BQL}
    For any functions $\alpha(n)$ and $\beta(n)$ that can be computed in deterministic logspace and satisfy $\alpha(n)-\beta(n) \geq 1/\poly(n)$, we have that $\GapQJSlog[\alpha(n),\beta(n)]$ is in \BQL{}. 
\end{corollary}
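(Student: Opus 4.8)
The plan is to exhibit a deterministic logspace many-one (Karp) reduction from $\GapQJSlog[\alpha(n),\beta(n)]$ to $\GapQEDlog[g(n)]$ for a suitable deterministic-logspace-computable $g(n) \geq 1/\poly(n)$, and then conclude by \Cref{thm:GapQEDlog-in-BQL} together with the fact that \BQL{} is closed under deterministic logspace reductions. The reduction is a space-bounded adaptation of \cite[Lemma 4.4]{Liu23}. Given a $\GapQJSlog$ instance $(Q_0,Q_1)$ whose output states are $\rho_0,\rho_1$ on $r(n)$ qubits, I would form two auxiliary states on $r(n)+O(1)$ qubits: the classical-quantum state $\xi \coloneqq \tfrac12\big(\ketbra{0}{0}\otimes\rho_0 + \ketbra{1}{1}\otimes\rho_1\big)$ and the ``averaged'' state $\zeta \coloneqq \tfrac{I_2}{2}\otimes\tfrac{\rho_0+\rho_1}{2}$. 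By the joint entropy theorem (\Cref{lemma:joint-entropy-theorem}) one gets $\S(\xi) = \ln 2 + \tfrac12\big(\S(\rho_0)+\S(\rho_1)\big)$ and $\S(\zeta) = \ln 2 + \S\!\big(\tfrac{\rho_0+\rho_1}{2}\big)$, so $\S(\zeta) - \S(\xi) = \QJS(\rho_0,\rho_1) = (\ln 2)\,\QJS_2(\rho_0,\rho_1)$; in particular this difference is $\geq \alpha(n)\ln 2$ on \emph{yes} instances and $\leq \beta(n)\ln 2$ on \emph{no} instances.

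Since $\GapQEDlog$ is a ``which entropy is larger, by a gap $g$'' promise problem, I then need to shift this difference so the two cases straddle $0$. To do so I would tensor $\xi$ (but not $\zeta$) with a qubit in the state $\tau_p \coloneqq \mathrm{diag}(p,1-p)$, choosing $p\in[0,1/2]$ so that its von Neumann entropy $-p\ln p-(1-p)\ln(1-p)$ approximates $\gamma\coloneqq\tfrac{\alpha+\beta}{2}\ln 2$ up to additive error $\tfrac14(\alpha-\beta)\ln 2$ (which is possible, as $\gamma\in[0,\ln 2]$ and the qubit entropy function is continuous onto $[0,\ln 2]$); padding $\zeta$ with a fresh $\ket{0}$ qubit matches dimensions. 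Writing $\xi'\coloneqq\xi\otimes\tau_p$ and $\zeta'\coloneqq\ketbra{0}{0}\otimes\zeta$, additivity of entropy gives $\S(\zeta')-\S(\xi') = (\ln 2)\,\QJS_2(\rho_0,\rho_1) - \gamma'$ with $|\gamma'-\gamma|\leq\tfrac14(\alpha-\beta)\ln 2$. A short case check then shows $\S(\zeta')-\S(\xi') \geq g$ on \emph{yes} instances and $\S(\xi')-\S(\zeta') \geq g$ on \emph{no} instances, where $g(n)\coloneqq\tfrac14(\alpha(n)-\beta(n))\ln 2 \geq 1/\poly(n)$. The reduction outputs the pair $(Q_0',Q_1')$, where $Q_0'$ prepares a purification of $\zeta'$ and $Q_1'$ a purification of $\xi'$, together with the parameter $g$; this is precisely a $\GapQEDlog[g]$ instance with $r'(n)=r(n)+O(1)$ output qubits.

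It remains to verify that the reduction runs in deterministic logspace. Purifications of $\xi$ and $\zeta$ are obtained from $Q_0,Q_1$ by elementary surgery on the gate sequences: a Hadamard on an index qubit, a \CNOT{} copying it into a purifying ancilla, controlled versions of $Q_0$ and $Q_1$ writing $\ket{\rho_0}$ resp.\ $\ket{\rho_1}$ into the data register (for $\zeta$, prepare $\tfrac{1}{\sqrt 2}(\ket{0}\ket{\rho_0}+\ket{1}\ket{\rho_1})$, adjoin a maximally entangled ancilla pair for the $I_2/2$ factor, and trace out the flag), plus a single-qubit $Y$-rotation followed by a \CNOT{} to prepare $\tau_p$ and its purification; each such transformation of a circuit description is logspace-uniform, and the resulting $Q_0',Q_1'$ are polynomial-size on $O(\log n)$ qubits with a succinct description. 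The only quantitative point is computing a dyadic approximation of the rotation angle, i.e.\ a $p$ with the required entropy: since $\alpha,\beta$ and $\ln 2$ are deterministic-logspace computable and $-p\ln p-(1-p)\ln(1-p)$ is a composition of holonomic functions evaluable to $\ell$-bit precision in $O(\ell)$ space (\Cref{remark:evaluating-funcs}), a binary search over $O(\log(1/(\alpha-\beta)))$-bit candidates for $p$ finds one in logspace, and $g$ is likewise logspace computable. The main obstacle is not any single step but the bookkeeping: one must confirm that the unavoidable inexactness $\gamma'\neq\gamma$ of the entropy shift is absorbed into the inverse-polynomial slack of the $\GapQEDlog$ promise (so the constructed instance is a genuine promise instance), and that the classical-quantum gadgets are realized by logspace-uniform circuit surgery. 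Once this is in place, \Cref{thm:GapQEDlog-in-BQL} and closure of \BQL{} under logspace reductions give $\GapQJSlog[\alpha(n),\beta(n)]\in\BQL$.
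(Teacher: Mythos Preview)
Your proposal is correct and follows essentially the same route as the paper: reduce $\GapQJSlog$ to $\GapQEDlog$ via the ``entropy-shift'' gadget of \cite[Lemma 4.4]{Liu23}, then invoke \Cref{thm:GapQEDlog-in-BQL}. The only cosmetic difference is where the shift sits: the paper builds it directly into the flag qubit of the averaged state (taking $\rho'_0=(p_0\ketbra{0}{0}+p_1\ketbra{1}{1})\otimes\tfrac{\rho_0+\rho_1}{2}$ with $\binH(p_0)=1-\tfrac{\alpha+\beta}{2}$), whereas you keep the flag maximally mixed and instead tensor the classical-quantum state with an extra $\tau_p$. This is the same calculation rearranged, and your explicit treatment of the rotation-angle approximation (yielding $g=\tfrac14(\alpha-\beta)\ln 2$ rather than the paper's $\tfrac12(\alpha-\beta)\ln 2$) is a point the paper only addresses in a footnote via the space-efficient Solovay--Kitaev theorem.
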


\begin{proof}
Let $Q_0$ and $Q_1$ be the given $s(n)$-qubit quantum circuits where $s(n)=\Theta(\log{n})$. 
Consider a classical-quantum mixed state on a classical register $\sfB$ and a quantum register $\sfY$, denoted by $\rho'_1 \coloneqq \frac{1}{2}\ket{0}\bra{0}\otimes \rho_0 + \frac{1}{2} \ket{1}\bra{1}\otimes \rho_1$, where $\rho_0$ and $\rho_1$ are states obtained by running $Q_0$ and $Q_1$, respectively, and tracing out the non-output qubits. 
We utilize our reduction to output classical-quantum mixed states $\rho'_0$ and $\rho'_1$, which are the output of $(s+2)$-qubit quantum circuits $Q'_0$ and $Q'_1$,\footnote{To construct $Q'_1$, we follow these steps: 
We start by applying a \Had{} gate on $\sfB$ followed by a $\CNOT_{\sfB \rightarrow \sfR}$ gate where $\sfB$ and $\sfR$ are single-qubit quantum registers initialized on $\ket{0}$. 
Next, we apply the controlled-$Q_1$ gate on the qubits from $\sfB$ to $\sfS$, where $\sfS=(\sfY,\sfZ)$ is an $s(n)$-qubit register initialized on $\ket{\bar{0}}$. We then apply $X$ gate on $\sfB$ followed by the controlled-$Q_0$ gate on the qubits from $\sfB$ to $\sfS$, and we apply $X$ gate on $\sfB$ again. 
Finally, we obtain $\rho'_1$ by tracing out $\sfR$ and the qubits in $\sfZ$. 
In addition, we can construct $Q'_0$ similarly.} respectively, where $\rho'_0\coloneqq(p_0\ket{0}\bra{0}+p_1\ket{1}\bra{1})\otimes(\frac{1}{2}\rho_0+\frac{1}{2}\rho_1)$ and $\sfB'\coloneqq(p_0,p_1)$ is an independent random bit with entropy $\H(\sfB')=1-\frac{1}{2}[\alpha(n)+\beta(n)]$. 
Let $\S_2(\rho)\coloneqq\S(\rho)/\ln{2}$ for any quantum state $\rho$, we then have derived that: 
\begin{subequations}
\label{eq:QJS-to-QED}
\begin{align}
	\S_2(\rho'_0)-\S_2(\rho'_1) &= \S_2(\sfB',\sfY)_{\rho'_0} - \S_2(\sfB,\sfY)_{\rho'_1}\\
	&= [\H(\sfB')+\S_2(\sfY|\sfB')_{\rho'_0}]-[\H(\sfB)+\S_2(\sfY|\sfB)_{\rho'_1}]\\
	&= \S_2(\sfY)_{\rho'_0} - \S_2(\sfY|\sfB)_{\rho'_1} + \H(\sfB') - \H(\sfB)\\
	&= \S_2(\sfY)_{\rho'_0}-\S_2(\sfY|\sfB)_{\rho'_1} -\tfrac{1}{2}[\alpha(n)+\beta(n)]\\
	&= \S_2 \rbra*{ \tfrac{1}{2}\rho_0+\tfrac{1}{2}\rho_1 } - \tfrac{1}{2}(\S_2(\rho_0)+\S_2(\rho_1)) -\tfrac{1}{2}[\alpha(n)+\beta(n)]\\
	&= \QJS_2(\rho_0,\rho_1) - \tfrac{1}{2}[\alpha(n)+\beta(n)].
\end{align}
\end{subequations}
Here, the second line derives from the definition of quantum conditional entropy and acknowledges that both $\sfB$ and $\sfB'$ are classical registers. The third line owes to the independence of $\sfB'$ as a random bit. Furthermore, the fifth line relies on the Joint entropy theorem (\Cref{lemma:joint-entropy-theorem}).	

By plugging \Cref{eq:QJS-to-QED} into the conditions in the promise of $\GapQJS_{\log}[\alpha(n),\beta(n)]$, we can define $g(n')\coloneqq\frac{\ln{2}}{2} \big( \alpha(n)-\beta(n) \big)$ and conclude that:
\begin{itemize}
	\item If $\QJS_2(\rho_0,\rho_1) \geq \alpha(n)$, then $\S(\rho'_0)-\S(\rho'_1) \geq \frac{\ln{2}}{2} \big( \alpha(n)-\beta(n) \big) = g(n')$;
	\item If $\QJS_2(\rho_0,\rho_1) \leq \beta(n)$, then $\S(\rho'_0)-\S(\rho'_1) \leq -\frac{\ln{2}}{2} \big(\alpha(n)-\beta(n) \big) = -g(n')$.
\end{itemize}
As $\rho'_1$ and $\rho'_0$ are $r'(n')$-qubit states where $n'\coloneqq3n$\footnote{By inspecting the circuit description of $Q'_0$ and $Q'_1$ (see~\cite[Section 4.2]{Liu23} for details), the maximum number of gates in $Q'_0$ and $Q'_1$ is $2n+9+\polylog(1/\epsilon) \leq 3n$ for large enough $n$. Specifically, the implementation of $R_{\theta}$ in~\cite[Figure 1]{Liu23} requires $\polylog(1/\epsilon) = \polylog(n)$ gates due to the space-efficient Solovay--Kitaev theorem~\cite[Theorem 4.3]{vMW12}.} and $r'(n')\coloneqq r(n)+1$, the output length of the corresponding space-bounded quantum circuits $Q'_0$ and $Q'_1$ is $r'(n')$. Hence, $\GapQJS_{s}[\alpha,\beta]$ is logspace Karp reducible to $\GapQED_{s+3}[g]$ by mapping $(Q_0,Q_1)$ to $(Q'_0,Q'_1)$.
\end{proof}

We next present the proofs of \Cref{prop:entropy-technical,prop:GapQEDlog-beta-bound}:

\begin{proof}[Proof of \Cref{prop:entropy-technical}]
    We only prove the case with $i=0$ while the case with $i=1$ follows straightforwardly. By applying the triangle inequality with $i=0$, we have: 
    \[ \left| 2\ln\!\big(\tfrac{2}{\beta}\big) x_0 - \S(\rho_0) \right|
     \leq \left| 2\ln\!\big(\tfrac{2}{\beta}\big) x_0 - 2\ln\!\big(\tfrac{2}{\beta}\big) \Tr\left(P_{d'}^{\ln}(\rho_0)\rho_0\right) \right| + \left| 2\ln\!\big(\tfrac{2}{\beta}\big) \Tr\left(P_{d'}^{\ln}(\rho_0)\rho_0\right) - \S(\rho_0) \right|. \]
    For the first term, by noting the QSVT implementation error in \Cref{corr:log-polynomial-implementation}, we have by \Cref{eq:GapQEDlog-tester-pacc} that with probability at least $0.9$, it holds that
    \begin{equation}
    \label{eq:entropy-first-term}
    \left| 2\ln\!\big(\tfrac{2}{\beta}\big) x_0 - 2 \ln\!\big(\tfrac{2}{\beta}\big) \Tr\rbra[\big]{P_{d'}^{\ln}(\rho_0)\rho_0} \right| \leq 2 \ln\!\big(\tfrac{2}{\beta}\big)  \hat{C}_{\ln} \sqrt{d'} (\epsilon + \epsilon_H). 
    \end{equation}
    For the second term, let $\rho_0 = \sum_{j} \lambda_j \ket{\psi_j} \bra{\psi_j}$, where $\{ \ket{\psi_j} \}$ is an orthonormal basis.
    Then,
    \begin{equation}
    \label{eq:entropy-second-term}
    \left| 2\ln\!\big(\tfrac{2}{\beta}\big) \Tr\rbra[\big]{P_{d'}^{\ln}(\rho_0)\rho_0} - \S(\rho_0) \right| \leq \sum_j \left| 2 \ln\!\big(\tfrac{2}{\beta}\big) \lambda_j P_{d'}^{\ln}(\lambda_j) - \lambda_j \ln(1/\lambda_j) \right|.
    \end{equation}
    We split the summation over $j$ into two separate summations:
    $\sum_j = \sum_{\lambda_j > \beta} + \sum_{\lambda_j \leq \beta}.$
    By noticing the approximation error of $P_{d'}^{\ln}$ in \Cref{corr:space-efficient-log} and $\sum_{j} |\lambda_j| = \Tr(\rho_0) \leq 1$, we can then obtain the following results for each of the two summations: 
    \begin{align*}
    \sum_{\lambda_j > \beta} \left| 2\ln\!\big(\tfrac{2}{\beta}\big) \lambda_j P_{d'}^{\ln}(\lambda_j) - \lambda_j \ln(1/\lambda_j) \right| 
    & = \sum_{\lambda_j > \beta} |\lambda_j| \cdot \left| 2 \ln\!\big(\tfrac{2}{\beta}\big) P_{d'}^{\ln}(\lambda_j) - \ln(1/\lambda_j) \right| \\
    & \leq \sum_{\lambda_j > \beta} |\lambda_j| \cdot 2 \ln\!\big(\tfrac{2}{\beta}\big) C_{\ln} \epsilon \\
    & \leq 2 \ln\!\big(\tfrac{2}{\beta}\big) C_{\ln} \epsilon,
    \end{align*}
    \begin{align*}
    \sum_{\lambda_j \leq \beta} \left| 2 \ln\!\big(\tfrac{2}{\beta}\big) \lambda_j P_{d'}^{\ln}(\lambda_j) - \lambda_j \ln(1/\lambda_j) \right|
    & \leq \sum_{\lambda_j \leq \beta} \left( 2 \ln\!\big(\tfrac{2}{\beta}\big) |\lambda_j| + |\lambda_j| \ln(1/\lambda_j) \right) \\
    & \leq 2 \ln\!\big(\tfrac{2}{\beta}\big) \cdot 2^r \beta + 2^r \beta\ln(1/\beta)\\
    & \leq  2\ln\!\big(\tfrac{2}{\beta}\big) 2^{r+1} \beta.
    \end{align*}
    Hence, we have derived the following inequality by summing over the aforementioned inequalities: 
    \begin{equation}
    \label{eq:entropy-last-term}
    \sum_j \left| 2 \ln\!\big(\tfrac{2}{\beta}\big) \lambda_j P_{d'}^{\ln}(\lambda_j) - \lambda_j \ln(1/\lambda_j) \right| \leq 2 \ln\!\big(\tfrac{2}{\beta}\big) C_{\ln} \epsilon + 2\ln\!\big(\tfrac{2}{\beta}\big) 2^{r+1} \beta.
    \end{equation}
    By combining \Cref{eq:entropy-first-term,eq:entropy-second-term,eq:entropy-last-term}, we conclude that
    \[
    \left| 2\ln\!\big(\tfrac{2}{\beta}\big) x_0 - \S(\rho_0) \right| \leq 2 \ln\!\big(\tfrac{2}{\beta}\big) \left(  \hat{C}_{\ln}\sqrt{d'} \rbra[\big]{\epsilon + \epsilon_H}  + C_{\ln} \epsilon + 2^{r+1} \beta \right). \qedhere
    \]
\end{proof}

\begin{proof}[Proof of \Cref{prop:GapQEDlog-beta-bound}]
    Note that the choice of $\beta$ is given by $\beta \coloneqq \frac{\varepsilon}{2^{r+6} \ln(\frac{2^{r+6}}{\varepsilon})}$.
    Then, to demonstrate the inequality $2 \ln(\frac{2}{\beta}) \cdot 2^{r+1} \beta \leq \tfrac{\varepsilon}{4}$, it suffices to prove that
    \begin{equation}
        \label{eq:GapQEDlog-beta-bound}
        2 \ln\left(\frac{2^{r+7} \ln(\frac{2^{r+6}}{\varepsilon})}{\varepsilon}\right) \cdot \frac{\varepsilon}{2^{5} \ln(\frac{2^{r+6}}{\varepsilon})} \leq \frac{\varepsilon}{4}.
    \end{equation}
    
    Let $x \coloneqq 2^{-r-6} \varepsilon \in (0, 1)$, then \Cref{eq:GapQEDlog-beta-bound} becomes $\ln\left(\frac{2}{x} \ln\left(\frac{1}{x}\right)\right) \leq 4 \ln\left(\frac{1}{x}\right)$.
    This simplifies further to $2x^3\ln\left(\frac{1}{x}\right) \leq 1$.
    
    To complete the proof, let $f(x) = 2x^3\ln(\frac{1}{x})$, then its first derivative is $f'(x) = 2x^2 \left(3\ln\left(\frac{1}{x}\right)-1\right)$. 
    Note that $f'(x) > 0$ for $x \in (0, e^{-1/3})$ and $f'(x) < 0$ for $x \in (e^{-1/3}, 1)$. Thus, $f(x)$ is monotonically increasing for $x \in (0, e^{-1/3})$ and monotonically decreasing for $x \in (e^{-1/3}, 1)$.
    Therefore, $f(x)$ takes the maximum value at $x = e^{-1/3}$, and consequently, $f(x) \leq f(e^{-1/3}) = \frac{2}{3e} \leq 1$.
\end{proof}

\subsection{\coCertQSDlog{} and \coCertQHSlog{} are in \coRQUL{}}
\label{subsec:coCertQSD-in-coRQL}

To make the error one-sided, we adapt the Grover search when the number of solutions is one quarter~\cite{BBHT98}, also known as the exact amplitude amplification~\cite{BHMT02}. 

\begin{lemma} [Exact amplitude amplification, adapted from~{\cite[Equation 8]{BHMT02}}]
\label{lemma:exact-amplitude-amplification}
    Suppose $U$ is a unitary of interest such that $U\ket{\bar 0} = \sin(\theta) \ket{\psi_0} + \cos(\theta) \ket{\psi_1}$, where $\ket{\psi_0}$ and $\ket{\psi_1}$ are normalized pure states and $\innerprod{\psi_0}{\psi_1} = 0$.
    Let $G = - U (I - 2\ket{\bar 0}\bra{\bar 0}) U^\dag (I - 2\ket{\psi_0}\bra{\psi_0})$ be the Grover operator. Then, for every integer $j \geq 0$, we have
    \[G^j U \ket{\bar 0} = \sin((2j+1)\theta) \ket{\psi_0} + \cos((2j+1)\theta) \ket{\psi_1}.\]

    \noindent In particular, with a single application of $G$, we obtain 
    \[G U \ket{\bar 0} = \sin(3\theta) \ket{\psi_0} + \cos(3\theta) \ket{\psi_1},\]
    signifying that $G U \ket{\bar 0} = \ket{\psi_0}$ when $\sin(\theta) = 1/2$.
\end{lemma}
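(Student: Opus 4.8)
The plan is to carry out the standard two-dimensional rotation analysis underlying amplitude amplification. First I would set $\ket{\phi} \coloneqq U\ket{\bar 0} = \sin(\theta)\ket{\psi_0} + \cos(\theta)\ket{\psi_1}$ and note that $U(I - 2\ketbra{\bar 0}{\bar 0})U^\dagger = I - 2\ketbra{\phi}{\phi}$ by unitarity, so that
\[
G = -\big(I - 2\ketbra{\phi}{\phi}\big)\big(I - 2\ketbra{\psi_0}{\psi_0}\big) = \big(2\ketbra{\phi}{\phi} - I\big)\big(I - 2\ketbra{\psi_0}{\psi_0}\big).
\]
Thus $G$ is the composition of two reflections: within the relevant plane, $I - 2\ketbra{\psi_0}{\psi_0}$ is the reflection about $\span\{\ket{\psi_1}\}$, and $2\ketbra{\phi}{\phi} - I$ is the reflection about $\span\{\ket{\phi}\}$; since $\ket{\phi}$ sits at angle $\theta$ from $\ket{\psi_1}$, a product of two reflections in a plane is a rotation by twice the angle between their axes, i.e.\ by $2\theta$.

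To make this rigorous, I would next restrict attention to the subspace $V \coloneqq \span\{\ket{\psi_0},\ket{\psi_1}\}$. Both factors of $G$ preserve $V$: the operator $I - 2\ketbra{\psi_0}{\psi_0}$ negates $\ket{\psi_0}$ and fixes $\ket{\psi_1}$; and $2\ketbra{\phi}{\phi} - I$ preserves $V$ because $\ket{\phi}\in V$, so for $\ket{v}\in V$ it returns $2\innerprod{\phi}{v}\ket{\phi} - \ket{v}\in V$. Hence $G$ preserves $V$, and since $U\ket{\bar 0} = \ket{\phi}\in V$, every vector $G^jU\ket{\bar 0}$ remains in $V$; it therefore suffices to analyze $G$ restricted to the plane $V$. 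In the orthonormal basis $(\ket{\psi_0},\ket{\psi_1})$ of $V$, I would write $I - 2\ketbra{\psi_0}{\psi_0}$ as $\diag(-1,1)$ and, from $\ketbra{\phi}{\phi} = \begin{psmallmatrix}\sin^2\theta & \sin\theta\cos\theta\\ \sin\theta\cos\theta & \cos^2\theta\end{psmallmatrix}$, obtain $2\ketbra{\phi}{\phi} - I = \begin{psmallmatrix} -\cos2\theta & \sin2\theta\\ \sin2\theta & \cos2\theta\end{psmallmatrix}$; multiplying gives $G|_V = \begin{psmallmatrix} \cos2\theta & \sin2\theta\\ -\sin2\theta & \cos2\theta\end{psmallmatrix}$, a rotation by $2\theta$.

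Finally, a trivial induction on $j$ (equivalently, $(G|_V)^j = \begin{psmallmatrix} \cos2j\theta & \sin2j\theta\\ -\sin2j\theta & \cos2j\theta\end{psmallmatrix}$) yields, upon applying this to the coordinate vector $(\sin\theta,\cos\theta)^{\top}$ of $\ket{\phi}$ and using the angle-addition formulas,
\[
G^jU\ket{\bar 0} = \sin\!\big((2j+1)\theta\big)\ket{\psi_0} + \cos\!\big((2j+1)\theta\big)\ket{\psi_1},
\]
which is the claimed identity; the case $j=1$ gives $GU\ket{\bar 0} = \sin(3\theta)\ket{\psi_0} + \cos(3\theta)\ket{\psi_1}$, and when $\sin\theta = 1/2$ the triple-angle formula $\sin 3\theta = 3\sin\theta - 4\sin^3\theta = 1$ forces $\cos 3\theta = 0$, so $GU\ket{\bar 0} = \ket{\psi_0}$.

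There is no substantive obstacle; the only points requiring care are bookkeeping the overall sign in $G$ (so that the $-1$ combines with the reflection negating $\ket{\psi_0}$ to produce an honest rotation rather than another reflection) and fixing the orientation convention so that the rotation angle is $2\theta$ and the seed $\ket{\phi}$ sits at angle $\theta$ from $\ket{\psi_1}$, making $(2j+1)\theta$ the angle after $j$ iterations. One should also remark that $U$ may act on a space strictly larger than $V$, but this is irrelevant here since $U\ket{\bar 0}\in V$ by hypothesis and $G$ leaves $V$ invariant.
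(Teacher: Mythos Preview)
Your proof is correct and is precisely the standard two-reflection rotation argument that underlies the cited result in~\cite{BHMT02}. The paper does not supply its own proof of this lemma but simply imports it from~\cite[Equation 8]{BHMT02}, so there is nothing to compare beyond noting that your sketch reproduces the classical derivation faithfully.
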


Notably, when dealing with the unitary of interest with the property specified in \Cref{lemma:exact-amplitude-amplification}, which is typically a quantum algorithm with acceptance probability linearly dependent on the chosen distance-like measure (e.g., a tester $\calT$ from \Cref{lemma:space-efficient-quantum-tester}), \Cref{lemma:exact-amplitude-amplification} guarantees that the resulting algorithm $\calA$ accepts with probability exactly $1$ for \textit{yes} instances ($\rho_0=\rho_1$). However, achieving $\calA$ to accept with probability polynomially deviating from $1$ for \textit{no} instances requires additional efforts, leading to the \coRQUL{} containment established through error reduction for \coRQUL{} (\Cref{corr:error-reduction-untary-quantum-logspace}).
In a nutshell, demonstrating \coRQUL{} containment entails satisfying the desired property, which is achieved differently for \coCertQSDlog{} and \coCertQHSlog{}.

\subsubsection{\coCertQSDlog{} is in \coRQUL{}}

Our algorithm in \Cref{thm:coCertQSDlog-is-in-coRQUL} relies on the quantum tester $\calT(Q_i, U_{\frac{\rho_0-\rho_1}{2}},P_{d'}^{\sign},\epsilon)$, which is specified in \Cref{algo:GapQSD-in-BQL}. 
According to \Cref{remark:QSVT-parity-preserving}, the exact implementation of the space-efficient QSVT associated with odd polynomials preserves the original point. As a consequence, $\calT(Q_i, U_{\frac{\rho_0-\rho_1}{2}}, P_{d'}^{\sign},\epsilon)$ outputs $0$ with probability exactly $1/2$ when $\rho_0=\rho_1$, enabling us to derive the \coRQUL{} containment through a relatively involved analysis for cases when $\td(\rho_0,\rho_1) \geq \alpha$: 

\begin{theorem}
\label{thm:coCertQSDlog-is-in-coRQUL}
    For any deterministic logspace computable function $\alpha(n) \geq 1/\poly(n)$, we have that $\coCertQSDlog[\alpha(n)]$ is in \coRQUL{}.
\end{theorem}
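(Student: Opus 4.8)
The plan is to convert the two-sided-error algorithm of \Cref{algo:GapQSD-in-BQL} for \GapQSDlog{} into a one-sided one by replacing the success-probability \emph{estimation} in its last step with \emph{exact} amplitude amplification (\Cref{lemma:exact-amplitude-amplification}) and then invoking error reduction for \coRQUL{} (\Cref{corr:error-reduction-untary-quantum-logspace}). Following the first two steps of \Cref{algo:GapQSD-in-BQL}, I would build via \Cref{lemma:purified-density-matrix} and \Cref{lemma:space-efficient-LCU} the \emph{exact} $(1,O(\log n),0)$-block-encoding $U_{\frac{\rho_0-\rho_1}{2}}$ of $\tfrac{\rho_0-\rho_1}{2}$ --- exact because the two-term, equal-weight LCU uses a Hadamard gate as its (perfect) state-preparation operator --- and then, via \Cref{corr:space-efficient-sign} and \Cref{corr:sign-polynomial-implementation} with $\epsilon_1\coloneqq 0$ and polynomially small $\delta,\epsilon_2$, a block-encoding of $\tilde{P}^{\sign,\SV}_{d'}\!\big(\tfrac{\rho_0-\rho_1}{2}\big)$, where $\tilde{P}^{\sign}_{d'}$ denotes the odd polynomial actually implemented. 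I would then assemble a single $O(\log n)$-qubit unitary $U$: put a control qubit into $\tfrac{1}{\sqrt 2}(\ket{0}+\ket{1})$, run the tester $\calT\big(Q_0,U_{\frac{\rho_0-\rho_1}{2}},P^{\sign}_{d'},\epsilon\big)$ of \Cref{lemma:space-efficient-quantum-tester} conditioned on the control being $0$ and the tester $\calT\big(Q_1,U_{\frac{\rho_1-\rho_0}{2}},P^{\sign}_{d'},\epsilon\big)$ conditioned on it being $1$ (oddness of $\tilde{P}^{\sign}_{d'}$ gives $\tilde{P}^{\sign,\SV}_{d'}(\tfrac{\rho_1-\rho_0}{2})=-\tilde{P}^{\sign,\SV}_{d'}(\tfrac{\rho_0-\rho_1}{2})$), and adjoin an independent fair-coin qubit. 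Writing $x_i\coloneqq\operatorname{Re}\Tr\!\big(\tilde{P}^{\sign,\SV}_{d'}(\tfrac{\rho_0-\rho_1}{2})\rho_i\big)$, by \Cref{lemma:Hadamard-test} the designated good outcome ``Hadamard bit $=0$ and fair coin $=0$'' occurs with probability $\tfrac{1}{4}\big(1+\tfrac{x_0-x_1}{2}\big)$, and crucially $\tfrac{x_0-x_1}{2}=\operatorname{Re}\Tr\!\big(\tilde{P}^{\sign,\SV}_{d'}(\tfrac{\rho_0-\rho_1}{2})\cdot\tfrac{\rho_0-\rho_1}{2}\big)$.

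\emph{Perfect completeness.} When $\rho_0=\rho_1$ the encoded operator $\tfrac{\rho_0-\rho_1}{2}$ is \emph{identically} $0$, and since $\tilde{P}^{\sign}_{d'}$ is a real combination of odd Chebyshev polynomials we have $\tilde{P}^{\sign,\SV}_{d'}(0)=0$ exactly (the parity-preservation phenomenon of \Cref{remark:QSVT-parity-preserving}); hence $x_0=x_1=0$ and the good outcome has probability exactly $1/4$. Thus, in the notation of \Cref{lemma:exact-amplitude-amplification}, $U\ket{\bar 0}=\sin\theta\,\ket{\psi_0}+\cos\theta\,\ket{\psi_1}$ with $\ket{\psi_0}$ the (computational-basis) good subspace and $\sin\theta=1/2$, so a single Grover step produces a unitary circuit $\calA$ that accepts with probability exactly $1$.

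\emph{Soundness.} When $\td(\rho_0,\rho_1)\ge\alpha(n)$, the spectral estimate behind \Cref{prop:td-technical} applies: diagonalizing $\tfrac{\rho_0-\rho_1}{2}=\sum_j\lambda_j\ketbra{v_j}{v_j}$ and using $|\tilde{P}^{\sign}_{d'}|\le 1$ everywhere, $|\sign(\lambda)-\tilde{P}^{\sign}_{d'}(\lambda)|\le C_{\sign}\epsilon$ for $|\lambda|\ge\delta$, and $\sum_{|\lambda_j|<\delta}|\lambda_j|\le 2^r\delta$, one gets $\tfrac{x_0-x_1}{2}=\sum_j\tilde{P}^{\sign}_{d'}(\lambda_j)\lambda_j\ \ge\ \td(\rho_0,\rho_1)-C_{\sign}\epsilon-2^{r+1}\delta\ \ge\ \alpha(n)/2$ for the chosen polynomially small $\delta,\epsilon$. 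Since also $\tfrac{x_0-x_1}{2}\le\sum_j|\lambda_j|=\td(\rho_0,\rho_1)\le 1$, we have $\sin^2\theta=\tfrac{1}{4}\big(1+\tfrac{x_0-x_1}{2}\big)\in(\tfrac{1}{4},\tfrac{1}{2}]$, i.e.\ $\theta=\tfrac{\pi}{6}+\eta$ with $0<\eta\le\tfrac{\pi}{12}$ and $\eta\ge c\,\alpha(n)$ for an absolute constant $c>0$; one Grover step then gives $\calA$ acceptance probability $\sin^2(3\theta)=1-\sin^2(3\eta)\le 1-\Omega(\alpha(n)^2)=1-1/\poly(n)$.

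\emph{Wrap-up and the hard part.} The circuit family is logspace-uniform and \emph{unitary}: the coefficients of $P^{\sign}_{d'}$ are computed \emph{deterministically} in $O(\log n)$ space (\Cref{corr:space-efficient-sign}), the QSVT step uses $O(d^2)$ queries with a logspace-computable description (\Cref{corr:sign-polynomial-implementation}), and exact amplitude amplification adds only $O(1)$ further uses of $U,U^{\dagger}$ and of the reflections about $\ket{\bar 0}$ and about the good subspace. Hence $\coCertQSDlog[\alpha(n)]$ is in $\coRQUL$ with soundness error $1-1/\poly(n)$, which \Cref{corr:error-reduction-untary-quantum-logspace} improves to soundness error $1/2$, i.e.\ $\coCertQSDlog[\alpha(n)]\in\coRQUL$. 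I expect the main obstacle to be establishing \emph{exact} perfect completeness all the way through the QSVT pipeline --- specifically, verifying that the normalization/renormalization steps inside \Cref{corr:sign-polynomial-implementation} leave the implemented transform equal (on the nose) to an odd polynomial in $\tfrac{\rho_0-\rho_1}{2}$, so that it annihilates the zero operator exactly; the trigonometric soundness bound and the error reduction are then routine.
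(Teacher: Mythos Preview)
Your argument is correct. The core strategy matches the paper's: use the parity-preservation of the QSVT implementation (\Cref{remark:QSVT-parity-preserving}) to ensure that when $\rho_0=\rho_1$ the tester outputs $0$ with probability \emph{exactly} $1/2$, halve this to $1/4$ with an extra Hadamard, and invoke exact amplitude amplification for perfect completeness; soundness comes from the spectral estimate underlying \Cref{prop:td-technical}. The one structural difference is that the paper runs the two testers $U_i=\calT(Q_i,U_{\frac{\rho_0-\rho_1}{2}},P^{\sign}_{d'},\epsilon)$ \emph{separately} (each padded by a Hadamard to bring $p_i=1/2$ down to $1/4$), applies a Grover step to each, and accepts only when \emph{both} succeed --- so soundness goes through the indirect bound $\max_i|p_i-\tfrac12|\ge\varepsilon/2$ and the product estimate of \Cref{prop:coCertQSDlog-soundness-upper-bound}. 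Your single-circuit packaging, with good-outcome probability $\tfrac14\bigl(1+\tfrac{x_0-x_1}{2}\bigr)$ directly tied to the trace-distance estimate, sidesteps that proposition and yields the $1-\Omega(\alpha^2)$ soundness from one trigonometric inequality. Your flagged ``hard part'' is exactly the point the paper isolates in \Cref{remark:QSVT-parity-preserving}: with $\epsilon_1=0$ each odd $T_k^{\SV}(0)=0$ is implemented exactly, the LCU of zero operators block-encodes zero regardless of the approximate coefficients $\tilde c_k$, and the renormalization inside \Cref{corr:sign-polynomial-implementation} applies an odd Chebyshev polynomial (composed with a single-qubit rotation), which again preserves the zero operator exactly.
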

\begin{proof}
    We first present a formal algorithm in \Cref{algo:coCertQSDlog-in-RQL}:
    \begin{algorithm}[!htp]
    	\caption{Space-efficient algorithm for \coCertQSDlog{}.}
		\label{algo:coCertQSDlog-in-RQL}
        \SetKwInOut{Input}{Input}
        \SetKwInOut{Output}{Output}
        \SetKwInOut{Parameter}{Params}
        \Input{Quantum circuits $Q_i$ that prepare the purification of $\rho_i$ for $i \in \binset$.}
        \Output{Return ``yes'' if $\rho_0=\rho_1$, and ``no'' otherwise.}
        \Parameter{$\varepsilon \coloneqq \frac{\alpha}{2}$, $\delta \coloneqq \frac{\varepsilon}{2^{r+3}}$, 
        $\epsilon \coloneqq \frac{\varepsilon}{64\,s\,\max\{1,36 \hat{C}_{\sign},2 C_{\sign}+37,\tilde C_{\sign}\}}$,
        $d' \coloneqq \tilde C_{\sign} \, \frac{1}{\delta} \log\frac{1}{\epsilon} = 2d-1$.}
        1. Construct block-encodings of $\rho_0$ and $\rho_1$, denoted by $U_{\rho_0}$ and $U_{\rho_1}$, respectively, using $O(1)$ queries to $Q_0$ and $Q_1$ and $O(s(n))$ ancillary qubits by \Cref{lemma:purified-density-matrix}\;
        2. Construct a block-encoding of $\frac{\rho_0-\rho_1}{2}$, denoted by $U_{\frac{\rho_0-\rho_1}{2}}$, using $O(1)$ queries to $U_{\rho_0}$ and $U_{\rho_1}$ and $O(s(n))$ ancillary qubits by \Cref{lemma:space-efficient-LCU}\;
        \emph{Let $P_{d'}^{\sign}$ be the degree-$d'$ \emph{odd} polynomial specified in \Cref{corr:space-efficient-sign} with parameters $\!\delta\!$ and $\!\epsilon$, and its coefficients $\{\hat{c}_k\}_{k=0}^{d'}$ are computable in deterministic space $O(\log (d/\epsilon))$}\;
        3. Let $U_0 \coloneqq \calT(Q_0, U_{\frac{\rho_0-\rho_1}{2}}, P_{d'}^{\sign}, \epsilon)$ and $U_1 \coloneqq \calT(Q_1, U_{\frac{\rho_0-\rho_1}{2}}, P_{d'}^{\sign}, \epsilon)$\;
        4. Let $G_i \coloneqq - (H \otimes U_i) (I - 2\ket{\bar 0}\bra{\bar 0}) (H \otimes U_i^\dag) (I - 2\Pi_0)$ for $i \in \binset$, where $\Pi_0$ is the projection onto the subspace spanned by $\{ \ket{0}\ket{0}\ket{\varphi} \}$ over all $\ket{\varphi}$\;
        5. For each $i\in\binset$, measure the first two qubits of $G_i (H \otimes U_i) \ket{0}\ket{0}\ket{\bar 0}$, and let $x_{i0}$ and $x_{i1}$ be the outcomes, respectively. Return ``yes'' if $x_{00} = x_{01} = x_{10} = x_{11} = 0$, and ``no'' otherwise. 
        \BlankLine
    \end{algorithm}

    \paragraph*{Constructing the unitary of interest via the space-efficient QSVT.}
    We consider the setting with $s(n) = \Theta(\log{n})$ and  $\varepsilon = \alpha / 2$.
    Suppose $Q_0$ and $Q_1$ are $s(n)$-qubit quantum circuits that prepare the purifications of $\rho_0$ and $\rho_1$, respectively. 
    Similar to \Cref{algo:GapQSD-in-BQL}, we first construct an $O(s)$-qubit quantum circuit $U_{\frac{\rho_0 - \rho_1}{2}}$ that is a $(1,O(s), 0)$-block-encoding of $\frac{\rho_0-\rho_1}{2}$, using $O(1)$ queries to $Q_0$ and $Q_1$ and $O(1)$ one- and two-qubit quantum gates. 

    Let $\delta = \frac{\varepsilon}{2^{r+3}}$, $\epsilon \coloneqq \frac{\varepsilon}{64\,s\,\max\{1,36 \hat{C}_{\sign},2 C_{\sign}+37,\tilde C_{\sign}\}}$, and $d' \coloneqq \tilde C_{\sign} \cdot \frac{1}{\delta} \log\frac{1}{\epsilon} = 2^{O(s(n))}$, where $\tilde C_{\sign}$ comes from \Cref{corr:space-efficient-sign}.
    Let $P_{d'}^{\sign} \in \mathbb{R}[x]$ be the odd polynomial specified in \Cref{corr:space-efficient-sign}.
    Let $U_i \coloneqq \calT(Q_i, U_{\frac{\rho_0-\rho_1}{2}}, P_{d'}^{\sign}, \epsilon)$ for $i \in \binset$, then we have the following equalities:
    \[
        \forall i\in\binset, \quad U_i \ket{0}\ket{\bar 0} = \sqrt{p_i} \ket{0} \ket{\psi_{i,0}} + \sqrt{1 - p_i} \ket{1} \ket{\psi_{i,1}}, \quad\text{where } 0 \leq p_i \leq 1.
    \]

    Let $H$ be the \textsc{Hadamard} gate, then we derive the following equality for $i\in\binset$: 
    \[
    (H \otimes U_i) \ket{0} \ket{0}\ket{\bar 0} = \sqrt{\frac{p_i}{2}} \ket{0}\ket{0}\ket{\psi_{i,0}} + \underbrace{\sqrt{\frac{p_i}{2}} \ket{1}\ket{0}\ket{\psi_{i,0}} +
    \sqrt{\frac{1-p_i}{2}} \ket{0}\ket{1}\ket{\psi_{i,1}} +
    \sqrt{\frac{1-p_i}{2}} \ket{1}\ket{1}\ket{\psi_{i,1}}}_{\sqrt{1 - \frac{p_i}{2}}\ket{\perp_i}}.
    \]

    \paragraph*{Making the error one-sided by exact amplitude amplification.} 
    Consider the Grover operator $G_i \coloneqq - (H \otimes U_i) (I - 2\ket{\bar 0}\bra{\bar 0}) (H \otimes U_i^\dag) (I - 2\Pi_0)$, where $\Pi_0$ is the projection onto the subspace spanned by $\{ \ket{0}\ket{0}\ket{\varphi} \}$ over all $\ket{\varphi}$. 
    By employing the exact amplitude amplification (\Cref{lemma:exact-amplitude-amplification}), we can obtain that:  
    \begin{subequations}
        \label{eq:exact-AA-coCertQSDlog}
        \begin{align}
            G_i (H \otimes U_i) \ket{0}\ket{0}\ket{\bar 0} &= \sin(3\theta_i) \ket{0}\ket{0}\ket{\psi_{i,0}} + \cos(3\theta_i) \ket{\perp_i},\\
            \text{where } \sin^2(\theta_i) &= \frac{p_i}{2} \text{ when } \theta_i \in \sbra*{0, \frac{\pi}{4}}.
        \end{align}
    \end{subequations}
    Let $x_{i0}$ and $x_{i1}$ be the measurement outcomes of the first two qubits of $G_i (H \otimes U_i) \ket{0}\ket{0}\ket{\bar 0}$ for $i\in\binset$.
    \Cref{algo:coCertQSDlog-in-RQL} returns ``yes'' if $x_{00} = x_{01} = x_{10} = x_{11} = 0$, and ``no'' otherwise.
    Let $U_{P_{d'}^{\sign}\left(\frac{\rho_0 - \rho_1}{2}\right)}$ be the unitary operator being controlled in the implementation of $U_i \coloneqq \calT(Q_i, U_{\frac{\rho_0-\rho_1}{2}}, P_{d'}^{\sign}, \epsilon)$, and note that by \Cref{corr:sign-polynomial-implementation}, $U_{P_{d'}^{\sign}\left(\frac{\rho_0 - \rho_1}{2}\right)}$ is a $(1, O(s), (36 \hat{C}_{\sign} \log{d'} + 37)\epsilon)$-block-encoding of $P_{d'}^{\sign}\left(\frac{\rho_0-\rho_1}{2}\right)$.
    We will show the correctness of our algorithm as follows: 

    \begin{itemize}[leftmargin=2em]
        \item For \textit{yes} instances ($\rho_0=\rho_1$), $U_{P_{d'}^{\sign}\left(\frac{\rho_0 - \rho_1}{2}\right)}$ is a $(1,O(s), 0)$-block-encoding of the zero operator, following from \Cref{remark:QSVT-parity-preserving}. Consequently, $\calT(Q_i, U_{\frac{\rho_0-\rho_1}{2}}, P_{d'}^{\sign}, \epsilon)$ outputs $0$ with probability $1/2$ for $i \in \binset$, i.e., $p_0=p_1=1/2$. As a result, we have $\theta_0 = \theta_1 = \pi/6$ and $\sin^2(3\theta_0) = \sin^2(3\theta_1) = 1$. Substituting these values into \Cref{eq:exact-AA-coCertQSDlog}, we can conclude that $x_{00} = x_{01} = x_{10} = x_{11} = 0$ with certainty, which completes the analysis.

        \item For \textit{no} instances ($\td(\rho_0, \rho_1) \geq \alpha$), $U_{P_{d'}^{\sign}\left(\frac{\rho_0 - \rho_1}{2}\right)}$ is a $(1,O(s), 0)$-block-encoding of $A$ satisfying 
        \[\left\| A - P_{d'}^{\sign}\left(\frac{\rho_0-\rho_1}{2}\right) \right\| \leq (36 \hat{C}_{\sign} \log{d'} +37)\epsilon.\] 
        Let $p_i$ be the probability that $\calT(Q_i, U_{\frac{\rho_0-\rho_1}{2}}, P_{d'}^{\sign}, \epsilon)$ outputs $0$ for $i \in \binset$, then $p_i = \frac{1}{2}\big(1+\operatorname{Re}(\Tr(\rho_i A))\big)$ following from \Cref{lemma:space-efficient-quantum-tester}. 
        A direct calculation similar to \Cref{prop:td-technical} indicates that: 
        \[ \left| (p_0 - p_1) - \td(\rho_0, \rho_1) \right| \leq (36 \hat{C}_{\sign}\log{d'}+2 C_{\sign}+37)\epsilon + 2^{r+1} \delta. \]
        Under the choice of $\delta$ and $\epsilon$ (the same as in the proof of \Cref{thm:GapQSDlog-in-BQL}), we obtain that $|(p_0 - p_1) - \td(\rho_0, \rho_1)| \leq \varepsilon$ which yields that $\max\{ |p_0-1/2|, |p_1-1/2| \} \geq \varepsilon/2$.\footnote{This inequality is because $|p_0 - p_1| \geq \td(\rho_0, \rho_1) - \varepsilon \geq 2\varepsilon - \varepsilon = \varepsilon$.} 
        
        Noting that $\Pr{x_{i0}=x_{i1}=0}=\sin^2(3\theta_i)$ for $i \in \binset$, \Cref{algo:coCertQSDlog-in-RQL} will return ``yes'' with probability $p_{\yes} = \sin^2(3\theta_0) \sin^2(3\theta_1)$. We provide an upper bound for $p_{\yes}$ in \Cref{prop:coCertQSDlog-soundness-upper-bound}, with the proof provided immediately afterward: 
        \begin{proposition}
        \label{prop:coCertQSDlog-soundness-upper-bound}
        Let $f(\theta_0,\theta_1) \coloneqq \sin^2(3\theta_0) \sin^2(3\theta_1)$ be a function such that $\sin^2(\theta_i) = p_i/2$ for $i\in\binset$ and $\max\{ |p_0-1/2|, |p_1-1/2| \} \geq \varepsilon/2$, then $f(\theta_0,\theta_1) \leq 1 - \varepsilon^2/4$. 
        \end{proposition}

        Consequently, we finish the analysis by noticing $p_{\yes} = f(\theta_0,\theta_1) \leq 1 - \varepsilon^2/4 = 1 - \alpha^2/16$.
    \end{itemize}

    Now we analyze the complexity of \Cref{algo:coCertQSDlog-in-RQL}. 
    Following \Cref{lemma:space-efficient-quantum-tester}, we can compute $x_{00}, x_{01}, x_{10}, x_{11}$ in \BQL{}.
    The quantum circuit that computes $x_{00}, x_{01}, x_{10}, x_{11}$ takes $O(d^2 \log{d}) = \tilde O(2^{2r}/\alpha^2)$ queries to $Q_0$ and $Q_1$, and its circuit description can be computed in deterministic time $\tilde O(d^{9/2}/\alpha) = \tilde O(2^{4.5r}/\alpha^{5.5})$. 
    Finally, we conclude the \coRQUL{} containment of \coCertQSDlog{} by applying error reduction for \coRQUL{} (\Cref{corr:error-reduction-untary-quantum-logspace}) to \Cref{algo:coCertQSDlog-in-RQL}. 
\end{proof}

\begin{proof}[Proof of \Cref{prop:coCertQSDlog-soundness-upper-bound}]
    We begin by stating the facts that $\sin^2(\theta_i) = p_i/2$ for $i \in \binset$ and $\sin^2(3\theta) = \sin^6(\theta) - 6\cos^2(\theta) \sin^4(\theta) + 9\cos^4(\theta)\sin^2(\theta)$. 
    Then we notice that $0 \leq p_0,p_1 \leq 1$ and complete the proof by a direct calculation:    
    \begin{align*}
        f(\theta_0,\theta_1)
        & = \left( 2p_0^3 - 6p_0^2 + \tfrac{9}{2} p_0 \right) \left( 2p_1^3 - 6p_1^2 + \tfrac{9}{2} p_1 \right) \\
        & \leq \left( 1 - \left( p_0 - \tfrac{1}{2} \right)^2 \right) \left( 1 - \left( p_1 - \tfrac{1}{2} \right)^2 \right) \\
        & \leq 1 - \left( \max\left\{\left|p_0 - \tfrac{1}{2}\right|, \left|p_1 - \tfrac{1}{2}\right|\right\} \right)^2 \\
        & \leq 1 - \tfrac{\varepsilon^2}{4}. \qedhere 
    \end{align*} 
\end{proof}

\subsubsection{\coCertQHSlog{} is in \coRQUL{}}

As a warm-up, we observe that $\HS(\rho_0,\rho_1)$ can be written as a summation of $\frac{1}{2}\Tr(\rho_0^2)$, $\frac{1}{2}\Tr(\rho_1^2)$, and $-\Tr(\rho_0\rho_1)$. It follows that \GapQHSlog{} is contained in \BQL{}: 

\begin{theorem}
    \label{thm:GapQHSlog-in-BQL}
    For any functions $\alpha(n)$ and $\beta(n)$ that can be computed in deterministic logspace and satisfy $\alpha(n)-\beta(n) \geq 1/\poly(n)$, we have that $\GapQHS_{\log}[\alpha(n),\beta(n)]$ is in \BQL{}. 
\end{theorem}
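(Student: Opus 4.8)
The plan is to avoid quantum singular value transformation entirely and reduce $\GapQHS_{\log}[\alpha(n),\beta(n)]$ to additive-error estimation of the three overlaps $\Tr(\rho_0^2)$, $\Tr(\rho_1^2)$, and $\Tr(\rho_0\rho_1)$, exploiting the identity $\HS(\rho_0,\rho_1) = \tfrac{1}{2}\big(\Tr(\rho_0^2)+\Tr(\rho_1^2)\big)-\Tr(\rho_0\rho_1)$. Each overlap is accessible through the SWAP test for mixed states (\Cref{lemma:swap-test}): by running the input circuits on $\ket{\bar 0}$ and tracing out their non-output qubits one obtains the $r(n)$-qubit marginals $\rho_0$ and $\rho_1$ with $r(n)\le O(\log n)$, and applying the SWAP test to a pair $(\rho_i,\rho_j)$ yields a circuit on $2r(n)+1=O(\log n)$ qubits, using $O(1)$ queries to $Q_i,Q_j$ and $O(r(n))$ elementary gates, whose measurement outputs $0$ with probability $p_{ij}:=\tfrac{1+\Tr(\rho_i\rho_j)}{2}$.

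First I would fix the precision $\varepsilon:=(\alpha(n)-\beta(n))/8$, which is $\ge 1/\poly(n)$ and deterministic-logspace computable by hypothesis. Applying \Cref{lemma:success-probability-estimation} to the three SWAP-test circuits for $(i,j)\in\{(0,0),(1,1),(0,1)\}$ produces estimates $\widehat p_{00},\widehat p_{11},\widehat p_{01}$, each within additive error $\varepsilon$ of the corresponding $p_{ij}$, using $O(\varepsilon^{-2})$ sequential repetitions (reusing the workspace, which is permitted since $\BQL$ allows intermediate measurements) and $O(\log n)$ space; boosting each success probability to, say, $5/6$ and taking a union bound costs only a constant factor. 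Setting $\widehat h:=\tfrac12\big((2\widehat p_{00}-1)+(2\widehat p_{11}-1)\big)-(2\widehat p_{01}-1)$, a triangle-inequality bound gives $|\widehat h-\HS(\rho_0,\rho_1)|\le 4\varepsilon=(\alpha-\beta)/2$ with probability at least $2/3$. The algorithm then accepts iff $\widehat h>(\alpha(n)+\beta(n))/2$: in the \emph{yes} case $\HS\ge\alpha$ forces $\widehat h>\alpha-(\alpha-\beta)/2=(\alpha+\beta)/2$, and in the \emph{no} case $\HS\le\beta$ forces $\widehat h<(\alpha+\beta)/2$, so the decision is correct with probability at least $2/3$.

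The entire procedure uses $O(\log n)$ quantum space, makes $O(\varepsilon^{-2})=\poly(n)$ queries to $Q_0,Q_1$, and has a circuit description computable in deterministic logspace, so it witnesses $\GapQHS_{\log}\in\BQSPACE[O(\log n)]=\BQL$ (the equivalence $\BQL=\BQUL$ of \cite{FR21} is not even needed here). Since the construction relies only on the SWAP test and space-efficient sequential repetition, I do not anticipate any genuine obstacle; the only points requiring care are verifying that the SWAP-test circuit really acts on $O(\log n)$ qubits once $\rho_0,\rho_1$ are taken to be the output-qubit marginals of $Q_0,Q_1$ (so that the ancilla count and gate count stay logarithmic), and tracking the constant factors in the error analysis so that the combined additive error of $\widehat h$ stays strictly below $(\alpha-\beta)/2$.
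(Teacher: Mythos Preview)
Your proposal is correct and follows essentially the same approach as the paper: decompose $\HS(\rho_0,\rho_1)=\tfrac{1}{2}(\Tr(\rho_0^2)+\Tr(\rho_1^2))-\Tr(\rho_0\rho_1)$, estimate each overlap via the SWAP test (\Cref{lemma:swap-test}), and amplify by sequential repetition (\Cref{lemma:success-probability-estimation}). The only differences are cosmetic choices of constants (the paper takes $\varepsilon=(\alpha-\beta)/100$) and that you spell out the thresholding step in more detail.
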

\begin{proof}
    Note that $\HS(\rho_0,\rho_1) = \frac{1}{2} \left(\Tr(\rho_0^2)+\Tr(\rho_1^2)\right)-\Tr(\rho_0\rho_1)$. 
    Let $\varepsilon \coloneqq (\alpha - \beta)/100$. According to \Cref{lemma:swap-test}, we can use the SWAP test to estimate $\Tr(\rho_0^2)$, $\Tr(\rho_1^2)$, and $\Tr(\rho_0\rho_1)$, and hence $\HS(\rho_0,\rho_1)$, within additive error $\varepsilon$ with high probability by performing $O(1/\varepsilon^2)$ sequential repetitions. Therefore, we can conclude that $\GapQHS_{\log}[\alpha(n),\beta(n)]$ is in \BQL{}.
\end{proof}

The observing algorithm in \Cref{thm:GapQHSlog-in-BQL} corresponds to a single quantum circuit that uses \textit{two random coins}, implementable by single-qubit measurements in the computational basis, and whose acceptance probability is proportional to $\HS(\rho_0,\rho_1)$. 
However, to guarantee \textit{unitarity}, we construct an alternative algorithm using the LCU technique, which serves as the unitary of interest with the desired property. 

\begin{theorem}
\label{thm:coCertQHSlog-is-in-coRQUL}
    For any deterministic logspace computable function $\alpha(n) \geq 1/\poly(n)$, we have that $\coCertQHSlog[\alpha(n)]$ is in \coRQUL{}.
\end{theorem}

\begin{proof}
    We first provide a formal algorithm in \Cref{algo:CertQHS-in-RQL}.
    
    \begin{algorithm}[!htp]
		\caption{Space-efficient algorithm for \coCertQHSlog{}.}
		\label{algo:CertQHS-in-RQL}
        \SetKwInOut{Input}{Input}
        \SetKwInOut{Output}{Output}
        \SetKwInOut{Parameter}{Params}
        \Input{Quantum circuits $Q_i$ that prepare the purification of $\rho_i$ for $i \in \binset$.}
        \Output{Return ``yes'' if $\rho_0=\rho_1$, and ``no'' otherwise.}
        1. Construct subroutines $T_{ij}\coloneqq\SWAP(\rho_i,\rho_j)$ for $(i,j)\in\{(0,0),(1,1),(0,1)\}$, which output $0$ with probability $p_{ij}$. In particular, the subroutine $\SWAP(\rho_i,\rho_j)$ involves applying $Q_i$ and $Q_j$ to prepare quantum states $\rho_i$ and $\rho_j$, respectively, and then employing the SWAP test (\Cref{lemma:swap-test}) on these states $\rho_i$ and $\rho_j$\;
        2. Construct a block-encoding of $\smash{\varrho\big(\frac 1 2 + \frac{\HS(\rho_0, \rho_1)}{4}\big)}$ where $\varrho(p) \coloneqq p \ket{0}\bra{0} + (1-p)\ket{1}\bra{1}$, denoted by $U$, using $O(1)$ queries to $T_{00}$, $T_{11}$, and $T_{01}$ by \Cref{lemma:space-efficient-LCU}\;
        3. Let $G \coloneqq - U (I - 2\ket{\bar 0}\bra{\bar 0}) U^\dag (I - 2\ket{\bar 0}\bra{\bar 0})$\;
        4. Measure all qubits of $G U \ket{\bar 0}$ in the computational basis. Return ``yes'' if the measurement outcome is an all-zero string, and ``no'' otherwise. 
        \BlankLine
    \end{algorithm}

    \paragraph*{Constructing the unitary of interest via the SWAP test.}
    We consider the setting with $s(n)=\Theta(\log(n))$. 
    Our main building block is the circuit implementation of the SWAP test (\Cref{lemma:swap-test}). Specifically, we utilize the subroutine $\SWAP(\rho_i,\rho_j)$ for $i,j\in \binset$, which involves applying $Q_i$ and $Q_j$ to prepare quantum states $\rho_i$ and $\rho_j$, respectively, and then employing the SWAP test on these states $\rho_i$ and $\rho_j$. We denote by $p_{ij}$ the probability that $\SWAP(\rho_i,\rho_j)$ outputs $0$ based on the measurement outcome of the control qubit in the SWAP test. 
    Following \Cref{lemma:swap-test}, we have $p_{ij} = \frac{1}{2}\big(1+\Tr(\rho_i\rho_j) \big)$ for $i,j \in \binset$. 

    We define $T_{ij}\coloneqq\SWAP(\rho_i,\rho_j)$ for $(i,j) \in \calI\coloneqq\{(0,0), (1,1), (0,1)\}$, with the control qubit in $\SWAP(\rho_i,\rho_j)$ serving as the output qubit of $T_{ij}$. By introducing another ancillary qubit, we construct $T'_{ij} \coloneqq \CNOT (I \otimes T_{ij})$ for $(i,j) \in \calI$, where $\CNOT$ is controlled by the output qubit of $T_{ij}$ and targets on the new ancillary qubit. It is effortless to see that $T'_{ij}$ prepares the purification of $\varrho(p_{ij})$ with $\varrho(p_{ij})\coloneqq p_{ij} \ket{0}\bra{0} + (1-p_{ij})\ket{1}\bra{1}$ for $(i,j)\in\calI$. 

    By applying \Cref{lemma:purified-density-matrix}, we can construct quantum circuits $T''_{ij}$ for $(i,j)\in\calI$ that serve as $(1, O(s), 0)$-block-encoding of $\varrho(p_{ij})$, using $O(1)$ queries to $T'_{ij}$ and $O(1)$ one- and two-qubit quantum gates.
    Notably, $(X \otimes I) T''_{01}$, with $X$ acting on the qubit of $\varrho(p_{01})$, prepares the purification of $X \varrho(p_{01}) X^\dag = p_{01}\ket{1}\bra{1} + (1-p_{01})\ket{0}\bra{0} = \varrho(1-p_{01})$, leading to the equality: 
    \[\varrho(\rho_0,\rho_1)\coloneqq \frac{1}{4} \varrho(p_{00}) + \frac{1}{4} \varrho(p_{11}) + \frac{1}{2} \varrho(1-p_{01}) = \varrho\!\left(\frac 1 2 + \frac{\HS(\rho_0, \rho_1)}{4}\right).\]
    
    Consequently, we employ \Cref{lemma:space-efficient-LCU} to construct a unitary quantum circuit $U$ that is a $(1,m, 0)$-block-encoding of $\varrho\big(\frac{1}{2} + \frac{\HS(\rho_0, \rho_1)}{4}\big)$ using $O(1)$ queries to $T''_{00}$, $T''_{11}$, $T''_{01}$, and $O(1)$ one- and two-qubit quantum gates, where $m \coloneqq O(s)$. The construction ensures the following:  
    \begin{equation}
        \label{eq:coCertQHSlog-unitary}
        U \ket{0} \ket{0}^{\otimes m} = \underbrace{\left( \frac 1 2 + \frac{\HS(\rho_0, \rho_1)}{4} \right)}_{\sin(\theta)} \ket{0}\ket{0}^{\otimes m} + \cos(\theta) \ket{\perp}, \text{ where } \bra{0}\bra{0}^{\otimes m} \ket{\perp} = 0.
    \end{equation}
    
    \paragraph*{Making the error one-sided.} Let us consider the Grover operator $G \coloneqq - U (I - 2\ket{\bar 0}\bra{\bar 0}) U^\dag (I - 2\ket{\bar 0}\bra{\bar 0})$. 
    By applying \Cref{lemma:exact-amplitude-amplification}, we derive that 
    \[G U \ket{0}\ket{0}^{\otimes m} = \sin(3\theta) \ket{0}\ket{0}^{\otimes m} + \cos(3\theta) \ket{\perp}.\] 
    Subsequently, we measure all qubits of $G U \ket{0}\ket{0}^{\otimes m}$ in the computational basis, represented as $x\in\binset^{m+1}$. 
    Hence, \Cref{algo:CertQHS-in-RQL} returns ``yes'' if the outcome $x$ is $0^{m+1}$ and ``no'' otherwise. \Cref{algo:CertQHS-in-RQL} accepts with probability $\sin^2(3\theta)$. 
    Now we analyze the correctness of the algorithm: 
    \begin{itemize}[leftmargin=2em]
        \item For \textit{yes} instances ($\rho_0 = \rho_1$), we have $\HS(\rho_0, \rho_1) = 0$. Following \Cref{eq:coCertQHSlog-unitary}, we obtain $\sin(\theta) = 1/2$ and thus $\sin^2(3\theta) = 1$. We conclude that \Cref{algo:CertQHS-in-RQL} will always return ``yes''. 
        \item For \textit{no} instances, we have $\HS(\rho_0, \rho_1) \geq \alpha$. 
        According to \Cref{eq:coCertQHSlog-unitary}, we derive that: 
        \begin{subequations}
            \label{eq:coCertQHSlog-in-coRQUL-soundness}
            \begin{align}
            &\sin(\theta) = \frac{1}{2} + \frac{\HS(\rho_0,\rho_1)}{4} \geq \frac{1}{2} + \frac{\alpha}{4},\\ 
            \frac{1}{4} \leq &\sin^2(\theta) = \rbra*{ \frac{1}{2} + \frac{\HS(\rho_0,\rho_1)}{4} }^2 \leq \rbra*{ \frac{1}{2} + \frac{1}{4} }^2 = \frac{9}{16}.
            \end{align}
        \end{subequations}
        As a result, considering the fact that $\sin^2(3\theta) = f(\sin^2(\theta))$ where $f(x) \coloneqq 16x^3 - 24x^2 + 9x$, 
        we require \Cref{prop:coCertQHSsoundness} and the proof is given immediately afterward:
        \begin{proposition}
        \label{prop:coCertQHSsoundness}
            The polynomial function $f(x)\coloneqq 16x^3 - 24x^2 + 9x$ is monotonically decreasing in $[1/4,9/16]$. Moreover, we have 
            \[\forall \alpha\in[0,1],\quad f\rbra*{\rbra[\Big]{\frac{1}{2}+\frac{\alpha}{4}}^2} \leq 1-\frac{\alpha^2}{2}.\]   
        \end{proposition}
        Combining \Cref{eq:coCertQHSlog-in-coRQUL-soundness} and \Cref{prop:coCertQHSsoundness}, we have that $\sin^2(3\theta) = f(\sin^2(\theta)) \leq f\big(\big(\frac{1}{2}+\frac{\alpha}{4}\big)^2\big) \leq 1-\frac{\alpha^2}{2}$. Hence, \Cref{algo:CertQHS-in-RQL} will return ``no'' with probability at least $\alpha^2/2$.
    \end{itemize}
    
    Regarding the computational complexity of \Cref{algo:CertQHS-in-RQL}, this algorithm requires $O(s(n))$ qubits and performs $O(1)$ queries to $Q_0$ and $Q_1$. 
    Finally, we finish the proof by applying error reduction from \coRQUL{} (\Cref{corr:error-reduction-untary-quantum-logspace}) to \Cref{algo:CertQHS-in-RQL}.
\end{proof}

\begin{proof}[Proof of \Cref{prop:coCertQHSsoundness}]
    Through a direct calculation, we have  $f'(x) = 48x^2 - 48x + 9 \leq 0$ for $x \in [1/4, 3/4]$, then $f(x)$ is monotonically decreasing in $[1/4, 9/16] \subseteq [1/4, 3/4]$. Moreover, it is left to show that: 
    \[
        f\rbra*{ \rbra[\Big]{ \frac{1}{2} + \frac{\alpha}{4} }^2 } = \frac{\alpha^6}{256}+\frac{3 \alpha^5}{64}+\frac{9 \alpha^4}{64}-\frac{\alpha^3}{8}-\frac{3 \alpha^2}{4}+1 \leq 1-\frac{\alpha^2}{2}. 
    \]
    Equivalently, it suffices to show that $g(x) \coloneqq -\frac{x^4}{256}-\frac{3 x^3}{64}-\frac{9 x^2}{64}+\frac{x}{8}+\frac{1}{4} \geq 0$ for $0 \leq x \leq 1$.
    We first compute the first derivative of $g(x)$, which is $g'(x)=-\frac{x^3}{64}-\frac{9 x^2}{64}-\frac{9 x}{32}+\frac{1}{8}$. Setting $g'(x)$ equal to zero, we obtain three roots: $x_1=-4$, $x_2=\frac{1}{2} (-\sqrt{33}-5) < 0$, and $x_3=\frac{1}{2} (\sqrt{33}-5) \in (0,1)$.

    Since $g'(0)=1/8 > 0$ and $g'(1)=-5/16 < 0$, we conclude that $g(x)$ is monotonically increasing in $[0,x_3]$ and monotonically decreasing in $[x_3,1]$. Therefore, we can determine the minimum value of $g(x)$ by evaluating $g(0)=\frac{1}{4}$ and $g(1)=\frac{47}{256}$. Since both values are greater than zero, we conclude that $\min\{g(0),g(1)\} = \min\big\{\frac{1}{4},\frac{47}{256}\big\} > 0$, as desired.
\end{proof}

\subsection{\BQL{}- and \coRQUL{}-hardness for space-bounded state testing problems}
\label{subsec:BQL-coRQL-hardness}

We will prove that space-bounded state testing problems mentioned in \Cref{thm:space-bounded-quantum-state-testing-BQL-complete} are \BQUL{}-hard, which implies their \BQL{}-hardness since \BQL{}=\BQUL{}~\cite{FR21}. Similarly, all space-bounded state certification problems mentioned in \Cref{thm:space-bounded-quantum-state-certification-RQL-complete} are \coRQUL{}-hard.

\subsubsection{Hardness results for \GapQSDlog{}, \GapQHSlog{}, and their certification version}
Employing similar constructions, we can establish the \BQUL{}-hardness of both \GapQSDlog{} and \GapQHSlog{}. The former involves a single-qubit pure state and a single-qubit mixed state, while the latter involves two pure states.

\begin{lemma}[\GapQSDlog{} is \BQUL{}-hard]
    \label{lemma:GapQSDlog-BQLhard}
    For any deterministic logspace computable functions $a(n)$ and $b(n)$ such that $1-\sqrt{1-a(n)}-\sqrt{b(n)} \geq 1/\poly(n)$, we have that 
    \[\GapQSDlog[1-\sqrt{1-a(n)},\sqrt{b(n)}] \text{ is } \BQUL[a(n),b(n)]\text{-hard}.\]
    Furthermore, there exists some polynomial $p(n)$ such that $\GapQSDlog[\alpha(n),\beta(n)]$ is \BQUL{}-hard for all $\alpha(n) \leq 1-1/p(n)$ and $\beta(n) \geq 1/p(n)$.
\end{lemma}

\begin{proof}
    Consider a promise problem $(\calL_{yes},\calL_{no})\in \BQUL[a(n),b(n)]$, then we know that the acceptance probability $\Pr{C_x \text{ accepts}} \geq a(n)$ if $x \in \calL_{yes}$, whereas $\Pr{C_x \text{ accepts}} \leq b(n)$ if $x \in \calL_{no}$. 
    Now we notice that the acceptance probability can be expressed in terms of the fidelity between a single-qubit pure state $\rho_0$ and a single-qubit mixed state $\rho_1$ that is prepared by two logarithmic-qubit quantum circuits $Q_0$ and $Q_1$, respectively:
    \begin{subequations}
    \label{eq:GapQSDlog-BQL-hard}
    \begin{align}
    \Pr{C_x \text{ accepts}} =& \left\|\ket{1}\bra{1}_{\Out}C_x\ket{\bar{0}}\right\|^2_2\\
    =&\Tr\rbra*{ \ket{1}\bra{1}_{\Out}\Tr_{\overline{\Out}}\rbra*{ C_x \ket{\bar{0}}\bra{\bar{0}} C_x^{\dagger} } }\\
    =& 1 - \Tr\rbra*{ \ket{0}\bra{0}_{\Out}\Tr_{\overline{\Out}}\rbra*{ C_x \ket{\bar{0}}\bra{\bar{0}} C_x^{\dagger} } }\\
    =& 1 - \F^2(\rho_0,\rho_1)
    \end{align}
    \end{subequations}
    In particular, the corresponding $Q_0$ is simply the identity, while $Q_1$ is exactly the circuit $C_x$. We then prepare $\rho_0$ and $\rho_1$, specifically $\rho_0 \coloneqq \ketbra{0}{0}_{\Out}$ and $\rho_1 \coloneqq \Tr_{\overline{\Out}}\rbra[\big]{C_x \ket{\bar{0}}\bra{\bar{0}} C_x^{\dagger}}$, by tracing out all non-output qubits. 
    By utilizing \Cref{lemma:traceDist-vs-fidelity}, we have derived that: 
    \begin{itemize}[itemsep=0.33em,topsep=0.33em,parsep=0.33em]
        \item For \textit{yes} instances, $\Pr{C_x \text{ accepts}} = 1-\F^2(\rho_0,\rho_1) \geq a(n)$ deduces that 
        \[\td(\rho_0,\rho_1) \geq 1-\F(\rho_0,\rho_1) \geq 1 - \sqrt{1-a(n)};\]
        \item For \textit{no} instances, $\Pr{C_x \text{ accepts}} = 1-\F^2(\rho_0,\rho_1) \leq b(n)$ yields that 
        \[\td(\rho_0,\rho_1) \leq \sqrt{1-\F^2(\rho_0,\rho_1)} \leq \sqrt{b(n)}.\]
    \end{itemize}
    Therefore, we prove that $\GapQSD_{\log}[1-\sqrt{1-a(n)},\sqrt{b(n)}]$ is $\BQUL[a(n),b(n)]$-hard, and we complete the argument by combining this with error reduction for \BQUL{} (\Cref{corr:error-reduction-untary-quantum-logspace}).
\end{proof}

To construct pure states, adapted from the construction in \Cref{lemma:GapQSDlog-BQLhard}, we replace the final measurement in the \BQL{} circuit $C_x$ with a quantum gate (\CNOT{}) and design a new algorithm based on $C_x$ with the final measurement on \textit{all} qubits in the computational basis.

\begin{lemma}[\GapQHSlog{} is \BQUL{}-hard]
    \label{lemma:GapQHSlog-BQLhard}
    For any deterministic logspace computable functions $a(n)$ and $b(n)$ such that $a(n)-b(n) \geq 1/\poly(n)$, we have that 
    \[\GapQHSlog\sbra*{2a(n)-a^2(n), 2b(n)-b^2(n)} \text{ is } \BQUL[a(n),b(n)]\text{-hard}.\]
    Furthermore, there exists some polynomial $p(n)$ such that $\GapQHSlog[\alpha(n),\beta(n)]$ is \BQUL{}-hard for all $\alpha(n) \leq 1-1/p(n)$ and $\beta(n) \geq 1/p(n)$.
\end{lemma}

\begin{proof}
    For any promise problem $(\calL_{yes},\calL_{no})\in \BQUL[a(n),b(n)]$, we have that the acceptance probability $\Pr{C_x \text{ accepts}} \geq a(n)$ if $x \in \calL_{yes}$, whereas $\Pr{C_x \text{ accepts}} \leq b(n)$ if $x \in \calL_{no}$.
    For convenience, let the output qubit be the register $\sfO$. 
    Now we construct a new quantum circuit $C'_x$ with an additional ancillary qubit on the register $\sfF$ initialized to zero: 
    \begin{equation}
        \label{eq:GapQHSlog-hardness}
        C'_x \coloneqq C^{\dagger}_x \CNOT_{\sfO\rightarrow \sfF} C_x.
    \end{equation}
    And we say that $C'_x$ accepts if the measurement outcome of all qubits (namely the working qubit of $C_x$ and $\sfF$) are all zero. Through a direct calculation, we obtain: 
    \begin{subequations}
        \label{eq:GapQHSlog-BQL-hard-pacc}
        \begin{align}
        \Pr{C'_x \text{ accepts}} &= \big\| (\ket{\bar{0}}\bra{\bar{0}}\otimes \ket{0}\bra{0}_{\sfF}) C^{\dagger}_x \CNOT_{\sfO\rightarrow \sfF} C_x (\ket{\bar{0}}\otimes \ket{0}_{\sfF})\big\|_2^2 \\
        &= \big| \bra{\bar{0}} C_x^{\dagger} \ket{0}\bra{0}_{\sfO} C_x \ket{\bar{0}} \big|^2\\
        &= \rbra*{1- \bra{\bar{0}} C_x^{\dagger} \ket{1}\bra{1}_{\sfO} C_x \ket{\bar{0}}}^2\\
        &= \rbra*{1-{\rm Pr}\sbra*{ C_x \text{ accepts} }}^2.
        \end{align}
    \end{subequations}
    Here, the second line owes to $\CNOT_{\sfO \rightarrow \sfF} = \ket{0}\bra{0}_{\sfO}\otimes I_{\sfF} + \ket{1}\bra{1}_{\sfO} \otimes X_{\sfF}$, and the last line is because of \Cref{eq:GapQSDlog-BQL-hard}. Interestingly, by defining two pure states $\ketbra{\psi_0}{\psi_0} \coloneqq \ket{\bar{0}}\bra{\bar{0}}\otimes \ket{0}\bra{0}_{\sfF}$ and $\ketbra{\psi_1}{\psi_1} \coloneqq C'_x (\ket{\bar{0}}\bra{\bar{0}}\otimes \ket{0}\bra{0}_{\sfF}) {C'}^{\dagger}_x$ corresponding to $Q_0 = I$ and $Q_1=C'_x$, respectively, we deduce the following from \Cref{eq:GapQHSlog-BQL-hard-pacc}: 
    \begin{equation}
        \label{eq:GapQHSlog-BQL-hard}
        \Pr{C'_x \text{ accepts}} = \abs{\innerprod{\psi_0}{\psi_1}}^2 = 1-\HS(\ketbra{\psi_0}{\psi_0},\ketbra{\psi_1}{\psi_1}).
    \end{equation}
    Combining \Cref{eq:GapQHSlog-BQL-hard-pacc,eq:GapQHSlog-BQL-hard}, and writing $p_{\rm acc} \coloneqq {\rm Pr}\sbra*{ C_x\text{ accepts} }$, we obtain
    \[ \HS(\ketbra{\psi_0}{\psi_0},\ketbra{\psi_1}{\psi_1}) = 1-(1-p_{\rm acc})^2 = 2 p_{\rm acc} - p_{\rm acc}^2.\] 
    Since the map $p_{\rm acc} \mapsto 2p_{\rm acc}-p_{\rm acc}^2$ is monotonically non-decreasing on $[0,1]$, it follows that: 
    \begin{itemize}[itemsep=0.33em,topsep=0.33em,parsep=0.33em]
        \item For \textit{yes} instances, $\Pr{C_x \text{ accepts}} \geq a(n)$ implies that 
        \[\HS(\ketbra{\psi_0}{\psi_0},\ketbra{\psi_1}{\psi_1}) \geq 2a(n)-a(n)^2.\]
        \item For \textit{no} instances, $\Pr{C_x \text{ accepts}} \leq b(n)$ yields that 
        \[\HS(\ketbra{\psi_0}{\psi_0},\ketbra{\psi_1}{\psi_1}) \leq 2b(n)-b(n)^2.\]
    \end{itemize}
    We thus establish that $\GapQHSlog[2a(n)-a^2(n), 2b(n)-b^2(n)]$ is $\BQUL[a(n),b(n)]$-hard, and we complete the argument by combining this with error reduction for \BQUL{} (\Cref{corr:error-reduction-untary-quantum-logspace}).
\end{proof}

Our constructions in the proofs of \Cref{lemma:GapQSDlog-BQLhard,lemma:GapQHSlog-BQLhard} are, respectively, somewhat analogous to~\cite[Theorems 12 and 13]{ARS+21}. 
We then present the \coRQUL{}-hardness results, which are essentially adapted from \Cref{lemma:GapQSDlog-BQLhard,lemma:GapQHSlog-BQLhard}: 

\begin{lemma}[\coCertQSDlog{} and \coCertQHSlog{} are \coRQUL{}-hard]
    \label{lemma:coCertQSDlog-coCertQHSlog-coRQL-hard}
    For any function $\gamma(n)$ that is computable in deterministic logspace and satisfies $\gamma(n) \geq 1/\poly(n)$, the following holds for some polynomial $p(n)$ that can also be computed in deterministic logspace: 
    \begin{enumerate}[label={\upshape(\roman*)}]
        \item $\coCertQSDlog[\gamma(n)]$ is \coRQUL{}-hard for $\gamma(n) \leq 1-1/p(n)$; 
        \item $\coCertQHSlog[\gamma(n)]$ is \coRQUL{}-hard for $\gamma(n) \leq 1-1/p(n)$.  
    \end{enumerate}
\end{lemma}

\begin{proof}
    To prove the first item, we follow the construction in \Cref{lemma:GapQSDlog-BQLhard} and replace $Q_0$ with a circuit that flips the output qubit. Consequently, the corresponding quantum states are 
    \[\widehat{\rho}_0 = \ketbra{1}{1}_{\Out} \quad \text{and} \quad \widehat{\rho}_1 = \Tr_{\overline{\Out}}\rbra[\big]{C_x \ket{\bar{0}}\bra{\bar{0}} C_x^{\dagger}}.\] 
    Therefore, we obtain $\Pr{C_x \text{ accepts}} = \F^2(\widehat{\rho}_0,\widehat{\rho}_1)$, which implies $\td(\widehat{\rho}_0,\widehat{\rho}_1)=0$ (due to perfect completeness) for \textit{yes} instances, while $\td(\widehat{\rho}_0,\widehat{\rho}_1) \geq 1-\F(\widehat{\rho}_0,\widehat{\rho}_1) \geq 1-\sqrt{b(n)}$ for \textit{no} instances. We then complete the proof by combining this with error reduction for \coRQUL{} (\Cref{corr:error-reduction-untary-quantum-logspace}). 

    For the second item, we proceed as in the construction of \Cref{lemma:GapQHSlog-BQLhard} and replace $C'_x$ with 
    \[\widetilde{C}'_x \coloneqq C^{\dagger}_x X^{\dagger}_{\sfO} \CNOT_{\sfO\rightarrow \sfF} X_{\sfO} C_x.\] 
    The corresponding pure states are then
    \[\ketbra{\widetilde{\psi}_0}{\widetilde{\psi}_0}\coloneqq\ket{\bar{0}}\bra{\bar{0}}\otimes \ket{0}\bra{0}_{\sfF} \quad\text{and}\quad \ketbra{\widetilde{\psi}_1}{\widetilde{\psi}_1}\coloneqq \widetilde{C}'_x (\ket{\bar{0}}\bra{\bar{0}}\otimes \ket{0}\bra{0}_{\sfF}) \rbra[\big]{\widetilde{C}'_x}^{\dagger}.\] 
    It follows that $\mathrm{Pr}^2[C_x\text{ accepts}] = 1-\HS\rbra*{\ketbra{\widetilde{\psi}_0}{\widetilde{\psi}_0}, \ketbra{\widetilde{\psi}_1}{\widetilde{\psi}_1}}$, which implies 
    \[\HS\rbra*{\ketbra{\widetilde{\psi}_0}{\widetilde{\psi}_0}, \ketbra{\widetilde{\psi}_1}{\widetilde{\psi}_1}} = 0\] 
    for \textit{yes} instances, while $\HS\rbra*{\ketbra{\widetilde{\psi}_0}{\widetilde{\psi}_0}, \ketbra{\widetilde{\psi}_1}{\widetilde{\psi}_1}} \geq 1-b^2(n)$ for \textit{no} instances. We thus similarly conclude the proof by integrating this with \Cref{corr:error-reduction-untary-quantum-logspace}. 
\end{proof}

\subsubsection{Hardness results for \GapQJSlog{} and \GapQEDlog{}}

We demonstrate the \BQUL{}-hardness of \GapQJSlog{} by reducing \GapQSDlog{} to \GapQJSlog{}, following a similar approach to that in~\cite[Lemma 4.15]{Liu23}.\footnote{Lemma 4.13 in the last arXiv version of~\cite{Liu23}.}
\begin{lemma}[\GapQJSlog{} is \BQUL{}-hard]
    \label{lemma:GapQJSlog-BQL-hard}
    For any functions $\alpha(n)$ and $\beta(n)$ that are computable in deterministic logspace, we have 
    \[ \GapQJS_{\log}[\alpha(n),\beta(n)] \text{ is } \BQUL{}\text{-hard} \]
    for $\alpha(n) \leq 1-\sqrt{2}/\sqrt{p(n)}$ and $\beta(n) \geq 1/p(n)$, where $p(n)$ is some deterministic logspace computable polynomial. 
\end{lemma}

\begin{proof}
    By \Cref{lemma:GapQSDlog-BQLhard}, it suffices to reduce $\GapQSD_{\log}[1-1/p(n),1/p(n)]$ to $\GapQJS_{\log}[\alpha(n),\beta(n)]$. 
    Consider logarithmic-qubit quantum circuits $Q_0$ and $Q_1$, which is a $\GapQSD_{\log}$ instance. We can obtain $\rho_k$ for $k\in\binset$ by performing $Q_k$ on $\ket{0^n}$ and tracing out the non-output qubits. We then have the following: 
    \begin{itemize}
        \item If $\td(\rho_0,\rho_1) \geq 1-1/p(n)$, then \Cref{lemma:QJS-vs-traceDist} yields that
        \[\QJS_2(\rho_0,\rho_1) \geq 1-\binH\left(\tfrac{1-\td(\rho_0,\rho_1)}{2}\right) \geq 1-\binH\left(\tfrac{1}{2p(n)}\right) \geq 1-\tfrac{\sqrt{2}}{\sqrt{p(n)}} \geq \alpha(n),\]
        where the third inequality owing to $\binH(x)\leq 2\sqrt{x}$ for all $x\in[0,1]$. 
        
        \item If $\td(\rho_0,\rho_1) \leq 1/p(n)$, then \Cref{lemma:QJS-vs-traceDist} indicates that
        \[\QJS_2(\rho_0,\rho_1) \leq \td(\rho_0,\rho_1) \leq \tfrac{1}{p(n)} \leq \beta(n).\]
    \end{itemize}

    Therefore, we can utilize the same quantum circuits $Q_0$ and $Q_1$, along with their corresponding quantum states $\rho_0$ and $\rho_1$, respectively, to establish a logspace Karp reduction from $\GapQSD_{\log}[1-1/p(n),1/p(n)]$ to $\GapQJS_{\log}[\alpha(n),\beta(n)]$, as required.
\end{proof}

By combining the reduction from \GapQSDlog{} to \GapQJSlog{} (\Cref{lemma:GapQJSlog-BQL-hard}) and the reduction from \GapQJSlog{} to \GapQEDlog{} (\Cref{corr:GapQJSlog-in-BQL}), we obtain the \BQUL{}-hardness for \GapQEDlog{} through reducing \GapQSDlog{} to \GapQEDlog{}. This proof resembles the approach outlined in \cite[Corollary 4.3]{Liu23}. 
\begin{corollary}[\GapQEDlog{} is \BQUL{}-hard]
    \label{corr:GapQEDlog-BQL-hard}
    For any function $g(n)$ that is computable in deterministic logspace, we have 
    $\GapQEDlog[g(n)]$ is \BQUL{}-hard for $g(n) \leq \frac{\ln{2}}{2}\big(1-\frac{\sqrt{2}}{\sqrt{p(n/3)}}-\frac{1}{p(n/3)}\big)$, where $p(n)$ is some polynomial that can be computed in deterministic logspace. 
\end{corollary}

\begin{proof}
    By combining \Cref{lemma:GapQSDlog-BQLhard,lemma:GapQJSlog-BQL-hard}, we establish that $\GapQJS_{\log}[\alpha(n),\beta(n)]$ is \BQUL{}-hard for $\alpha(n) \leq 1-\sqrt{2}/\sqrt{p(n)}$ and $\beta(n) \geq 1/p(n)$, where $p(n)$ is some deterministic logspace computable polynomial. The \GapQSDlog{}-hard (and simultaneously \GapQJSlog{}-hard) instances, as specified in \Cref{lemma:GapQSDlog-BQLhard}, consist of $s(n)$-qubit quantum circuits $Q_0$ and $Q_1$ that prepare a purification of $r(n)$-qubit quantum (mixed) states $\rho_0$ and $\rho_1$, respectively, where $1 \leq r(n) \leq s(n) = \Theta(\log{n})$.
    
    Subsequently, by employing \Cref{corr:GapQJSlog-in-BQL}, we construct $(s+3)$-qubit quantum circuits $Q'_0$ and $Q'_1$ that prepare a purification of $(r+1)$-qubit quantum states $\rho'_0=\big(p\ket{0}\bra{0}+(1-p)\ket{1}\bra{1}\big)\otimes (\frac{1}{2}\rho_0 + \frac{1}{2}\rho_1)$ satisfying $\binH(p)=1- \rbra*{ \alpha(n)+\beta(n) }/2$ and $\rho'_1=\frac{1}{2}\ket{0}\bra{0}\otimes \rho_0 + \frac{1}{2}\ket{1}\bra{1}\otimes \rho_1$, respectively. Following \Cref{corr:GapQJSlog-in-BQL}, $\GapQEDlog[g(n)]$ is \BQUL{}-hard as long as 
    \[g(n)=\tfrac{\ln{2}}{2} \big(\alpha(n/3)-\beta(n/3)\big) \leq \tfrac{\ln{2}}{2}\Big(1-\tfrac{\sqrt{2}}{\sqrt{p(n/3)}} - \tfrac{1}{p(n/3)}\Big).\]
    Therefore, $\GapQSD_{s}$ is logspace Karp reducible to $\GapQED_{s+1}$ by mapping $(Q_0,Q_1)$ to $(Q'_0,Q'_1)$. 
\end{proof}


\section*{Acknowledgments}
\noindent
An earlier version of this work was included in the second-named author's PhD thesis~\cite{Liu25}. 
We express our gratitude to anonymous reviewers for providing detailed suggestions on the space-efficient quantum singular value transformation, particularly improved norm bounds for the coefficient vector in \Cref{lemma:coefficient-vector-norm-bound} (and consequently \Cref{lemma:space-efficient-bounded-funcs}) by leveraging the smoothness property of functions, and for suggesting to add discussion on space-bounded distribution testing.  

This work was partially supported by MEXT Q-LEAP Grant No.~\mbox{JPMXS0120319794}. 
FLG was also supported by JSPS KAKENHI Grants Nos.~\mbox{JP19H04066}, \mbox{JP20H05966}, \mbox{JP20H00579}, \mbox{JP20H04139}, and \mbox{JP21H04879}.
YL was also supported by JST, the establishment of University fellowships towards the creation of science technology innovation, Grant No.~\mbox{JPMJFS2125}, and in part by funding from the Swiss State Secretariat for Education, Research and Innovation (SERI).
QW was also supported in part by a startup funding from Shanghai Jiao Tong University. 
In addition, ChatGPT was used only for proofreading the manuscript, including identifying possible calculation errors and suggesting changes, with all corresponding changes verified and made by the authors.
Circuit diagrams were drawn by the Quantikz package~\cite{Kay18}. 


\bibliographystyle{alphaurlQ}
\bibliography{space-bounded-state-testing}

\newcommand{\etalchar}[1]{$^{#1}$}
\DeclareRobustCommand{\dutchPrefix}[2]{#2}\providecommand{\dutchPrefix}[2]{#2}\renewcommand{\dutchPrefix}[2]{#2}\newcommand{\prelimVersion}[2]{Preliminary version in \textit{\MakeUppercase{#1} #2}}\newcommand{\arXiv}[1]{\href{https://arxiv.org/abs/#1}{\texttt{arXiv:\allowbreak#1}}}
\begin{thebibliography}{{\dutchPrefix{Apeldoorn}{v}}AGG{\dutchPrefix{Wolf}{d}}W20}

\bibitem[ABIS19]{ABIS19}
Jayadev Acharya, Sourbh Bhadane, Piotr Indyk, and Ziteng Sun.
\newblock Estimating entropy of distributions in constant space.
\newblock {\em Proceedings of the 33rd International Conference on Neural Information Processing Systems}. 32. 2019.
\newblock URL: \url{https://dl.acm.org/doi/10.5555/3454287.3454751}. \href {https://arxiv.org/abs/1911.07976} {\nolinkurl{arXiv:1911.07976}}. Appearances:\!

\bibitem[AH91]{AH91}
William Aiello and Johan H{\aa}stad.
\newblock Statistical zero-knowledge languages can be recognized in two rounds.
\newblock {\em Journal of Computer and System Sciences}. 42(3):327--345. 1991.
\newblock \href {https://doi.org/10.1016/0022-0000(91)90006-Q} {\nolinkurl{doi:10.1016/0022-0000(91)90006-Q}}. \prelimVersion{FOCS}{1987}. Appearances:\!

\bibitem[AISW20]{AISW20}
Jayadev Acharya, Ibrahim Issa, Nirmal~V Shende, and Aaron~B Wagner.
\newblock Estimating quantum entropy.
\newblock {\em IEEE Journal on Selected Areas in Information Theory}. 1(2):454--468. 2020.
\newblock \href {https://doi.org/10.1109/JSAIT.2020.3015235} {\nolinkurl{doi:10.1109/JSAIT.2020.3015235}}. \href {https://arxiv.org/abs/1711.00814} {\nolinkurl{arXiv:1711.00814}}. Appearances:\!

\bibitem[AJL09]{AJL09}
Dorit Aharonov, Vaughan Jones, and Zeph Landau.
\newblock A polynomial quantum algorithm for approximating the {Jones} polynomial.
\newblock {\em Algorithmica}. 55(3):395--421. 2009.
\newblock \href {https://doi.org/10.1007/s00453-008-9168-0} {\nolinkurl{doi:10.1007/s00453-008-9168-0}}. \prelimVersion{STOC}{2006}. \href {https://arxiv.org/abs/quant-ph/0511096} {\nolinkurl{arXiv:quant-ph/0511096}}. Appearances:\!

\bibitem[AKN98]{AKN98}
Dorit Aharonov, Alexei Kitaev, and Noam Nisan.
\newblock Quantum circuits with mixed states.
\newblock In {\em Proceedings of the 30th Annual ACM Symposium on Theory of Computing}. pages 20--30. 1998.
\newblock \href {https://doi.org/10.1145/276698.276708} {\nolinkurl{doi:10.1145/276698.276708}}. \href {https://arxiv.org/abs/quant-ph/9806029} {\nolinkurl{arXiv:quant-ph/9806029}}. Appearances:\!

\bibitem[AMNW22]{AMNW22}
Maryam Aliakbarpour, Andrew McGregor, Jelani Nelson, and Erik Waingarten.
\newblock Estimation of entropy in constant space with improved sample complexity.
\newblock {\em Proceedings of the 36th International Conference on Neural Information Processing Systems}. 35:32474--32486. 2022.
\newblock URL: \url{https://dl.acm.org/doi/10.5555/3600270.3602623}. \href {https://arxiv.org/abs/2205.09804} {\nolinkurl{arXiv:2205.09804}}. Appearances:\!

\bibitem[{\dutchPrefix{Apeldoorn}{v}}AGG{\dutchPrefix{Wolf}{d}}W20]{vAGGdW17}
Joran {\dutchPrefix{Apeldoorn}{v}}an~Apeldoorn, Andr{\'a}s Gily{\'e}n, Sander Gribling, and Ronald {\dutchPrefix{Wolf}{d}}e~Wolf.
\newblock Quantum {SDP}-solvers: {Better} upper and lower bounds.
\newblock {\em Quantum}. 4:230. 2020.
\newblock \href {https://doi.org/10.22331/q-2020-02-14-230} {\nolinkurl{doi:10.22331/q-2020-02-14-230}}. \prelimVersion{FOCS}{2017}. \href {https://arxiv.org/abs/1705.01843} {\nolinkurl{arXiv:1705.01843}}. Appearances:\!

\bibitem[AS16]{AS16}
Noga Alon and Joel~H Spencer.
\newblock {\em The Probabilistic Method}.
\newblock John Wiley \& Sons. 2016.
\newblock \href {https://doi.org/10.1002/0471722154} {\nolinkurl{doi:10.1002/0471722154}}. Appearances:\!

\bibitem[AZLO16]{AZLO16}
Zeyuan Allen-Zhu, Yin~Tat Lee, and Lorenzo Orecchia.
\newblock Using optimization to obtain a width-independent, parallel, simpler, and faster positive {SDP} solver.
\newblock In {\em Proceedings of the 27th Annual ACM-SIAM Symposium on Discrete Algorithms}. pages 1824--1831. SIAM. 2016.
\newblock \href {https://doi.org/10.1137/1.9781611974331.ch127} {\nolinkurl{doi:10.1137/1.9781611974331.ch127}}. \href {https://arxiv.org/abs/1507.02259} {\nolinkurl{arXiv:1507.02259}}. Appearances:\!

\bibitem[BASTS10]{BASTS10}
Avraham Ben-Aroya, Oded Schwartz, and Amnon Ta-Shma.
\newblock Quantum expanders: Motivation and construction.
\newblock {\em Theory of Computing}. 6(1):47--79. 2010.
\newblock \href {https://doi.org/10.4086/toc.2010.v006a003} {\nolinkurl{doi:10.4086/toc.2010.v006a003}}. \prelimVersion{CCC}{2008}. Appearances:\!

\bibitem[BBHT98]{BBHT98}
Michel Boyer, Gilles Brassard, Peter H{\o}yer, and Alain Tapp.
\newblock Tight bounds on quantum searching.
\newblock {\em Fortschritte der Physik: Progress of Physics}. 46(4--5):493--505. 1998.
\newblock \href {https://doi.org/10.1002/(SICI)1521-3978(199806)46:4/5<493::AID-PROP493>3.0.CO;2-P} {\nolinkurl{doi:10.1002/(SICI)1521-3978(199806)46:4/5<493::AID-PROP493>3.0.CO;2-P}}. \href {https://arxiv.org/abs/quant-ph/9605034} {\nolinkurl{arXiv:quant-ph/9605034}}. Appearances:\!

\bibitem[BCC{\etalchar{+}}15]{BCC+15}
Dominic~W Berry, Andrew~M Childs, Richard Cleve, Robin Kothari, and Rolando~D Somma.
\newblock Simulating {Hamiltonian} dynamics with a truncated {Taylor} series.
\newblock {\em Physical Review Letters}. 114(9):090502. 2015.
\newblock \href {https://doi.org/10.1103/PhysRevLett.114.090502} {\nolinkurl{doi:10.1103/PhysRevLett.114.090502}}. \href {https://arxiv.org/abs/1412.4687} {\nolinkurl{arXiv:1412.4687}}. Appearances:\!

\bibitem[BCH{\etalchar{+}}19]{BCHTV19}
Adam Bouland, Lijie Chen, Dhiraj Holden, Justin Thaler, and Prashant~Nalini Vasudevan.
\newblock On the power of statistical zero knowledge.
\newblock {\em SIAM Journal on Computing}. 49(4):FOCS17--1. 2019.
\newblock \href {https://doi.org/10.1137/17M1161749} {\nolinkurl{doi:10.1137/17M1161749}}. \prelimVersion{FOCS}{2017}. \href {https://arxiv.org/abs/1609.02888} {\nolinkurl{arXiv:1609.02888}}. Appearances:\!

\bibitem[BCW{\dutchPrefix{Wolf}{d}}W01]{BCWdW01}
Harry Buhrman, Richard Cleve, John Watrous, and Ronald {\dutchPrefix{Wolf}{d}}e~Wolf.
\newblock Quantum fingerprinting.
\newblock {\em Physical Review Letters}. 87(16):167902. 2001.
\newblock \href {https://doi.org/10.1103/PhysRevLett.87.167902} {\nolinkurl{doi:10.1103/PhysRevLett.87.167902}}. \href {https://arxiv.org/abs/quant-ph/0102001} {\nolinkurl{arXiv:quant-ph/0102001}}. Appearances:\!

\bibitem[BDRV19]{BDRV19}
Itay Berman, Akshay Degwekar, Ron~D Rothblum, and Prashant~Nalini Vasudevan.
\newblock Statistical difference beyond the polarizing regime.
\newblock In {\em Theory of Cryptography Conference}. pages 311--332. Springer. 2019.
\newblock \href {https://doi.org/10.1007/978-3-030-36033-7\_12} {\nolinkurl{doi:10.1007/978-3-030-36033-7\_12}}. \href {https://eccc.weizmann.ac.il/report/2019/038} {\nolinkurl{ECCC:TR19-038}}. Appearances:\!

\bibitem[BH09]{BH09}
Jop Bri{\"e}t and Peter Harremo{\"e}s.
\newblock Properties of classical and quantum {Jensen-Shannon} divergence.
\newblock {\em Physical Review A}. 79(5):052311. 2009.
\newblock \href {https://doi.org/10.1103/PhysRevA.79.052311} {\nolinkurl{doi:10.1103/PhysRevA.79.052311}}. \href {https://arxiv.org/abs/0806.4472} {\nolinkurl{arXiv:0806.4472}}. Appearances:\!

\bibitem[Bha96]{Bhatia96}
Rajendra Bhatia.
\newblock {\em Matrix Analysis}. volume 169.
\newblock Springer Science \& Business Media. 1996.
\newblock \href {https://doi.org/10.1007/978-1-4612-0653-8} {\nolinkurl{doi:10.1007/978-1-4612-0653-8}}. Appearances:\!

\bibitem[BHMT02]{BHMT02}
Gilles Brassard, Peter H{\o}yer, Michele Mosca, and Alain Tapp.
\newblock Quantum amplitude amplification and estimation.
\newblock {\em Quantum Computation and Information}. 305:53--74. 2002.
\newblock \href {https://doi.org/10.1090/conm/305/05215} {\nolinkurl{doi:10.1090/conm/305/05215}}. \href {https://arxiv.org/abs/quant-ph/0005055} {\nolinkurl{arXiv:quant-ph/0005055}}. Appearances:\!

\bibitem[BL13]{BL13}
Andrej Bogdanov and Chin~Ho Lee.
\newblock Limits of provable security for homomorphic encryption.
\newblock In {\em Annual Cryptology Conference}. pages 111--128. Springer. 2013.
\newblock \href {https://doi.org/10.1007/978-3-642-40041-4\_7} {\nolinkurl{doi:10.1007/978-3-642-40041-4\_7}}. \href {https://eprint.iacr.org/2013/344} {\nolinkurl{IACR ePrint:2013/344}}. Appearances:\!

\bibitem[BLT92]{BLT92}
Jos{\'e}~L Balc{\'a}zar, Antoni Lozano, and Jacobo Tor{\'a}n.
\newblock The complexity of algorithmic problems on succinct instances.
\newblock In {\em Computer Science: Research and Applications}. pages 351--377. Springer. 1992.
\newblock \href {https://doi.org/10.1007/978-1-4615-3422-8\_30} {\nolinkurl{doi:10.1007/978-1-4615-3422-8\_30}}. Appearances:\!

\bibitem[BOW19]{BOW19}
Costin B{\u{a}}descu, Ryan O'Donnell, and John Wright.
\newblock Quantum state certification.
\newblock In {\em Proceedings of the 51st Annual ACM SIGACT Symposium on Theory of Computing}. pages 503--514. 2019.
\newblock \href {https://doi.org/10.1145/3313276.3316344} {\nolinkurl{doi:10.1145/3313276.3316344}}. \href {https://arxiv.org/abs/1708.06002} {\nolinkurl{arXiv:1708.06002}}. Appearances:\!

\bibitem[BZ10]{BZ10}
Richard~P Brent and Paul Zimmermann.
\newblock {\em Modern Computer Arithmetic}. volume~18.
\newblock Cambridge University Press. 2010.
\newblock \href {https://doi.org/10.1017/CBO9780511921698} {\nolinkurl{doi:10.1017/CBO9780511921698}}. Appearances:\!

\bibitem[Can20]{Canonne20}
Cl{\'e}ment~L Canonne.
\newblock A survey on distribution testing: Your data is big. but is it blue?
\newblock {\em Theory of Computing}. pages 1--100. 2020.
\newblock \href {https://doi.org/10.4086/toc.gs.2020.009} {\nolinkurl{doi:10.4086/toc.gs.2020.009}}. \href {https://eccc.weizmann.ac.il/report/2015/063} {\nolinkurl{ECCC:TR15-063}}. Appearances:\!

\bibitem[CDG{\etalchar{+}}20]{CDG+20}
Rui Chao, Dawei Ding, Andras Gilyen, Cupjin Huang, and Mario Szegedy.
\newblock Finding angles for quantum signal processing with machine precision.
\newblock {\em arXiv preprint}. 2020.
\newblock \href {https://arxiv.org/abs/2003.02831} {\nolinkurl{arXiv:2003.02831}}. Appearances:\!

\bibitem[CDSTS23]{CDSTS23}
Gil Cohen, Dean Doron, Ori Sberlo, and Amnon Ta-Shma.
\newblock Approximating iterated multiplication of stochastic matrices in small space.
\newblock In {\em Proceedings of the 55th Annual ACM Symposium on Theory of Computing}. pages 35--45. 2023.
\newblock \href {https://doi.org/10.1145/3564246.3585181} {\nolinkurl{doi:10.1145/3564246.3585181}}. \href {https://eccc.weizmann.ac.il/report/2022/149} {\nolinkurl{ECCC:TR22-149}}. Appearances:\!

\bibitem[CDVV14]{CDVV14}
Siu-On Chan, Ilias Diakonikolas, Paul Valiant, and Gregory Valiant.
\newblock Optimal algorithms for testing closeness of discrete distributions.
\newblock In {\em Proceedings of the 25th Annual ACM-SIAM Symposium on Discrete Algorithms}. pages 1193--1203. SIAM. 2014.
\newblock \href {https://doi.org/10.1137/1.9781611973402.88} {\nolinkurl{doi:10.1137/1.9781611973402.88}}. \href {https://arxiv.org/abs/1308.3946} {\nolinkurl{arXiv:1308.3946}}. Appearances:\!

\bibitem[CGKZ05]{CGKZ05}
Howard Cheng, Barry Gergel, Ethan Kim, and Eugene Zima.
\newblock Space-efficient evaluation of hypergeometric series.
\newblock {\em ACM SIGSAM Bulletin}. 39(2):41--52. 2005.
\newblock \href {https://doi.org/10.1145/1101884.1101886} {\nolinkurl{doi:10.1145/1101884.1101886}}. Appearances:\!

\bibitem[CLM10]{CLM10}
Steve Chien, Katrina Ligett, and Andrew McGregor.
\newblock Space-efficient estimation of robust statistics and distribution testing.
\newblock In {\em Proceedings of the First Innovations in Computer Science Conference}. pages 251--265. 2010.
\newblock URL: \url{http://conference.iiis.tsinghua.edu.cn/ICS2010/content/papers/21.html}. Appearances:\!

\bibitem[CLW20]{CLW20}
Anirban~N Chowdhury, Guang~Hao Low, and Nathan Wiebe.
\newblock A variational quantum algorithm for preparing quantum {Gibbs} states.
\newblock {\em arXiv preprint}. 2020.
\newblock \href {https://arxiv.org/abs/2002.00055} {\nolinkurl{arXiv:2002.00055}}. Appearances:\!

\bibitem[DGKR19]{DGKR19}
Ilias Diakonikolas, Themis Gouleakis, Daniel~M Kane, and Sankeerth Rao.
\newblock Communication and memory efficient testing of discrete distributions.
\newblock In {\em Proceedings of the Thirty-Second Conference on Learning Theory}. pages 1070--1106. PMLR. 2019.
\newblock URL: \url{https://proceedings.mlr.press/v99/diakonikolas19a.html}. \href {https://arxiv.org/abs/1906.04709} {\nolinkurl{arXiv:1906.04709}}. Appearances:\!

\bibitem[DMWL21]{DMWL21}
Yulong Dong, Xiang Meng, K~Birgitta Whaley, and Lin Lin.
\newblock Efficient phase-factor evaluation in quantum signal processing.
\newblock {\em Physical Review A}. 103(4):042419. 2021.
\newblock \href {https://doi.org/10.1103/PhysRevA.103.042419} {\nolinkurl{doi:10.1103/PhysRevA.103.042419}}. \href {https://arxiv.org/abs/2002.11649} {\nolinkurl{arXiv:2002.11649}}. Appearances:\!

\bibitem[F{\dutchPrefix{Graaf}{v}}dG99]{FvdG99}
Christopher~A Fuchs and Jeroen {\dutchPrefix{Graaf}{v}}an~de Graaf.
\newblock Cryptographic distinguishability measures for quantum-mechanical states.
\newblock {\em IEEE Transactions on Information Theory}. 45(4):1216--1227. 1999.
\newblock \href {https://doi.org/10.1109/18.761271} {\nolinkurl{doi:10.1109/18.761271}}. \href {https://arxiv.org/abs/quant-ph/9712042} {\nolinkurl{arXiv:quant-ph/9712042}}. Appearances:\!

\bibitem[FKL{\etalchar{+}}16]{FKLMN16}
Bill Fefferman, Hirotada Kobayashi, Cedric Yen-Yu Lin, Tomoyuki Morimae, and Harumichi Nishimura.
\newblock Space-efficient error reduction for unitary quantum computations.
\newblock In {\em Proceedings of the 43rd International Colloquium on Automata, Languages, and Programming}. volume~55. page~14. 2016.
\newblock \href {https://doi.org/10.4230/LIPIcs.ICALP.2016.14} {\nolinkurl{doi:10.4230/LIPIcs.ICALP.2016.14}}. \href {https://arxiv.org/abs/1604.08192} {\nolinkurl{arXiv:1604.08192}}. Appearances:\!

\bibitem[FKSV02]{FKSV02}
Joan Feigenbaum, Sampath Kannan, Martin~J Strauss, and Mahesh Viswanathan.
\newblock An approximate $l_1$-difference algorithm for massive data streams.
\newblock {\em SIAM Journal on Computing}. 32(1):131--151. 2002.
\newblock \href {https://doi.org/10.1137/S0097539799361701} {\nolinkurl{doi:10.1137/S0097539799361701}}. \prelimVersion{FOCS}{1999}. Appearances:\!

\bibitem[FL11]{FL11}
Steven~T Flammia and Yi-Kai Liu.
\newblock Direct fidelity estimation from few {Pauli} measurements.
\newblock {\em Physical Review Letters}. 106(23):230501. 2011.
\newblock \href {https://doi.org/10.1103/PhysRevLett.106.230501} {\nolinkurl{doi:10.1103/PhysRevLett.106.230501}}. \href {https://arxiv.org/abs/1104.4695} {\nolinkurl{arXiv:1104.4695}}. Appearances:\!

\bibitem[FL18]{FL18}
Bill Fefferman and Cedric Yen-Yu Lin.
\newblock A complete characterization of unitary quantum space.
\newblock In {\em Proceedings of the 9th Innovations in Theoretical Computer Science Conference}. volume~94. page~4. 2018.
\newblock \href {https://doi.org/10.4230/LIPIcs.ITCS.2018.4} {\nolinkurl{doi:10.4230/LIPIcs.ITCS.2018.4}}. \href {https://arxiv.org/abs/1604.01384} {\nolinkurl{arXiv:1604.01384}}. Appearances:\!

\bibitem[For87]{Fortnow87}
Lance Fortnow.
\newblock The complexity of perfect zero-knowledge.
\newblock In {\em Proceedings of the 19th Annual ACM Symposium on Theory of Computing}. pages 204--209. 1987.
\newblock \href {https://doi.org/10.1145/28395.28418} {\nolinkurl{doi:10.1145/28395.28418}}. Appearances:\!

\bibitem[FR21]{FR21}
Bill Fefferman and Zachary Remscrim.
\newblock Eliminating intermediate measurements in space-bounded quantum computation.
\newblock In {\em Proceedings of the 53rd Annual ACM SIGACT Symposium on Theory of Computing}. pages 1343--1356. 2021.
\newblock \href {https://doi.org/10.1145/3406325.3451051} {\nolinkurl{doi:10.1145/3406325.3451051}}. \href {https://arxiv.org/abs/2006.03530} {\nolinkurl{arXiv:2006.03530}}. Appearances:\!

\bibitem[GH20]{GH20}
Alexandru Gheorghiu and Matty~J Hoban.
\newblock Estimating the entropy of shallow circuit outputs is hard.
\newblock {\em arXiv preprint}. 2020.
\newblock \href {https://arxiv.org/abs/2002.12814} {\nolinkurl{arXiv:2002.12814}}. Appearances:\!

\bibitem[GHS21]{GHS21}
Tom Gur, Min-Hsiu Hsieh, and Sathyawageeswar Subramanian.
\newblock Sublinear quantum algorithms for estimating von {Neumann} entropy.
\newblock {\em arXiv preprint}. 2021.
\newblock \href {https://arxiv.org/abs/2111.11139} {\nolinkurl{arXiv:2111.11139}}. Appearances:\!

\bibitem[Gil19]{Gilyen19}
Andr{\'a}s Gily{\'e}n.
\newblock {\em Quantum Singular Value Transformation \& Its Algorithmic Applications}.
\newblock PhD thesis. University of Amsterdam. 2019. Appearances:\!

\bibitem[GL20]{GL20}
Andr{\'a}s Gily{\'e}n and Tongyang Li.
\newblock Distributional property testing in a quantum world.
\newblock In {\em Proceedings of the 11th Innovations in Theoretical Computer Science Conference}. volume 151. pages 25:1--25:19. 2020.
\newblock \href {https://doi.org/10.4230/LIPIcs.ITCS.2020.25} {\nolinkurl{doi:10.4230/LIPIcs.ITCS.2020.25}}. \href {https://arxiv.org/abs/1902.00814} {\nolinkurl{arXiv:1902.00814}}. Appearances:\!

\bibitem[GMV06]{GMV06}
Sudipto Guha, Andrew McGregor, and Suresh Venkatasubramanian.
\newblock Streaming and sublinear approximation of entropy and information distances.
\newblock In {\em Proceedings of the 17th Annual ACM-SIAM Symposium on Discrete Algorithm}. pages 733--742. 2006.
\newblock \href {https://doi.org/10.1145/1109557.1109637} {\nolinkurl{doi:10.1145/1109557.1109637}}. \href {https://arxiv.org/abs/cs/0508122} {\nolinkurl{arXiv:cs/0508122}}. Appearances:\!

\bibitem[Gol17]{Goldreich17}
Oded Goldreich.
\newblock {\em Introduction to property testing}.
\newblock Cambridge University Press. 2017.
\newblock \href {https://doi.org/10.1017/9781108135252} {\nolinkurl{doi:10.1017/9781108135252}}. Appearances:\!

\bibitem[Gol19]{Goldreich19}
Oded Goldreich.
\newblock Errata {(3-Feb-2019)}.
\newblock \url{http://www.wisdom.weizmann.ac.il/~/oded/entropy.html}. 2019. Appearances:\!

\bibitem[GP22]{GP22}
Andr{\'a}s Gily{\'e}n and Alexander Poremba.
\newblock Improved quantum algorithms for fidelity estimation.
\newblock {\em arXiv preprint}. 2022.
\newblock \href {https://arxiv.org/abs/2203.15993} {\nolinkurl{arXiv:2203.15993}}. Appearances:\!

\bibitem[GR02]{GR02}
Lov Grover and Terry Rudolph.
\newblock Creating superpositions that correspond to efficiently integrable probability distributions.
\newblock {\em arXiv preprint}. 2002.
\newblock \href {https://arxiv.org/abs/quant-ph/0208112} {\nolinkurl{arXiv:quant-ph/0208112}}. Appearances:\!

\bibitem[GR22]{GR22}
Uma Girish and Ran Raz.
\newblock Eliminating intermediate measurements using pseudorandom generators.
\newblock In {\em Proceedings of the 13th Innovations in Theoretical Computer Science Conference}. volume 215. pages 76:1--76:18. 2022.
\newblock \href {https://doi.org/10.4230/LIPIcs.ITCS.2022.76} {\nolinkurl{doi:10.4230/LIPIcs.ITCS.2022.76}}. \href {https://arxiv.org/abs/2106.11877} {\nolinkurl{arXiv:2106.11877}}. Appearances:\!

\bibitem[GRZ21]{GRZ21}
Uma Girish, Ran Raz, and Wei Zhan.
\newblock Quantum logspace algorithm for powering matrices with bounded norm.
\newblock In {\em Proceedings of the 48th International Colloquium on Automata, Languages, and Programming}. volume 198. pages 73:1--73:20. 2021.
\newblock \href {https://doi.org/10.4230/LIPIcs.ICALP.2021.73} {\nolinkurl{doi:10.4230/LIPIcs.ICALP.2021.73}}. \href {https://arxiv.org/abs/2006.04880} {\nolinkurl{arXiv:2006.04880}}. Appearances:\!

\bibitem[GSLW18]{GSLW18}
Andr{\'a}s Gily{\'e}n, Yuan Su, Guang~Hao Low, and Nathan Wiebe.
\newblock Quantum singular value transformation and beyond: exponential improvements for quantum matrix arithmetics.
\newblock {\em arXiv preprint}. 2018.
\newblock \href {https://arxiv.org/abs/1806.01838} {\nolinkurl{arXiv:1806.01838}}. Appearances:\!

\bibitem[GSLW19]{GSLW19}
Andr{\'a}s Gily{\'e}n, Yuan Su, Guang~Hao Low, and Nathan Wiebe.
\newblock Quantum singular value transformation and beyond: exponential improvements for quantum matrix arithmetics.
\newblock In {\em Proceedings of the 51st Annual ACM SIGACT Symposium on Theory of Computing}. pages 193--204. 2019.
\newblock \href {https://doi.org/10.1145/3313276.3316366} {\nolinkurl{doi:10.1145/3313276.3316366}}. \href {https://arxiv.org/abs/1806.01838} {\nolinkurl{arXiv:1806.01838}}. Appearances:\!

\bibitem[GSV98]{GSV98}
Oded Goldreich, Amit Sahai, and Salil Vadhan.
\newblock Honest-verifier statistical zero-knowledge equals general statistical zero-knowledge.
\newblock In {\em Proceedings of the 30th Annual ACM Symposium on Theory of Computing}. pages 399--408. 1998.
\newblock \href {https://doi.org/10.1145/276698.276852} {\nolinkurl{doi:10.1145/276698.276852}}. Appearances:\!

\bibitem[GV99]{GV99}
Oded Goldreich and Salil Vadhan.
\newblock Comparing entropies in statistical zero knowledge with applications to the structure of {SZK}.
\newblock In {\em Proceedings of the Fourteenth Annual IEEE Conference on Computational Complexity}. pages 54--73. IEEE. 1999.
\newblock \href {https://doi.org/10.1109/ccc.1999.766262} {\nolinkurl{doi:10.1109/ccc.1999.766262}}. \href {https://eccc.weizmann.ac.il/report/1998/063} {\nolinkurl{ECCC:TR98-063}}. Appearances:\!

\bibitem[GV11]{GV11}
Oded Goldreich and Salil~P Vadhan.
\newblock On the complexity of computational problems regarding distributions.
\newblock {\em Studies in Complexity and Cryptography}. 6650:390--405. 2011.
\newblock \href {https://doi.org/10.1007/978-3-642-22670-0\_27} {\nolinkurl{doi:10.1007/978-3-642-22670-0\_27}}. \href {https://eccc.weizmann.ac.il/report/2011/004} {\nolinkurl{ECCC:TR11-004}}. Appearances:\!

\bibitem[Haa19]{Haah19}
Jeongwan Haah.
\newblock Product decomposition of periodic functions in quantum signal processing.
\newblock {\em Quantum}. 3:190. 2019.
\newblock \href {https://doi.org/10.22331/q-2019-10-07-190} {\nolinkurl{doi:10.22331/q-2019-10-07-190}}. \href {https://arxiv.org/abs/1806.10236} {\nolinkurl{arXiv:1806.10236}}. Appearances:\!

\bibitem[Hal87]{Halmos87}
Paul~R. Halmos.
\newblock {\em Finite-Dimensional Vector Spaces}.
\newblock Undergraduate Texts in Mathematics. Springer-Verlag New York, NY. 1987.
\newblock \href {https://doi.org/10.1007/978-1-4612-6387-6} {\nolinkurl{doi:10.1007/978-1-4612-6387-6}}. Appearances:\!

\bibitem[HHL09]{HHL09}
Aram~W Harrow, Avinatan Hassidim, and Seth Lloyd.
\newblock Quantum algorithm for linear systems of equations.
\newblock {\em Physical Review Letters}. 103(15):150502. 2009.
\newblock \href {https://doi.org/10.1103/PhysRevLett.103.150502} {\nolinkurl{doi:10.1103/PhysRevLett.103.150502}}. \href {https://arxiv.org/abs/0811.3171} {\nolinkurl{arXiv:0811.3171}}. Appearances:\!

\bibitem[HJ12]{HJ12}
Roger~A Horn and Charles~R Johnson.
\newblock {\em Matrix analysis}.
\newblock Cambridge University Press. 2012.
\newblock \href {https://doi.org/10.1017/cbo9781139020411} {\nolinkurl{doi:10.1017/cbo9781139020411}}. Appearances:\!

\bibitem[Hol73]{Holevo73JS}
Alexander~S Holevo.
\newblock Bounds for the quantity of information transmitted by a quantum communication channel.
\newblock {\em Problemy Peredachi Informatsii}. 9(3):3--11. 1973. Appearances:\!

\bibitem[JVHW15]{JVHW15}
Jiantao Jiao, Kartik Venkat, Yanjun Han, and Tsachy Weissman.
\newblock Minimax estimation of functionals of discrete distributions.
\newblock {\em IEEE Transactions on Information Theory}. 61(5):2835--2885. 2015.
\newblock \href {https://doi.org/10.1109/TIT.2015.2412945} {\nolinkurl{doi:10.1109/TIT.2015.2412945}}. \href {https://arxiv.org/abs/1406.6956} {\nolinkurl{arXiv:1406.6956}}. Appearances:\!

\bibitem[JY11]{JY11}
Rahul Jain and Penghui Yao.
\newblock A parallel approximation algorithm for positive semidefinite programming.
\newblock In {\em Proceedings of the 52nd Annual IEEE Symposium on Foundations of Computer Science}. pages 463--471. IEEE. 2011.
\newblock \href {https://doi.org/10.1109/FOCS.2011.25} {\nolinkurl{doi:10.1109/FOCS.2011.25}}. \href {https://arxiv.org/abs/1104.2502} {\nolinkurl{arXiv:1104.2502}}. Appearances:\!

\bibitem[Kay18]{Kay18}
Alastair Kay.
\newblock Tutorial on the quantikz package.
\newblock {\em arXiv preprint}. 2018.
\newblock \href {https://arxiv.org/abs/1809.03842} {\nolinkurl{arXiv:1809.03842}}. Appearances:\!

\bibitem[Kit95]{Kitaev95}
Alexei~Yu Kitaev.
\newblock Quantum measurements and the {Abelian} stabilizer problem.
\newblock {\em arXiv preprint}. 1995.
\newblock \href {https://arxiv.org/abs/quant-ph/9511026} {\nolinkurl{arXiv:quant-ph/9511026}}. Appearances:\!

\bibitem[Kit97]{Kitaev97}
Alexei~Yu Kitaev.
\newblock Quantum computations: algorithms and error correction.
\newblock {\em Russian Mathematical Surveys}. 52(6):1191. 1997.
\newblock \href {https://doi.org/10.1070/RM1997v052n06ABEH002155} {\nolinkurl{doi:10.1070/RM1997v052n06ABEH002155}}. Appearances:\!

\bibitem[KM01]{KM01}
Phillip Kaye and Michele Mosca.
\newblock Quantum networks for generating arbitrary quantum states.
\newblock In {\em Optical Fiber Communication Conference and International Conference on Quantum Information}. Optica Publishing Group. 2001.
\newblock \href {https://doi.org/10.1364/ICQI.2001.PB28} {\nolinkurl{doi:10.1364/ICQI.2001.PB28}}. \href {https://arxiv.org/abs/quant-ph/0407102} {\nolinkurl{arXiv:quant-ph/0407102}}. Appearances:\!

\bibitem[KMY09]{KMY09}
Hirotada Kobayashi, Keiji Matsumoto, and Tomoyuki Yamakami.
\newblock Quantum {Merlin-Arthur} proof systems: Are multiple {Merlins} more helpful to {Arthur}?
\newblock {\em Chicago Journal of Theoretical Computer Science}. 2009:3. 2009.
\newblock \href {https://doi.org/10.4086/cjtcs.2009.003} {\nolinkurl{doi:10.4086/cjtcs.2009.003}}. \prelimVersion{ISAAC}{2003}. \href {https://arxiv.org/abs/quant-ph/0306051} {\nolinkurl{arXiv:quant-ph/0306051}}. Appearances:\!

\bibitem[LC17]{LC17}
Guang~Hao Low and Isaac~L Chuang.
\newblock Hamiltonian simulation by uniform spectral amplification.
\newblock {\em arXiv preprint}. 2017.
\newblock \href {https://arxiv.org/abs/1707.05391} {\nolinkurl{arXiv:1707.05391}}. Appearances:\!

\bibitem[LC19]{LC19}
Guang~Hao Low and Isaac~L Chuang.
\newblock Hamiltonian simulation by qubitization.
\newblock {\em Quantum}. 3:163. 2019.
\newblock \href {https://doi.org/10.22331/q-2019-07-12-163} {\nolinkurl{doi:10.22331/q-2019-07-12-163}}. \href {https://arxiv.org/abs/1610.06546} {\nolinkurl{arXiv:1610.06546}}. Appearances:\!

\bibitem[Liu25a]{Liu25}
Yupan Liu.
\newblock {\em Complexity-theoretic perspectives on quantum state testing}.
\newblock PhD thesis. Nagoya University. 2025.
\newblock URL: \url{https://nagoya.repo.nii.ac.jp/records/2012662}. Appearances:\!

\bibitem[Liu25b]{Liu23}
Yupan Liu.
\newblock Quantum state testing beyond the polarizing regime and quantum triangular discrimination.
\newblock {\em computational complexity}. 34(11):1--67. 2025.
\newblock \href {https://doi.org/10.1007/s00037-025-00273-8} {\nolinkurl{doi:10.1007/s00037-025-00273-8}}. \href {https://arxiv.org/abs/2303.01952} {\nolinkurl{arXiv:2303.01952}}. Appearances:\!

\bibitem[LLNW25]{LLNW24}
Fran{\c{c}}ois {Le Gall}, Yupan Liu, Harumichi Nishimura, and Qisheng Wang.
\newblock Space-bounded quantum interactive proof systems.
\newblock In {\em Proceedings of the 40th Computational Complexity Conference (CCC 2025)}. volume 339 of {\em LIPIcs}. pages 17:1--17:18. Schloss Dagstuhl - Leibniz-Zentrum f{\"{u}}r Informatik. 2025.
\newblock \href {https://doi.org/10.4230/LIPICS.CCC.2025.17} {\nolinkurl{doi:10.4230/LIPICS.CCC.2025.17}}. \href {https://arxiv.org/abs/2410.23958} {\nolinkurl{arXiv:2410.23958}}. Appearances:\!

\bibitem[LLW26]{LLW25}
Fran{\c{c}}ois {Le Gall}, Yupan Liu, and Qisheng Wang.
\newblock A slightly improved upper bound for quantum statistical zero-knowledge.
\newblock In {\em Proceedings of the 51st International Symposium on Mathematical Foundations of Computer Science ({MFCS} 2026)}. volume 386 of {\em LIPIcs}. pages 6:1--6:20. Schloss Dagstuhl - Leibniz-Zentrum f{\"{u}}r Informatik. 2026.
\newblock \href {https://arxiv.org/abs/2512.11597} {\nolinkurl{arXiv:2512.11597}}. Appearances:\!

\bibitem[LW25]{LW25Lalpha}
Yupan Liu and Qisheng Wang.
\newblock On estimating the quantum $\ell_{\alpha}$ distance.
\newblock In {\em Proceedings of the 33rd Annual European Symposium on Algorithms ({ESA} 2025)}. volume 351 of {\em LIPIcs}. pages 105:1--105:20. Schloss Dagstuhl - Leibniz-Zentrum f{\"{u}}r Informatik. 2025.
\newblock \href {https://doi.org/10.4230/LIPIcs.ESA.2025.105} {\nolinkurl{doi:10.4230/LIPIcs.ESA.2025.105}}. \href {https://arxiv.org/abs/2505.00457} {\nolinkurl{arXiv:2505.00457}}. Appearances:\!

\bibitem[LW26]{LW25entropy}
Yupan Liu and Qisheng Wang.
\newblock On estimating the trace of quantum state powers.
\newblock {\em IEEE Transactions on Information Theory}. pages 1--1. 2026.
\newblock \href {https://doi.org/10.1109/TIT.2026.3683891} {\nolinkurl{doi:10.1109/TIT.2026.3683891}}. \prelimVersion{SODA}{2025}. \href {https://arxiv.org/abs/2410.13559v3} {\nolinkurl{arXiv:2410.13559v3}}. Appearances:\!

\bibitem[{\dutchPrefix{Melkebeek}{v}}MW12]{vMW12}
Dieter {\dutchPrefix{Melkebeek}{v}}an~Melkebeek and Thomas Watson.
\newblock Time-space efficient simulations of quantum computations.
\newblock {\em Theory of Computing}. 8(1):1--51. 2012.
\newblock \href {https://doi.org/10.4086/toc.2012.v008a001} {\nolinkurl{doi:10.4086/toc.2012.v008a001}}. Preliminary version in \arXiv{0712.2545}. Appearances:\!

\bibitem[Mez12]{Mezzarobba12}
Marc Mezzarobba.
\newblock A note on the space complexity of fast {D}-finite function evaluation.
\newblock In {\em Computer Algebra in Scientific Computing: 14th International Workshop (CASC 2012)}. pages 212--223. Springer. 2012.
\newblock \href {https://doi.org/10.1007/978-3-642-32973-9\_18} {\nolinkurl{doi:10.1007/978-3-642-32973-9\_18}}. \href {https://arxiv.org/abs/1209.5097} {\nolinkurl{arXiv:1209.5097}}. Appearances:\!

\bibitem[MLP05]{MLP05}
Ana~P Majtey, Pedro~W Lamberti, and Domingo~P Prato.
\newblock {Jensen-Shannon} divergence as a measure of distinguishability between mixed quantum states.
\newblock {\em Physical Review A}. 72(5):052310. 2005.
\newblock \href {https://doi.org/10.1103/PhysRevA.72.052310} {\nolinkurl{doi:10.1103/PhysRevA.72.052310}}. \href {https://arxiv.org/abs/quant-ph/0508138} {\nolinkurl{arXiv:quant-ph/0508138}}. Appearances:\!

\bibitem[MP16]{MP16}
Ashley Montanaro and Sam Pallister.
\newblock Quantum algorithms and the finite element method.
\newblock {\em Physical Review A}. 93(3):032324. 2016.
\newblock \href {https://doi.org/10.1103/PhysRevA.93.032324} {\nolinkurl{doi:10.1103/PhysRevA.93.032324}}. \href {https://arxiv.org/abs/1512.05903} {\nolinkurl{arXiv:1512.05903}}. Appearances:\!

\bibitem[MRTC21]{MRTC21}
John~M Martyn, Zane~M Rossi, Andrew~K Tan, and Isaac~L Chuang.
\newblock Grand unification of quantum algorithms.
\newblock {\em PRX Quantum}. 2(4):040203. 2021.
\newblock \href {https://doi.org/10.1103/PRXQuantum.2.040203} {\nolinkurl{doi:10.1103/PRXQuantum.2.040203}}. \href {https://arxiv.org/abs/2105.02859} {\nolinkurl{arXiv:2105.02859}}. Appearances:\!

\bibitem[MS24]{MS23}
Ashley Montanaro and Changpeng Shao.
\newblock Quantum and classical query complexities of functions of matrices.
\newblock In {\em Proceedings of the 56th Annual {ACM} Symposium on Theory of Computing}. pages 573--584. {ACM}. 2024.
\newblock \href {https://doi.org/10.1145/3618260.3649665} {\nolinkurl{doi:10.1145/3618260.3649665}}. \href {https://arxiv.org/abs/2311.06999} {\nolinkurl{arXiv:2311.06999}}. Appearances:\!

\bibitem[MU17]{MU17}
Michael Mitzenmacher and Eli Upfal.
\newblock {\em Probability and computing: {Randomization} and probabilistic techniques in algorithms and data analysis}.
\newblock Cambridge University Press. 2017.
\newblock \href {https://doi.org/10.1017/CBO9780511813603} {\nolinkurl{doi:10.1017/CBO9780511813603}}. Appearances:\!

\bibitem[MW05]{MW05}
Chris Marriott and John Watrous.
\newblock Quantum {Arthur--Merlin} games.
\newblock {\em computational complexity}. 14(2):122--152. 2005.
\newblock \href {https://doi.org/10.1007/s00037-005-0194-x} {\nolinkurl{doi:10.1007/s00037-005-0194-x}}. \prelimVersion{CCC}{2004}. \href {https://arxiv.org/abs/cs/0506068} {\nolinkurl{arXiv:cs/0506068}}. Appearances:\!

\bibitem[M{\dutchPrefix{Wolf}{d}}W16]{MdW16}
Ashley Montanaro and Ronald {\dutchPrefix{Wolf}{d}}e~Wolf.
\newblock A survey of quantum property testing.
\newblock {\em Theory of Computing}. pages 1--81. 2016.
\newblock \href {https://doi.org/10.4086/toc.gs.2016.007} {\nolinkurl{doi:10.4086/toc.gs.2016.007}}. \href {https://arxiv.org/abs/1310.2035} {\nolinkurl{arXiv:1310.2035}}. Appearances:\!

\bibitem[MY23]{MY23}
Tony Metger and Henry Yuen.
\newblock $\mathsf{stateQIP}=\mathsf{statePSPACE}$.
\newblock In {\em Proceedings of the 64th Annual IEEE Symposium on Foundations of Computer Science}. pages 1349--1356. {IEEE}. 2023.
\newblock \href {https://doi.org/10.1109/FOCS57990.2023.00082} {\nolinkurl{doi:10.1109/FOCS57990.2023.00082}}. \href {https://arxiv.org/abs/2301.07730} {\nolinkurl{arXiv:2301.07730}}. Appearances:\!

\bibitem[NC10]{NC10}
Michael~A Nielsen and Isaac~L Chuang.
\newblock {\em Quantum computation and quantum information}.
\newblock Cambridge University Press. 2010.
\newblock \href {https://doi.org/10.1017/CBO9780511976667} {\nolinkurl{doi:10.1017/CBO9780511976667}}. Appearances:\!

\bibitem[OW21]{OW21}
Ryan O'Donnell and John Wright.
\newblock Quantum spectrum testing.
\newblock {\em Communications in Mathematical Physics}. 387(1):1--75. 2021.
\newblock \href {https://doi.org/10.1007/s00220-021-04180-1} {\nolinkurl{doi:10.1007/s00220-021-04180-1}}. \prelimVersion{STOC}{2015}. \href {https://arxiv.org/abs/1501.05028} {\nolinkurl{arXiv:1501.05028}}. Appearances:\!

\bibitem[PP23]{PP23}
Aaron Putterman and Edward Pyne.
\newblock Near-optimal derandomization of medium-width branching programs.
\newblock In {\em Proceedings of the 55th Annual ACM Symposium on Theory of Computing}. pages 23--34. 2023.
\newblock \href {https://doi.org/10.1145/3564246.3585108} {\nolinkurl{doi:10.1145/3564246.3585108}}. \href {https://eccc.weizmann.ac.il/report/2022/150} {\nolinkurl{ECCC:TR22-150}}. Appearances:\!

\bibitem[PY86]{PY86}
Christos~H Papadimitriou and Mihalis Yannakakis.
\newblock A note on succinct representations of graphs.
\newblock {\em Information and Control}. 71(3):181--185. 1986.
\newblock \href {https://doi.org/10.1016/S0019-9958(86)80009-2} {\nolinkurl{doi:10.1016/S0019-9958(86)80009-2}}. Appearances:\!

\bibitem[RASW23]{ARS+21}
Soorya Rethinasamy, Rochisha Agarwal, Kunal Sharma, and Mark~M Wilde.
\newblock Estimating distinguishability measures on quantum computers.
\newblock {\em Physical Review A}. 108(1):012409. 2023.
\newblock \href {https://doi.org/10.1103/PhysRevA.108.012409} {\nolinkurl{doi:10.1103/PhysRevA.108.012409}}. \href {https://arxiv.org/abs/2108.08406} {\nolinkurl{arXiv:2108.08406}}. Appearances:\!

\bibitem[Rei08]{Reingold08}
Omer Reingold.
\newblock Undirected connectivity in log-space.
\newblock {\em Journal of the ACM}. 55(4):1--24. 2008.
\newblock \href {https://doi.org/10.1145/1391289.1391291} {\nolinkurl{doi:10.1145/1391289.1391291}}. \prelimVersion{STOC}{2005}. \href {https://eccc.weizmann.ac.il/report/2004/094} {\nolinkurl{ECCC:TR04-094}}. Appearances:\!

\bibitem[Riv90]{Rivlin90}
Theodore~J Rivlin.
\newblock {\em Chebyshev polynomials: from approximation theory to algebra and number theory}.
\newblock Courier Dover Publications. 1990. Appearances:\!

\bibitem[Sak96]{Saks96}
Michael Saks.
\newblock Randomization and derandomization in space-bounded computation.
\newblock In {\em Proceedings of Computational Complexity (Formerly Structure in Complexity Theory)}. pages 128--149. IEEE. 1996.
\newblock \href {https://doi.org/10.1109/CCC.1996.507676} {\nolinkurl{doi:10.1109/CCC.1996.507676}}. Appearances:\!

\bibitem[SH21]{SH21}
Sathyawageeswar Subramanian and Min-Hsiu Hsieh.
\newblock Quantum algorithm for estimating $\alpha$-{Renyi} entropies of quantum states.
\newblock {\em Physical Review A}. 104(2):022428. 2021.
\newblock \href {https://doi.org/10.1103/PhysRevA.104.022428} {\nolinkurl{doi:10.1103/PhysRevA.104.022428}}. \href {https://arxiv.org/abs/1908.05251} {\nolinkurl{arXiv:1908.05251}}. Appearances:\!

\bibitem[SM03]{SM03}
Endre S{\"u}li and David~F Mayers.
\newblock {\em An Introduction to Numerical Analysis}.
\newblock Cambridge University Press. 2003.
\newblock \href {https://doi.org/10.1017/CBO9780511801181} {\nolinkurl{doi:10.1017/CBO9780511801181}}. Appearances:\!

\bibitem[SS03]{SS03}
Elias~M Stein and Rami Shakarchi.
\newblock {\em Fourier Analysis: An Introduction}. volume~1.
\newblock Princeton University Press. 2003. Appearances:\!

\bibitem[SV03]{SV97}
Amit Sahai and Salil Vadhan.
\newblock A complete problem for statistical zero knowledge.
\newblock {\em Journal of the ACM}. 50(2):196--249. 2003.
\newblock \href {https://doi.org/10.1145/636865.636868} {\nolinkurl{doi:10.1145/636865.636868}}. \prelimVersion{FOCS}{1997}. \href {https://eccc.weizmann.ac.il/report/2000/084} {\nolinkurl{ECCC:TR00-084}}. Appearances:\!

\bibitem[SZ99]{SZ99}
Michael Saks and Shiyu Zhou.
\newblock $\mathrm{BP_HSPACE}(s)\subseteq \mathrm{DSPACE}(s^{3/2})$.
\newblock {\em Journal of Computer and System Sciences}. 58(2):376--403. 1999.
\newblock \href {https://doi.org/10.1006/jcss.1998.1616} {\nolinkurl{doi:10.1006/jcss.1998.1616}}. \prelimVersion{FOCS}{1995}. Appearances:\!

\bibitem[TS13]{TS13}
Amnon Ta-Shma.
\newblock Inverting well conditioned matrices in quantum logspace.
\newblock In {\em Proceedings of the 45th Annual ACM Symposium on Theory of Computing}. pages 881--890. 2013.
\newblock \href {https://doi.org/10.1145/2488608.2488720} {\nolinkurl{doi:10.1145/2488608.2488720}}. Appearances:\!

\bibitem[Wat99]{Wat99}
John Watrous.
\newblock Space-bounded quantum complexity.
\newblock {\em Journal of Computer and System Sciences}. 59(2):281--326. 1999.
\newblock \href {https://doi.org/10.1006/jcss.1999.1655} {\nolinkurl{doi:10.1006/jcss.1999.1655}}. \prelimVersion{CCC}{1998}. Appearances:\!

\bibitem[Wat01]{Wat01}
John Watrous.
\newblock Quantum simulations of classical random walks and undirected graph connectivity.
\newblock {\em Journal of Computer and System Sciences}. 62(2):376--391. 2001.
\newblock \href {https://doi.org/10.1006/jcss.2000.1732} {\nolinkurl{doi:10.1006/jcss.2000.1732}}. \prelimVersion{CCC}{1999}. \href {https://arxiv.org/abs/cs/9812012} {\nolinkurl{arXiv:cs/9812012}}. Appearances:\!

\bibitem[Wat02]{Wat02}
John Watrous.
\newblock Limits on the power of quantum statistical zero-knowledge.
\newblock In {\em Proceedings of the 43rd Annual IEEE Symposium on Foundations of Computer Science}. pages 459--468. IEEE. 2002.
\newblock \href {https://doi.org/10.1109/SFCS.2002.1181970} {\nolinkurl{doi:10.1109/SFCS.2002.1181970}}. \href {https://arxiv.org/abs/quant-ph/0202111} {\nolinkurl{arXiv:quant-ph/0202111}}. Appearances:\!

\bibitem[Wat03]{Wat03}
John Watrous.
\newblock On the complexity of simulating space-bounded quantum computations.
\newblock {\em computational complexity}. 12:48--84. 2003.
\newblock \href {https://doi.org/10.1007/s00037-003-0177-8} {\nolinkurl{doi:10.1007/s00037-003-0177-8}}. \prelimVersion{FOCS}{1999}. \href {https://arxiv.org/abs/cs/9911008} {\nolinkurl{arXiv:cs/9911008}}. Appearances:\!

\bibitem[Wat09a]{Watrous08}
John Watrous.
\newblock Quantum computational complexity.
\newblock {\em Encyclopedia of Complexity and Systems Science}. pages 7174--7201. 2009.
\newblock \href {https://doi.org/10.1007/978-0-387-30440-3\_428} {\nolinkurl{doi:10.1007/978-0-387-30440-3\_428}}. \href {https://arxiv.org/abs/0804.3401} {\nolinkurl{arXiv:0804.3401}}. Appearances:\!

\bibitem[Wat09b]{Wat09}
John Watrous.
\newblock Zero-knowledge against quantum attacks.
\newblock {\em SIAM Journal on Computing}. 39(1):25--58. 2009.
\newblock \href {https://doi.org/10.1137/060670997} {\nolinkurl{doi:10.1137/060670997}}. \prelimVersion{STOC}{2006}. \href {https://arxiv.org/abs/quant-ph/0511020} {\nolinkurl{arXiv:quant-ph/0511020}}. Appearances:\!

\bibitem[WGL{\etalchar{+}}24]{WGL+22}
Qisheng Wang, Ji~Guan, Junyi Liu, Zhicheng Zhang, and Mingsheng Ying.
\newblock New quantum algorithms for computing quantum entropies and distances.
\newblock {\em IEEE Transactions on Information Theory}. 70(8):5653--5680. 2024.
\newblock \href {https://doi.org/10.1109/TIT.2024.3399014} {\nolinkurl{doi:10.1109/TIT.2024.3399014}}. \href {https://arxiv.org/abs/2203.13522} {\nolinkurl{arXiv:2203.13522}}. Appearances:\!

\bibitem[{\dutchPrefix{Wolf}{d}}W19]{deWolf19}
Ronald {\dutchPrefix{Wolf}{d}}e~Wolf.
\newblock Quantum computing: Lecture notes.
\newblock {\em arXiv preprint}. 2019.
\newblock \href {https://arxiv.org/abs/1907.09415} {\nolinkurl{arXiv:1907.09415}}. Appearances:\!

\bibitem[WY16]{WY16}
Yihong Wu and Pengkun Yang.
\newblock Minimax rates of entropy estimation on large alphabets via best polynomial approximation.
\newblock {\em IEEE Transactions on Information Theory}. 62(6):3702--3720. 2016.
\newblock \href {https://doi.org/10.1109/TIT.2016.2548468} {\nolinkurl{doi:10.1109/TIT.2016.2548468}}. \href {https://arxiv.org/abs/1407.0381} {\nolinkurl{arXiv:1407.0381}}. Appearances:\!

\bibitem[WZ24]{WZ23}
Qisheng Wang and Zhicheng Zhang.
\newblock Fast quantum algorithms for trace distance estimation.
\newblock {\em IEEE Transactions on Information Theory}. 70(4):2720--2733. 2024.
\newblock \href {https://doi.org/10.1109/TIT.2023.3321121} {\nolinkurl{doi:10.1109/TIT.2023.3321121}}. \href {https://arxiv.org/abs/2301.06783} {\nolinkurl{arXiv:2301.06783}}. Appearances:\!

\bibitem[WZC{\etalchar{+}}23]{WZC+22}
Qisheng Wang, Zhicheng Zhang, Kean Chen, Ji~Guan, Wang Fang, Junyi Liu, and Mingsheng Ying.
\newblock Quantum algorithm for fidelity estimation.
\newblock {\em IEEE Transactions on Information Theory}. 69(1):273--282. 2023.
\newblock \href {https://doi.org/10.1109/TIT.2022.3203985} {\nolinkurl{doi:10.1109/TIT.2022.3203985}}. \href {https://arxiv.org/abs/2103.09076} {\nolinkurl{arXiv:2103.09076}}. Appearances:\!

\bibitem[WZL24]{LWZ22}
Xinzhao Wang, Shengyu Zhang, and Tongyang Li.
\newblock A quantum algorithm framework for discrete probability distributions with applications to {R\'{e}nyi} entropy estimation.
\newblock {\em IEEE Transactions on Information Theory}. 70(5):3399--3426. 2024.
\newblock \href {https://doi.org/10.1109/TIT.2024.3382037} {\nolinkurl{doi:10.1109/TIT.2024.3382037}}. \href {https://arxiv.org/abs/2212.01571} {\nolinkurl{arXiv:2212.01571}}. Appearances:\!

\bibitem[Zal98]{Zalka98}
Christof Zalka.
\newblock Simulating quantum systems on a quantum computer.
\newblock {\em Proceedings of the Royal Society of London. Series A: Mathematical, Physical and Engineering Sciences}. 454(1969):313--322. 1998.
\newblock \href {https://doi.org/10.1098/rspa.1998.0162} {\nolinkurl{doi:10.1098/rspa.1998.0162}}. \href {https://arxiv.org/abs/quant-ph/9603026} {\nolinkurl{arXiv:quant-ph/9603026}}. Appearances:\!

\end{thebibliography}


\end{document}